\newtheorem{theorem}{Theorem}
\newtheorem{observation}{Observation}
\newtheorem{definition}{Definition}
\renewcommand*{\theobservation}{\Alph{observation}}
\newtheorem{corollary}[theorem]{Corollary}
\newenvironment{claim}{\medskip\noindent\textit{Claim. }\itshape}{\medskip}
\newtheorem{lemma}[theorem]{Lemma}
\newtheorem{question}{Question}
\newcommand{\com}[1]{}
\begin{document}
\bibliographystyle{plain}
\title{On the stab number of rectangle intersection graphs}
\date{}
\author{Dibyayan Chakraborty\thanks{Indian Statistical Institute, Kolkata, India. E-mail: \texttt{dibyayancg@gmail.com}}\and Mathew C. Francis\thanks{Indian Statistical Institute, Chennai, India. E-mail: \texttt{mathew@isichennai.res.in}. Partially supported by the DST-INSPIRE Faculty Award IFA12-ENG-21.}}

\maketitle

\begin{abstract}
We introduce the notion of \emph{stab number} and \emph{exact stab number} of rectangle intersection graphs, otherwise known as graphs of boxicity at most 2. A graph $G$ is said to be a \emph{$k$-stabbable rectangle intersection graph}, or \emph{$k$-SRIG} for short, if it has a rectangle intersection representation in which $k$ horizontal lines can be chosen such that each rectangle is intersected by at least one of them. If there exists such a representation with the additional property that each rectangle intersects exactly one of the $k$ horizontal lines, then the graph $G$ is said to be a \emph{$k$-exactly stabbable rectangle intersection graph}, or \emph{$k$-ESRIG} for short. The stab number of a graph $G$, denoted by $stab(G)$, is the minimum integer $k$ such that $G$ is a $k$-SRIG. Similarly, the exact stab number of a graph $G$, denoted by $estab(G)$, is the minimum integer $k$ such that $G$ is a $k$-ESRIG. In this work, we study the stab number and exact stab number of some subclasses of rectangle intersection graphs. A lower bound on the stab number of rectangle intersection graphs in terms of its pathwidth and clique number is shown. Tight upper bounds on the exact stab number of split graphs with boxicity at most 2 and block graphs are also given. We show that for $k\leq 3$, $k$-SRIG is equivalent to $k$-ESRIG and for any $k\geq 10$, there is a tree which is a $k$-SRIG but not a $k$-ESRIG. We also develop a forbidden structure characterization for block graphs that are 2-ESRIG and trees that are 3-ESRIG, which lead to polynomial-time recognition algorithms for these two classes of graphs. These forbidden structures are natural generalizations of asteroidal triples. Finally, we construct examples to show that these forbidden structures are not sufficient to characterize block graphs that are 3-SRIG or trees that are $k$-SRIG for any $k\geq 4$.
\medskip

\noindent\textit{Keywords:} Rectangle intersection graphs, interval graphs, stab number, $k$-SRIG, asteroidal triple, block graphs, forbidden structure characterization.
\end{abstract}

\section{Introduction}
A \emph{rectangle intersection representation} of a graph is a collection of axis-parallel rectangles on the plane such that each rectangle in the collection represents a vertex of the graph and two rectangles intersect if and only if the vertices they represent are adjacent in the graph. The graphs that have rectangle intersection representation are called \emph{rectangle intersection graphs}.
The \emph{boxicity} $box(G)$ of a graph $G$ is the minimum $d$ such that $G$ is representable as a geometric intersection graph of $d$-dimensional (axis-parallel) hyper-rectangles. A graph $G$ is an interval graph if $box(G)=1$ and $G$ is a rectangle intersection graph if $box(G)\leq 2$.

A \emph{$k$-stabbed rectangle intersection representation} is a rectangle intersection representation, along with a collection of $k$ horizontal lines called \emph{stab lines}, such that every rectangle intersects at least one of the stab lines. A graph $G$ is a \emph{$k$-stabbable rectangle intersection} graph (\emph{$k$-SRIG}), if there exists a $k$-stabbed rectangle intersection representation of $G$. The \emph{stab number} of a rectangle intersection graph, denoted by $stab(G)$, is the minimum integer $k$ such that there exists a $k$-stabbed rectangle intersection representation of $G$. In other words $stab(G)$ is the minimum integer $k$ such that $G$ is $k$-SRIG. Clearly, if a graph $G$ has boxicity at most 2, then $stab(G)$ is finite. For graphs $G$ with boxicity at least three, we define $stab(G)=\infty$.  

A \emph{$k$-exactly stabbed rectangle intersection representation} is a $k$-stabbed rectangle intersection representation in which every rectangle intersects exactly one of the stab lines.
A graph $G$ is a \emph{$k$-exactly stabbable rectangle intersection} graph, or \emph{$k$-ESRIG} for short, if there exists a $k$-exactly stabbed rectangle intersection representation of $G$.
The \emph{exact stab number} of a rectangle intersection graph, denoted by $estab(G)$, is the minimum integer $k$ such that there exists a $k$-exactly stabbed rectangle intersection representation of $G$. In other words, $estab(G)$ is the minimum integer $k$ such that $G$ is $k$-ESRIG. When a graph $G$ has no $k$-exactly stabbed rectangle intersection representation for any integer $k$, we define $estab(G)=\infty$. A graph $G$ with $estab(G)<\infty$ is said to be an \emph{exactly stabbable rectangle intersection graph}. Note that for a graph $G$, $stab(G)\leq estab(G)$ and that a graph $G$ is an interval graph if and only if $stab(G)=estab(G)=1$, or in other words, the class of interval graphs, the class of 1-SRIGs, and the class of 1-ESRIGs are all the same.

For a subclass $\mathcal{C}$ of rectangle intersection graphs, $stab(\mathcal{C},n)$ is the minimum integer $k$ such that any graph $G\in\mathcal{C}$ with $n$ vertices satisfies $stab(G)\leq k$, and $estab(\mathcal{C},n)$ is the minimum integer $k$ such that for any graph $G\in\mathcal{C}$ with $n$ vertices satisfies $estab(G)\leq k$. A \emph{unit height rectangle intersection} graph $G$ is a graph that has a rectangle intersection representation in which all rectangles have equal height. It is well-known that all unit height rectangle intersection graphs are exactly stabbable rectangle intersection graphs (for the sake of completion, we prove this in Theorem~\ref{thm:notunitheight} in Section~\ref{sec:basic}).

\subsection{Motivation and related work}
Boxicity of a graph has been an active field of research for many decades~\cite{esperet2013,adiga2014,chandran2008,Chandran2016,chandran2007}. While recognizing graphs with boxicity at most $d$ is NP-complete for all $d\geq 2$~\cite{kratochvil1994,yannakakis1982}, there are efficient algorithms to recognize interval graphs, i.e. graphs with boxicity at most 1~\cite{lekkerkerkerboland,corneil2009}. There seems to be a ``jump in the difficulty level" of problems as the boxicity of the input graph increases from 1 to 2. For example, the {\sc Maximum Independent Set} and {\sc Chromatic Number} problems, while being linear-time solvable for interval graphs, become NP-complete for rectangle intersection graphs (even with the rectangle intersection representation given as input)~\cite{kratochvil1990,imai1983}. Our goal is to understand the reason of this jump by studying graph classes that lie ``in between'' interval graphs and rectangle intersection graphs. For this purpose, we introduce a parameter called stab number for rectangle intersection graphs. The concept of stab number is a generalization of the idea behind a class of graphs known as ``2SIG'', which was introduced in an earlier paper~\cite{bhore2015}. Even though our definitions of 2-SRIG and 2-ESRIG are both slightly different from that of ``2SIG'', all three classes of graphs turn out to be equivalent (Theorem~\ref{thm:equiv} shows that the classes $k$-SRIG and $k$-ESRIG are equivalent for any $k\leq 3$). A $k$-stabbed rectangle intersection representation of a graph involves rectangles and horizontal lines. Such combined arrangements of lines and rectangles have been popular topics of study in the geometric algorithms community. For example, such arrangements appear in the works of Agarwal et al.~\cite{agarwal1998} and Chan~\cite{chan2004}, who gave approximation algorithms for the {\sc Maximum Independent Set} problem in unit height rectangle intersection graphs, and also in a paper by Erlebach et al.~\cite{erlebach2010}, who proposed a PTAS for {\sc Minimum Weight Dominating Set} for unit square intersection graphs. Correa et al.~\cite{correa2014} have studied the problems of computing independent and hitting sets for families of rectangles intersecting a diagonal line.

\subsection{Contributions and organization of the paper}
In this paper, we introduce the notion of ``stab number'' of a rectangle intersection graph and study this parameter for various subclasses of rectangle intersection graphs. In Section~\ref{preliminaries}, we give some definitions and notation that will be used throughout the paper. We prove some basic results about $k$-SRIGs and $k$-ESRIGs in Section~\ref{sec:basic}. We first show a simple necessary and sufficient condition for a graph to be a $k$-ESRIG and also show why the classes $k$-SRIG and $k$-ESRIG are equivalent when $k\leq 3$ (Theorem~\ref{thm:equiv}). Then we prove that the class of unit height rectangle intersection graphs is a proper subset of the class of rectangle intersection graphs with finite exact stab number (Theorem~\ref{thm:notstabbable}), which in turn is a proper subset of rectangle intersection graphs (Theorem~\ref{thm:notunitheight}). This leads us to the natural question of finding exactly stabbable graphs whose exact stab number is strictly greater than the stab number. We show that for each $k\geq 10$, there exist trees which are $k$-SRIG but not $k$-ESRIG (Theorem~\ref{thm:srig-parameter-diff}). Therefore, even for graphs that are exactly stabbable, like trees (Theorem~\ref{thm:blockstab}), the stab number and the exact stab number may differ. We prove this result only in Section~\ref{sec:notesrig}, after the machinery required for the proof is developed in Section~\ref{sec:treesandblocks}. In Section~\ref{sec:classes}, we show a lower bound on the stab number of rectangle intersection graphs in terms of the clique number and the pathwidth, and then study upper bounds on the stab number of rectangle intersection graphs that are also (a) split graphs, or (b) block graphs. In particular, we show (a) that all rectangle intersection graphs that are also split graphs have exact stab number at most 3 and that this bound is tight, and (b) an upper bound of $\lceil\log m\rceil$ on the exact stab number of block graphs with $m$ blocks (this bound is shown to be asymptotically tight in Section~\ref{sec:gl}). Then in Section~\ref{sec:asteroidal}, we describe a forbidden structure for $k$-SRIG and $k$-ESRIG, which we call ``asteroidal-(non-($k-1$)-SRIG)'' subgraphs and ``asteroidal-(non-($k-1$)-ESRIG)'' subgraphs respectively. These obstructions are a natural generalization of the well-known ``asteroidal-triples'' of Lekkerkerker and Boland~\cite{lekkerkerkerboland}, which are obstructions for interval graphs. In Section~\ref{sec:colorblocktree}, we discuss some general properties possessed by the block-trees of graphs without these kinds of obstructions. In Section~\ref{sec:treesandblocks}, we show that the absence of these forbidden structures is enough to characterize block graphs that are 2-ESRIG (Theorem~\ref{thm:block2sig}) and trees that are 3-ESRIG (Theorem~\ref{thm:tree3sig}). These results lead to polynomial-time algorithms to recognize block graphs that are 2-SRIG and trees that are 3-SRIG. In Section~\ref{sec:notsuff}, we develop a geometric argument that allows us to show that this kind of forbidden structure is not sufficient to characterize block graphs that are 3-SRIG (Theorem~\ref{thm:blocknot3sig}) or trees that are $k$-SRIG, for any $k\geq 4$ (Theorem~\ref{thm:treenotsuff}). We conclude by listing some open problems and suggesting some possible directions for further research on this topic.

\section{Preliminaries}\label{preliminaries}
We present some definitions in this section. 
Let $G$ be a graph with vertex set $V(G)$ and edge set $E(G)$.
Let $N(v) = \{u \in V(G)\colon uv \in E(G)\}$ and $N[v] = N(v) \cup \{v\}$ denote the \emph{open neighbourhood} and the \emph{closed neighbourhood} of a vertex $v$, respectively. For $S\subseteq V(G)$, we denote by $G[S]$ the subgraph induced in $G$ by the vertices in $S$, and by $G-S$ the graph obtained by removing the vertices in $S$ from $G$. For an edge $e\in E(G)$, we denote by $G-e$ the graph on vertex set $V(G)$ having edge set $E(G)\setminus\{e\}$.

Let $G$ be a rectangle intersection graph with rectangle intersection representation $\mathcal{R}$. A rectangle in $\mathcal{R}$ corresponding to the vertex $v$ is denoted as $r_v$.
All rectangles considered in this article are closed rectangles. Denote by $x^+_v$ $(x^-_v)$, the $x-$coordinate of the right (left) bottom corner of $r_v$. Also $y^+_v$ $(y^-_v)$ is the $y-$coordinate of the left top (bottom) corner of $r_v$. In other words, $r_v=[x^-_v,x^+_v]\times [y^-_v,y^+_v]$. The \emph{span} of a vertex $u$, denoted as $span(u)$, is the projection of $r_u$ on the $X-$axis, i.e. $span(u)=[x^-_u,x^+_u]$. For two intervals $I_1=[a_1,b_1]$ and $I_2=[a_2,b_2]$, we write $I_1 < I_2$ to indicate that $b_1<a_2$. Clearly, $I_1\cap I_2=\emptyset$ if and only if $I_1<I_2$ or $I_2<I_1$.
For an edge $uv\in E(G)$, we define $span(uv)=span(u)\cap span(v)$.

Let $G$ be a $k$-SRIG with a $k$-stabbed rectangle intersection representation $\mathcal{R}$ in which the stab lines are $y=a_1$, $y=a_2$, $\ldots$, $y=a_k$, where $a_1<a_2<\cdots<a_k$.
The \emph{top} (resp. \emph{bottom}) stab line of $\mathcal{R}$ is the stab line $y=a_k$ (resp. $y=a_1$). For $1\leq i<k$, we say that $y=a_{i+1}$ is the stab line ``just above'' the stab line $y=a_i$ and that $y=a_i$ is the stab line ``just below'' the stab line $y=a_{i+1}$. We also say that the stab lines $y=a_i$ and $y=a_{i+1}$ are ``consecutive''. A vertex $u\in V(G)$ is said to be ``on'' a stab line if $r_u$ intersects that stab line.
Two vertices $u,v$ of $G$ ``have a common stab'' if there is some stab line that intersects both $r_u$ and $r_v$. Similarly, a set of vertices is said to have a common stab if there is one stab line that intersects the rectangles corresponding to each of them.
It is easy to see that if $uv\in E(G)$, then there must be either a stab line such that $u$ and $v$ are on it or two consecutive stab lines such that $u$ is on one of them and $v$ is on the other. Whenever the $k$-stabbed rectangle intersection representation of a graph $G$ under consideration is clear from the context, the terms $r_u$, $x^-_u$, $x^+_u$, $y^-_u$, $y^+_u$, for every vertex $u\in V(G)$ and usages such as ``on a stab line'', ``have a common stab'', ``span'' etc. are considered to be defined with respect to this representation.
Clearly, both the classes $k$-SRIG and $k$-ESRIG are closed under taking induced subgraphs. We say that a graph is a non-$k$-SRIG (resp. non-$k$-ESRIG) if it is not a $k$-SRIG (resp. $k$-ESRIG). Similarly, we say that a graph is a non-interval graph if it is not an interval graph.

\section{Basic Results}\label{sec:basic}
Given a collection $\mathcal{I}$ of intervals, a \emph{hitting set} $X$ of $\mathcal{I}$ is a subset of $\mathbb{R}$ such that each interval in $\mathcal{I}$ contains at least one element of $X$. The set $X$ is an \emph{exact hitting set} of $\mathcal{I}$ if each interval in $\mathcal{I}$ contains exactly one element of $X$. An interval graph $G$ is said to have an exact hitting set \emph{of size $k$} if there exists an interval representation $\mathcal{I}$ of $G$ that has an exact hitting set of cardinality $k$. Note that some collections of intervals may not have an exact hitting set of any cardinality. Also, there are interval graphs (for example, $K_{1,4}$) that have no exact hitting set.

\begin{theorem}
A graph $G$ is a $k$-ESRIG if and only if there exists two interval graphs $I_1$ and $I_2$ such that $V(G)=V(I_1)=V(I_2)$ and $E(G)=E(I_1)\cap E(I_2)$ and at least one of $I_1,I_2$ has an exact hitting set of size $k$.
\end{theorem}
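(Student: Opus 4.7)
The plan is to prove both directions by using the standard correspondence between rectangle intersection representations and pairs of interval representations obtained by projecting onto the $X$- and $Y$-axes.

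For the forward direction, suppose $G$ is a $k$-ESRIG with a $k$-exactly stabbed rectangle intersection representation $\mathcal{R}$ with stab lines $y=a_1,\ldots,y=a_k$. I would define $I_1$ to be the interval graph whose representation consists of the horizontal projections (spans) $[x_v^-, x_v^+]$ of the rectangles, and $I_2$ to be the interval graph whose representation consists of the vertical projections $[y_v^-, y_v^+]$. Clearly $V(I_1)=V(I_2)=V(G)$. Two rectangles $r_u$ and $r_v$ intersect if and only if their $X$-projections intersect and their $Y$-projections intersect, so $uv\in E(G)$ iff $uv\in E(I_1)\cap E(I_2)$, giving $E(G)=E(I_1)\cap E(I_2)$. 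Now, because each rectangle meets exactly one stab line, the set $\{a_1,\ldots,a_k\}$ is a subset of $\mathbb{R}$ such that each vertical projection $[y_v^-,y_v^+]$ contains exactly one $a_i$; thus it is an exact hitting set of size $k$ for the interval representation of $I_2$.

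For the backward direction, suppose there exist interval graphs $I_1,I_2$ with $V(I_1)=V(I_2)=V(G)$ and $E(G)=E(I_1)\cap E(I_2)$, where (without loss of generality) $I_2$ has an interval representation $\mathcal{J}_2=\{J_v\colon v\in V(G)\}$ admitting an exact hitting set $\{a_1<a_2<\cdots<a_k\}$. Let $\mathcal{J}_1=\{J'_v\colon v\in V(G)\}$ be any interval representation of $I_1$. For each vertex $v$, I would define the rectangle $r_v := J'_v \times J_v$. The resulting collection is a rectangle intersection representation of $G$: two rectangles $r_u, r_v$ intersect iff $J'_u\cap J'_v\neq\emptyset$ and $J_u\cap J_v\neq\emptyset$, which happens iff $uv\in E(I_1)\cap E(I_2)=E(G)$. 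Taking the horizontal lines $y=a_1,\ldots,y=a_k$ as stab lines, the exact-hitting-set property of $\{a_1,\ldots,a_k\}$ for $\mathcal{J}_2$ means each rectangle $r_v$ is stabbed by exactly one of these lines. Hence $G$ is a $k$-ESRIG.

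Neither direction seems to pose a real obstacle; the only thing to be a bit careful about is that the definition of $I_2$ having an exact hitting set of size $k$ is existential over interval representations, so one must remember to fix such a representation in the backward direction and use it to build the vertical projections of the rectangles. With that observation, the proof is essentially a direct translation between the two formulations.
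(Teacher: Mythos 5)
Your proposal is correct and follows essentially the same route as the paper's proof: projecting the rectangles onto the two axes for the forward direction, and forming products of intervals from a representation of $I_1$ together with a hitting-set-admitting representation of $I_2$ for the converse. The point you flag at the end --- fixing the specific interval representation of $I_2$ that admits the exact hitting set --- is exactly the care the paper also takes.
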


\begin{proof}
First we prove that if $G$ has a $k$-ESRIG representation, then there exist two interval graphs $I_1$ and $I_2$ such that $V(G)=V(I_1)=V(I_2)$ and $E(G)=E(I_1)\cap E(I_2)$ and at least one of them has an exact hitting set of size $k$. Let $\mathcal{R}$ be a $k$-exactly stabbed rectangle intersection representation of $G$ and $\{y=a_1,y=a_2,\ldots,y=a_k\}$ be the set of stab lines in $\mathcal{R}$. Let $I_x,I_y$ be the interval graphs formed by taking the projections of the rectangles in $\mathcal{R}$ on the $X$ and $Y$ axes, respectively. In other words, $I_x$ is the interval graph given by the interval representation $\{[x^-_u,x^+_u]\}_{u\in V(G)}$ and $I_y$ is the interval graph given by the interval representation $\{[y^-_u,y^+_u]\}_{u\in V(G)}$. It is clear that $V(G)=V(I_x)=V(I_y)$ and $E(G)=E(I_x)\cap E(I_y)$. Furthermore, the set $S=\{a_1,a_2,\ldots,a_k\}$ is an exact hitting set of the interval representation $\{[y^-_u,y^+_u]\}_{u\in V(G)}$ of $I_y$. Hence, $I_y$ has an exact hitting set of size $k$.

Now assume that there exist two interval graphs $I_1$ and $I_2$ such that $V(G)=V(I_1)=V(I_2)$ and $E(G)=E(I_1)\cap E(I_2)$ and at least one of them, say $I_1$, has an exact hitting set of size $k$. Let $S=\{a_1,a_2,\ldots,a_k\}$ be an exact hitting set of an interval representation $\{[c_u,d_u]\}_{u\in V(G)}$ of $I_1$. Also, let $\{[c'_u,d'_u]\}_{u\in V(G)}$ be an interval representation of $I_2$. For each $u\in V(G)$, define $r_u=[c'_u,d'_u]\times [c_u,d_u]$. It is easy to see that $\mathcal{R}=\{r_u\}_{u\in V(G)}$ is a rectangle intersection representation of $G$. Further, the lines $y=a_1$, $y=a_2$, $\ldots$, $y=a_k$  are horizontal lines such that each rectangle in $\mathcal{R}$ intersects exactly one of them. Hence, $\mathcal{R}$, together with these lines, is a $k$-exactly stabbed rectangle intersection representation of $G$ and therefore, $G$ is a $k$-ESRIG. This completes the proof.
\end{proof}

\begin{theorem}\label{thm:equiv}
When $k\leq 3$, the classes $k$-SRIG and $k$-ESRIG are equivalent.
\end{theorem}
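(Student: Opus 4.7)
The plan is to prove that the inclusion $k$-ESRIG $\subseteq$ $k$-SRIG, which is immediate from the definitions, is in fact an equality for $k\leq 3$. The case $k=1$ is essentially contained in the introduction: if all rectangles of a representation meet a single horizontal line, then two rectangles intersect iff their $x$-projections overlap, so the $x$-projections form an interval representation and $G$ is an interval graph, hence a $1$-ESRIG.

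For $k\in\{2,3\}$, I would start with any $k$-stabbed representation $\mathcal{R}$ of $G$ with stab lines $y=a_1<\cdots<a_k$ and first perturb slightly so that no $y_v^{-}$ or $y_v^{+}$ coincides with any $a_i$. For each vertex $v$ let $S(v)=\{i:r_v\text{ meets }y=a_i\}$; because $r_v$ is convex, $S(v)$ is a set of consecutive integers. The plan is to keep every $x$-projection $[x_v^-,x_v^+]$ untouched and construct a new $y$-range for each vertex so that the resulting rectangle $r_v'$ meets exactly one stab line $y=a_{\sigma(v)}$. Since $x$-projections are preserved, $x$-disjoint pairs stay non-adjacent, and it suffices to check that the new $y$-overlap pattern reproduces the original one on every $x$-overlapping pair.

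The assignment rule I would use is: $\sigma(v)$ is the unique element of $S(v)$ when $|S(v)|=1$; if $|S(v)|\geq 2$, send $v$ to stab line $1$ when $k=2$, and to the middle stab line $y=a_2$ when $k=3$. The trimmed rectangles are then $r_v'=[x_v^-,x_v^+]\times[a_1-1,\,\min(y_v^{+},a_2-\varepsilon)]$ when $\sigma(v)=1$, $r_v'=[x_v^-,x_v^+]\times[\max(y_v^{-},a_{k-1}+\varepsilon),\,a_k+1]$ when $\sigma(v)=k$, and $r_v'=[x_v^-,x_v^+]\times[\max(y_v^{-},a_1+\varepsilon),\,\min(y_v^{+},a_3-\varepsilon)]$ in the remaining case $k=3,\,\sigma(v)=2$, where $\varepsilon>0$ is chosen after the perturbation, smaller than every nonzero $|y_v^{\pm}-a_i|$.

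The main obstacle, and the reason the argument stops at $k=3$, is preserving edges whose endpoints originally shared a stab line that one endpoint has now been moved away from; such an edge must be recreated through overlap in a strip $(a_i,a_{i+1})$ between consecutive stab lines, since two rectangles on non-consecutive stab lines in an exactly-stabbed representation cannot meet. A case check over the pairs of types $V_1,V_2,V_3,V_{12},V_{23},V_{123}$, where $V_I=\{v:S(v)=I\}$, confirms preservation. The crucial observation for $k=3$ is that the trimmed rectangle of a vertex with $\sigma(v)=2$ has $y$-range $[a_1+\varepsilon,a_3-\varepsilon]$, so it extends into both strips $(a_1,a_2)$ and $(a_2,a_3)$ and therefore reaches every originally adjacent rectangle lying on line $1$ or on line $3$. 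The only type pair originally on non-consecutive stab lines is $V_1$--$V_3$, and such pairs are automatically non-edges in the original, because $y_u^{+}<a_2<y_v^{-}$ forces $r_u\cap r_v=\emptyset$, so the move to non-consecutive assigned lines introduces no spurious issue. The same trick would fail for $k\geq 4$: a vertex with $S(v)=\{1,\ldots,k\}$ may have neighbours in both $V_1$ and $V_k$, and then no single stab line is close enough to both $y=a_1$ and $y=a_k$ to capture both adjacencies in an exactly-stabbed representation, which is consistent with the paper's later Theorem~\ref{thm:srig-parameter-diff}.
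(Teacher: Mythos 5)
Your proposal is correct and is essentially the paper's own argument: the paper likewise perturbs the representation, sends every rectangle meeting the middle stab line (the line $y=1$, which plays the role of your distinguished line for both $k=2$ and $k=3$) to that line by trimming its $y$-extent to lie strictly between the neighbouring stab lines, and uses the same choice of $\varepsilon$ to argue that no intersection is lost. The only cosmetic differences are your explicit type decomposition into the classes $V_I$ and the harmless extension of the $V_1$ and $V_k$ rectangles past the outer stab lines, which the paper simply leaves untouched.
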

\begin{proof}
If a graph $G$ is $k$-ESRIG for some $k$, then $G$ is also $k$-SRIG. Therefore it suffices to prove that if a graph $G$ has a $k$-stabbed rectangle intersection representation for some $k\leq 3$, then $G$ also has a $k$-exactly stabbed rectangle intersection representation. If $k=1$, then there is nothing to prove. So we shall assume that $k\in\{2,3\}$. Let $\mathcal{R}$ be a $k$-stabbed rectangle intersection representation of a graph $G$ with $k\leq 3$ with stab lines $y=0$, $y=1$, $\ldots$, $y=k-1$. We can assume without loss of generality that for any two distinct vertices $u,v\in V(G)$, we have $\{y^+_u,y^-_u\}\cap \{y^+_v,y^-_v\}=\emptyset$ and that for any vertex $v\in V(G)$, we have $ \{y^+_v,y^-_v\}\cap \{0,1,2\}=\emptyset$ (note that if this is not the case, then the rectangles in $\mathcal{R}$ can be perturbed slightly so that these conditions are satisfied). Let $S=\{y^+_v,y^-_v\}_{v\in V(G)} \cup \{0,1,2\}$ and $\epsilon$ be a positive real number such that $\epsilon < \min\{|a - b|\colon a,b\in S,a\neq b\}$. Let $M=\{u\in V(G)\colon r_u$ intersects the stab line $y=1\}$. For each vertex $u\in M$, define $r'_u=[x^-_u,x^+_u]\times [y'^-_u,y'^+_u]$, where $y'^-_u=\max\{\epsilon,y^-_u\}$ and $y'^+_u=\min\{2-\epsilon,y^+_u\}$. Let $\mathcal{R}'$ be the rectangle intersection representation given by the collection of rectangles $(\mathcal{R}\setminus\{r_u\colon u\in M\})\cup\{r'_u\colon u\in M\}$. It is now easy to verify that $\mathcal{R}'$ is a $k$-exactly stabbed rectangle intersection representation of $G$. Indeed, $\mathcal{R}'$ is obtained from $\mathcal{R}$ by the vertical shortening of some of the rectangles intersecting the stab line $y=1$, and we only need to show that every rectangle that is so shortened still intersects with all the rectangles with which it originally has an intersection. The definition of $\epsilon$ guarantees that in $\mathcal{R}$, the bottom edge of any rectangle is no higher than $2-\epsilon$ and the top edge of any rectangle is no lower than $\epsilon$. So when a rectangle is shortened in the manner described above, it does not become disjoint from a rectangle with which it previously had a nonempty intersection. Therefore $\mathcal{R}$ is a valid rectangle intersection representation of $G$. It is clear that any rectangle that intersects the stab line $y=1$ in $\mathcal{R}$ intersects only the stab line $y=1$ in $\mathcal{R}'$. This implies that $\mathcal{R}'$ is a $k$-exactly stabbed rectangle intersection representation of $G$.
\end{proof}

In the following theorem, we show that for $k=4$, the classes $k$-SRIG and $k$-ESRIG differ.

\begin{theorem}\label{thm:notstabbable}
There is a graph $G$ such that $stab(G)\leq 4$ and $estab(G)=\infty$.
\end{theorem}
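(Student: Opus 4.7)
The plan is to take $G = K_{4,4}$, the complete bipartite graph with parts $V = \{v_1, v_2, v_3, v_4\}$ and $U = \{u_1, u_2, u_3, u_4\}$, and verify both $stab(G) \le 4$ and $estab(G) = \infty$ directly. For the upper bound on $stab$, I would exhibit the standard representation: place each $v_i$ as a horizontal rectangle $[0,10] \times [i-0.4, i+0.4]$ and each $u_j$ as a thin vertical rectangle $[j, j+0.5] \times [0.5, 4.5]$. The four horizontal stab lines $y = 1, 2, 3, 4$ pierce each $v_i$ exactly once and each $u_j$ four times, and a direct check shows the resulting intersection pattern is precisely $K_{4,4}$, certifying $stab(G) \le 4$.

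For $estab(G) = \infty$, I would argue by contradiction: assume $G$ admits a $k$-exactly stabbed representation with stab lines $y = \alpha_1 < \cdots < \alpha_k$, and for each vertex $w$ let $cls(w)$ denote the unique stab line on which $w$ lies. Since the exact-stabbing condition forces $y^-_w > \alpha_{cls(w)-1}$ and $y^+_w < \alpha_{cls(w)+1}$ (with the usual conventions at the ends), any two vertices $w, w'$ with $|cls(w) - cls(w')| \ge 2$ have $y$-disjoint rectangles and are therefore non-adjacent. Because every $u_j$ is adjacent to every $v_i$, this forces $|cls(u_j) - cls(v_i)| \le 1$ for all $i, j$; setting $a = \min_i cls(v_i)$ and $b = \max_i cls(v_i)$, we get $cls(u_j) \in [b-1, a+1]$, which forces $b - a \le 2$ and shows that the $u_j$'s are distributed over at most three classes.

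The crux of the proof is to show that each such class contains at most one $u_j$, since combined with the previous bound this yields $|U| \le 3$, contradicting $|U| = 4$. To this end, suppose $u_1, u_2 \in V_c$ for some class $V_c$. Since both lie on stab line $y = \alpha_c$, their $y$-ranges both contain $\alpha_c$ and hence overlap, so the non-adjacency of $u_1, u_2$ forces $span(u_1) \cap span(u_2) = \emptyset$; without loss of generality $x^+_{u_1} < x^-_{u_2}$. The adjacencies $v_i u_1$ and $v_i u_2$ in $K_{4,4}$ then give $x^-_{v_i} \le x^+_{u_1}$ and $x^+_{v_i} \ge x^-_{u_2}$ respectively, so $span(v_i) \supseteq [x^+_{u_1}, x^-_{u_2}]$ for every $i$. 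Thus the four $v_i$'s have pairwise overlapping $x$-spans, and since they are pairwise non-adjacent in $G$, their $y$-ranges must be pairwise disjoint. But two $v_i$'s that share a class would both contain that class's stab-line $y$-coordinate in their $y$-ranges and hence $y$-overlap, so the four $v_i$'s must occupy four pairwise distinct classes, contradicting the earlier bound $b - a \le 2$. The main obstacle is this last geometric-combinatorial step, where the ``spans the gap'' constraint on the $v_i$'s imposed by two $u$-rectangles on the same stab line is combined with the pairwise non-adjacency of the $v_i$'s to reach a pigeonhole contradiction.
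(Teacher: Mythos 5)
Your proof is correct and uses essentially the same approach as the paper: both take $G=K_{4,4}$, observe that exact stabbing confines adjacent vertices to stab lines at distance at most one, and exploit the fact that two non-adjacent vertices on a common stab line force every common neighbour to span the horizontal gap between them. The paper packages that last step as a standalone claim about induced $4$-cycles and finishes with a double pigeonhole, whereas you finish by counting stab-line classes (at most three available for the $u_j$'s, at most one $u_j$ per class), but the geometric content is identical.
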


\begin{proof}
We let $G=K_{4,4}$, i.e. the complete bipartite graph in which each partite set contains four vertices each. Clearly, $G$ is a rectangle intersection graph with $stab(G)\leq 4$ (see Figure~\ref{fig:boxrep}(a)). We shall prove that $estab(G)=\infty$, or in other words, $G$ is not an exactly stabbable rectangle intersection graph. First we prove the following claim.
	
\begin{claim}
Let $C$ be a cycle of length four and $E(C)=\{ab,bc,cd,da\}$. There is no $k$-exactly stabbed rectangle intersection representation of $C$, for any integer $k$, in which $a,c$ have a common stab and $b,d$ have a common stab.
\end{claim}
	
\noindent
Assume for the sake of contradiction that there is a $k$-exactly stabbed rectangle intersection representation $\mathcal{R}$ of $C$, for some integer $k$, in which $a,c$ have a common stab and $b,d$ have a common stab. Clearly, $a,b,c,d$ cannot all be on one stab line (as $C$ is not an interval graph). Since every vertex is on exactly one stab line and because $ab\in E(C)$, we can assume without loss of generality that $a,c$ are on the stab line just below the stab line on which $b,d$ are. Since $a,c$ and $b,d$ are nonadjacent in $C$, again without loss of generality we can assume that $span(a)<span(c)$. Since $b\in N(a)\cap N(c)$, we can infer that $[x^+_a,x^-_c] \subset span(b)$. Similarly, we can show that $[x^+_a,x^-_c] \subset span(d)$. But this implies that $[x^+_a,x^-_c]\subset span(b) \cap span(d)$. Since $b,d$ are on the same stab line, this means that $r_b\cap r_d\neq \emptyset$. As $bd\notin E(C)$, this contradicts the fact that $\mathcal{R}$ is a rectangle intersection representation of $C$. This proves the claim.
	
\medskip
Now suppose that $G$ has a $k$-exactly stabbed rectangle intersection representation $\mathcal{R}$ for some $k$. Let $V_1,V_2$ be the two partite sets of $G$ (recall that $G$ is isomorphic to $K_{4,4}$) and $v\in V_1$ be a vertex on some stab line $\ell$. Since each vertex is on exactly one stab line, and all vertices of $V_2$ are adjacent to $v$, we know that each vertex of $V_2$ must be on the stab line $\ell$, on the stab line just above $\ell$, or on the stab line just below $\ell$. By Pigeon Hole Principle, there exists $u,w\in V_2$ such that $u$ and $w$ are both on one of these stab lines, say $\ell_1$. Now, for the same reason as before, each vertex of $V_1$ must be on the stab line $\ell_1$, on the stab line just above $\ell_1$, or on the stab line just below $\ell_1$. Again by Pigeon Hole Principle, there are two vertices $u',w'\in V_1$ such that $u'$ and $w'$ are both on one of these stab lines. Now, consider the cycle $C$ of length four with $E(C)=\{u'u,uw',w'w,wu'\}$, that is an induced subgraph of $G$. It can be seen that the rectangles in $\mathcal{R}$ corresponding to the vertices of $C$ form a $k$-exactly stabbed rectangle intersection representation of $C$ in which $u',w'$ have a common stab and $u,w$ have a common stab. This contradicts the claim proved above. Therefore, $G$ cannot have a $k$-exactly stabbed rectangle intersection representation for any $k$.
\end{proof}

\begin{corollary}
The class of exactly stabbable rectangle intersection graphs is a proper subset of the class of rectangle intersection graphs.
\end{corollary}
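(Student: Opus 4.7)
The plan is to observe that this corollary is an immediate consequence of Theorem~\ref{thm:notstabbable}. The inclusion of exactly stabbable rectangle intersection graphs into rectangle intersection graphs is immediate from the definitions, since a $k$-exactly stabbed rectangle intersection representation is, in particular, a rectangle intersection representation. So the only content is to exhibit a rectangle intersection graph that is not exactly stabbable.

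For this I would simply point to the witness already constructed, namely $G = K_{4,4}$ from Theorem~\ref{thm:notstabbable}. That theorem shows $stab(G) \leq 4$, which means $G$ admits a rectangle intersection representation (so in particular $box(G) \leq 2$ and $G$ belongs to the class of rectangle intersection graphs), while at the same time $estab(G) = \infty$, i.e.\ $G$ has no $k$-exactly stabbed rectangle intersection representation for any $k$, so $G$ is not an exactly stabbable rectangle intersection graph. Together with the trivial inclusion, this establishes the proper containment.

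There is no real obstacle here; the proof is a one-sentence appeal to the preceding theorem. The only thing worth making explicit is that a finite stab number certifies membership in the class of rectangle intersection graphs, which is immediate from the definition of $stab(G)$ as the minimum $k$ such that $G$ is a $k$-SRIG (and $stab(G) = \infty$ precisely when $G$ has boxicity at least $3$).
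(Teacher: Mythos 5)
Your proposal is correct and matches the paper's intent exactly: the paper states this corollary without a separate proof, treating it as an immediate consequence of Theorem~\ref{thm:notstabbable}, with $K_{4,4}$ as the witness and the inclusion being trivial from the definitions. Nothing is missing.
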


The above theorem shows that there are graphs whose stab number is a constant but their exact stab number is infinite. Later on, in Theorem~\ref{thm:srig-parameter-diff}, we shall show that there are even trees whose stab number and exact number differ, even though both these parameters are finite for trees.

\begin{figure}
	\centering
	\begin{tabular}{cc} 
		\includegraphics[page=1]{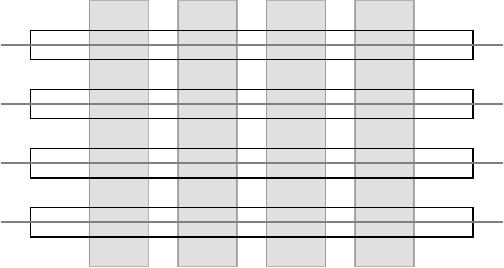}& \includegraphics[page=2]{figures.pdf}\\
		(a)&(b)
	\end{tabular}
	\caption{(a) A 4-stabbed rectangle intersection representation of $K_{4,4}$, (b) a 3-exactly stabbed rectangle intersection representation of $K_{3,3}$.}
	\label{fig:boxrep}
\end{figure}

\begin{theorem}\label{thm:notunitheight}
The class of unit height rectangle intersection graphs is a proper subset of the class of exactly stabbable rectangle intersection graphs.
\end{theorem}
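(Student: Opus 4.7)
The theorem requires two things: (i) every unit height rectangle intersection graph is exactly stabbable, and (ii) some exactly stabbable rectangle intersection graph is not a unit height rectangle intersection graph.

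For part (i), given a representation in which every rectangle has common height $h$, I will choose a real number $y_0$ avoiding finitely many ``bad'' values so that no rectangle has its top or bottom edge on any line $y = y_0 + kh$ with $k\in\mathbb{Z}$, and then place stab lines at each $y = y_0 + kh$ whose line meets the bounding box of the arrangement. Since each rectangle has height equal to the stab line spacing $h$, it intersects at least one stab line; since its top and bottom avoid all stab lines, it intersects exactly one. This yields an exactly stabbed representation, so the graph is exactly stabbable.

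For part (ii), I will use $G = K_{3,3}$. Figure~\ref{fig:boxrep}(b) already exhibits a $3$-exactly stabbed rectangle intersection representation of $K_{3,3}$, so $estab(K_{3,3}) \leq 3$. The real work is to show $K_{3,3}$ has no unit height rectangle intersection representation. Suppose, for contradiction, that such a representation exists, rescaled so that every rectangle has height $1$. Let the partite sets be $U = \{u_1, u_2, u_3\}$ and $V = \{v_1, v_2, v_3\}$, and write $Y_{u_i} = [\alpha_i, \alpha_i+1]$ and $Y_{v_j} = [\beta_j, \beta_j+1]$. Since every pair $u_i v_j$ is an edge, the unit intervals $Y_{u_i}$ and $Y_{v_j}$ must intersect, giving $|\alpha_i - \beta_j| \leq 1$ for all $i,j$. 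Applied at the extremes $\alpha_i = \max_i \alpha_i$, $\beta_j = \min_j \beta_j$ and then at $\alpha_i = \min_i \alpha_i$, $\beta_j = \max_j \beta_j$ and added together, these yield the key inequality
\[
\bigl(\max_i \alpha_i - \min_i \alpha_i\bigr) + \bigl(\max_j \beta_j - \min_j \beta_j\bigr) \leq 2.
\]

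The contradiction is then obtained by a symmetric case analysis on the ``$\beta$-range'' $\max_j \beta_j - \min_j \beta_j$. If the $\beta$-range is at most $1$, then the three $Y_{v_j}$'s pairwise $y$-intersect, so non-adjacency within $V$ forces $X_{v_1}, X_{v_2}, X_{v_3}$ to be pairwise disjoint; each $X_{u_i}$ must intersect all three of these disjoint intervals and therefore contains the non-trivial interval running from the right endpoint of the leftmost $X_{v_j}$ to the left endpoint of the rightmost $X_{v_j}$, so the three $X_{u_i}$'s pairwise $x$-overlap. Non-adjacency within $U$ then forces the closed unit intervals $Y_{u_1}, Y_{u_2}, Y_{u_3}$ to be pairwise disjoint, which requires $\max_i \alpha_i - \min_i \alpha_i > 2$, contradicting the displayed bound. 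The complementary case, where the $\beta$-range exceeds $1$, forces the $\alpha$-range strictly below $1$ by the same bound and then runs the symmetric argument with $U$ and $V$ swapped, ending with $\max_j \beta_j - \min_j \beta_j > 2$, again a contradiction. The main obstacle I expect is extracting the displayed bound cleanly and carefully handling the closed-interval boundary behaviour (a single shared endpoint still counts as intersection for closed rectangles), so that ``pairwise overlap'' and ``pairwise disjoint'' are applied consistently.
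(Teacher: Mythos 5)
Your proof is correct, but both halves take a genuinely different route from the paper. For the easy direction, the paper peels off the bottommost ``layer'' of rectangles inductively (placing a stab line at $\min_u y^+_u$ and recursing on what remains), whereas you place an evenly spaced grid of lines at distance $h$ with a generically chosen offset $y_0$; both work, and your one-shot construction is arguably cleaner, though you should make explicit that the unique grid line in $[c,c+h]$ exists because the spacing equals the common height and $c\not\equiv y_0 \pmod h$. The substantive difference is in showing $K_{3,3}$ is not a unit height rectangle intersection graph. The paper argues topologically: unit height representations are crossing-free, and every triangle-free graph admitting a crossing-free rectangle intersection representation is planar (via choosing points $p_u$, $p_{uv}$ and curves inside the rectangles), so non-planarity of $K_{3,3}$ finishes the proof. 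You instead give a direct coordinate argument: all $y$-projections are unit intervals, completeness of the bipartite adjacency forces $|\alpha_i-\beta_j|\le 1$ and hence the range bound $(\max_i\alpha_i-\min_i\alpha_i)+(\max_j\beta_j-\min_j\beta_j)\le 2$, so one side has $y$-range at most $1$, its three $y$-intervals pairwise meet, its $x$-intervals are forced pairwise disjoint, the other side's $x$-intervals then all contain the nondegenerate middle gap and pairwise meet, forcing its three unit $y$-intervals to be pairwise disjoint and its range to exceed $2$ --- contradicting the bound. Your argument is elementary, self-contained, and avoids any appeal to planarity or arc-connectivity; the paper's argument is heavier but buys a strictly more general conclusion (no non-planar triangle-free graph has a crossing-free, in particular unit height, rectangle intersection representation). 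Both are valid proofs of the stated theorem.
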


\begin{proof}
We shall first give a proof for the well-known fact that every unit height rectangle intersection graph is an exactly stabbable rectangle intersection graph. We shall prove the following stronger claim.

\begin{claim}
Given a unit height rectangle intersection representation $\mathcal{R}$ for a graph $G$, there exists a set of horizontal lines $y=a_1$, $y=a_2$, $\ldots$, $y=a_k$ (for some integer $k$), where $a_1<a_2<\cdots<a_k$, such that each rectangle in $\mathcal{R}$ intersects exactly one of them and $a_1=\min_{u\in V(G)}\{y^+_u\}$.
\end{claim}

\noindent Let $a=\min_{u\in V(G)}\{y^+_u\}$ and let $S=\{u\colon u\in V(G)$ and $a\in [y^-_u,y^+_u]\}$. Now consider the unit height rectangle intersection representation $\mathcal{R}'=\mathcal{R}\setminus \{r_u\}_{u\in S}$ of $G'=G-S$. By the induction hypothesis, there exists a set of horizontal lines $y=a'_1$, $y=a'_2$, $\ldots$, $y=a'_{k'}$, for some integer $k'$, where $a'_1<a'_2<\cdots<a'_{k'}$, such that each rectangle in $\mathcal{R}'$ intersects exactly one of them and $a'_1=\min_{u\in V(G')}\{y^+_u\}$. Since every rectangle in $\mathcal{R}'$ lies completely above the horizontal line $y=a$, we have that $\min_{u\in V(G')}\{y^+_u\}>a+1$. Therefore, we have $a'_1-a>1$. Since $a'_1<a'_2<\cdots<a'_{k'}$, this means that for $1\leq i\leq k'$, no rectangle of $\mathcal{R}$ intersects both the horizontal lines $y=a'_i$ and $y=a$. Since every rectangle in $\{r_u\}_{u\in S}$ intersects the horizontal line $y=a$, and every rectangle in $\{r_u\}_{u\in V(G')}$ intersects exactly one of the horizontal lines $y=a'_1$, $y=a'_2$, $\ldots$, $y=a'_{k'}$, it follows that each rectangle of $\mathcal{R}$ intersects exactly one of the horizontal lines $y=a$, $y=a'_1$, $y=a'_2$, $\ldots$, $y=a'_{k'}$. This proves the claim.
\medskip

We shall now show the existence of an exactly stabbable rectangle intersection graph that is not a unit height rectangle intersection graph.
Consider the graph $K_{3,3}$, i.e. the complete bipartite graph in which each partite set contains three vertices each. Clearly, $K_{3,3}$ is an exactly stabbable rectangle intersection graph (see Figure~\ref{fig:boxrep}(b)). We shall prove that $K_{3,3}$ is not a unit height rectangle intersection graph. 
	
A rectangle intersection representation $\mathcal{R}$ of a graph $G$ is \emph{crossing-free} if for any two rectangles $r_u$ and $r_v$ in $\mathcal{R}$, the regions $r_u\setminus r_v$ and $r_v\setminus r_u$ are both arc-connected. Note that a unit height rectangle intersection representation of a graph is crossing-free. We shall show that if a triangle-free graph $G$ has a crossing-free rectangle intersection representation, then $G$ must be a planar graph. It then follows directly that $K_{3,3}$ is not a unit height rectangle intersection graph.
	
Let $\mathcal{R}$ be a crossing-free rectangle intersection representation of a triangle-free graph $G$ and let $S\subseteq V(G)$ be the set of vertices of $G$ having degree one. Let $H=G-S$. Clearly, $G$ is planar if and only if $H$ is planar. Let $\mathcal{R'}$ be obtained from $\mathcal{R}$ by removing all the rectangles corresponding to the vertices in $S$. Note that $H$ is a triangle-free graph and $\mathcal{R'}$ is crossing-free. 
	
\begin{claim}
There is no rectangle in $\mathcal{R'}$ which is contained in some other rectangle of $\mathcal{R'}$.
\end{claim}
	
\noindent 
Assume for the sake of contradiction that for vertices $u,v\in V(H)$ we have $r_u\subseteq r_v$ in $\mathcal{R'}$. Since $u$ is a vertex of $H$, we know that $u$ must have degree at least two in $G$. Let $w$ be a neighbour of $u$ other than $v$ in $G$. Then in $\mathcal{R}$, we have $r_w\cap r_u\neq\emptyset$. Since $r_u\subseteq r_v$, this implies that $r_w\cap r_v\neq\emptyset$. But now $u,v,w$ form a triangle in $G$, contradicting the fact that $G$ is triangle-free. This proves the claim.
	
\medskip

Since $H$ is triangle-free, we have that in $H$, for any vertex $u\in V(H)$ and any two vertices in $v,w\in N(u)$, $r_v\cap r_w=\emptyset$. This, together with the fact that $\mathcal{R'}$ is crossing free, implies that the region $r_u\setminus \bigcup_{w\in N(u)} r_w$ is arc-connected and non-empty. (To see this, observe that if $r_u\setminus \bigcup_{w\in N(u)} r_w$ is non-empty, but is not arc-connected, then there exists two points $x,y\in r_u$ and a simple curve $\mathbf{c}\subseteq\bigcup_{w\in N(u)} r_w$ such that $x$ and $y$ are in different arc-connected components of $r_u\setminus\mathbf{c}$. Since for any two vertices in $v,w\in N(u)$, we have $r_v\cap r_w=\emptyset$, we know that there exists some $z\in N(u)$ such that $\mathbf{c}\subseteq r_z$. But this means that $x$ and $y$ are in different arc-connected components of $r_u\setminus r_z$, contradicting the fact that $\mathcal{R}'$ is crossing-free. If $r_u\setminus \bigcup_{w\in N(u)} r_w$ is empty, then $r_u\subseteq \bigcup_{w\in N(u)} r_w$. Again, since for any two vertices in $v,w\in N(u)$, we have $r_v\cap r_w=\emptyset$, it must be the case that there exists some $z\in N(u)$ such that $r_u\subseteq r_z$. But this contradicts the claim proved above.) Now choose for every vertex $u\in V(H)$, a point $p_u$ in $r_u\setminus \bigcup_{w\in N(u)} r_w$. In other words, $p_u$ is a point in $r_u$ which is not contained in any rectangle other than $r_u$. For every edge $uv\in E(H)$, choose a point $p_{uv}$ that is contained in the rectangular region $r_u\cap r_v$. Further, for each edge $uv\in E(H)$, choose a simple curve $\mathbf{s_{u,v}}$ between $p_u$ and $p_{uv}$ that is completely contained in $r_u$ and a simple curve $\mathbf{s_{v,u}}$ between $p_v$ and $p_{uv}$ that is completely contained in $r_v$ such that for any curve in the collection $\{\mathbf{s_{u,v}},\mathbf{s_{v,u}}\}_{uv\in E(H)}$, none of its interior points are contained in any other curve in the collection. Now the set of simple curves $\{\mathbf{s_{u,v}}\cup\mathbf{s_{v,u}}\}_{uv\in E(H)}$ corresponds to the edges of $H$ and gives a planar embedding of $H$ (please see Figure~\ref{fig:planar} for an example). Hence, $G$ is a planar graph.
\begin{figure}
	\centering
	\includegraphics[page=3]{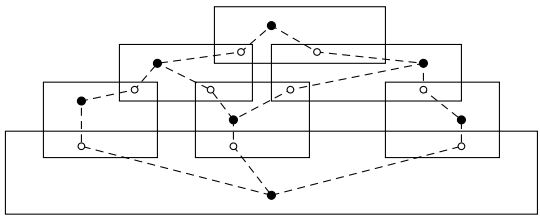}
	\caption{The dotted curves along with the solid points endpoints, give a planar embedding of the intersection graph of the rectangles in the figure. The hollow circle contained in the intersection region of two rectangles, say $r_u$ and $r_v$, represents the point $p_{uv}$.}\label{fig:planar}
\end{figure}
\end{proof}

\section{Bounds on the stab number for some graph classes}\label{sec:classes}
In this section, we study the stab number of some subclasses of rectangle intersection graphs. We show a lower bound on $stab(G)$ for any rectangle intersection graph $G$, which is used to derive an asymptotically tight lower bound for the stab number of grids. We also derive upper bounds on $estab(G)$ when $G$ is a split graph or a block graph.

\subsection{Lower bounds}
It is clear that given a $k$-stabbed rectangle intersection representation of a graph $G$, a set of $\omega(G)$ colours can be used to properly colour the vertices whose rectangles have a common stab (since the subgraph induced in $G$ by these vertices is an interval graph). This means that if $G$ is exactly stabbable, we can use two sets of $\omega(G)$ colours each to colour the vertices on alternate stab lines of a $k$-exactly stabbed representation of $G$ (for some $k$) to obtain a proper colouring of $G$. Thus, if $G$ is an exactly stabbable rectangle intersection graph, then $\chi(G)\leq 2\omega(G)$. For general rectangle intersection graphs, we can adapt the same colouring strategy to get the following observation.

\begin{observation}
For a rectangle intersection graph $G$, we have $\chi(G)\leq stab(G)\cdot \omega(G)$, or in other words, $stab(G)\geq\frac{\chi(G)}{\omega(G)}$.
\end{observation}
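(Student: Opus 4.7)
The plan is to give a direct colouring of $G$ using at most $stab(G)\cdot \omega(G)$ colours, echoing the colouring strategy suggested in the sentences immediately preceding the observation. First I would fix a $k$-stabbed rectangle intersection representation $\mathcal{R}$ of $G$ with $k=stab(G)$ and stab lines $y=a_1<y=a_2<\cdots<y=a_k$. For each vertex $u\in V(G)$ I would pick one stab line $y=a_{i(u)}$ that $r_u$ meets (at least one exists by the definition of a $k$-stabbed representation), and let $V_i=\{u\in V(G)\colon i(u)=i\}$, so that $V_1,\ldots,V_k$ partition $V(G)$.

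The key step is to verify that each induced subgraph $G[V_i]$ is an interval graph. For every $u\in V_i$, the horizontal slice $r_u\cap\{y=a_i\}$ is the segment $span(u)\times\{a_i\}$. Consequently, for any two distinct vertices $u,v\in V_i$, the rectangles $r_u$ and $r_v$ both contain points of height $a_i$, so $r_u\cap r_v\neq\emptyset$ if and only if $span(u)\cap span(v)\neq\emptyset$. This makes $\{span(u)\}_{u\in V_i}$ an interval representation of $G[V_i]$, confirming that $G[V_i]$ is an interval graph and, in particular, satisfies $\omega(G[V_i])\leq\omega(G)$.

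To finish, I would take $k$ pairwise disjoint colour palettes $C_1,\ldots,C_k$, each of size $\omega(G)$, and properly colour $G[V_i]$ using $C_i$ (possible because interval graphs are perfect). For an edge $uv\in E(G)$: if $i(u)=i(v)$, then $u$ and $v$ get distinct colours because the colouring of $G[V_{i(u)}]$ is proper; otherwise they lie in different palettes and their colours differ automatically. The total number of colours used is at most $k\cdot\omega(G)=stab(G)\cdot\omega(G)$, giving $\chi(G)\leq stab(G)\cdot\omega(G)$, and rearranging yields the claimed inequality $stab(G)\geq \chi(G)/\omega(G)$.

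There is no serious obstacle: the only substantive ingredient is the reduction of rectangle intersection to interval intersection on a common horizontal line, which is immediate because sharing a $y$-coordinate collapses the two-dimensional overlap test to a one-dimensional one on the $X$-projections. The partition of $V(G)$ into the sets $V_i$ (using an arbitrary choice when a rectangle meets several stab lines) is what turns the general $k$-SRIG colouring problem into $k$ independent interval-graph colouring problems.
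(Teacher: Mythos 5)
Your proposal is correct and follows essentially the same route the paper sketches in the paragraph preceding the observation: partition the vertices among the $stab(G)$ stab lines, note that the vertices assigned to a common stab line induce an interval graph (hence properly colourable with $\omega(G)$ colours), and use pairwise disjoint palettes for the different lines. The reduction to one-dimensional overlap on the $X$-projections and the bookkeeping with the sets $V_i$ are exactly the intended argument, just spelled out in more detail than the paper bothers to.
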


\noindent\textit{Remarks.} Even though for a 3-SRIG $G$, the above observation gives only $\chi(G)\leq 3\omega(G)$, we can use Theorem~\ref{thm:equiv} to infer that $G$ is actually $3$-ESRIG, and therefore $\chi(G)\leq 2\omega(G)$. Note that for any rectangle intersection graph $G$, $\chi(G)\leq 8\omega(G)^2$~\cite{asplundgrunbaum}. The question of whether there exists an upper bound on $\chi(G)$ for rectangle intersection graphs that is linear in $\omega(G)$ is open.
\medskip

We now strengthen the above observation and show that the $\chi(G)$ in the lower bound can be replaced by $pw(G)+1$, where $pw(G)$ is the ``pathwidth'' of $G$. A \emph{path decomposition} of a graph $G$ is a collection $X_1,X_2,\ldots,X_t$ of subsets of $V(G)$, where $t$ is some positive integer, such that for each edge $uv\in E(G)$, there exists $i\in\{1,2,\ldots,t\}$ such that $u,v\in X_i$ and for each vertex $u\in V(G)$, if $u\in X_i\cap X_j$, where $i<j$, then $u\in X_k$ for $i\leq k\leq j$. The \emph{width} of a path decomposition $X_1,X_2,\ldots,X_t$ of $G$ is defined to be $\max_{1\leq i\leq t}\{|X_i|\}-1$. The \emph{pathwidth} of a graph $G$, denoted by $pw(G)$, is the width of a path decomposition of $G$ of minimum width.

We adapt a proof by Suderman~\cite{suderman2004} to show that if a graph $G$ is $k$-SRIG then $G$ has pathwidth at most $k\cdot\omega(G)-1$.

\begin{theorem}\label{thm:pathwidth}
	Let $G$ be a rectangle intersection graph. Then $pw(G)\leq \omega(G)\cdot stab(G)-1$, or in other words, $stab(G)\geq\frac{pw(G)+1}{\omega(G)}$.
\end{theorem}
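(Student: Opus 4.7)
The plan is to turn a $k$-stabbed rectangle intersection representation of $G$, where $k = stab(G)$ and the stab lines are $y = a_1, \dots, y = a_k$, into a path decomposition by sweeping a vertical line from left to right. Concretely, for each $t \in \mathbb{R}$ I would define $B_t = \{v \in V(G) : t \in span(v)\}$. Since $span(v) = [x_v^-, x_v^+]$ is a closed interval, the set of $t$ with $v \in B_t$ is a contiguous interval, and for any edge $uv \in E(G)$ we have $span(u) \cap span(v) \neq \emptyset$, so both endpoints lie in $B_t$ for any $t$ in this intersection. Since $B_t$ changes only at the finitely many endpoints $\{x_v^-, x_v^+\}_{v \in V(G)}$, picking one representative $t$ from each maximal piece on which $B_t$ is constant produces a finite sequence $X_1, X_2, \dots, X_N$ that is a legitimate path decomposition of $G$.

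The main content of the proof is the bound on $|B_t|$. The key observation is to partition $B_t$ according to which stab line a vertex lies on: for each $i \in \{1, \dots, k\}$ set $B_t^i = \{v \in B_t : r_v \text{ intersects } y = a_i\}$. Every rectangle corresponding to a vertex in $B_t^i$ contains the point $(t, a_i)$, so these rectangles pairwise intersect and $B_t^i$ induces a clique in $G$. Hence $|B_t^i| \leq \omega(G)$. Since every vertex of $G$ meets at least one of the $k$ stab lines, $B_t = \bigcup_{i=1}^{k} B_t^i$, giving $|B_t| \leq k \cdot \omega(G)$. The width of the decomposition is therefore at most $k \cdot \omega(G) - 1 = \omega(G) \cdot stab(G) - 1$, which yields the claimed inequality.

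I do not expect any serious obstacle here: the heart of the argument is the single observation that the vertices whose rectangles cover a common point form a clique, after which the construction of the decomposition is just a standard interval-graph-style sweep together with the routine bookkeeping needed to turn a family indexed by $t \in \mathbb{R}$ into a finite sequence of bags.
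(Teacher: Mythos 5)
Your proposal is correct and follows essentially the same route as the paper: the paper's bags are $X_i=\{v\colon x^+_{u_i}\in span(v)\}$ for the sorted right endpoints $x^+_{u_1}\leq\cdots\leq x^+_{u_n}$, which is exactly the discretization of your sweep, and the bound on bag size is the identical argument that the vertices of a bag lying on a common stab line have rectangles sharing a point and hence form a clique of size at most $\omega(G)$.
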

\begin{proof}
	Let $G$ be a rectangle intersection graph with $stab(G)=k$ . We shall show that $pw(G)\leq k\cdot\omega(G)-1$. Let $\mathcal{R}$ be a $k$-stabbed rectangle intersection representation of $G$. Let $V(G)=\{u_1,u_2,\ldots,u_n\}$ such that $x^+_{u_1}\leq x^+_{u_2}\leq\cdots\leq x^+_{u_n}$. For $i\in\{1,2,\ldots,n\}$, let us define the subset $X_i=\{v\in V(G)\colon x^+_{u_i}\in span(v)\}$. We claim that $X_1,X_2,\ldots,X_n$ is a path decomposition of $G$. To see this, note that for any edge $u_iu_j\in E(G)$, where $i<j$, $u_i,u_j\in X_i$. Also, if some vertex $v\in X_i\cap X_j$, where $i<j$, then $span(v)$ contains both $x^+_{u_i}$ and $x^+_{u_j}$, implying that it also contains $x^+_{u_k}$, for $i\leq k\leq j$. Therefore, $v\in X_k$, for $i\leq k\leq j$. To complete the proof, we only need to show that $\max_{1\leq i\leq n}\{|X_i|\}\leq k\cdot\omega(G)$. Suppose that for some $i\in\{1,2,\ldots,n\}$, there exists $S\subseteq X_i$ such that $|S|\geq \omega(G)+1$ and all the vertices of $S$ have a common stab. Since $x^+_{u_i}\in \bigcap\limits_{u\in S} span(u)$ and the rectangles corresponding to the vertices of $S$ all intersect a common stab line, we have that the vertices of $S$ form a clique in $G$, which is a contradiction to the fact that $\omega(G)$ is the clique number of $G$. Therefore, for any $i\in\{1,2,\ldots,n\}$, there exists at most $\omega(G)$ vertices in $X_i$ that have a common stab. Since there are only $k$ stab lines in $\mathcal{R}$, we now have that $|X_i|\leq k\cdot \omega(G)$ for each $i\in\{1,2,\ldots,n\}$.
\end{proof}
\medskip

The \emph{$(h,w)$-grid} is the undirected graph $G$ with $V(G)=\{(x,y)\colon x,y\in\mathbb{Z}, 1\leq x\leq h, 1\leq y\leq w\}$ and $E(G)=\{(u,v)(x,y)\colon |u-x|+|v-y|=1\}$.
 
\begin{corollary}
Let $G$ be the $(h,w)$-grid. Then $\frac{1}{2}(\min\{h,w\}+1)\leq stab(G)\leq estab(G)\leq\min\{h,w\}$.
\end{corollary}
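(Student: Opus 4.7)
Since $stab(G)\le estab(G)$ is immediate from the definitions recalled in Section~\ref{preliminaries}, only the outer two bounds require work.

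For the lower bound the plan is to apply Theorem~\ref{thm:pathwidth} directly. The $(h,w)$-grid $G$ is triangle-free, so $\omega(G)=2$, and it is a classical fact that the pathwidth of the $(h,w)$-grid equals $\min\{h,w\}$. Substituting these into Theorem~\ref{thm:pathwidth} yields
\[
stab(G)\;\ge\;\frac{pw(G)+1}{\omega(G)}\;=\;\frac{\min\{h,w\}+1}{2}.
\]

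For the upper bound I would exhibit an explicit $\min\{h,w\}$-exactly stabbed representation. Assume without loss of generality $h\le w$, fix $\epsilon>0$ with $\epsilon w<\tfrac{1}{2}$, and for each vertex $(i,j)$ of the grid set
\[
r_{(i,j)}\;=\;[\,j+\epsilon i,\;j+1+\epsilon i\,]\times[\,i-\epsilon j,\;i+1-\epsilon j\,],
\]
together with the $h$ stab lines $y=i+\tfrac{1}{2}$ for $i=1,\dots,h$. A routine case check verifies the representation: the hypothesis $\epsilon w<\tfrac{1}{2}$ forces each rectangle's $Y$-projection to contain exactly the stab line $y=i+\tfrac{1}{2}$; a horizontal pair $(i,j),(i,j+1)$ meets along the vertical segment where their $X$-projections touch at $\{j+1+\epsilon i\}$ and their $Y$-projections overlap in a sub-interval of length $1-\epsilon$; a vertical pair $(i,j),(i+1,j)$ meets symmetrically along a horizontal segment; the positive-diagonal non-edge between $(i,j)$ and $(i+1,j+1)$ has disjoint $X$-projections since the $+\epsilon i$ shift in $X$ pushes them apart; and the negative-diagonal non-edge between $(i,j)$ and $(i+1,j-1)$ has disjoint $Y$-projections since the opposing $-\epsilon j$ shift in $Y$ pushes them apart. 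Non-edges of larger graph-distance are separated by full unit gaps in $X$ or $Y$.

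The main obstacle is designing the upper-bound construction so as to break both diagonal contacts simultaneously: in the naive unit-square tiling all four corner-neighbours of a vertex meet it at a single point, and any monotone perturbation in one coordinate alone separates one pair of diagonals only to push the opposite pair into a small overlap. The simultaneous use of orthogonal, oppositely signed shifts ($+\epsilon i$ in $X$ and $-\epsilon j$ in $Y$) is precisely what decouples the two types of diagonal. The lower bound, by contrast, reduces entirely to the classical pathwidth formula for grids combined with Theorem~\ref{thm:pathwidth}.
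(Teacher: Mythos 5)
Your proof is correct and follows essentially the same route as the paper: the lower bound comes from Theorem~\ref{thm:pathwidth} together with $\omega(G)\leq 2$ and the known fact that the $(h,w)$-grid has pathwidth $\min\{h,w\}$, and the upper bound comes from an explicit $\min\{h,w\}$-exactly stabbed representation (which the paper merely indicates via Figure~\ref{fig:gridrep}, while you write out coordinates; your sheared unit-square construction with the opposing $+\epsilon i$ and $-\epsilon j$ shifts checks out in all cases).
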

\begin{proof}
	It is clear that $\omega(G)\leq 2 $ and from a result of~\cite{ellis2008} we know that the pathwidth of the $(h,w)$-grid is $\min\{h,w\}$. From these facts and Theorem~\ref{thm:pathwidth}, we can infer that, $\frac{1}{2}(\min\{h,w\}+1)\leq stab(G)$. It is easy to see that the $(h,w)$-grid graph has a $\min\{h,w\}$-exactly stabbed rectangle intersection representation as shown in Figure~\ref{fig:gridrep}, and therefore $estab(G)\leq \min\{h,w\}$. The statement of the corollary now follows from the fact that $stab(G)\leq estab(G)$.
\end{proof}

\begin{figure}
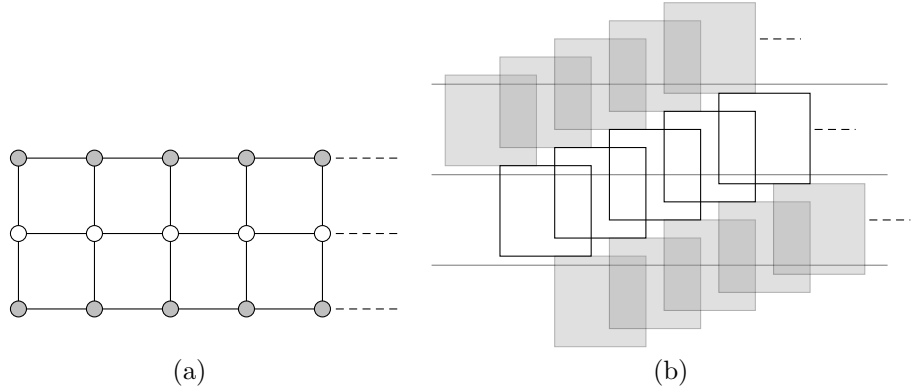

	\centering
	\begin{tabular}{cc}
		\includegraphics[page=8]{figures.pdf}&\includegraphics[page=9]{figures.pdf}\\
		(a)&(b)
	\end{tabular}
	\caption{Illustration of $\min\{h,w\}$-exactly stabbed rectangle intersection representation of the $(h,w)$-grid: (a) The $(3,n)$-grid with $n\geq 3$; (b) a 3-exactly stabbed rectangle intersection representation of the $(3,n)$-grid. }
	\label{fig:gridrep}
\end{figure}

The above corollary shows that $stab(${\sc Grids}$,n)=\Theta(\sqrt{n})$. This also shows that there are triangle-free rectangle intersection graphs on $n$ vertices whose stab number can be $\Omega(\sqrt{n})$. Moreover, these triangle-free rectangle intersection graphs are exactly stabbable.

\subsection{Split graphs}
A split graph is a graph whose vertex set can be partitioned into a clique and an independent set.
It is known that split graphs can have arbitrarily high boxicity~\cite{cozzens1983}. So it is natural to ask whether the split graphs within rectangle intersection graphs are all exactly stabbable rectangle intersection graphs. We show that any split graph with boxicity at most 2 is 3-ESRIG (Theorem~\ref{thm:split3sig}) and that there exists a split graph with boxicity at most 2 which is not 2-ESRIG (Theorem~\ref{thm:splitnot2-SIG}). From Theorem~\ref{thm:equiv}, it then follows that the stab number and exact stab number are equal for any split graph that has boxicity at most 2. Adiga et al. showed that deciding whether a split graph has boxicity at most 3 is NP-complete~\cite{adiga2010}. But as far as we know, the problem of deciding whether the boxicity of a split graph is at most 2 is not known to be polynomial-time solvable or NP-complete. By our observations below, it follows that this problem is equivalent to deciding whether a given split graph is 3-ESRIG (or equivalently, 3-SRIG).

\begin{theorem}\label{thm:split3sig}
A split graph $G$ is a rectangle intersection graph if and only if $G$ is a 3-ESRIG.
\end{theorem}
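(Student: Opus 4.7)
The ``if'' direction is immediate, since every $3$-ESRIG is by definition a rectangle intersection graph. For the ``only if'' direction, let $G$ be a split graph with partition $V(G)=K\cup I$ (with $K$ a clique and $I$ an independent set) that admits a rectangle intersection representation $\mathcal{R}$. By the first theorem in Section~\ref{sec:basic} (characterizing $k$-ESRIGs via pairs of interval graphs), it suffices to exhibit interval graphs $J_1,J_2$ on $V(G)$ with $E(G)=E(J_1)\cap E(J_2)$ such that $J_2$ has an exact hitting set of size~$3$.

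Since $\{r_u\colon u\in K\}$ is a pairwise intersecting family of axis-aligned rectangles, the Helly property in the plane yields a common point $(x_0,y_0)\in\bigcap_{u\in K}r_u$. Partition the independent set according to the $y$-position of $r_v$ relative to $y_0$, as $I_{\text{bot}}=\{v\in I\colon y^+_v<y_0\}$, $I_{\text{mid}}=\{v\in I\colon y_0\in[y^-_v,y^+_v]\}$, and $I_{\text{top}}=\{v\in I\colon y^-_v>y_0\}$. Note that the spans of vertices in $I_{\text{mid}}$ are pairwise disjoint, since any two such rectangles having overlapping spans would intersect at height $y_0$. To construct $J_2$, I pick stab points $p_1<p_2<p_3$ and two strictly monotone functions $\tau\colon\mathbb{R}\to(p_2,p_3)$ and $\tau'\colon\mathbb{R}\to(p_1,p_2)$, and declare the interval of a vertex to be $[\tau'(y^-_w),\tau(y^+_w)]$ for $w\in K\cup I_{\text{mid}}$, $[\tau(y^-_v),p_3+1]$ for $v\in I_{\text{top}}$, and $[p_1-1,\tau'(y^+_v)]$ for $v\in I_{\text{bot}}$. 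Each such interval contains exactly one of $p_1,p_2,p_3$, so $\{p_1,p_2,p_3\}$ is an exact hitting set of size~$3$ for $J_2$; moreover, the monotonicity of $\tau,\tau'$ makes $J_2$-adjacencies between consecutive strata coincide with $y$-overlap in $\mathcal{R}$ (for example, $v\in I_{\text{top}}$ and $u\in K$ are adjacent in $J_2$ iff $y^-_v\le y^+_u$, which is exactly the $y$-overlap condition since $y^-_u\le y_0<y^-_v$ holds automatically).

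For $J_1$, I begin from the interval graph $H_1$ of the spans of $\mathcal{R}$. The only non-edges of $G$ that survive in $J_2$ are those within $I_{\text{top}}$ and within $I_{\text{bot}}$ (each of which becomes a clique of $J_2$ via the corresponding stab point), so $J_1$ must kill them. For each $v\in I_{\text{top}}\cup I_{\text{bot}}$, Helly applied to the pairwise intersecting family $\{r_v\}\cup\{r_u\colon u\in N(v)\}$ yields a common point $(x_v^*,y_v^*)$; after a generic perturbation of $\mathcal{R}$ (which does not change $G$) the points $x_v^*$ may be chosen pairwise distinct and in the interior of $\bigcap_{u\in N(v)}span(u)$. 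Define $J_1$ by keeping $span(w)$ for $w\in K\cup I_{\text{mid}}$ and replacing $span(v)$ for each $v\in I_{\text{top}}\cup I_{\text{bot}}$ by a small interval around $x_v^*$ chosen small enough to be pairwise disjoint and still contained in every $span(u)$ with $u\in N(v)$. The verification $E(G)=E(J_1)\cap E(J_2)$ is a straightforward case analysis on pairs of vertices; the most delicate case is a non-edge $uv$ of $G$ with $u\in K$ and $v\in I_{\text{top}}\cup I_{\text{bot}}$, where $G$-non-adjacency forces either $span(u)\cap span(v)=\emptyset$ (so $x_v^*\notin span(u)$ and $u,v$ are non-adjacent in $J_1$) or the $y$-projections of $r_u,r_v$ are disjoint (so the monotone encoding places $u$ and $v$ on opposite sides of the threshold in $J_2$). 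Invoking the first theorem in Section~\ref{sec:basic} for $J_1,J_2$ then produces the desired $3$-ESRIG representation of $G$.
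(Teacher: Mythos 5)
Your proposal is correct, but it reaches the conclusion by a different route than the paper. The paper works directly on the rectangle representation: it translates so that the origin lies in $\bigcap_{v\in C} r_v$, shrinks each independent vertex $u$ to a point $p_u$ chosen off the $X$-axis inside $A_u=\bigcap_{v\in N[u]} r_v$ (your Helly point $(x_v^*,y_v^*)$), and then replaces each such point by a vertical segment reaching $y=y_{max}+1$ or $y=y_{min}-1$ according to whether $p_u$ lies above or below the axis; the stab lines are $y=y_{min}-1$, $y=0$, $y=y_{max}+1$. You instead route the argument through the characterization of $k$-ESRIGs as intersections of two interval graphs one of which has an exact hitting set of size $k$, stratify $I$ into three classes relative to a Helly point of the clique, and verify $E(G)=E(J_1)\cap E(J_2)$ combinatorially. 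The underlying geometry is essentially the same (Helly used twice, clique pinned to the middle stab line, independent vertices pushed to the outer lines), but your version keeps an $I_{\mathrm{mid}}$ class on the middle line where the paper perturbs every independent vertex off it, and your verification is an edge-by-edge case analysis rather than a direct geometric check; what the characterization-theorem route buys you is that you never have to argue about degenerate rectangles or segment--rectangle intersections. Two small points to tighten. First, the sentence claiming that the only non-edges of $G$ surviving in $J_2$ are those within $I_{\mathrm{top}}$ and within $I_{\mathrm{bot}}$ is literally false: every non-edge inside $K\cup I_{\mathrm{mid}}$ also survives in $J_2$, since that whole set is a clique of $J_2$. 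What is true, and what your case analysis actually uses, is that these are the only non-edges that can survive in both $J_2$ and the span graph $H_1$ (non-edges touching $I_{\mathrm{mid}}$ or $K$ that survive $J_2$ are forced to have disjoint spans). Second, when you shrink $span(v)$ to a small interval around $x_v^*$, you should also require it to be disjoint from every $span(w)$, $w\in K\cup I_{\mathrm{mid}}$, with $x_v^*\notin span(w)$ (finitely many closed intervals, so this is harmless); without this a spurious $J_1$-edge could survive the intersection when the corresponding $y$-projections overlap.
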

\begin{proof}
As $G$ is a split graph, there exists a partition of $V(G)$ into sets $C$ and $I$ such that $C$ is a clique and $I$ is an independent set.
If $G$ is a 3-ESRIG then $G$ is a rectangle intersection graph. Now let $G$ be a split graph having a rectangle intersection representation $\mathcal{R}$ such that for any two vertices $u,v\in V(G)$, $\{x^-_u,x^+_u,y^-_u,y^+_u\}\cap\{x^-_v,x^+_v,y^-_v,y^+_v\}=\emptyset$ (note that such a rectangle intersection representation exists for any rectangle intersection graph). We shall assume without loss of generality that in this representation, the origin is contained in $\bigcap_{v\in C} r_v$. For every vertex $u\in I$, define the region $A_u=\bigcap_{v\in N[u]} r_v$. It is easy to see that $A_u\subseteq r_u$. It follows that for vertices $u,v\in V(G)$ such that $u\in I$ and $v\notin N[u]$, $A_u\cap r_v=\emptyset$. Also, $A_u$ is a rectangle (by the Helly property of rectangles) with non-zero height and width. This means that we can choose a point $p_u$ in $A_u$ that is not on the $X$-axis for each vertex $u\in I$, while satisfying the additional property that no two points in $\{p_u\}_{u\in I}$ have the same $x$-coordinate. Consider $u\in I$. Since the degenerate rectangle given by the point $p_u$ intersects all the rectangles in $\{r_v\}_{v\in N(u)}$, we can replace the rectangle $r_u$ with the degenerate rectangle given by the point $p_u$ to obtain a new rectangle intersection representation of $G$. Let $\mathcal{R}'$ be the rectangle intersection representation of $G$ obtained in this fasion, i.e. $\mathcal{R}'=(\mathcal{R}\setminus\{r_u\}_{u\in I})\cup\{p_u\}_{u\in I}$ (see Figure~\ref{fig:splitrep}(a)).

\begin{figure}
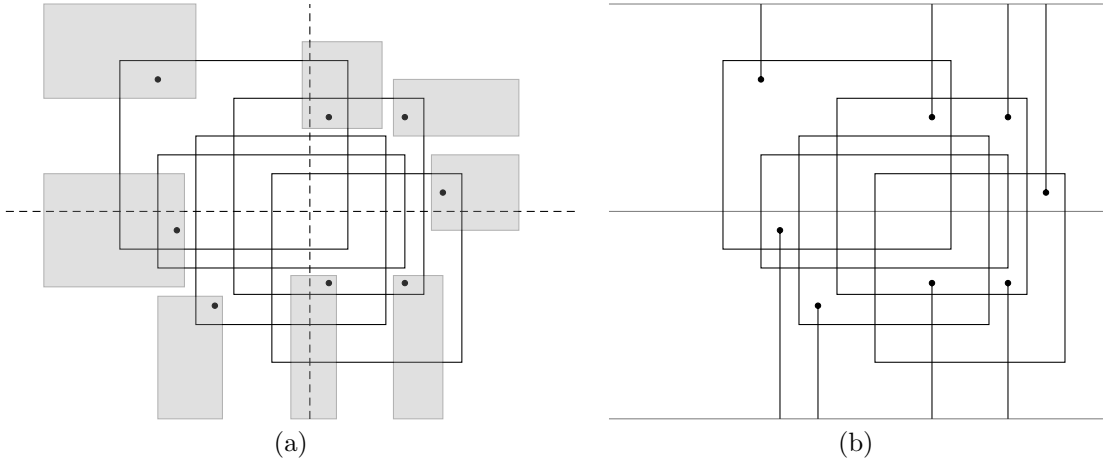

	\centering
	\begin{tabular}{cc}
		\includegraphics[page=4]{figures.pdf}&
		\includegraphics[page=5]{figures.pdf}\\
		(a)&(b)
	\end{tabular}
	\caption{Representation of split graphs with boxicity at most 2. (a) The shaded rectangles represent vertices of the independent set of the split graph and the dots indicate the points $p_u$, for each vertex $u$ in the independent set. (b) The 3-ESRIG representation derived from the rectangle intersection representation given in (a).}
	\label{fig:splitrep}
\end{figure}

Let $I^+$ (respectively $I^-$) be the set of vertices $\{u\in I\colon p_u$ is above (respectively, below) the $X$-axis $\}$. Let $y_{max}=\max\{y^+_v\colon v\in C\}$ and $y_{min}=\min\{y^-_v\colon v\in C\}$. For each vertex $u\in I^+$, we define $s_u$ to be the degenerate rectangle given by the vertical line segment whose bottom end point is $p_u$ and top end point has $y$-coordinate $y_{max}+1$. Similarly, for each vertex $u\in I^-$, we define $s_u$ to be the degenerate rectangle given by the vertical line segment whose top end point is $p_u$ and bottom end point has $y$-coordinate $y_{min}-1$. As each rectangle in $\mathcal{R}'$ corresponding to a vertex in $C$ contains the origin, we have that for any $u,v\in V(G)$ such that $u\in I$ and $v\in C$, the rectangle $r_v$ intersects $s_u$ if and only if $r_v$ contains $p_u$. Therefore, the collection of rectangles given by $(\mathcal{R}'\setminus \{p_u\}_{u\in I}) \cup \{s_u\}_{u\in I}$ is a rectangle intersection representation of $G$. It is easy to see that this rectangle intersection representation, together with the horizontal lines $y=y_{min}-1$, $y=0$, and $y=y_{max}+1$, forms a 3-ESRIG representation of $G$ (see Figure~\ref{fig:splitrep}(b)).
\end{proof}

\begin{theorem}\label{thm:splitnot2-SIG}
There is a split graph $G$ which is a rectangle intersection graph but not a 2-ESRIG.
\end{theorem}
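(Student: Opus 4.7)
The plan is to exhibit a specific split graph $G$, construct an explicit rectangle intersection representation of it, and then argue by contradiction that $G$ is not a $2$-ESRIG; by Theorem~\ref{thm:equiv} this will also show $G$ is not a $2$-SRIG.

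The central structural ingredient, to be proved first, concerns any hypothetical $2$-ESRIG representation of a split graph with stab lines $y=0$ and $y=1$. The clique $C$ splits as $C=C_0\sqcup C_1$ by stab line, and for each $l\in\{0,1\}$ the $x$-spans of the rectangles in $C_l$ pairwise intersect and so share a common point $p_l$ on stab line $y=l$ by the Helly property of intervals. Consequently, for any $u\in I$ placed on stab line $y=l$ whose $x$-span misses $p_l$, the set $N(u)\cap C_l$ must consist of the $|N(u)\cap C_l|$ clique vertices in $C_l$ with the largest right endpoints (when $u$ lies right of $p_l$) or the smallest left endpoints (when $u$ lies left of $p_l$); moreover, adjacency of $u$ with a clique vertex on the opposite stab line additionally requires $y$-range reach across the gap, while non-adjacency there can be achieved either by $x$-span disjointness or by $y$-range escape.

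For the construction, I would take $C$ of moderate size (say $|C|=4$) and include in $I$ a carefully chosen collection of vertices whose neighborhoods in $C$ over-determine the orderings of the right/left endpoints and tops/bottoms of $C$ on each stab line; a natural first attempt is to include a vertex $u_S$ for each pair $S\in\binom{C}{2}$, adjacent in $G$ exactly to the vertices in $S$, possibly augmented with singleton-neighborhood or three-element-neighborhood $u$-vertices to block the more flexible partitions of $C$. An explicit rectangle intersection representation can then be produced by arranging the $r_{c_i}$'s as a slightly staggered family so that each pairwise intersection region contains a ``pocket'' disjoint from the remaining clique rectangles, and placing each $u\in I$ in the pocket corresponding to its neighborhood.

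For the non-$2$-ESRIG half of the proof, I would enumerate the partitions of $C$ up to symmetry, namely $\{|C_0|,|C_1|\}\in\{\{4,0\},\{3,1\},\{2,2\}\}$, and in each case combine the structural observation above with a counting argument to derive a contradiction. The cases $\{4,0\}$ and $\{3,1\}$ are handled by noting that each stab line provides only a bounded number of ``extremal'' neighborhood patterns (at most one on the right side of $p_l$ and one on the left side for each cardinality), so the six or more $u$-vertices in $I$ cannot all fit. The main obstacle is the $\{2,2\}$ case, where $y$-range escape provides extra flexibility; here the argument reduces to showing that the $y$-bottom thresholds forced by different $u$-vertices whose neighborhoods straddle the two stab lines are mutually inconsistent, so no simultaneous placement of the $u$'s exists.
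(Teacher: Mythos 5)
Your overall plan (a concrete split graph with independent vertices witnessing prescribed subsets of a $4$-clique, an explicit representation, and a contradiction inside a normalized $2$-ESRIG representation) matches the paper's, but as written the proposal has two genuine gaps. First, the graph is never pinned down: you say the pair-witnesses $u_S$, $S\in\binom{C}{2}$, are ``possibly augmented with singleton-neighborhood or three-element-neighborhood $u$-vertices to block the more flexible partitions.'' The contradiction you need depends on exactly which witnesses are present (the paper takes one $u_X$ for \emph{every} $X\subseteq C$ with $1\leq|X|\leq 3$, and its argument genuinely uses both a $2$-subset witness and a $3$-subset witness), so this cannot be left open. Second, and more seriously, the heart of the impossibility argument is deferred rather than proved. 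Your counting bound for a stab line (``at most one pattern on each side of $p_l$ per cardinality'') only accounts for same-line adjacency; for a $u\in I$ on the stab line opposite to a clique vertex $v$, adjacency is an $x$-span condition \emph{and} a $y$-reach condition, so the realizable neighborhoods are intersections of prefix/suffix $x$-patterns with suffixes in the $y^+$ (or $y^-$) order, which is a strictly richer family than you count. And in the $\{2,2\}$ case you state only that ``the argument reduces to showing that the $y$-bottom thresholds \dots are mutually inconsistent'' without exhibiting the inconsistency. Until that is done for a fixed vertex set, there is no proof.

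For comparison, the paper avoids the partition case analysis entirely. After trimming all rectangles into the strip $[0,1]$, it takes $a,b\in C$ with $x^-_a$ maximal and $x^+_b$ minimal, so the remaining two clique vertices $c,d$ satisfy $[x^-_a,x^+_b]\subseteq span(c)\cap span(d)$. If $c,d$ were on different stab lines, then $r_c\cup r_d\supseteq [x^-_a,x^+_b]\times[0,1]\supseteq r_a\cap r_b$, contradicting the existence of the private point of $u_{\{a,b\}}$; hence $c,d$ share a stab line, and with $y^+_c\leq y^+_d$ one gets $r_a\cap r_b\cap r_c\subseteq[x^-_a,x^+_b]\times[0,y^+_c]\subseteq r_d$, contradicting the private point of $u_{\{a,b,c\}}$. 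If you want to salvage your route, I would recommend replacing the case analysis by this extremal-span argument; otherwise you must fix the vertex set and fully execute the counting, including the opposite-stab-line patterns and the $\{2,2\}$ case.
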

\begin{proof}
Let $G$ be the split graph whose vertex set is partitioned into a clique $C$ on four vertices and an in independent set $I$ of 14 vertices, and whose edges are defined as follows. Let $\mathcal{X}$ be the set of all subsets $X$ of $C$ with $1\leq |X|\leq 3$. For every $X\in \mathcal{X}$, there is exactly one vertex $u_X\in I$ such that $N(u_X)=X$. See Figure~\ref{fig:not2sig}(a) for a drawing of the graph $G$. Clearly, $G$ has a rectangle intersection representation as shown in Figure~\ref{fig:not2sig}(b). 

\begin{figure}
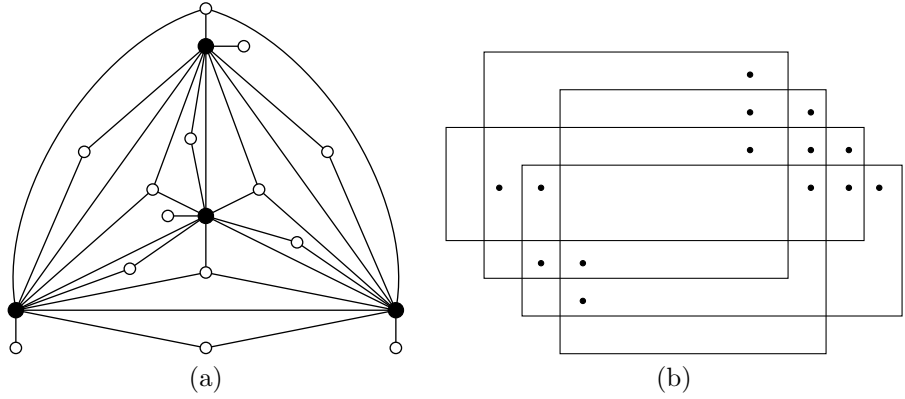

	\centering
	\begin{tabular}{cc}
		\includegraphics[page=6]{figures.pdf}&	\includegraphics[page=7]{figures.pdf}\\
		(a)&(b)
	\end{tabular}
	\caption{(a) A planar split graph which is 3-ESRIG but not 2-ESRIG. The clique vertices are coloured black and the remaining vertices are independent vertices. (b) A rectangle intersection representation of the graph shown in (a). The vertices corresponding to the independent set are represented as points.}
	\label{fig:not2sig}
\end{figure}

Now assume for the sake of contradiction that $G$ has a 2-ESRIG representation $\mathcal{R}$. We can assume that the stab lines are $y=0$ and $y=1$. We shall further assume that all the rectangles are contained in the strip of the plane between the two stab lines, i.e. for each $v\in V(G)$, we have $y^-_v\geq 0$ and $y^+_v\leq 1$ (it is easy to see that every 2-ESRIG representation can be converted to such a 2-ESRIG representation by ``trimming'' the parts of the rectangles that lie above the top stab line and below the bottom stab line).

Observe that for each $X\in\mathcal{X}$, the rectangle $r_{u_X}$ intersects all the rectangles in $\{r_v\}_{v\in X}$ and is disjoint from each rectangle in $\{r_v\}_{v\in C\setminus X}$. Now choose a point $p_X\in r_{u_X}\cap \bigcap_{v\in X} r_v$. Clearly, $p_X\in \bigcap_{v\in X} r_v$ and $p_X\notin\bigcup_{v\in C\setminus X} r_v$.

Let $a,b\in C$ (not necessarily distinct) such that $x^-_a=\max\{x^-_v\}_{v\in C}$ and $x^+_b=\min\{x^+_v\}_{v\in C}$. Let $c,d$ be two distinct vertices in $C\setminus\{a,b\}$. By our choice of $a$ and $b$, we have $[x^-_a,x^+_b]\subseteq span(c)$ and $[x^-_a,x^+_b]\subseteq span(d)$, or in other words $[x^-_a,x^+_b]\subseteq span(c)\cap span(d)$.

\begin{claim}
The vertices $c$ and $d$ have a common stab.
\end{claim}

\noindent Suppose for the sake of contradiction that $c$ and $d$ do not have a common stab. Then, since $[x^-_a,x^+_b]\subseteq span(c)\cap span(d)$ and $r_c\cap r_d\neq\emptyset$, it follows that the rectangle $[x^-_a,x^+_b]\times [0,1]\subseteq r_c\cup r_d$. We thus have $r_a\cap r_b\subseteq [x^-_a,x^+_b]\times [0,1]\subseteq r_c\cup r_d$. But this contradicts the fact that there exists a point $p_{\{a,b\}}$ such that $p_{\{a,b\}}\in r_a\cap r_b$ and $p_{\{a,b\}}\notin r_c\cup r_d$. This proves the claim.
\medskip

By the above claim, we shall assume without loss of generality that $c$ and $d$ are on the stab line $y=0$ and that $y^+_c\leq y^+_d$. This implies that $[x^-_a,x^+_b]\times [0,y^+_c]\subseteq [x^-_a,x^+_b]\times [0,y^+_d]\subseteq r_d$ (recall that $[x^-_a,x^+_b]\subseteq span(d)$). Note that $r_a\cap r_b\cap r_c\subseteq [x^-_a,x^+_b]\times [0,y^+_c]$, implying that $r_a\cap r_b\cap r_c\subseteq r_d$. But this contradicts the fact that there exists a point $p_{\{a,b,c\}}$ such that $p_{\{a,b,c\}}\in r_a\cap r_b\cap r_c$ and $p_{\{a,b,c\}}\notin r_d$.
\end{proof}

\subsection{Block graphs}\label{sec:upperbound}
A graph $G$ is a block graph if every block (i.e 2-connected component) of $G$ is a clique. Note that all trees are block graphs. It is not hard to see that all trees, and indeed all block graphs, are rectangle intersection graphs. We show that all block graphs are exactly stabbable rectangle intersection graphs and give an upper bound of $\lceil\log m\rceil$ for the exact stab number of block graphs with $m$ blocks, where $m\geq 2$. Note that this implies an upper bound of $\lceil\log n\rceil$ for the exact stab number of trees on $n$ vertices. We shall show in Section~\ref{sec:gl} that this bound is asymptotically tight, by constructing trees whose stab number is $\Omega(\log n)$.

Let $G$ be a block graph.
Given a $k$-exactly stabbed rectangle intersection representation $\mathcal{R}$ of $G$, we say that a set of vertices $S\subseteq B$, where $B$ is a block in $G$, is \emph{accessible} if all vertices in $S$ are on the bottom stab line of $\mathcal{R}$ and for any vertex $v\notin S$ either $v$ is not on the bottom stab line or $x^-_u<x^-_v$ for every vertex $u\in S$. 

\begin{theorem}\label{thm:blockstab}
For any block graph $G$ with $m$ blocks, $estab(G)\leq\max\{1,\lceil \log m\rceil\}$.
\end{theorem}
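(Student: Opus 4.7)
The plan is to prove the bound by strong induction on $m$, the number of blocks. To make the induction go through, I would strengthen the statement as follows: for every block graph $G$ with $m$ blocks and every non-empty set $S$ of vertices contained in some block of $G$, there is an exactly stabbed rectangle intersection representation of $G$ with at most $\max\{1,\lceil\log m\rceil\}$ stab lines in which $S$ is accessible. The base case $m = 1$ is immediate: $G$ is then a single clique (an interval graph), whose intervals can be laid out on one stab line with $S$ placed leftmost.

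For the inductive step with $m \geq 2$, I would locate a balanced separator in the block-cut tree of $G$. A weighted tree-centroid argument (weight $1$ on block-nodes, weight $0$ on cut-vertex-nodes) produces a node whose removal from the block-cut tree leaves every component with at most $\lfloor m/2\rfloor$ blocks. Depending on whether this node is a block $B^*$ or a cut vertex $c^*$, we obtain a decomposition of $G$ into subblock graphs $G_1, \ldots, G_t$ (possibly together with $B^*$), each with $m_i \leq \lfloor m/2\rfloor$ blocks and meeting $B^*$ (or each other) only at a single cut vertex $v_i$. Since $\lceil\log m_i\rceil \leq \lceil\log m\rceil - 1$, the inductive hypothesis applied to each pair $(G_i, \{v_i\})$ supplies a representation $\mathcal{R}_i$ of $G_i$ with at most $\lceil\log m\rceil - 1$ stab lines, in which $\{v_i\}$ is accessible.

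To assemble a representation of $G$ on $k = \lceil\log m\rceil$ stab lines $\ell_1 < \cdots < \ell_k$, I would place $B^*$'s vertices as pairwise-overlapping intervals on $\ell_1$ and lay each $\mathcal{R}_i$ horizontally beside $B^*$, with $\mathcal{R}_i$'s stab lines identified with $\ell_1, \ldots, \ell_{\lceil\log m_i\rceil}$, so that $v_i$ lives on $\ell_1$. The accessibility of $\{v_i\}$ in $\mathcal{R}_i$ guarantees that no rectangle of $\mathcal{R}_i$ extends further left on $\ell_1$ than $v_i$ does, so $v_i$'s rectangle can be extended leftward to overlap every interval of $B^*$ on $\ell_1$ without disturbing the intersections inside $\mathcal{R}_i$. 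The accessibility of the prescribed $S$ in the combined representation is arranged either by preferring a centroid that lies in the block containing $S$, or by placing the sub-representation containing $S$ leftmost in the layout so that $S$'s accessibility carries over from its accessibility inside that sub-representation. The main obstacle is the geometry of the assembly when several $\mathcal{R}_i$'s share stab lines and each must reach $B^*$: the leftward extensions of distinct $v_i$'s must not traverse another $\mathcal{R}_j$'s rectangles on $\ell_1$ and thereby introduce spurious intersections. I expect this is handled by splitting the $\mathcal{R}_i$'s into two groups placed on opposite horizontal sides of $B^*$ (using a symmetric right-accessibility for the left-side group) and by ensuring that within each $\mathcal{R}_i$ the $\ell_1$-vertices other than $v_i$ lie in a subinterval through which no other $v_j$'s extension needs to pass. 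Making these geometric details precise while preserving $S$'s accessibility is the delicate core of the argument.
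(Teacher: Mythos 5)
Your overall skeleton --- induction on $m$, a strengthened hypothesis guaranteeing that a prescribed set $S$ inside a block is \emph{accessible}, and a halving of the block count to obtain the $\lceil\log m\rceil$ bound --- matches the paper's proof. But the decomposition you choose creates two concrete gaps that your assembly does not close. The first is a tension between rooting the recursion at a \emph{balanced} separator and rooting it at the block containing $S$. If you root at a centroid $B^*$, then $S$ generally lies inside one of the pieces $G_i$, and the recursive call on $G_i$ must make \emph{two} sets accessible --- the attachment vertex $v_i$ (to glue $G_i$ to $B^*$) and $S$ (to propagate accessibility upward) --- which your induction hypothesis, prescribing a single set in a single block, does not supply; the problem recurs at every level, since the attachment vertex handed down to $G_i$ need not lie in $G_i$'s centroid block. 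If instead you root at the block containing $S$, the pieces no longer all have at most $m/2$ blocks and the bound $\lceil\log m_i\rceil\le\lceil\log m\rceil-1$ fails. The paper's way out is to root at the block $B$ containing $S$ anyway and observe that for \emph{any} block, at most one component of $G-B$ can have more than $m/2$ blocks; that single heavy component $H^*$ is recursed with the \emph{full} budget of $k$ stab lines (legitimate because it still has fewer than $m$ blocks) and placed horizontally beside $B$, while every light component is recursed with $k-1$ stab lines and placed strictly \emph{above} $B$, occupying only the upper stab lines.

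The second gap is the geometric difficulty you yourself flag, and it is not fixed by splitting the pieces into two groups on either side of $B^*$: with three or more pieces whose representations all place vertices on the bottom stab line, some attachment vertex's horizontal extension toward $B^*$ must cross another piece's bottom-line rectangles, creating spurious intersections. The paper's construction makes this problem vanish rather than solving it: only $H^*$ keeps vertices on the bottom stab line (it sits entirely to one side of $B$ and is reached by extending a single vertex $u^*$ of $B$ toward the accessible set of $H^*$); every other piece is lifted off the bottom line, nested horizontally inside the span of its attachment vertex $u_i\in B$, and dips down only to the top edge of $r_{u_i}$ at height $h_i$, with $h_1<\cdots<h_{|B|}$ chosen so that no piece accidentally touches $r_{u_j}$ for $j\neq i$. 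Without this device (or an equivalent one), the assembly step of your argument does not go through.
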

\begin{proof}
Note that we only need the statement of the theorem to be proved for connected graphs. In fact, we shall prove the following stronger claim for connected graphs.
	
\begin{claim}
Let $G$ be any connected block graph with $m$ blocks and let $k=\max\{1,\lceil\log m \rceil\}$. Then for any block $B$ of $G$, any subset $S$ of $B$, any $a,b\in\mathbb{R}$ such that $a<b$, and any $h\in\mathbb{R}$ such that $0\leq h<1$, there is a $k$-exactly stabbed rectangle intersection representation $\mathcal{R}(S,a,b,h)$ of $G$ with stab lines $y=0$, $y=1$, $y=2$, $\ldots$, $y=k-1$ such that:
\begin{itemize}
\vspace{-0.05in}
\itemsep -0.025in
\item $S$ is accessible,
\item for every vertex $u\in V(G)$, $span(u)\subseteq (a,b)$,
\item for every vertex $u\in V(G)$ that is on the bottom stab line, we have $y^+_u>h$, and
\item for every vertex $u\in V(G)$ that is not on the bottom stab line, we have $y^-_u>h$.
\end{itemize}
\end{claim}
\vspace{-0.1in}
	
\noindent \textit{Proof.} 
We prove the claim by induction on $m$. When $m\leq 2$, $G$ is an interval graph. It is not hard to see that the statement of the claim is true in this case. From here onwards, we shall assume that $m\geq 3$, and that the statement of the claim is true when the number of blocks in the graph is lesser than $m$.
	
Let $\mathcal{H}$ be the set of components of $G-B$. It is easy to see that each graph $H\in\mathcal{H}$ is a block graph and at most one of them can have greater than $\frac{m}{2}$ blocks. We shall denote the graph in $\mathcal{H}$ that has greater than $\frac{m}{2}$ blocks, if it exists, as $H^*$. For a vertex $u\in B$, let $\mathcal{H}_u=\{H\in\mathcal{H}\colon N(u)\cap V(H)\neq\emptyset\}$. Note that for $u,v\in B$ such that $u\neq v$, $\mathcal{H}_u\cap\mathcal{H}_v=\emptyset$. Also, since $G$ is connected, $\{\mathcal{H}_u\}_{u\in B}$ is a partition of $\mathcal{H}$. If $H^*$ exists, let $u^*\in B$ be the vertex such that $H^*\in\mathcal{H}_{u^*}$. 
	
Let $\mathcal{I}_B=\{[c_u,d_u]\}_{u\in B}$ be an interval representation of $G[B]$ (which is a complete graph) such that all endpoints of intervals are distinct, $[c_u,d_u]\subseteq (a,\frac{a+b}{2})$ for any $u\in V(G)$, and for any $u\in S$ and $v\in B\setminus S$, we have $c_u<c_v$. Let $B=\{u_1,u_2,\ldots,u_{|B|}\}$, where $c_{u_1}<c_{u_2}<\cdots<c_{u_{|B|}}$. We shall define $c_{u_{|B|+1}}=d_{u_{|B|}}$ (this shall be used later on). Choose $|B|$ real numbers $h_1,h_2,\ldots,h_{|B|}$ such that $h<h_1<h_2<\cdots<h_{|B|}<1$. We define $r_u$ for every vertex $u\in B$ other than $u^*$ as follows: $r_u=[c_u,d_u]\times [0,h_i]$, where $i\in\{1,2,\ldots,|B|\}$ is such that $u=u_i$. We shall show how to define $r_{u^*}$, in case $u^*$ exists, later.
	
For each $i\in\{1,2,\ldots,|B|\}$, let $\mathcal{H}'_i=\mathcal{H}_{u_i}\setminus\{H^*\}$, if $H^*$ exists, and $\mathcal{H}'_i=\mathcal{H}_{u_i}$ otherwise. Let $t_i=|\mathcal{H}'_i|$. For each $i\in\{1,2,\ldots,|B|\}$, let $\mathcal{H}'_i=\{H_{i,1},H_{i,2},\ldots,H_{i,|t_i|}\}$ and for each  $j\in\{1,2,\ldots,t_i\}$, let $S_{i,j}=N(u_i)\cap V(H_{i,j})$ (which is nonempty by the definition of $\mathcal{H}_{u_i}$). For each $i\in\{1,2,\ldots,|B|\}$, choose $t_i+1$ real numbers $c_{u_i}<q_{i,1}<q_{i,2}<\cdots<q_{i,t_i+1}<c_{u_{i+1}}$ (recall that $c_{u_{|B|+1}}=d_{u_{|B|}}$). Now consider any $i\in\{1,2,\ldots,|B|\}$ and any $j\in\{1,2,\ldots,t_i\}$. As the number of blocks in $H_{i,j}$ is at most $\frac{m}{2}$, we can apply the induction hypothesis on $H_{i,j}$ to conclude that there is a $\max\{1,\lceil\log m\rceil-1\}$-exactly stabbed rectangle intersection representation $\mathcal{R}'_{i,j}=\mathcal{R}(S_{i,j},q_{i,j},q_{i,j+1},0)$ of $H_{i,j}$. Since $m\geq 3$, we know that $k\geq 2$ and that $\max\{1,\lceil\log m\rceil-1\}=k-1$. Thus, $\mathcal{R}'_{i,j}$ uses the stab lines $y=0$, $y=1$, $\ldots$, $y=k-2$. For each vertex $v\in V(H_{i,j})$, let $r'_v=[x'^-_v,x'^+_v]\times [y'^-_v,y'^+_v]$ be the rectangle corresponding to $v$ in $\mathcal{R}'_{i,j}$. We now define $r_v$ for each vertex $v\in V(H_{i,j})$ for all $i\in\{1,2,\ldots,|B|\}$ and $j\in\{1,2,\ldots,t_i\}$ as follows. If $v\in S_{i,j}$, then $r_v=[x'^-_v,x'^+_v]\times [h_i,y'^+_v+1]$. If $v\in V(H_{i,j})\setminus S_{i,j}$ and $v$ is on the bottom stab line of $\mathcal{R}'_{i,j}$, then we define $r_v=[x'^-_v,x'^+_v]\times [1,y'^+_v+1]$. Lastly, if $v\in V(H_{i,j})\setminus S_{i,j}$, but $v$ is not on the bottom stab line of $\mathcal{R}'$, then we define $r_v=[x'^-_v,x'^+_v]\times [y'^-_v+1,y'^+_v+1]$.
	
\medskip
(*) For an integer $i\in \{1,2,\ldots,|B|\}$ and a vertex $v$ of some $H\in\mathcal{H}'_i$, we have $[x^-_v,x^+_v]\subset [c_{u_i},c_{u_{i+1}}]$.
	
\medskip
(+) For an integer $i\in \{1,2,\ldots,|B|\}$ and for any two distinct integers $j,k\in\{1,2,\ldots,t_i\}$ let $u$ be a vertex in $V(H_{i,j})$ and $v$ be a vertex in $V(H_{i,k})$. Then $r_u\cap r_v=\emptyset$ (since $[x^-_u,x^+_u]$, $[x^-_v,x^+_v]$ belong respectively to the intervals $(q_{i,j},q_{i,j+1}),(q_{i,k},q_{i,k+1})$ which are disjoint). 
	
\medskip
(++) Let $i,j$ be two distinct integers in $\{1,2,\ldots,|B|\}$. Let $u$ be a vertex in some graph in $\mathcal{H}'_i$ and $v$ be a vertex in some graph in $\mathcal{H}'_j$. Then $r_u\cap r_v=\emptyset$ (since $[x^-_u,x^+_u]$, $[x^-_v,x^+_v]$ belong respectively to the intervals $(c_{u_i},c_{u_{i+1}}),(c_{u_j},c_{u_{j+1}})$ which are disjoint). 
	
\medskip
We now define a rectangle $r_v$ for each vertex $v\in V(H^*)$ and the rectangle $r_{u^*}$ for $u^*$, in case $H^*$ exists. Let $S^*=N[u^*]\cap V(H^*)$. Since $H^*$ contains less than $m$ blocks, and recalling that $k=\max\{1,\lceil\log m\rceil\}$, we have by the induction hypothesis that $H^*$ has a $k$-exactly stabbed rectangle intersection representation $\mathcal{R}^*=\mathcal{R}(S^*,\frac{a+b}{2},b,h_{|B|})$ that uses the stab lines $y=0$, $y=1$, $\ldots$, $y=k-1$. Let the rectangle in $\mathcal{R}^*$ corresponding to a vertex $v\in V(H^*)$ be denoted by $r^*_v=[x^{*-}_v,x^{*+}_v]\times [y^{*-}_v,y^{*+}_v]$. We define $r_v=r^*_v$ for every vertex $v\in V(H^*)$.
We now let $r_{u^*}=[c_{u^*},\max\{x^{*-}_v\colon v\in S^*\}]\times [0,h_i]$, where $i\in\{1,2,\ldots,|B|\}$ is such that $u^*=u_i$.

\medskip
(+++) Let $i$ be any integer in $\{1,2,\ldots,|B|\}$. Let $u$ be a vertex of some graph in $\mathcal{H}'_i$ and $v$ be a vertex of $H^*$. Then $r_u\cap r_v=\emptyset$ (since $[x^-_u,x^+_u]$, $[x^-_v,x^+_v]$ belong respectively to the intervals $(a,\frac{a+b}{2}),(\frac{a+b}{2},b)$ which are disjoint). 
	
\medskip
We now verify that $\mathcal{R}=\{r_u\}_{u\in V(G)}$ forms a $\lceil\log m\rceil$-exactly stabbed rectangle intersection representation of $G$ that satisfies all the requirements to be $\mathcal{R}(S,a,b,h)$. For a vertex $u\in V(G)$, let $x^-_u,x^+_u,y^-_u,y^+_u$ be such that $r_u=[x^-_u,x^+_u]\times [y^-_u,y^+_u]$.
	
From the construction of $\mathcal{R}$, it is clear that all the vertices in $B$, and therefore all the vertices in $S$, are on the bottom stab line. It is also easy to see that the only vertices on the bottom stab line other than the vertices in $B$ are some vertices in $V(H^*)$. For any vertex $u\in B$ and $v\in V(H^*)$, we have $x^-_u<\frac{a+b}{2}<x^-_v$. Note that for any vertex $u\in B$, we have $x^-_u=c_u$. Therefore, for vertices $u,v\in B$ such that $u\in S$ and $v\in B\setminus S$, we have $x^-_u<x^-_v$ (recall that $c_u<c_v$ in this case). From this, we can infer that $S$ is accessible in $\mathcal{R}$.
	
It is clear that for each $u\in V(G)$, $r_u\subset (a,b)$. Now consider any vertex $v$ that is on the bottom stab line in $\mathcal{R}$. As explained before, $v$ is either in $B$ or in $V(H^*)$. If $v\in B$, then $v=u_i$ for some $i\in\{1,2,\ldots,|B|\}$, and $y^+_v=h_i>h$. On the other hand, if $v\in V(H^*)$, then $r_v=r^*_v$, the rectangle corresponding to $v$ in $\mathcal{R}^*$. Since $\mathcal{R}^*=\mathcal{R}(S^*,\frac{a+b}{2},b,h_{|B|})$, we know that $y^{*+}_v>h_{|B|}>h$, and therefore we have $y^+_v>h$. Therefore, for every vertex $v\in V(G)$ that is on the bottom stab line, we have $y^+_v>h$. Now consider a vertex $v\in V(G)$ that is not on the bottom stab line in $\mathcal{R}$. It is clear that $v\notin B$. If $v\in V(H)$, where $H\neq H^*$ and $H\in\mathcal{H}_{u_i}$, for some $i\in\{1,2,\ldots,|B|\}$, then by our construction, $y^-_v\geq h_i>h$. If $v\in V(H^*)$, then we know that since $v$ is not on the bottom stab line of $\mathcal{R}$, it is also not on the bottom stab line of $\mathcal{R}^*$. Since $\mathcal{R}^*=(S^*,\frac{a+b}{2},b,h_{|B|})$, this means that $y^{*-}_v>h_{|B|}>h$. As $y^-_v=y^{*-}_v$, we now have $y^-_v>h$. This shows that $\mathcal{R}$ satisfies the four conditions to be chosen as $\mathcal{R}(S,a,b,h)$.
	
As it can be easily verified that each rectangle in $\mathcal{R}$ is intersected by exactly one of the stab lines $y=0$, $y=1$, $\ldots$, $y=k-1$, it only remains to be shown that $\mathcal{R}$ is a rectangle intersection representation of $G$. Even though this is more or less clear from the construction, we give a proof for the sake of completeness. Consider $u,v\in V(G)$. We shall show that $uv\in E(G)$ if and only if $r_u\cap r_v\neq\emptyset$.
		
\begin{enumerate}
\vspace{-0.075in}
\itemsep 0in
\renewcommand{\theenumi}{(\roman{enumi})}
\renewcommand{\labelenumi}{\theenumi}
\item First, let us consider the case when $u,v\in V(H^*)$. Since we have $r_u=r^*_u$ and $r_v=r^*_v$, $r_u\cap r_v\neq\emptyset\Leftrightarrow r^*_u\cap r^*_v\neq\emptyset$. Since $\mathcal{R}^*$ is a valid representation of $H^*$, we have $r_u\cap r_v\neq\emptyset\Leftrightarrow uv\in E(H^*)\Leftrightarrow uv\in E(G)$.

\item Next, let us consider the case when $u\in B$ and $v\in V(H^*)$. If $u\neq u^*$ then $H^*\notin \mathcal{H}_u$ and thus $uv\notin E(G)$. Also, we have $[x^-_u,x^+_u]\subseteq (a,\frac{a+b}{2})$ (since $u\neq u^*$) and $[x^-_v,x^+_v]\subseteq (\frac{a+b}{2},b)$. Hence $r_u\cap r_v=\emptyset$. Now assume that $u=u^*=u_i$ (for some $i\in\{1,2,\ldots,|B|\}$). Recall that $S^*=N(u)\cap V(H^*)$. Suppose first that $v\in S^*$. Then $uv\in E(G)$. Now from the definition of $\mathcal{R}^*$ and $r_{u^*}=r_u$, we have that both $r_v$ and $r_u$ intersect the stab line $y=0$, $x^-_v=x^{*-}_v$ and that $x^+_u=\max\{x^{*-}_w\colon w\in S^*\}$. Combining these, we have $x^-_v\leq x^+_u$. This gives us $x^-_u<\frac{a+b}{2}<x^-_v\leq x^+_u$, implying that $r_u\cap r_v\neq\emptyset$. Now assume that $v\notin S^*$, from which it follows that $uv\notin E(G)$. If $r_v=r^*_v$ intersects the stab line $y=0$, then since $\mathcal{R}^*=\mathcal{R}(S^*,\frac{a+b}{2},b,h_{|B|})$, we have that $\max\{x^-_w\colon w\in S^*\}<x^-_v$, implying that $x^+_u<x^-_v$ (recall that $u=u^*$). Therefore, $r_u\cap r_v=\emptyset$. The only remaining case is if $r_v$ does not intersect the bottom stab line. Then, since $r_v=r^*_v$ and $\mathcal{R}^*=\mathcal{R}(S^*,\frac{a+b}{2},b,h_{|B|})$, we have $y^{*-}_v>h_{|B|}\geq h_i=y^+_u$, where $i\in\{1,2,\ldots,|B|\}$ is such that $u=u^*=u_i$. Therefore $r_u\cap r_v=\emptyset$.
		
\item Next, let $u$ be a vertex of some graph in $\mathcal{H}'_i$ for some $i\in\{1,2,\ldots,|B|\}$ and $v$ be a vertex in $H^*$. Then clearly $uv\notin E(G)$ and by (+++) we have that $r_u\cap r_v=\emptyset$.
		
\item Next, suppose that $u,v\in B$. Note that for every vertex $u\in B\setminus\{u^*\}$, we have $x^-_u=c_u$ and $x^+_u=d_u$. Since we have $x^-_{u^*}=c_{u^*}$ and $x^+_{u^*}=\max\{x^{*-}_v\colon v\in S^*\}>\frac{a+b}{2}>d_{u^*}$, we can conclude that for every vertex $u\in B$, $[c_u,d_u]\subseteq [x^-_u,x^+_u]$. As $G[B]$ is a clique, we have $uv\in E(G)$. By our construction, both $u$ and $v$ are on the bottom stab line, and since $[c_u,d_u]\cap [c_v,d_v]\neq\emptyset$, we have $[x^-_u,x^+_u]\cap [x^-_v,x^+_v]\neq\emptyset$. We thus have $r_u\cap r_v\neq\emptyset$.
		
\item Next, let us consider the case when $u\in B$ and $v$ is a vertex of some graph in $\mathcal{H}'_i$. First, let us consider the case when $u=u_i$. Let $v$ be a vertex in $H_{i,j}$ for some $j\in\{1,2,\ldots,t_i\}$. If $uv\in E(G)$, then $v\in S_{i,j}$ (recall that $S_{i,j}=N(u_i)\cap V(H_{i,j})$). In this case, we have by (*) that $[x^-_{v},x^+_v]\subset [c_{u_i},c_{u_{i+1}}]$ and thus $[x^-_{v},x^+_v]\subset [x^-_u,x^+_u]$. Furthermore, we have by construction that $y^-_v=h_i=y^+_u$, allowing us to conclude that $r_u\cap r_v\neq \emptyset$. If $uv\notin E(G)$, then $v\notin S_{i,j}$, and therefore by construction, we know that $y^-_v\geq 1$ whereas $y^+_u=h_i<1$. Therefore the two rectangles $r_u$ and $r_v$ do not intersect. Now let us consider the case when $u\neq u_i$. In this case, we have $uv\notin E(G)$. Let $u=u_j$ and assume $j<i$. Then from our construction, we have that $y^-_v\geq h_i>h_j=y^+_{u}$ and therefore $r_u\cap r_v=\emptyset$. Now assume $j>i$. Then from (*), we know that $x^+_v<c_{u_{i+1}}\leq x^-_u$ and therefore conclude that $r_u\cap r_v=\emptyset$.
		
\item Next, let $i,j$ be two distinct integers in $\{1,2,\ldots,|B|\}$. Let $u$ be a vertex of some graph in $\mathcal{H}'_i$ and $v$ be a vertex of some graph in $\mathcal{H}'_j$. Then clearly $uv\notin E(G)$ and by (++) we have that $r_u\cap r_v=\emptyset$.
		
\item Next, let $i$ be an integer in $\{1,2,\ldots,|B|\}$ and $j,k$ be two distinct integers in $\{1,2,\ldots,t_i\}$. Let $u$ be a vertex in $H_{i,j}$ and $v$ be a vertex of $H_{i,k}$. Then clearly $uv\notin E(G)$ and by (+) we have that $r_u\cap r_v=\emptyset$.
		
\item Finally, let $i$ be an integer in $\{1,2,\ldots,|B|\}$ and $j$ be an integer in $\{1,2,\ldots,t_i\}$. Let $u,v\in V(H_{i,j})$. Let $\{r'_w\}_{w\in V(H_{i,j})}=\mathcal{R}'_{i,j}$. Also, let $r'_w=[x'^-_w,x'^+_w]\times [y'^-_w,y'^+_w]$.
Then we have $[x^-_u,x^+_u]=[x'^-_u,x'^+_u]$, $[x^-_v,x^+_v]=[x'^-_v,x'^+_v]$, $y^+_u=y'^+_u+1$, $y^+_v=y'^+_v+1$, $y^-_u\in\{1,h_i,y'^-_u+1\}$, and $y^-_v\in\{1,h_i,y'^-_v+1\}$. Let us assume without loss of generality that $y^-_u\leq y^-_v$.
We now have $[y^-_u,y^+_u]\cap [y^-_v,y^+_v]=\emptyset\Leftrightarrow y^+_u<y^-_v\Leftrightarrow y'^+_u+1<y^-_v$. Recall that $y^-_v\in\{1,h_i,y'^-_v+1\}$. If $y'^+_u+1<y^-_v$ and $y^-_v\in\{1,h_i\}$, then we have $y'^+_u<0$, which is not possible (as no stab line of $\mathcal{R}'_{i,j}$ could have intersected $r'_u$). We can thus continue the derivation as $y'^+_u+1<y^-_v\Leftrightarrow y'^+_u+1<y'^-_v+1\Leftrightarrow y'^+_u<y'^-_v\Leftrightarrow [y'^-_u,y'^+_u]\cap [y'^-_v,y'^+_v]=\emptyset$. Since we have $[x^-_u,x^+_u]=[x'^-_u,x'^+_u]$ and $[x^-_v,x^+_v]=[x'^-_v,x'^+_v]$, it is clear that $[x^-_u,x^+_u]\cap [x^-_v,x^+_v]=\emptyset\Leftrightarrow [x'^-_u,x'^+_u]\cap [x'^-_v,x'^+_v]=\emptyset$. We can thus conclude that $r_u\cap r_v=\emptyset\Leftrightarrow r'_u\cap r'_v=\emptyset$. Since $\mathcal{R}'_{i,j}$ is a valid representation of $H_{i,j}$, we have $r_u\cap r_v=\emptyset\Leftrightarrow uv\notin E(H_{i,j})\Leftrightarrow uv\notin E(G)$.
\end{enumerate}
This completes the proof.
\end{proof}

\section{Asteroidal subgraphs in a graph}\label{sec:asteroidal}

In this section, we present a forbidden structure for $k$-SRIGs and $k$-ESRIGs that generalizes the ``asteroidal triples'' of Lekkerkerker and Boland~\cite{lekkerkerkerboland}. We then study the block-trees of graphs in the context of these forbidden structures, to derive some preliminary observations which shall be used in the proofs in Section~\ref{sec:treesandblocks}. First, we give some basic definitions.

We say that two subgraphs $G_1,G_2$ of a graph $G$ are \emph{neighbour-disjoint} if for any vertex $v\in V(G_1)$, $N[v]\cap V(G_2)=\emptyset$. In other words, $V(G_1)$ and $V(G_2)$ are disjoint and there is no edge between a vertex in $V(G_1)$ and a vertex in $V(G_2)$.
\medskip

Let $G=(V,E)$ be any graph. Given a vertex $v\in V(G)$, we say that a path $P$ \emph{misses} $v$, if no vertex in $P$ is a neighbour of $v$. Similarly, given a subgraph $H$ of $G$ we say that $P$ \emph{misses} $H$ if $P$ misses each vertex in $V(H)$; in other words, $P$ misses $H$ exactly when $P$ and $H$ are neighbour-disjoint.

\begin{definition}
Given a graph $G$, three vertices $a,b,c\in V(G)$ are said to form an \emph{asteroidal triple}, or AT for short, in $G$ if there exists a path between any two vertices in $\{a,b,c\}$ that misses the third.
\end{definition}

A graph is said to be \emph{AT-free} if it contains no asteroidal triple.
A graph is \emph{chordal} if it contains no induced subgraph isomorphic to a cycle on 4 or more vertices.

\begin{theorem}[\cite{lekkerkerkerboland}]\label{thm:lb}
A graph $G$ is an interval graph if and only if $G$ is chordal and AT-free.
\end{theorem}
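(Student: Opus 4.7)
The plan is to prove both directions of the characterisation, with the converse requiring substantially more work. For the forward implication, fix an interval representation $\{I_v\}_{v\in V(G)}$ of $G$. Chordality is immediate: in any induced cycle $v_0v_1\cdots v_{k-1}v_0$ with $k \geq 4$, the vertex $v_i$ whose interval has minimum right endpoint has two cycle-neighbours whose intervals both contain that endpoint, so those two neighbours' intervals intersect, contradicting the induced cycle assumption. For AT-freeness, given three vertices $a, b, c$, after relabelling I may assume the left endpoint of $I_b$ lies weakly between those of $I_a$ and $I_c$; then along any $a$-$c$ path, consecutive intersecting intervals cannot jump across $I_b$ without meeting it, so some vertex of the path is adjacent to $b$.

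For the converse, my plan is to use the clique-tree characterisation of chordal graphs together with the Gilmore--Hoffman fact that $G$ is an interval graph precisely when it admits a clique tree which is a path. Assume $G$ is chordal, AT-free, and (after handling each component separately and concatenating representations on the real line) connected. Suppose for contradiction that $G$ is not an interval graph; then every clique tree $T$ of $G$ contains a node $C_0$ of degree at least $3$. Let $T_1, T_2, T_3$ be three distinct subtrees of $T - C_0$, and let $S_i \subseteq V(G)$ denote the set of vertices appearing in some clique of $T_i$ but not in $C_0$. The subtree property of clique trees (the cliques containing any given vertex induce a subtree of $T$) yields that $S_1, S_2, S_3$ are pairwise disjoint, pairwise non-adjacent in $G$, and that every path in $G$ from $S_i$ to $S_j$ passes through a vertex of $C_0$.

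The heart of the argument, and the main obstacle, is to choose $a_i \in S_i$ so that $(a_1, a_2, a_3)$ is an asteroidal triple. For each $i$ I would walk outward in $T_i$ from $C_0$ and pick the nearest maximal clique $C_i$ with $C_i \setminus C_0 \ne \emptyset$, letting $a_i$ be any vertex of $C_i \setminus C_0$; maximality of $C_0$ forces $C_0 \setminus N[a_i] \ne \emptyset$, yielding a transit vertex $w_i \in C_0 \setminus N[a_i]$. I would then route an $a_j$-to-$a_k$ path (for $\{i,j,k\} = \{1,2,3\}$) by travelling from $a_j$ through $G[S_j \cup C_0]$ to $w_i$, crossing through $C_0$, and continuing through $G[S_k \cup C_0]$ to $a_k$. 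Such a path avoids $N[a_i]$ because $w_i \notin N[a_i]$ and because $S_j \cup S_k$ has no edges to $a_i$. The delicate step is ensuring that this path can be completed without touching any vertex of $C_0 \cap N(a_i)$ at an intermediate stage, which forces us to choose $T$ with $\deg_T(C_0)$ minimum over all clique trees of $G$, so that any attempt to reroute through $C_0 \cap N(a_i)$ can be ruled out by a local-exchange argument on $T$.

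Once $(a_1, a_2, a_3)$ is established as an asteroidal triple, the contradiction with AT-freeness is complete, forcing every clique tree of $G$ to be a path and hence $G$ to be an interval graph. I expect the genuinely hard step to be the case analysis behind the construction of the path through $w_i$, particularly when some subtree $T_i$ is shallow and shares nearly all its vertices with $C_0$; these configurations are precisely where the original Lekkerkerker--Boland proof invests the bulk of its technical effort, and any clean modern rendering must handle them, typically via clique-tree exchange lemmas or by direct combinatorial enumeration of the possible local structures around $C_0$.
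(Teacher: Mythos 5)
This statement is Theorem~\ref{thm:lb}, which the paper does not prove: it is the classical Lekkerkerker--Boland characterization, stated with a citation to~\cite{lekkerkerkerboland} and used as a black box (e.g.\ in the proof of Lemma~\ref{lem:gs}). So there is no proof in the paper to compare your attempt against; I can only assess the attempt on its own terms.

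Your forward direction is fine: the minimum-right-endpoint argument for chordality is correct, and the AT-freeness argument works once you note that the three vertices of an asteroidal triple are pairwise non-adjacent (which follows from the definition of ``misses''), so their intervals are pairwise disjoint and one of them genuinely separates the other two on the line; the union of the intervals along a path is connected, so it cannot contain points on both sides of $I_b$ without meeting $I_b$.

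The converse, however, has a genuine gap, and it sits exactly where you flag it. Having fixed a degree-$3$ node $C_0$ of a clique tree $T$ with branches $T_1,T_2,T_3$, representatives $a_i\in C_i\setminus C_0$ (where $C_i$ is the neighbour of $C_0$ in $T_i$), and witnesses $w_i\in C_0\setminus N[a_i]$, the path from $a_j$ to $a_k$ must cross from $S_j$ to $S_k$, and by the separator property of clique-tree edges it must pass through $C_0\cap C_j$ and through $C_0\cap C_k$. Nothing in your setup prevents $C_0\cap C_j\subseteq N(a_i)$ --- for instance $C_0\cap C_j$ could be contained in $C_0\cap C_i$, all of which is adjacent to $a_i$ --- in which case \emph{every} $a_j$--$a_k$ path hits $N(a_i)$ at the moment it enters $C_0$, and the triple $(a_1,a_2,a_3)$ is simply not asteroidal. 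Choosing $T$ to minimize $\deg_T(C_0)$ does not address this: the obstruction concerns the separators $C_0\cap C_i$, not the degree, and the ``local-exchange argument'' that is supposed to rule the bad configuration out is never specified. This is the entire content of the hard direction of Lekkerkerker--Boland; the correct fixes (choosing the $a_i$ deeper in the branches, passing to carefully chosen minimal separators, or arguing by a minimal counterexample) each require real work that the sketch defers. As written, the converse is a plan rather than a proof. Since the paper treats the theorem as a citation, the appropriate course here is also to cite it rather than to reprove it.
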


\begin{definition}
Three connected induced subgraphs $G_1,G_2,G_3$ of $G$ that are pairwise neighbour-disjoint are said to be \emph{asteroidal} in $G$ if for each $i\in\{1,2,3\}$, for any $i,j,k$ such that $\{i,j,k\}=\{1,2,3\}$, there is a path from some vertex of $G_i$ to some vertex of $G_j$ that misses $G_k$.
\end{definition}

Suppose $G_1,G_2,G_3$ are asteroidal in a graph $G$. Then from the above definition, they are pairwise neighbour-disjoint and each of them is connected. This implies that for any $i,j,k$ such that $\{i,j,k\}=\{1,2,3\}$, and for any $u\in V(G_i)$ and any $v\in V(G_j)$, there is some path between $u$ and $v$ that misses $G_k$.

\begin{definition}
Let $\mathcal{C}$ be a class of graphs and let $G$ be any graph. Let $G_1,G_2,G_3$ be asteroidal in $G$ and let $G_i\in\mathcal{C}$ for $i\in\{1,2,3\}$. Then we say that $G_1,G_2,G_3$ are \emph{asteroidal-$\mathcal{C}$} in $G$.
\end{definition}

\begin{definition}
We say that a graph $G$ is \emph{asteroidal-$\mathcal{C}$-free} if there does not exist three  subgraphs that are asteroidal-$\mathcal{C}$ in $G$.
\end{definition}

\subsection{A forbidden structure for $k$-SRIGs and $k$-ESRIGs}\label{sec:forbidden}
We now show that no $k$-SRIG can contain three subgraphs that are asteroidal-(non-$(k-1)$-SRIG) in it. The same technique can be used to show that a $k$-ESRIG cannot contain three subgraphs that are asteroidal-(non-$(k-1)$-ESRIG) in it. The intuition is that if a $k$-SRIG $G$ contains subgraphs $G_1,G_2,G_3$ which are asteroidal-(non-$(k-1)$-SRIG) in $G$, then in any $k$-stabbed rectangle intersection representation of $G$, the rectangles corresponding to vertices in $G_i$, for each $i\in\{1,2,3\}$, together occupy all the stab lines (as each $G_i$ is a non-$(k-1)$-SRIG). Coupled with the fact that the three subgraphs are pairwise neighbour-disjoint, this enforces a kind of ``left-to-right'' order on the subgraphs: that is, in the $k$-SRIG representation, for distinct $i,j\in\{1,2,3\}$, the collection of rectangles corresponding to vertices of $G_i$ can be thought of as being ``to the left of'' or ``to the right of'' the collection of rectangles corresponding to the vertices of $G_j$. If we take this left-to-right order of subgraphs to be $G_1,G_2,G_3$, then it can be shown that any path from a vertex of $G_1$ to a vertex of $G_3$ must contain a vertex whose rectangle intersects a rectangle belonging to a vertex of $G_2$, thus contradicting the fact that $G_1,G_2,G_3$ are asteroidal in $G$. We give the formal proof below.

\begin{theorem}\label{thm:asteroidal_k-1_sig_free}
$k$-SRIGs are asteroidal-(non-$(k-1)$-SRIG)-free.
\end{theorem}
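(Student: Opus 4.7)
The proof proceeds by contradiction. Suppose $G$ is a $k$-SRIG admitting three subgraphs $G_1,G_2,G_3$ that are asteroidal-(non-$(k-1)$-SRIG) in $G$, and fix a $k$-stabbed rectangle intersection representation $\mathcal{R}$ of $G$ with stab lines $y=a_1<a_2<\cdots<a_k$. The first observation is that, for each $i\in\{1,2,3\}$, the rectangles corresponding to vertices of $G_i$ must be met by every one of the $k$ stab lines: if some stab line were missed, then ignoring that line in $\mathcal{R}$ restricted to $V(G_i)$ would yield a representation of $G_i$ stabbed by at most $k-1$ lines, contradicting that $G_i$ is not a $(k-1)$-SRIG.

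Set $R_i=\bigcup_{v\in V(G_i)}r_v$ for each $i$. Because $G_i$ is connected, $R_i$ is path-connected; because the $G_i$ are pairwise neighbour-disjoint, no rectangle of $G_i$ meets any rectangle of $G_j$ for $i\neq j$, so $R_1,R_2,R_3$ are pairwise disjoint. I would then restrict attention to the horizontal strip $S=\mathbb{R}\times[a_1,a_k]$ and set $\tilde R_i=R_i\cap S$. A short check shows $\tilde R_i$ is still path-connected: each $r_v$ hits some stab line so $r_v\cap S\neq\emptyset$, and for each edge $uv$ of $G_i$ the intersection $r_u\cap r_v$ has $y$-range $[\max(y^-_u,y^-_v),\min(y^+_u,y^+_v)]$, which meets $[a_1,a_k]$ because both rectangles individually hit stab lines. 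Moreover, $\tilde R_i$ meets both $\{y=a_1\}$ and $\{y=a_k\}$ by the first observation.

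The heart of the argument is a topological separation claim inside $S$. For each $i$, choose a simple arc $\gamma_i\subseteq\tilde R_i$ from a top endpoint $t_i\in\tilde R_i\cap\{y=a_k\}$ to a bottom endpoint $b_i\in\tilde R_i\cap\{y=a_1\}$; the three arcs are pairwise disjoint, so a standard planar crossing argument forces the left-to-right order of the $t_i$ to agree with that of the $b_i$. Relabelling, assume this order is $\gamma_1,\gamma_2,\gamma_3$, and extend $\gamma_2$ to a bi-infinite simple curve $\widehat\gamma_2$ by adjoining a vertical ray above $t_2$ and below $b_2$; then $\widehat\gamma_2$ separates $\mathbb{R}^2$ into an open left half-plane containing $\tilde R_1$ and an open right half-plane containing $\tilde R_3$ (since each $\tilde R_j$ is connected, disjoint from $\gamma_2$, and disjoint from the extensions of $\widehat\gamma_2$ because $\tilde R_j\subseteq S$). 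Now asteroidality yields a path $P=v_1\cdots v_n$ from $V(G_1)$ to $V(G_3)$ in $G$ that misses $G_2$; its rectangle union $R_P=\bigcup_i r_{v_i}$ is path-connected, contains $r_{v_1}\subseteq R_1$ and $r_{v_n}\subseteq R_3$, and is disjoint from $R_2$ because no vertex of $P$ is adjacent to any vertex of $G_2$. Hence $R_P\cap S$ is a path-connected subset of $\mathbb{R}^2\setminus\widehat\gamma_2$ that meets both half-planes, which is impossible.

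The main obstacle is precisely this planar separation step: one needs enough topology to show that three disjoint ``top-to-bottom'' connected sets in a strip are genuinely linearly ordered and that the middle one blocks every path inside the strip. Verifying that this structure survives the restriction from $R_i$ to $\tilde R_i=R_i\cap S$, so that rectangle intersections stay inside the strip and path-connectedness is preserved, is the technical core. The same scheme, applied to a $k$-exactly stabbed representation where every rectangle sits between two consecutive stab lines, yields the parallel statement that $k$-ESRIGs are asteroidal-(non-$(k-1)$-ESRIG)-free.
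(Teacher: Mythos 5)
Your proposal is correct and follows essentially the same route as the paper's proof: both establish that each $G_i$ must meet every stab line, order the three subgraphs left to right along a stab line, show that the rectangles of the middle subgraph topologically separate the other two within the horizontal strip spanned by the stab lines, and then derive a contradiction from the path guaranteed by asteroidality that misses $G_2$. The only difference is cosmetic --- you implement the separation with a single top-to-bottom arc in $\tilde R_2$ extended by vertical rays and invoke the Jordan curve theorem, whereas the paper uses the rectangle union of a spanning walk in $G_2$ inside a bounding box and a curve-crossing argument; these are the same topological idea.
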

\begin{proof}
Assume for the sake of contradiction that $G$ is a $k$-SRIG with a $k$-stabbed rectangle intersection representation $\mathcal{R}$ and has three connected induced non-$(k-1)$-SRIG subgraphs $G_1,G_2,G_3$ that are asteroidal in $G$. As each of $G_1,G_2,G_3$ are non-$(k-1)$-SRIGs, but are $k$-SRIGs (as they are induced subgraphs of $G$), for each $i\in\{1,2,3\}$, there exists a walk $W_i$ in $G_i$ such that $W_i$ contains at least one vertex on each stab line of $\mathcal{R}$ (for example, $W_i$ can be chosen to be any path in $G_i$ between a vertex on the top stab line and a vertex on the bottom stab line). This further implies that for each $i\in\{1,2,3\}$, there exists a vertex $v_i$ in $W_i$ that is on the bottom stab line. As $G_1,G_2,G_3$ are pairwise neighbour-disjoint, we know that $span(v_1)$, $span(v_2),span(v_3)$ are pairwise disjoint. Therefore we can assume without loss of generality that $span(v_1) < span(v_2) < span(v_3)$. Now consider the set of vertices $S=\{w\colon w\in N[w']$ for some $w'\in W_2\}$.	
	
Consider the region $X$ of the plane defined by $X=\bigcup_{u\in W_2} r_u$. Since $W_2$ is connected and has a vertex on each stab line, $X$ is an arc-connected region that intersects all the stab lines. Clearly, for any vertex $x$ such that $r_x\cap X\neq \emptyset$ we can conclude that $x\in S$. Now let $B$ be the rectangle with diagonally opposite corners $(x_1,y_1)$ and $(x_2,y_2)$ where $x_1=\min\{x^-_v\colon v\in V(G)\}$, $x_2=\max\{x^+_v\colon v\in V(G)\}$, $y=y_1$ is the bottom stab line and $y=y_2$ is the top stab line of $\mathcal{R}$.
	
\begin{claim}
The rectangles $B\cap r_{v_1}$ and $B\cap r_{v_3}$ are completely contained in different arc-connected regions of $B\setminus X$.
\end{claim}
	
\noindent
Since $v_1$ and $v_3$ have no neighbours in $W_2$, and therefore are not in $S$, we can infer from our earlier observation that the rectangles $B\cap r_{v_1}$ and $B\cap r_{v_3}$ are disjoint from $X$. This means that each of these rectangles are completely contained in some arc-connected region of $B\setminus X$.
Assume for the sake of contradiction that the rectangles $B\cap r_{v_1}$ and $B\cap r_{v_3}$ are completely contained in the same connected region of $B\setminus X$. This implies that there exists a curve $\mathbf{s}$ in $B\setminus X$ that connects some point in $B\cap r_{v_1}$ that is on the bottom stab line to some point in $B\cap r_{v_3}$ that is also on the bottom stab line. Now consider the points $p,q\in X$ such that $p$ is on the top stab line and $q$ is a point in $r_{v_2}$ that is on the bottom stab line. Since $X$ is connected, there is a curve $\mathbf{s'}$ in $X$ that connects $p,q$. Since $span(v_1) < span(v_2) < span(v_3)$ and $\mathbf{s},\mathbf{s'}$ are curves that are completely contained in $B$, we can conclude that the curves $\mathbf{s}$ and $\mathbf{s'}$ intersect. But this is a contradiction, as $\mathbf{s}$ is a curve in $B\setminus X$ and hence cannot contain any point in $\mathbf{s'}\subseteq X$. This completes the proof of the claim.
\medskip

As $G_1,G_2,G_3$ are asteroidal in $G$, there is a path $P$ between $v_1$ and $v_3$ that misses $G_2$. This means that the path $P$ does not contain any vertex from $S$, and therefore the rectangle corresponding to no vertex in $P$ intersects $X$. Since every rectangle in the representation intersects $B$, this means that $\bigcup_{w\in V(P)} B\cap r_w$ is an arc-connected set in $B\setminus X$ that contains both $B\cap r_{v_1}$ and $B\cap r_{v_3}$. This is a contradiction to the above claim.
\end{proof}

The following theorem can be proved using the similar arguments, and hence we omit the proof.

\begin{theorem}\label{thm:asteroidal_k-1_esrig_free}
$k$-ESRIGs are asteroidal-(non-$(k-1)$-ESRIG)-free.
\end{theorem}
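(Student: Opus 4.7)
The plan is to mirror the proof of Theorem~\ref{thm:asteroidal_k-1_sig_free} almost verbatim, modifying only the argument that ensures each of the three asteroidal subgraphs has rectangles on every stab line. Suppose for contradiction that $G$ is a $k$-ESRIG with a $k$-exactly stabbed rectangle intersection representation $\mathcal{R}$ and that $G$ contains three pairwise neighbour-disjoint connected induced subgraphs $G_1,G_2,G_3$, each a non-$(k-1)$-ESRIG, that are asteroidal in $G$.

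First, I would establish the analogue of the key reachability fact used in the SRIG proof: for each $i\in\{1,2,3\}$, the restriction of $\mathcal{R}$ to $V(G_i)$ must have a rectangle meeting every one of the $k$ stab lines. Indeed, the induced representation is a valid exactly stabbed representation of $G_i$, and if some stab line were missed by all rectangles of $G_i$, deleting that stab line would give a $(k{-}1)$-exactly stabbed representation of $G_i$, contradicting $G_i$ being a non-$(k-1)$-ESRIG. With this in hand, for each $i$ I can pick a walk $W_i$ in $G_i$ that meets every stab line of $\mathcal{R}$: take any path from a vertex on the top stab line to a vertex on the bottom stab line; consecutive rectangles on the path intersect, so their stab lines differ by at most one (here I would cite the observation in Section~\ref{preliminaries} that adjacent vertices share a stab line or lie on consecutive ones), forcing the path to touch every intermediate stab line.

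Next I would pick vertices $v_i\in W_i$ on the bottom stab line. Because $G_1,G_2,G_3$ are pairwise neighbour-disjoint, the rectangles $r_{v_1},r_{v_2},r_{v_3}$ are pairwise disjoint, and since all three lie on the same (bottom) stab line, their spans are pairwise disjoint intervals; relabel so that $span(v_1)<span(v_2)<span(v_3)$. Set $X=\bigcup_{u\in W_2}r_u$: an arc-connected region (consecutive rectangles along $W_2$ intersect) that meets every stab line, and let $B$ be the bounding rectangle between the bottom and top stab lines, as in the SRIG proof. The topological claim that $B\cap r_{v_1}$ and $B\cap r_{v_3}$ lie in different arc-connected components of $B\setminus X$ goes through unchanged: any arc in $B\setminus X$ connecting them, combined with an arc in $X$ from a point of $r_{v_2}$ on the bottom stab line to a point on the top stab line, would produce two curves in $B$ that must cross but whose images lie in the disjoint sets $B\setminus X$ and $X$.

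Finally I would invoke the asteroidal hypothesis to obtain a path $P$ from $v_1$ to $v_3$ missing $G_2$; no vertex of $P$ is a neighbour of any vertex of $W_2$, so no rectangle along $P$ meets $X$, and yet $\bigcup_{w\in V(P)}(B\cap r_w)$ is arc-connected and joins $B\cap r_{v_1}$ to $B\cap r_{v_3}$ inside $B\setminus X$, contradicting the previous claim. The only genuinely new ingredient is the short argument in the first step; everything else is a direct transcription of the SRIG proof, since the exactness condition is not used in the geometric/topological steps. The main (mild) obstacle is therefore just a clean verification that replacing ``$(k-1)$-SRIG'' with ``$(k-1)$-ESRIG'' still yields walks hitting every stab line, which the paragraph above handles.
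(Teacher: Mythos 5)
Your proposal is correct and matches the paper's intent exactly: the paper omits this proof, stating only that it follows by the same arguments as Theorem~\ref{thm:asteroidal_k-1_sig_free}, and your transcription (including the observation that a missed stab line would yield a $(k-1)$-exactly stabbed representation of some $G_i$) is precisely the verification the authors leave to the reader.
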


\subsection{The coloured block-tree of a graph}\label{sec:colorblocktree}

A \emph{hereditary} class of graphs is a class of graphs that is closed under taking induced subgraphs.
A class of graphs is said to be \emph{closed under vertex addition} if adding a vertex (and an arbitrary set of edges incident on it) to any graph in the class results in another graph that is in the class. It can be seen that a class of graphs is closed under vertex addition if and only if its complement class (the set of graphs that are not in the class) is hereditary. Therefore, the class of non-$k$-SRIGs and the class of non-$k$-ESRIGs, for any positive integer $k$, are both closed under vertex addition.
In this section, we study the block-tree (defined below) of an asteroidal-$\mathcal{C}$-free graph, where $\mathcal{C}$ is some graph class that is closed under vertex addition. The lemmas derived in this section will be useful in the next section.

For any graph $G$, let $\mathcal{B}(G)$ be the set of blocks in it and $C(G)$ the set of cut-vertices in it. The \emph{block-tree} of $G$ (denoted as $T_G$) is the graph with $V(T_G)=\mathcal{B}(G)\cup C(G)$ and $E(T_G)=\{Bc\colon B\in\mathcal{B}(G)$, $c\in C(G)$, and $c\in B\}$. For any graph $G$, the graph $T_G$ turns out to be a tree, justifying the name ``block-tree of $G$''~\cite{diestel}.

For $e=Bc\in E(T_G)$, where $c\in C(G)$ and $B\in\mathcal{B}(G)$, we denote by $T_G(e)$ the connected component of $T_G-e$ containing $B$. Also, let us define
$$G_e=G[\bigcup_{B\in T_G(e)} B\setminus \{c\}]$$
In other words, $G_e$ is the component of $G-\{c\}$ that contains the vertices of $B$ other than $c$. Note that $G_e$ is a connected induced subgraph of $G$.
The following observation is a direct consequence of the structure of the block-tree.

\begin{observation}\label{obs:blocktree}
	The vertices of $G$ other than $c$ that belong to blocks not in $T_G(e)$ are neither in $G_e$ nor are adjacent to any vertex in $G_e$.
\end{observation}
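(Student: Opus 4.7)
The proof plan exploits the standard fact that $T_G$ is a tree in which the $T_G$-neighbours of a cut-vertex $v\in C(G)$ are precisely the blocks of $G$ containing $v$. The key structural step, from which both assertions follow, is the claim: \emph{for every vertex $v\neq c$ of $G$, all blocks of $G$ that contain $v$ lie in a common component of $T_G-e$.}

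To verify this, note first that if $v$ is not a cut-vertex of $G$ then it lies in exactly one block of $G$ and there is nothing to check. Otherwise $v\in V(T_G)$ and its $T_G$-neighbours are exactly the blocks of $G$ containing $v$. Since $v\neq c$ and $e=Bc$, no edge of $T_G$ incident to $v$ is removed when we pass to $T_G-e$; hence $v$, together with every block of $G$ containing it, remains in a single component of $T_G-e$.

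The first half of the observation then follows immediately: fix any vertex $v\neq c$ of $G$ that belongs to some block $B'\notin T_G(e)$. The key claim forces every block of $G$ containing $v$ to lie outside $T_G(e)$, so $v\notin\bigcup_{B''\in T_G(e)}\bigl(B''\setminus\{c\}\bigr)=V(G_e)$.

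For the second half I would argue by contradiction. Suppose such a $v$ has a neighbour $u\in V(G_e)$. The edge $uv$ lies in some unique block $B''$ of $G$, so $\{u,v\}\subseteq B''$. By definition of $G_e$ there is some $\tilde B\in T_G(e)$ with $u\in \tilde B\setminus\{c\}$, and in particular $u\neq c$. Applying the key claim to $u$, every block of $G$ containing $u$ — and hence $B''$ — lies in the component of $T_G-e$ containing $\tilde B$, namely $T_G(e)$. Thus $B''\in T_G(e)$ and $v\in B''$, contradicting the conclusion of the previous paragraph. I do not anticipate any substantive obstacle; the only real subtlety is the uniform treatment of cut-vertices and non-cut-vertices, which the key claim is designed to handle.
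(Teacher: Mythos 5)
Your proof is correct; the paper states this observation without proof, calling it a direct consequence of the structure of the block-tree, and your argument is exactly the standard elaboration one would give. The key claim (that removing the edge $Bc$ cannot separate a vertex $v\neq c$ from any block containing it, since the only edges of $T_G$ deleted are incident to $c$) is the right way to handle cut-vertices and non-cut-vertices uniformly, and both halves of the observation follow cleanly from it.
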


Let $\mathcal{C}$ be a class of graphs.
Let us now colour red those edges $e$ of $T_G$ such that $G_e\in\mathcal{C}$.
Further, let us colour red those cut-vertices in $T_G$ that have at least two red edges incident on them. Note that if two red edges $e_1$ and $e_2$ are incident on a cut-vertex $u$ in $T_G$, then $G_{e_1}$ and $G_{e_2}$ are two components of $G-\{u\}$.
As the final step of colouring, we colour red those block-vertices of $T_G$ that are adjacent to at least two cut-vertices that are red. We now say that the tree $T_G$ \emph{is coloured with respect to $\mathcal{C}$}.

\begin{lemma}\label{lem:redconnected}
	Let $\mathcal{C}$ be a class of graphs that is closed under vertex addition. Let $G$ be any graph and let $T_G$ be coloured with respect to $\mathcal{C}$. Then the subgraph of $T_G$ induced by the set of red vertices is connected.
\end{lemma}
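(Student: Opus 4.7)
The plan is to first establish a propagation principle: if $c$ is a red cut-vertex of $T_G$ and $e = B'c'$ is an edge of $T_G$ with $c \neq c'$ such that the unique $T_G$-path from $c$ to $c'$ passes through $B'$, then $e$ is red. To prove this, I would pick a red edge $e'' = B''c$ at $c$ with $B''$ distinct from the first block on the $T_G$-path from $c$ to $c'$; such an $e''$ exists because $c$, being red, has at least two red edges going to distinct blocks. The key step is to verify $V(G_{e''}) \subseteq V(G_e)$, which can be seen by exhibiting, for each $v \in V(G_{e''})$, a walk in $G - c'$ from $v$ to a vertex of $B' \setminus \{c'\}$ that proceeds via $c$ and then through the successive blocks and cut-vertices on the $T_G$-path from $c$ to $c'$ (all of whose cut-vertices are distinct from $c'$). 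Since $G_{e''} \in \mathcal{C}$ and $G_e$ is obtained from $G_{e''}$ by successive additions of the vertices in $V(G_e) \setminus V(G_{e''})$, closure of $\mathcal{C}$ under vertex addition yields $G_e \in \mathcal{C}$, so $e$ is red.

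Next, given two red vertices $x, y$, I let $P\colon x = z_0, z_1, \ldots, z_n = y$ denote the unique $T_G$-path between them and aim to show that every $z_i$ is red. To each endpoint I associate a red cut-vertex: take $c_x = x$ if $x$ is itself a cut-vertex; otherwise $x$ is a red block with at least two red cut-vertex neighbours in $T_G$, and since at most one of them can equal $z_1$, I choose $c_x$ to be one that differs from $z_1$. A red cut-vertex $c_y$ is defined analogously, with $z_{n-1}$ in place of $z_1$.

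For every intermediate vertex $z_i$ of $P$ that is a cut-vertex, the $T_G$-path from $c_x$ to $z_i$ passes through the block $z_{i-1}$, and $c_x \neq z_i$ (as $c_x$ either equals $z_0$, or is a cut-vertex belonging only to $z_0$, while a cut-vertex $z_i$ with $i \geq 1$ can lie in $z_0$ only if $i=1$, ruled out by the choice of $c_x$). The propagation principle then forces $z_{i-1}z_i$ to be a red edge, and propagation from $c_y$ analogously yields that $z_iz_{i+1}$ is a red edge, so $z_i$ has at least two red edges and is red. For every intermediate vertex $z_i$ of $P$ that is a block, its two neighbours $z_{i-1}$ and $z_{i+1}$ on $P$ are cut-vertices that are red, each being either an endpoint of $P$ (red by hypothesis) or an intermediate cut-vertex (red by the preceding step); hence $z_i$ has at least two red cut-vertex neighbours in $T_G$ and is red. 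This shows every vertex on $P$ is red, so the red vertices induce a connected subgraph of $T_G$.

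The main obstacle lies in carefully establishing the propagation principle, where one must examine paths in $G - c'$ using the block-tree structure of $G$ to verify the crucial containment $V(G_{e''}) \subseteq V(G_e)$. Once this is in hand, the rest of the argument reduces to uniform propagation from the two endpoints of $P$, combined with the easy observation that block-vertices on $P$ automatically become red as soon as their two $P$-neighbours are shown to be red cut-vertices.
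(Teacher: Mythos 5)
Your proposal is correct and takes essentially the same route as the paper's proof: your ``propagation principle'' is exactly the paper's containment argument ($V(G_{e_u})\subseteq V(G_e)$ or $V(G_{e_v})\subseteq V(G_e)$ for each edge $e$ on the path between the auxiliary red cut-vertices $u',v'$, which play the role of your $c_x,c_y$), and the endgame is identical -- each intermediate cut-vertex acquires two red edges by propagating from both ends, and each intermediate block-vertex then has two red cut-vertex neighbours.
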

\begin{proof}
	We only need to prove that for any $u,v\in V(T_G)$ that are coloured red, any vertex $w\in V(T_G)$ that lies on the path in $T_G$ between $u$ and $v$ is also red.
	Let $P$ be the path between $u$ and $v$ in $T_G$.
	If $u$ is a cut-vertex, then let $u'=u$ and if $u$ is a block-vertex, then let $u'$ be a red cut-vertex that is adjacent to $u$ but is not on $P$. Similarly, if $v$ is a cut-vertex, then we let $v'=v$ and if $v$ is a block-vertex, we let $v'$ be a red cut-vertex that is adjacent to $v$ but is not on $P$. Clearly, the path $P'$ in $T_G$ between $u'$ and $v'$ also contains $w$. It can be seen that there is a red edge $e_u$ that is incident on $u'$ but does not belong to $P'$ and a red edge $e_v$ that is incident on $v'$ but does not belong to $P'$. As $e_u$ and $e_v$ are red edges, we know that $G_{e_u},G_{e_v}\in\mathcal{C}$. Now consider any edge $e$ that is in $P'$. From the structure of the block-tree, it follows that either $V(G_{e_u})\subseteq V(G_e)$ or $V(G_{e_v})\subseteq V(G_e)$. (To see this, let $z$ be the cut-vertex in $e$ and assume that $u'$ is closer to $z$ than $v'$ in $T_G$. Then, $T_G(e_v)$ is a subtree of $T_G(e)$. Note that $z$ is not adjacent to any block-vertex of $T_G(e_v)$, implying that $z$ is not contained in any block that appears as a block-vertex in $T_G(e_v)$. We now have that $V(G_{e_v})\subseteq V(G_e)$.) Since $\mathcal{C}$ is closed under vertex addition, we now have that $G_e\in\mathcal{C}$, which implies that $e$ is red. Therefore, every edge in $P'$ is red. It now follows that every cut-vertex in $P'$ other than $u'$ and $v'$ are incident with at least two red edges. Therefore every cut-vertex in $P'$ is red (recall that $u'$ and $v'$ are red by definition). This tells us that every block-vertex in $P'$ is adjacent to two red cut-vertices, and is therefore red. This proves that $w$ is red.
\end{proof}

\begin{lemma}\label{lem:removespecialblocks}
	Let $G$ be a graph and $\mathcal{C}$ a class of graphs closed under vertex addition. Let $T_G$ be coloured with respect to $\mathcal{C}$ and let $\mathcal{B}$ be the set of block-vertices of $T_G$ that have at least one red neighbour (or equivalently, the blocks of $G$ that contain at least one cut-vertex that is red in $T_G$). Furthermore, assume that $T_G$ has at least one red vertex. Let $H$ be any component of $G-\bigcup\limits_{B\in\mathcal{B}} B$. Then:
	\begin{enumerate}
		\vspace{-0.075in}
		\itemsep 0in
		\renewcommand{\theenumi}{(\alph{enumi})}
		\renewcommand{\labelenumi}{\theenumi}
		\item\label{it:oneneighbour} there exists exactly one vertex $u\in V(G)\setminus V(H)$ such that $N(u)\cap H\neq\emptyset$, and
		\item\label{it:notinC} $H\notin\mathcal{C}$.
	\end{enumerate}
\end{lemma}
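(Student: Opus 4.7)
The plan is to analyze the subtree of $T_G$ naturally associated with $H$ and then combine Lemma~\ref{lem:redconnected} with a careful accounting of red edges at the boundary cut-vertex.

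To begin, let $T_H$ denote the subgraph of $T_G$ induced by the blocks of $G$ that contain at least one vertex of $V(H)$, together with the cut-vertices of $G$ lying in $V(H)$. I would first show that $T_H$ is a connected subtree of $T_G$ (a standard consequence of $H$ being connected in $G$, obtained by walking along a path in $H$ and recording the blocks encountered), and crucially that $T_H$ contains no red vertex: every block of $T_H$ meets $V(H)$ and so lies outside $\mathcal{B}$, hence is not a red block; and any red cut-vertex $c \in V(H)$ would, by the definition of $\mathcal{B}$, force every block containing $c$ into $\mathcal{B}$, contradicting that these blocks lie in $T_H$.

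For part~(a), every interface vertex $u$ of $H$ is necessarily a cut-vertex of $G$: the $G$-edge from $u$ to a neighbour $v \in V(H)$ lies in some block $B^*$ containing both, so $B^* \in T_H$, while $u \notin V(H)$ forces $u$ to lie also in some block outside $T_H$. By the tree structure of $T_G$, $u$ is adjacent in $T_G$ to exactly one block of $T_H$, namely $B^*$. Hence interface vertices form a subset of the non-$T_H$ endpoints of edges leaving $T_H$, and since $T_G$ is a tree, each component of $T_G - V(T_H)$ attaches to $T_H$ by exactly one edge and thus contributes at most one interface vertex. Every interface vertex $u$ lies in some block $B_u \in \mathcal{B}$ (because $u \notin V(H)$), and $B_u$ contains a red cut-vertex $c^*$; since $T_H$ has neither a red vertex nor any block of $\mathcal{B}$, both $B_u$ and $c^*$ lie in the same component of $T_G - V(T_H)$ as $u$. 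By Lemma~\ref{lem:redconnected}, the red vertices of $T_G$ form a connected subtree that lies entirely in a single component of $T_G - V(T_H)$, so at most one such component can contain an interface vertex. Combined with the fact that at least one interface vertex exists (since $G$ is connected and $\bigcup_{B \in \mathcal{B}} V(B)$ is non-empty by hypothesis), this yields exactly one.

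For part~(b), let $u, B^*, B_u, c^*$ be as above. Since $B^* \notin \mathcal{B}$, no cut-vertex of $B^*$ is red, so $u$ is not red and $c^* \neq u$; hence $u$ has at most one red edge in $T_G$. As $c^*$ is red it has at least two red edges, so at least one of them, call it $c^* B'$ with $B' \neq B_u$, does not go to $B_u$. The key technical step is the inclusion $G_{c^* B'} \subseteq G_{u B_u}$ as induced subgraphs: the $B'$-side of $c^*$ in $G - c^*$ does not contain $u$ (because $u$ lies on the $B_u$-side of $c^*$), so any $G - c^*$-path from $B' \setminus \{c^*\}$ to a vertex on the $B'$-side survives in $G - u$, and from $B_u \setminus \{u\}$ one reaches the $B'$-side via the vertex $c^* \in B_u \cap B'$ without ever passing through $u$. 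By closure of $\mathcal{C}$ under vertex addition, this gives $G_{u B_u} \in \mathcal{C}$, making the edge $u B_u$ red. Now suppose for contradiction that $H \in \mathcal{C}$. Then $H$, being an induced subgraph of $G_{u B^*}$ (whose vertex set contains $V(H)$), forces $G_{u B^*} \in \mathcal{C}$ by the same closure property, so $u B^*$ is also red. This gives $u$ two red edges, making $u$ red; but then $B^* \in \mathcal{B}$ by definition, contradicting $B^* \in T_H$. The main technical obstacle I expect to confront is cleanly establishing the inclusion $G_{c^* B'} \subseteq G_{u B_u}$, since it requires carefully tracking how the two different cut-vertex deletions interact with the block-tree structure.
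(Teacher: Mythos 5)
Your proof is correct and follows essentially the same route as the paper's: part~(a) reduces to the connectivity of the red subtree (Lemma~\ref{lem:redconnected}) forcing all attachment points of $H$'s subtree into a single component, and part~(b) propagates redness from the second red edge at $c^*$ to conclude $G_{uB_u}\in\mathcal{C}$, hence that $u$ would acquire two red edges if $H\in\mathcal{C}$. The containment $V(G_{c^*B'})\subseteq V(G_{uB_u})$ that you flag as the main obstacle is exactly the block-tree inclusion the paper also invokes (the subtree $T_G(c^*B')$ sits inside $T_G(uB_u)$ and avoids $u$), so it is not a genuine gap.
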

\begin{proof}
	Let us mark the block-vertices in $T_G$ corresponding to blocks of $G$ that contain at least one vertex of $H$ and also mark the cut-vertices in $T_G$ corresponding to cut-vertices of $G$ that are in $H$. Clearly, the block-vertices that are marked are not in $\mathcal{B}$. Since $H$ is connected, it follows from the structure of the block-tree that the marked vertices of $T_G$ form a subtree of $T_G$ whose leaves are all marked block-vertices. Further, it is clear that any unmarked cut-vertex that is adjacent to a marked block-vertex belongs to some block in $\mathcal{B}$ (otherwise, that cut-vertex would have been in $H$ and therefore marked). Now suppose there exist two distinct edges $e=uX$ and $e'=u'X'$ of $T_G$ where $X,X'$ are marked block-vertices and $u,u'$ are unmarked cut-vertices. Let $B,B'$ be the blocks in $\mathcal{B}$ that contain $u,u'$ respectively. As $B,B'\in\mathcal{B}$, there exist red cut-vertices $v,v'$ adjacent to $B,B'$ respectively where $u\neq v$ and $u'\neq v'$. From Lemma~\ref{lem:redconnected}, we know that the red vertices in $T_G$ induce a connected subtree of $T_G$. Therefore, every vertex in the path in $T_G$ between $v$ and $v'$ has to be red. This implies that $u$ is red, which further implies that $X\in\mathcal{B}$. But this contradicts the fact that $X$ is a marked block-vertex. We can therefore conclude that there exist at most one marked block-vertex $X$ that has an unmarked neighbour in $T_G$. Since $T_G$ contains at least one marked vertex and at least one unmarked vertex (as $V(H)\neq\emptyset$ and $\mathcal{B}\neq\emptyset$), we have that there is exactly one marked block-vertex $X$ such that it has an unmarked neighbour $u$ in $T_G$. It now follows from the structure of the block-tree that $H=G_{uX}$. This implies that no vertex in $H$ can have a neighbour in $V(G)\setminus V(H)$ other than $u$. This proves~\ref{it:oneneighbour}.
	
	We shall now prove~\ref{it:notinC}. Suppose for the sake of contradiction that $H\in\mathcal{C}$, or in other words, $G_{uX}\in\mathcal{C}$. So, the edge $uX$ is red in $T_G$.
	
	\begin{claim}
		The cut-vertex $u$ of $T_G$ is red.
	\end{claim}
	
	\noindent As observed earlier, $u$ is in some block that is in $\mathcal{B}$. Let $B\in\mathcal{B}$ be a block containing $u$. So $uB$ is an edge of $T_G$. Since $B\in\mathcal{B}$, there must be some red cut-vertex $u'$ in $T_G$ that is adjacent to $B$. Clearly, $u'\neq u$, as otherwise, $X$ would have been adjacent to a red cut-vertex, and hence it would have been in $\mathcal{B}$. But this cannot happen as $X$ contains vertices from $H$. Since $u'$ is a red cut-vertex, it has at least two red edges incident on it and therefore there is a red edge $e$ incident on $u'$ that is different from $u'B$. From the definition of red edges, we have that $G_e\in\mathcal{C}$. It follows from the structure of the block-tree that $G_e$ is an induced subgraph of $G_{uB}$. As $\mathcal{C}$ is closed under vertex addition, we have that $G_{uB}\in\mathcal{C}$, implying that the edge $uB$ is red in $T_G$. We now have two red edges, $uX$ and $uB$, incident on $u$, which means that $u$ is a red cut-vertex of $T_G$.
	\medskip
	
	From the above claim, it follows that $X$ is a block-vertex of $T_G$ that is incident to a red cut-vertex $u$, and hence it is in $\mathcal{B}$. But this is a contradiction as $B$ contains vertices of $H$. This proves~\ref{it:notinC}.
\end{proof}

\begin{lemma}\label{lem:redpath}
	Let $\mathcal{C}$ be a class of graphs that is closed under vertex addition. Let $G$ be an asteroidal-$\mathcal{C}$-free graph and let $T_G$ be coloured with respect to $\mathcal{C}$. Then the subgraph $T_r$ of $T_G$ induced by the set of red vertices is either empty or is a path.
\end{lemma}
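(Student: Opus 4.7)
The plan is to combine Lemma \ref{lem:redconnected} with an asteroidal-$\mathcal{C}$ obstruction argument. By Lemma \ref{lem:redconnected}, $T_r$ is a connected subgraph of the tree $T_G$, hence a subtree, so to establish that $T_r$ is empty or a path it suffices to show that no vertex of $T_r$ has three red neighbours in $T_G$. I will proceed by contradiction: assuming some red vertex $v$ has three red neighbours $z_1, z_2, z_3$, I will exhibit three subgraphs of $G$ that are asteroidal-$\mathcal{C}$, contradicting the hypothesis.

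From $v$ I extract three red edges $e_1, e_2, e_3$ of $T_G$ pointing outward on three branches. If $v = B$ is a red block, each $z_i = u_i$ is a red cut-vertex and has at least two red edges, at most one of which is $u_iB$; hence a red edge $e_i = u_i X_i$ with $X_i \neq B$ exists. If $v = c$ is a red cut-vertex, each red block $z_i = B_i$ has at least two red cut-vertex neighbours, so it has one, $c_i \neq c$, and $c_i$ being a red cut-vertex furnishes a red edge $e_i = c_i X_i$ with $X_i \neq B_i$. In either case the $G_{e_i}$ are connected induced subgraphs of $G$ lying in $\mathcal{C}$ by the definition of red edges.

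Next I would verify neighbour-disjointness of the $G_{e_i}$ using the tree structure of $T_G$ and Observation \ref{obs:blocktree}. The three subtrees $T_G(e_i)$ are pairwise disjoint, because in Case A any two of them are separated by $B$ and in Case B by $c$ together with a third block $B_k$; hence any vertex of $G_{e_j}$ lies in a block not in $T_G(e_i)$ and is different from the cut-vertex of $e_i$ (if it coincided, two distinct paths to $X_i$ would exist in $T_G$, contradicting that $T_G$ is a tree). Observation \ref{obs:blocktree} then rules out any adjacency. For the asteroidal property, I construct a path between a vertex of $X_i$ adjacent to $u_i$ (or $c_i$) and a vertex of $X_j$ adjacent to $u_j$ (or $c_j$), routed through $B$ in Case A (using that $B$ is $2$-connected with at least three vertices, so $B - \{u_k\}$ is connected) and through $B_i, c, B_j$ in Case B. The main obstacle is showing that this path misses $G_{e_k}$: each intermediate vertex $w$ must satisfy $w \neq u_k$ (respectively $c_k$) and $w$ must lie in no block of $T_G(e_k)$. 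Both reduce to tree-uniqueness of paths in $T_G$, which forbids $u_k$ (or $c_k$) from appearing in $B$, $B_i$, or $B_j$ outside its intended location and forbids any of those blocks from lying in $T_G(e_k)$. Once that is checked, Observation \ref{obs:blocktree} delivers the required non-adjacency to $V(G_{e_k})$, and the three subgraphs $G_{e_1}, G_{e_2}, G_{e_3}$ form an asteroidal-$\mathcal{C}$ triple, giving the contradiction.
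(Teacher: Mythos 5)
Your proposal is correct and follows essentially the same route as the paper's proof: reduce to showing no red vertex has three red neighbours via Lemma~\ref{lem:redconnected}, split into the block-vertex and cut-vertex cases, extract an outward-pointing red edge $e_i$ on each branch, and use Observation~\ref{obs:blocktree} together with the $2$-connectivity of the block (resp.\ the block-tree structure) to build the three missing paths that exhibit $G_{e_1},G_{e_2},G_{e_3}$ as asteroidal-$\mathcal{C}$. The verification steps you flag as remaining (neighbour-disjointness and that the connecting paths miss $G_{e_k}$) are exactly the ones the paper discharges with Observation~\ref{obs:blocktree}, so no gap remains.
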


\begin{proof}
	If there are no red vertices in $T_G$, then there is nothing to prove. So let us suppose that $T_r$ is not empty. From Lemma~\ref{lem:redconnected}, it follows that $T_r$ is connected.
	It only remains to be shown that every vertex has degree at most two in $T_r$. Suppose for the sake of contradiction that $u$ is a red vertex that has three red neighbours $u_1,u_2,u_3$.
	
	Let us first consider the case when $u$ is a block-vertex. Then, clearly $u_1,u_2,u_3$ are all cut-vertices. Since each $u_i$, for $i\in\{1,2,3\}$ is red, we know that there are two red edges incident on each of them. This means that for each $i\in\{1,2,3\}$ there is a red edge $e_i$ different from $uu_i$ that is incident on $u_i$. It is clear from Observation~\ref{obs:blocktree} that $G_{e_1},G_{e_2},G_{e_3}$ are pairwise neighbour-disjoint connected induced subgraphs of $G$. Because $e_1,e_2,e_3$ are red, we know that $G_{e_1},G_{e_2},G_{e_3}\in\mathcal{C}$. For each $i\in\{1,2,3\}$, let $v_i$ be a neighbour of $u_i$ in $G_{e_i}$. Let the block-vertex $u$ in $T_G$ correspond to a block $B$ in $G$. From the definition of the block-tree, we know that $u_1,u_2,u_3\in B$. Since $B$ is a 2-connected subgraph of $G$, for any $i,j,k$ such that $\{i,j,k\}=\{1,2,3\}$, there exists a path $P_{ij}$ in $B$ between $u_i$ and $u_j$ that does not contain $u_k$. Let $P'_{ij}=P_{ij}\cup\{u_iv_i,u_jv_j\}$. From Observation~\ref{obs:blocktree}, it follows that $P'_{ij}$ misses $G_{e_k}$. This means that $G_{e_1},G_{e_2},G_{e_3}$ are asteroidal-$\mathcal{C}$ in $G$, contradicting the fact that $G$ is asteroidal-$\mathcal{C}$-free.
	
	Next, let us consider the case when $u$ is a cut-vertex. Then, $u_1,u_2,u_3$ are block-vertices that are coloured red. Since each of them have to be adjacent to at least two red cut-vertices, we know that for each $i\in\{1,2,3\}$, there is a red cut-vertex $u'_i$ different from $u$ that is adjacent to $u_i$. Then again, as for each $i\in\{1,2,3\}$, $u'_i$ is red, we can infer that there is a red edge $e_i$ different from $u'_iu_i$ that is incident on $u'_i$. As before, $G_{e_1},G_{e_2},G_{e_3}$ form neighbour-disjoint connected induced subgraphs of $G$ that all belong to $\mathcal{C}$. For each $i\in\{1,2,3\}$, let $v_i$ be a neighbour of $u'_i$ in $G_{e_i}$. It is now clear from the structure of the block-tree that for any $i,j,k$ such that $\{i,j,k\}=\{1,2,3\}$, there is a path $P_{ij}$ in $G$ between $u'_i$ and $u'_j$ that does not contain $u'_k$. We can now infer using Observation~\ref{obs:blocktree} that the path $P'_{ij}=P_{ij}\cup\{v_iu'_i,v_ju'_j\}$ misses $G_{e_k}$. So we again have that $G_{e_1},G_{e_2},G_{e_3}$ are asteroidal-$\mathcal{C}$ in $G$, contradicting the fact that $G$ is asteroidal-$\mathcal{C}$-free.
\end{proof}

\begin{lemma}\label{lem:noredvertex}
	Let $\mathcal{C}$ be a class of graphs that is closed under vertex addition. Let $G$ be a graph and let $T_G$ be coloured with respect to $\mathcal{C}$. If there are no red vertices in $T_G$, then there exists a block $B$ in $G$ such that no component of $G-B$ is in $\mathcal{C}$.
\end{lemma}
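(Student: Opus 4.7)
The plan is to produce a block $B$ of $G$ such that at every cut-vertex $c\in V(B)$ of $G$, no edge of $T_G$ incident on $c$ other than possibly $Bc$ itself is red. Once such a $B$ is found, the desired conclusion will follow from the fact that the components of $G-B$ are precisely the subgraphs $G_{B'c}$ ranging over pairs $(c,B')$ where $c\in V(B)$ is a cut-vertex of $G$ and $B'\neq B$ is a block containing $c$: each such $G_{B'c}$ is connected (being a component of $G-c\supseteq G-B$), distinct ones lie in different subtrees of $T_G-B$ and so are separated in $G-B$, and every vertex outside $B$ belongs to some $G_{B'c}$ determined by the first edge of the path in $T_G$ from its block to $B$. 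By the choice of $B$, every such edge $B'c$ is non-red, whence $G_{B'c}\notin\mathcal{C}$ for every component of $G-B$.

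The choice of $B$ exploits the fact that, since no cut-vertex of $T_G$ is red, each cut-vertex $c$ of $G$ is incident on at most one red edge of $T_G$. Define $\phi(c)=B'$ whenever $B'c$ is the unique red edge at $c$, and leave $\phi(c)$ undefined otherwise. Call a block $B$ of $G$ \emph{good} if, for every cut-vertex $c\in V(B)$ of $G$, either $\phi(c)$ is undefined or $\phi(c)=B$. Plainly, a good block satisfies the required property.

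To locate a good block, I would perform the following walk in $T_G$: start at an arbitrary block $B_0$, and so long as the current block $B_i$ fails to be good, select a cut-vertex $c_i\in V(B_i)$ with $\phi(c_i)=B_{i+1}\neq B_i$ and advance to $B_{i+1}$ along the length-two path $B_i,c_i,B_{i+1}$. The central observation---and the main step of the argument---is that successive cut-vertices differ: $c_i\neq c_{i-1}$ for every $i\geq 1$, because $\phi(c_{i-1})=B_i$ was chosen at the previous step while $\phi(c_i)=B_{i+1}\neq B_i$. Hence the walk $B_0,c_0,B_1,c_1,\ldots$ in $T_G$ has no immediate backtracks, and since $T_G$ is a tree this forces the walk to be simple---no vertex is revisited. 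The finiteness of $T_G$ then guarantees that the walk must terminate, and termination can occur only upon reaching a good block, completing the argument.
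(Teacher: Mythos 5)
Your proof is correct, and it targets the same object as the paper's proof: a block $B$ such that, for every cut-vertex $c$ of $G$ lying in $B$, the unique red edge of $T_G$ at $c$ (if any) is $Bc$, so that every component $G_{B'c}$ of $G-B$ corresponds to a non-red edge $B'c$ and hence is not in $\mathcal{C}$. The difference is in how the existence of such a block is established. The paper first splits off the case of a cut-vertex all of whose incident edges are non-red (any block containing it then works), reduces to the situation where every cut-vertex has exactly one red edge $f(\cdot)$, and then applies an extremal argument: the cut-vertex $v$ minimizing $|V(G_{f(v)})|$ determines a block all of whose incident edges in $T_G$ are red. You instead iterate the map $\phi$ as a walk on $T_G$ and invoke the fact that a non-backtracking walk in a tree is simple, hence finite. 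This is a cleaner route to the same ``sink'': it needs only that each cut-vertex carries \emph{at most} one red edge, so the paper's preliminary case split disappears. One immaterial slip: a vertex $v\notin B$ lies in the component $G_{B'c}$ determined by the edge of the $T_G$-path from $v$'s block to $B$ that is incident on the cut-vertex $c$ adjacent to $B$ (the penultimate edge of that path), not by its first edge; the identification of the components of $G-B$ with the subgraphs $G_{B'c}$ is nonetheless exactly as in the paper.
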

\begin{proof}
	Note that if there exists a cut-vertex $u$ in $G$ such that each component of $G-\{u\}$ is not in $\mathcal{C}$, then clearly, removal of any block that contains $u$ from $G$ will result in a graph whose components are not in $\mathcal{C}$ (recall that $\mathcal{C}$ is closed under vertex addition). Therefore, we shall assume that for any cut-vertex $u$ of $G$, there is some component of $G-\{u\}$ that is in $\mathcal{C}$. Since $\{G_e\colon e$ incident on $u\}$ are the components of $G-\{u\}$, this implies that in $T_G$, every cut-vertex has at least one edge $e$ incident on it such that $G_e\in\mathcal{C}$. In other words, every cut-vertex of $T_G$ has at least one red edge incident on it. Since $T_G$ contains no red vertices, we can now conclude that every cut-vertex in $T_G$ has exactly one red edge incident on it.
	
	For a cut-vertex $u$ in $G$, let us define $f(u)$ to be the only red edge incident on $u$ in $T_G$. Let $v$ be the cut-vertex in $G$ that minimizes $|V(G_{f(v)})|$. Let $f(v)=vB$, where $B$ is a block-vertex of $T_G$. Recall that in $T_G$, every edge incident on $v$ other than $vB$ is a non-red edge. In other words, none of the components of $G-v$ other than $G_{f(v)}$ belong to $\mathcal{C}$. We now claim that every edge in $T_G$ incident on $B$ is red. Suppose that there is a non-red edge $wB$ in $T_G$. Since $w$ is a cut-vertex, there is a red edge $f(w)$ incident on $w$. Since $wB$ is non-red, $f(w)$ is different from $wB$. From the structure of the block-tree, it is evident that $V(G_{f(w)})\subset V(G_{f(v)})$ ($w\in V(G_{f(v)}\setminus V(G_{f(w)})$). But this contradicts our choice of $v$ as we now have $(|V(G_{f(w)})|<|V(G_{f(v)})|$. Therefore, every edge that is incident on $B$ in $T_G$ is red.
	
	For any block-vertex $X$ in $T_G$, we shall define $F_X=\{wY\colon wX\in E(T_G)$ and $X\neq Y\}$. In other words, $F_X$ consists of exactly those edges of $T_G$ that are not incident on $X$ but are incident on some cut-vertex adjacent to $X$. Note that $\{G_e\colon e\in F_X\}$ are exactly the components of $G-X$. Since for the block-vertex $B$ under consideration, we know that every edge incident on it is red, we can infer that every edge in $F_B$ is non-red (as every cut-vertex has exactly one red edge incident on it). This means that each of $\{G_e\colon e\in F_B\}$ is a graph that is not in $\mathcal{C}$; in other words, no component of $G-B$ belongs to $\mathcal{C}$. We have thus found the required block.
\end{proof}

\section{Trees and block graphs}\label{sec:treesandblocks}

A question asked in Babu et al.~\cite{babu2014} is whether it can be determined in polynomial-time if an input tree has a rectangle intersection representation in which each rectangle is a square of unit height and width. Instead of restricting the rectangles to be unit squares, we study a different restriction. In particular, we ask if, given a tree and an integer $k$, it can be determined in polynomial-time whether the tree has a $k$-SRIG or $k$-ESRIG representation. We show that the problem is polynomial-time solvable if $k\leq 3$. In fact, we show that we can determine in polynomial-time if the input graph $G$ is 2-ESRIG (equivalently 2-SRIG, by Theorem~\ref{thm:equiv}) if $G$ is guaranteed to be a block graph. We also show that it can be determined in polynomial-time if an input tree is 3-ESRIG (equivalently 3-SRIG, by Theorem~\ref{thm:equiv}). Our algorithms depend on a forbidden structure characterization for block graphs that are 2-ESRIG and trees that are 3-ESRIG. In fact, in both cases, the algorithm is a search for the presence of these forbidden structures in the input graph, and therefore it is a ``certifying algorithm'', in the sense that the algorithm outputs a representation whenever the answer is ``Yes'' and a forbidden structure in the graph whenever the answer is ``No''.

The forbidden structure characterizations of block graphs that are 2-ESRIG and trees that are 3-ESRIG are obtained as follows. In the previous section, we showed that a necessary condition for a graph to be a 2-ESRIG is that it has to be asteroidal-(non-interval)-free. We show in this section that for block graphs, this necessary condition is also sufficient. We later on show that for trees that are 3-ESRIG, the necessary condition of being asteroidal-(non-2-ESRIG)-free is again a sufficient condition.
First, we need the following lemma.

\begin{lemma}\label{lem:gs}
Let $\mathcal{C}$ be a class of graphs that is closed under vertex addition. Let $G$ be a block graph that is asteroidal-$\mathcal{C}$-free and let $T_G$ be coloured with respect to $\mathcal{C}$. Then there exists a set $S\subseteq V(G)$ such that $G[S]$ is an interval graph and no component of $G-S$ is in $\mathcal{C}$.
\end{lemma}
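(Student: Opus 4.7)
The plan is to take $S$ to be a union of blocks of $G$ determined by the colouring of $T_G$, and to handle separately the cases where $T_G$ has no red vertex and where it has at least one.

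If $T_G$ has no red vertex, Lemma~\ref{lem:noredvertex} produces a block $B$ of $G$ such that no component of $G-B$ is in $\mathcal{C}$. Since $G$ is a block graph, $G[B]$ is a clique and hence an interval graph, so $S=B$ works.

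Now suppose $T_G$ contains a red vertex. By Lemma~\ref{lem:redpath}, the red vertices induce a path $\pi$ in $T_G$. A red block-vertex has, by definition, at least two red cut-vertex neighbours and so has degree at least two in the red subgraph; hence both endpoints of $\pi$ must be red cut-vertices. Write $\pi$ as $u_1, B_1, u_2, B_2, \ldots, B_{k-1}, u_k$, where the $u_i$ are red cut-vertices and the $B_i$ are red block-vertices (with $k=1$ and no $B_i$'s if the red subgraph is a single vertex). Let $\mathcal{B}$ be the set of block-vertices of $T_G$ having at least one red neighbour, and set $S = \bigcup_{B \in \mathcal{B}} B$. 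Lemma~\ref{lem:removespecialblocks}\ref{it:notinC} gives at once the second required property: no component of $G-S$ lies in $\mathcal{C}$. It remains to show that $G[S]$ is an interval graph.

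Since $G$ is a block graph and $S$ is a union of blocks of $G$, $G[S]$ is a block graph whose blocks are exactly the members of $\mathcal{B}$. The main step is to show that the cut-vertices of $G[S]$ are precisely $u_1, \ldots, u_k$. Each $u_i$ is a cut-vertex of $G$ and so lies in at least two blocks of $G$; all of these belong to $\mathcal{B}$ (since they are adjacent in $T_G$ to the red cut-vertex $u_i$), so $u_i$ is a cut-vertex of $G[S]$. Conversely, suppose some $v \in S \setminus \{u_1, \ldots, u_k\}$ lies in two distinct blocks $B, B' \in \mathcal{B}$. By the definition of $\mathcal{B}$, there exist $u_a, u_b \in \pi$ such that $u_a B$ and $u_b B'$ are edges of $T_G$. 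Then $u_a - B - v - B' - u_b$ is a walk in $T_G$; combined with the unique path from $u_a$ to $u_b$ along $\pi$ (which avoids $v$, since $v \notin \pi$), this yields a cycle in $T_G$, contradicting the fact that $T_G$ is a tree. Hence $T_{G[S]}$ is a caterpillar with spine $u_1, B_1, u_2, \ldots, B_{k-1}, u_k$ and with the non-red members of $\mathcal{B}$ attached as leaves at the appropriate $u_i$. A routine interval representation of $G[S]$ then exists: give each $u_i$ a small interval around the point $i$, let each spine block $B_i$ span $[i, i+1]$ with its remaining vertices as nested subintervals, and at each $u_i$ arrange the attached non-red leaf blocks as a windmill inside $u_i$'s interval. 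The main obstacle is the tree-structural argument pinning down the cut-vertices of $G[S]$; once that is settled, the interval construction is routine.
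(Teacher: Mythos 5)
Your proposal is correct, and its skeleton coincides with the paper's proof: the same choice of $S=\bigcup_{B\in\mathcal{B}}B$, the same case split on whether $T_G$ has a red vertex, and the same appeals to Lemma~\ref{lem:noredvertex} (for the red-free case), Lemma~\ref{lem:removespecialblocks}\ref{it:notinC} (for ``no component of $G-S$ is in $\mathcal{C}$''), and Lemma~\ref{lem:redpath} (the red vertices form a path). The only genuine divergence is in how the interval property of $G[S]$ is verified. The paper observes that $G[S]$ is chordal (being a block graph) and shows it is AT-free --- assigning to each of $a,b,c$ a red cut-vertex neighbour, using the path structure of the red vertices to force one of these, say $b'$, onto every $a$--$c$ path, and contradicting the AT condition --- and then invokes the Lekkerkerker--Boland characterization (Theorem~\ref{thm:lb}). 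You instead determine the block-tree of $G[S]$ exactly: your argument that the cut-vertices of $G[S]$ are precisely the red cut-vertices (via the uniqueness of paths in the tree $T_G$) is sound, as is the conclusion that $T_{G[S]}$ is a caterpillar with the red path as spine, and the explicit nested/windmill interval representation then works. Your route is slightly longer but constructive --- it actually produces the interval representation $\mathcal{R}$ of $G[S]$ that the proofs of Theorems~\ref{thm:block2sig} and~\ref{thm:tree3sig} subsequently take as a starting point --- whereas the paper's AT-free argument is shorter but only certifies existence. Both are valid; neither reveals a gap in the other.
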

\begin{proof}
When $T_G$ contains at least one red vertex, let $\mathcal{B}$ be the set of block-vertices of $T_G$ that have at least one red neighbour. If $T_G$ contains no red vertices, then by Lemma~\ref{lem:noredvertex}, there is a block $B$ in $G$ whose removal gives us components, none of which are in $\mathcal{C}$. In this case, let $\mathcal{B}=\{B\}$. We shall let $S$ be the set of vertices which are contained in some block in $\mathcal{B}$, or in other words, $S=\bigcup_{B\in\mathcal{B}} B$. By the above observation and Lemma~\ref{lem:removespecialblocks}, we can assume from here onwards that no component of $G-S$ is in $\mathcal{C}$. If there are no red vertices in $T_G$, then $G[S]$ is a complete graph, and therefore an interval graph. To complete the proof, we only need to show that if $T_G$ contains at least one red vertex, then $G[S]$ is an interval graph.

Suppose that $T_G$ contains at least one red vertex. Then from Lemma~\ref{lem:redpath}, we know that the red vertices in $T_G$ form a path. Since block graphs are chordal, by Theorem~\ref{thm:lb}, we need to only show that $G[S]$ is AT-free in order to prove that $G[S]$ is an interval graph. Suppose for the sake of contradiction that there exists an asteroidal triple $\{a,b,c\}\subseteq S$ in $G[S]$. Since $\{a,b,c\}$ has to be an independent set in $G$, we know that there is no block that contains any two of them. We shall say that a cut-vertex in $G$ is red if that cut-vertex is coloured red in $T_G$. Note that from the definition of $S$, every vertex in $S$ is adjacent to at least one red cut-vertex (since each vertex of $S$ is in some block that also contains a cut-vertex that is coloured red in $T_G$, and each block is a complete graph). Let $a',b',c'$ denote red cut-vertices that are adjacent to $a,b,c$ respectively. Suppose that $a'=b'=x$. Then, it is clear from the structure of the block-tree that either every path between $a$ and $c$ contains $x$ or every path between $b$ and $c$ contains $x$. But this contradicts the fact that $a,b,c$ form an AT in $G[S]$, since $x$ is a neighbour of both $a$ and $b$. We can therefore assume that $a',b',c'$ are distinct red cut-vertices. Since the red vertices form a path in $T_G$, the vertices $a',b',c'$ must lie on a path in $T_G$. Let us assume without loss of generality that $b'$ lies on the path in $T_G$ between $a'$ and $c'$. This means that every path between $a'$ and $c'$ in $G[S]$ contains $b'$. We now claim that every path in $G[S]$ between $a$ and $c$ goes through $b'$. Suppose for the sake of contradiction that there exists a path $P$ between $a$ and $c$ in $G[S]$ that does not contain $b'$. Then the path $a'a\cup P\cup cc'$ is a path between $a'$ and $c'$ in $G[S]$ that does not contain $b'$, contradicting the fact that every path in $G[S]$ between $a'$ and $c'$ contains $b'$. So, we have that every path between $a$ and $c$ in $G[S]$ contains $b'$, which is a neighbour of $b$. This contradicts the fact that $a,b,c$ forms an AT in $G[S]$.
\end{proof}

\begin{theorem}\label{thm:block2sig}
A block graph $G$ is 2-ESRIG if and only if $G$ is asteroidal-(non-interval)-free.
\end{theorem}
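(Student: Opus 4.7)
The forward direction is immediate from Theorem~\ref{thm:asteroidal_k-1_esrig_free} applied with $k=2$, since the class of $1$-ESRIGs coincides with the class of interval graphs.

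For the backward direction, suppose $G$ is an asteroidal-(non-interval)-free block graph. Since interval graphs are hereditary, the class of non-interval graphs is closed under vertex addition, so Lemma~\ref{lem:gs} yields a set $S \subseteq V(G)$ such that $G[S]$ is an interval graph and every component of $G-S$ is an interval graph. My plan is to build an explicit 2-exactly stabbed representation with stab lines $y=0$ and $y=1$, placing the vertices of $S$ on the bottom stab line and each component of $G-S$ above, glued to its unique attachment vertex in $S$. Tracing through the proofs of Lemmas~\ref{lem:removespecialblocks} and~\ref{lem:gs} shows that each component $H$ of $G-S$ is joined to $S$ through a single vertex $u_H \in S$; moreover, because $G$ is a block graph, the neighbours of $u_H$ in $H$ all lie in one block of $G$ and hence form a clique in the interval graph $H$, which by Helly's property has a common point $q_H$ in any interval representation of $H$.

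The crucial observation is that $G[S]$ is not just an interval graph but a \emph{proper} interval graph: being an interval graph, $G[S]$ cannot contain an induced $K_{1,3}$ (which is itself non-interval), and the standard forbidden-subgraph characterisation then gives $G[S]$ an interval representation $\{[c_u,d_u]\}_{u \in S}$ in which no interval is properly contained in another. After perturbing to make all endpoints distinct, I would order $S$ as $u_1,\ldots,u_n$ with $c_{u_1}<\cdots<c_{u_n}$ (equivalently $d_{u_1}<\cdots<d_{u_n}$), fix strictly increasing heights $0<h_{u_1}<\cdots<h_{u_n}<1$, and set $r_{u_i} = [c_{u_i},d_{u_i}] \times [-1,h_{u_i}]$, so each $r_{u_i}$ meets only the bottom stab line. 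For each component $H$ with $u_H = u_i$ I would pick a point $p_H$ in $[c_{u_i},c_{u_{i+1}})$ (or in $(d_{u_{n-1}},d_{u_n}]$ when $i=n$), placing distinct $p_H$'s in pairwise-disjoint narrow columns, scale an interval representation of $H$ so that $q_H$ maps to $p_H$ and all intervals fit inside the column, and set each $r_v$ for $v \in V(H)$ to have the scaled $x$-interval together with $y$-span $[y^-_v,2]$, where $y^-_v \in (h_{u_{i-1}},h_{u_i})$ for $v \in N(u_H)$ and $y^-_v \in (h_{u_i},1)$ otherwise (with the convention $h_{u_0}:=0$).

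The main obstacle, which the proper-interval observation resolves, is that in an arbitrary interval representation of $G[S]$ the interval of an attachment vertex $u_i$ can be buried under several neighbours' intervals, forcing height constraints that could form cycles; under the proper-interval ordering, however, the set of indices $j$ with $p_H \in [c_{u_j},d_{u_j}]$ is contained in $\{j\le i : u_j \sim u_i\}$, so the monotone choice $h_{u_1}<\cdots<h_{u_n}$ simultaneously satisfies every required constraint of the form $h_{u_j} < h_{u_i}$. Correctness then reduces to a direct case check: intersections among $\{r_{u_i}\}_{i}$ reproduce $G[S]$ via the interval representation; intersections within a component reproduce its interval graph inside its narrow column; rectangles from different components are $x$-separated by construction; and for $v \in V(H)$ with $u_H = u_i$, the rectangle $r_v$ meets $r_{u_j}$ precisely when $j=i$ and $v \in N(u_H)$, the non-meeting cases being handled either by disjoint $x$-projections (when $j>i$ or $u_j \not\sim u_i$) or by disjoint $y$-projections coming from $y^-_v > h_{u_{i-1}} \ge h_{u_j}$ (when $j<i$ and $u_j \sim u_i$).
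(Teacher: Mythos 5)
Your overall strategy matches the paper's: both directions hinge on Theorem~\ref{thm:asteroidal_k-1_esrig_free} and Lemma~\ref{lem:gs}, and the construction (attachment vertices of $S$ on the bottom stab line with increasing heights, each component of $G-S$ in a narrow column above its unique attachment vertex) is essentially the one in the paper. However, your ``crucial observation'' is false: $K_{1,3}$ \emph{is} an interval graph (a long interval for the centre and three disjoint short intervals inside it), so interval graphs can certainly contain induced claws, and $G[S]$ need not be a proper interval graph. Indeed, $S$ is a union of blocks sharing red cut-vertices, and a single red cut-vertex lying in three or more blocks of $\mathcal{B}$ already puts a claw inside $G[S]$; such configurations are perfectly compatible with $G$ being asteroidal-(non-interval)-free. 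So the step ``$G[S]$ contains no induced $K_{1,3}$, hence is a proper interval graph'' is simply wrong.

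Fortunately, the obstacle you invoke proper intervals to overcome does not exist, and this is exactly how the paper avoids the issue. Order $S$ by left endpoints $c_{u_1}<\cdots<c_{u_n}$ and take heights increasing in that order (the paper sets $t_u=\frac{c_u-L}{R-L}$). Since each column for a component attached at $u_i$ sits in $[c_{u_i},c_{u_{i+1}})$, any other $u_j$ whose interval meets that column must satisfy $c_{u_j}<c_{u_i}\le d_{u_j}$, i.e.\ $j<i$; hence every height constraint has the form $h_{u_j}<h_{u_i}$ with $j<i$, and the monotone choice of heights satisfies all of them with no possibility of a cycle. This is precisely your own final case analysis, which uses only the left-endpoint ordering and never the right-endpoint ordering in an essential way. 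If you delete the proper-interval paragraph and justify the column placement directly from the left-endpoint ordering (also taking care that $p_H\le d_{u_i}$ so that the neighbours of $u_i$ in $H$ actually meet $r_{u_i}$ horizontally), your argument becomes correct and coincides with the paper's proof.
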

\begin{proof}
Let $G$ be a block graph.
We know by Theorem~\ref{thm:asteroidal_k-1_esrig_free} that if $G$ is a 2-ESRIG then $G$ is asteroidal-(non-interval)-free.
Now we prove that if $G$ is asteroidal-(non-interval)-free then $G$ is a 2-ESRIG.
	
By letting $\mathcal{C}$ be the class of non-interval graphs, we have by Lemma~\ref{lem:gs} that there exists a set $S\subseteq V(G)$ such that $G[S]$ is an interval graph and each component of $G-S$ is also an interval graph.

Let $\mathcal{R}=\{[c_u,d_u]\}_{u\in S}$ be an interval representation of $G[S]$ such that all endpoints of intervals are distinct. Let $\epsilon\in\mathbb{R}^+$ be such that $\epsilon<\min\{|c_u-c_v|\colon u,v\in S, u\neq v\}$. Also, let $L,R\in\mathbb{R}$ such that $L<\min_{u\in S} c_u$ and $R>\max_{u\in S} d_u$. For each vertex $u\in S$, define $t_u=\frac{c_u-L}{R-L}$.
Let $\mathcal{H}$ be the set of components of $G-S$.
For a vertex $u\in S$, let $\mathcal{H}_u=\{H\in\mathcal{H}\colon N(u)\cap H\neq\emptyset\}$. From Lemma~\ref{lem:removespecialblocks}\ref{it:oneneighbour}, it is clear that for each component $H\in\mathcal{H}$, there is a exactly one vertex in $S$ that has neighbours in $H$. Therefore, it follows that $\{\mathcal{H}_u\}_{u\in S}$ is a partition of $\mathcal{H}$ (recall that $G$ is connected). Since each component of $\mathcal{H}$ is an interval graph, and because disjoint unions of interval graphs are again interval graphs, we know that for $u\in S$, the graph $I_u$ formed by the disjoint union of the components in $\mathcal{H}_u$ is an interval graph. It is easy to see that $\{I_u\}_{u\in S}$ is a collection of neighbour-disjoint interval graphs. For each $u\in S$, let $\mathcal{R}_u$ be an interval representation $\{[c'_v,d'_v]\}_{v\in V(I_u)}$ for the interval graph $I_u$ such that every interval in it is contained in the interval $[c_u,c_u+\epsilon]$. Note that for distinct $a,b\in S$, no interval of $\mathcal{R}_a$ intersects with any interval of $\mathcal{R}_b$. Also let $b'_v=1$ if $v\notin N(u)$ and $b'_v=t_u$ if $v\in N(u)$. From here onwards, we shall assume that for every vertex $v\in V(G)\setminus S$, the interval $[c'_v,d'_v]$ and the value $b'_v$ are defined. 
	
We shall now define a rectangle $r_u=[x^-_u,x^+_u]\times [y^-_u,y^+_u]$ for each vertex $u\in V(G)$. For a vertex $u\in S$, we let $x^-_u=c_u$, $x^+_u=d_u$, $y^-_u=0$ and $y^+_u=t_u$. For a vertex $u\in V(G)\setminus S$, we let $x^-_u=c'_u$, $x^+_u=d'_u$, $y^-_u=b'_u$ and $y^+_u=1$. 
We leave it to the reader to verify that the rectangles $\{r_u\}_{u\in V(G)}$ form a 2-exactly stabbed rectangle intersection representation of $G$.
\end{proof}

\noindent\textit{Remarks.} Let $\mathcal{C}$ be the class of non-interval graphs and $G$ be a block graph with $n$ vertices and $m$ edges. Since checking whether $G$ is in $\mathcal{C}$ or not is possible in $O(n+m)$ time~\cite{corneil2009}, we can infer that coloring the edges of $T_G$ with respect to $\mathcal{C}$ is possible in $O(n^2+nm)$ time. The construction procedure described in the above proof can also be performed in $O(n^2+nm)$ time, thus giving a polynomial time algorithm to recognize block graphs that are 2-ESRIG. 

\begin{theorem}\label{thm:tree3sig}
A tree $G$ is 3-ESRIG if and only if $G$ is asteroidal-(non-2-ESRIG)-free.
\end{theorem}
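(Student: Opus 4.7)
The ``only if'' direction is immediate from Theorem~\ref{thm:asteroidal_k-1_esrig_free} applied with $k=3$. For the ``if'' direction, suppose $G$ is a tree that is asteroidal-(non-$2$-ESRIG)-free. Because $2$-ESRIG is hereditary, the class of non-$2$-ESRIGs is closed under vertex addition, so Lemma~\ref{lem:gs} (applicable since trees are block graphs) yields a set $S\subseteq V(G)$ such that $G[S]$ is an interval graph and every component of $G-S$ is a $2$-ESRIG. If $S=V(G)$ then $G$ is already an interval graph, so assume $S\subsetneq V(G)$. The plan is to mimic, one level higher, the construction in the proof of Theorem~\ref{thm:block2sig}: place $G[S]$ on the bottom stab line $y=0$, and embed the $2$-ESRIG of each component of $G-S$ using the two upper stab lines $y=1$ and $y=2$.

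Concretely, I would begin by copying the block-graph construction on $y=0$: take an interval representation $\{[c_u,d_u]\}_{u\in S}$ of $G[S]$ with distinct endpoints, pick $L<\min_{u\in S} c_u$ and $R>\max_{u\in S} d_u$, and set $t_u=(c_u-L)/(R-L)\in(0,1)$ and $r_u=[c_u,d_u]\times[0,t_u]$. Fix $\delta\in(0,\,1-\max_{u\in S} t_u)$. For each component $H$ of $G-S$, Lemma~\ref{lem:removespecialblocks}\ref{it:oneneighbour} together with the tree structure singles out a unique $u_H\in S$ and a unique vertex $v_H\in V(H)$ adjacent to $u_H$. I would take a $2$-ESRIG of $H$, flip it vertically if necessary so that $v_H$ sits on the lower stab line, renormalise the stab lines to $y=1$ and $y=2$, compress the representation horizontally into a sub-strip of $[c_{u_H},c_{u_H}+\epsilon]$ (allotting disjoint sub-strips to distinct components sharing the same $u_H$ and choosing $\epsilon$ small enough that strips for distinct vertices of $S$ are pairwise disjoint), adjust every $y=1$-layer rectangle of $H$ so that its lower edge sits at $y=1-\delta$, and finally extend $r_{v_H}$ alone further downward so that its bottom edge sits at $y=t_{u_H}$, creating the required $u_H v_H$ intersection.

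The main obstacle is verifying that these modifications preserve $H$'s internal adjacency structure and do not create spurious intersections with $S$. Adjusting a $y=1$-layer rectangle to have lower edge $y=1-\delta$ is safe because every such rectangle already contains $y=1$, so intra-layer adjacencies depend only on $x$-overlap, while cross-layer adjacencies depend only on top edges of layer-$1$ rectangles and bottom edges of layer-$2$ rectangles, none of which are touched. The downward extension of $r_{v_H}$ cannot create new intersections inside $H$ either, since every other rectangle of $H$ now has its bottom edge at $y=1-\delta$ (if on layer $1$) or strictly above $y=1$ (if on layer $2$), and hence stays above $y=t_{u_H}$. The varying-heights trick from Theorem~\ref{thm:block2sig} then handles interactions with $r_{u'}$ for $u'\in S\setminus\{u_H\}$: if $c_{u'}>c_{u_H}$ then $[c_{u'},d_{u'}]$ is disjoint from the strip $[c_{u_H},c_{u_H}+\epsilon]$ holding $H$, and if $c_{u'}<c_{u_H}$ then $t_{u'}<t_{u_H}$ and $r_{u'}$'s $y$-range $[0,t_{u'}]$ lies strictly below $y=t_{u_H}$, which is the lowest point reached by any rectangle of $H$. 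The remaining case analysis across all components and layers is routine but voluminous.
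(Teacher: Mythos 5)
Your proposal is correct and follows essentially the same route as the paper: the ``only if'' direction from Theorem~\ref{thm:asteroidal_k-1_esrig_free}, then Lemma~\ref{lem:gs} with $\mathcal{C}$ the (vertex-addition-closed) class of non-2-ESRIGs to get $S$ with $G[S]$ interval and all components of $G-S$ being 2-ESRIGs, followed by the same layered construction — $G[S]$ on $y=0$ with staggered heights $t_u$, each component compressed into a thin strip above $c_{u_H}$ with its unique attachment vertex flipped onto the lower of the two upper stab lines and extended down to $y=t_{u_H}$. The only (immaterial) difference is that you place the bottom edges of non-attachment layer-1 rectangles at $y=1-\delta$ where the paper places them at exactly $y=1$.
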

\begin{proof}
Let $G$ be a tree.
We know by Theorem~\ref{thm:asteroidal_k-1_esrig_free} that if $G$ is a 3-ESRIG then $G$ is asteroidal-(non-2-ESRIG)-free.
Now we prove that if $G$ is asteroidal-(non-2-ESRIG)-free then $G$ is a 3-ESRIG.
	
By letting $\mathcal{C}$ be the class of non-2-ESRIGs, we have by Lemma~\ref{lem:gs} that there exists a set $S\subseteq V(G)$ such that $G[S]$ is an interval graph and each component of $G-S$ is a 2-ESRIG.
	
Let $\mathcal{R}=\{[c_u,d_u]\}_{u\in S}$ be an interval representation of $G[S]$ such that all endpoints of intervals are distinct. Let $\epsilon\in\mathbb{R}^+$ be such that $\epsilon<\min\{|c_u-c_v|\colon u,v\in S, u\neq v\}$. Also, let $L,R\in\mathbb{R}$ such that $L<\min_{u\in S} c_u$ and $R>\max_{u\in S} d_u$. For each vertex $u\in S$, define $t_u=\frac{c_u-L}{R-L}$. Let $\mathcal{H}$ be the set of components of $G-S$. For a vertex $u\in S$, let $\mathcal{H}_u=\{H\in\mathcal{H}\colon N(u)\cap H\neq\emptyset\}$. From Lemma~\ref{lem:removespecialblocks}\ref{it:oneneighbour}, it is clear that for each component $H\in\mathcal{H}$, there is exactly one vertex in $S$ that has neighbours in $H$. Therefore, it follows that $\{\mathcal{H}_u\}_{u\in S}$ is a partition of $\mathcal{H}$ (recall that $G$ is connected). 
Now let $H$ be a component of $\mathcal{H}_u$. Since $G$ is a tree, there is exactly one vertex $w$ of $H$ which is adjacent to $u$ in $G$.
It is easy to see that there is a 2-exactly stabbed rectangle intersection representation of $H$ such that $w$ is on the bottom stab line (take any 2-exactly stabbed rectangle intersection representation of $H$, and if the rectangle corresponding to $w$ does not intersect the bottom stab line, then reflect the whole representation about the $X$-axis).
	
Since each component of $\mathcal{H}$ is a 2-ESRIG, and because disjoint unions of 2-ESRIGs are again 2-ESRIG, we know that for $u\in S$, the graph $I_u$ formed by the disjoint union of the components in $\mathcal{H}_u$ is a 2-ESRIG. Let $\mathcal{R}_u=\{r'_v\}_{v\in I_u}$ be a 2-exactly stabbed rectangle intersection representation of $I_u$ with the stab lines $y=1$ and $y=2$ such that for any vertex $v$ of $I_u$, $span(v)\subset [c_u,c_u+\epsilon]$, and for each vertex $w\in N(u)\cap V(I_u)$ the rectangle $r'_{w}$ intersects the stab line $y=1$. Let $I_u^1$ be the subgraph induced in $I_u$ by the vertices that are on the stab line $y=1$ in $\mathcal{R}_u$. Similarly, $I_u^2$ be the subgraph induced in $I_u$ by the vertices that are on the stab line $y=2$ in $\mathcal{R}_u$. For any vertex $v\in I_u$, let $c'_v,d'_v,t'_v,b'_v$ be such that $r'_v=[c'_v,d'_v]\times [b'_v,t'_v]$.

We shall now define a rectangle $r_u$
for each vertex $u\in V(G)$ as follows.
For a vertex $u\in S$, we let $r_u=[c_u,d_u]\times [0,t_u]$. 
Consider a vertex $v\in V(G)\setminus S$. Let $u$ be the vertex in $S$ such that $v\in V(I_u)$. If $v\in V(I^2_u)$, then we let $r_v=r'_v$. If $v\in V(I^1_u)$ and $v\notin N(u)$, then we let $r_v=[c'_v,d'_v]\times [1,t'_v]$. If $v\in V(I^1_u)$ and $v\in N(u)$, then we let $r_v=[c'_v,d'_v]\times [t_u,t'_v]$.
We leave it to the reader to verify that the rectangles $\{r_u\}_{u\in V(G)}$ form a 3-exactly stabbed rectangle intersection representation of $G$.
\end{proof}

\noindent\textit{Remarks.} Let $\mathcal{C}$ be the class of non-2-ESRIG graphs and $T$ be a tree with $n$ vertices. Since checking whether $T$ is in $\mathcal{C}$ or not is possible in $O(n^2)$ time, we can infer that coloring the edges of block-tree of $T$ with respect to $\mathcal{C}$ is possible in $O(n^3)$ time. The construction procedure described in the above proof can also be performed in $O(n^3)$ time, thus giving a polynomial time algorithm to recognize trees that are 3-ESRIG. 
\medskip

We show in Section~\ref{sec:notsuff} that the forbidden structure characterizations of Theorems~\ref{thm:block2sig} and~\ref{thm:tree3sig} do not extend to block graphs that are 3-ESRIG (equivalently 3-SRIG, by Theorem~\ref{thm:equiv}) or trees that are $k$-SRIG for any $k\geq 4$. First, we explore the natural question of whether there exists a constant $c$ such that every tree is a $c$-SRIG. We give a negative answer to this question in the following section. The construction used will come in handy in Sections~\ref{sec:notsuff} and~\ref{sec:notesrig}.

\subsection{Constructing trees with high stab number}\label{sec:gl}

For a rooted tree $T$, let $root(T)$ be the root vertex of $T$. The following observation is easy to see.

\begin{observation}\label{obs:subtree}
	Let $T$ be a tree and $T'$ be a subtree of $T$ such that $T-V(T')$ has only one component.
	\begin{enumerate}
		\vspace{-0.075in}
		\itemsep 0in
		\renewcommand{\theenumi}{(\roman{enumi})}
		\renewcommand{\labelenumi}{\theenumi}
		\item\label{it:edge} For any edge $e\in E(T')$, at least one component of $T-e$ is a proper subtree of $T'$.
		\item\label{it:vertex} For any vertex $v\in V(T')$, all but one component of $T-\{v\}$ are proper subtrees of $T'$. 
	\end{enumerate}
\end{observation}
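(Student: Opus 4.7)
The plan is to treat the two parts uniformly by exploiting the fact that $T-V(T')$ being a single connected component forces everything outside $T'$ to stay together whenever we remove an edge or a vertex of $T'$.

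For part~\ref{it:edge}, I would take an edge $e=uv\in E(T')$ and use the standard fact that deleting an edge of a tree yields exactly two components, call them $C_u$ and $C_v$. Since $e$ has both endpoints inside $V(T')$, the subgraph $T-V(T')$ is a subgraph of $T-e$, and since it is connected by hypothesis, it must sit entirely inside one of $C_u,C_v$, say $C_u$. Then $V(C_v)\subseteq V(T')$. Using the subtree-of-a-tree observation that if $x,y\in V(T')$ and $xy\in E(T)$, then $xy\in E(T')$ (which follows from uniqueness of paths in $T$ together with connectedness of $T'$), every edge of $C_v$ is an edge of $T'$, so $C_v$ is actually a subgraph of $T'-e$. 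It is a proper subtree of $T'$ because $u\in V(T')\setminus V(C_v)$.

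For part~\ref{it:vertex}, I would run essentially the same argument for vertex deletion. Removing $v\in V(T')$ from $T$ produces components $C_1,\ldots,C_k$ with $k=\deg_T(v)$. The connected subgraph $T-V(T')$ sits inside $T-\{v\}$ and hence lies inside exactly one of the $C_i$, say $C_1$. For every other $i\geq 2$, $V(C_i)\cap V(T-V(T'))=\emptyset$, so $V(C_i)\subseteq V(T')\setminus\{v\}$; again by the edge-inheritance property of subtrees, each such $C_i$ is an induced subgraph of $T'-\{v\}$, hence a proper subtree of $T'$.

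I do not expect any real obstacle: both statements reduce to two elementary facts, namely that edge/vertex deletion in a tree produces components in the expected count, and that a connected subgraph of a tree contains every $T$-edge between its vertices. The only care point is articulating the last fact cleanly so that the components lying inside $V(T')$ are genuinely certified to be subgraphs of $T'$, rather than merely vertex-subsets of $V(T')$.
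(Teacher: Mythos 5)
Your proof is correct. The paper itself gives no argument for this observation (it is introduced with ``The following observation is easy to see''), so there is nothing to compare against; your two ingredients --- that deleting an edge (resp.\ vertex) of a tree produces exactly two (resp.\ $\deg_T(v)$) components, and that a connected subgraph of a tree contains every $T$-edge joining two of its vertices --- are exactly what is needed, and your handling of the second point is the right ``care point'' to make the components genuine subtrees of $T'$ rather than mere vertex subsets.
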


First we describe a recursive procedure to construct a rooted tree $G_l$ for all $l\geq 1$. For $l=1$, let $G_1$ be the rooted tree containing only one vertex. For any integer $l$ greater than 1, we construct $G_l$ as follows. Let $T_1,T_2$ and $T_3$ be three rooted trees each isomorphic to $G_{l-1}$. Take a $K_{1,3}$ with vertex set $\{u,u_1,u_2,u_3\}$, where $u_1,u_2,u_3$ are the pendant vertices, and construct $G_l$ by adding edges between $u_i$ and $root(T_i)$ for each $i\in\{1,2,3\}$. Also let $root(G_l)=u$. For any rooted tree $T$ with root $r$, we can define the ``ancestor'' relation on $V(T)$ in the usual way: i.e., for $u,v\in V(T)$, $u$ is an \emph{ancestor} of $v$ if and only if the path in $T$ between $r$ and $v$ contains $u$. We prove the following lemma.

\begin{lemma}\label{lem:construct_G_l}
	~
	\begin{enumerate}
		\vspace{-0.075in}
		\itemsep 0in
		\renewcommand{\theenumi}{(\roman{enumi})}
		\renewcommand{\labelenumi}{\theenumi}
		\item\label{it:not_l-1-SIG} For $l>1$, $G_l$ is not $(l-1)$-SRIG.
		\item\label{it:l-SIG} For $l\geq 1$, there is an $l$-exactly stabbed rectangle intersection representation $\mathcal{R}$ of $G_l$ such that for $v,w\in V(G_l)$, $span(v)\subseteq span(w)$ if $w$ is an ancestor of $v$ and the vertices on the top stab line of $\mathcal{R}$ are exactly the vertices in $N[root(G_l)]$.
		\item\label{it:path_rep} Let $T$ and $T'$ be two trees each isomorphic to $G_l$, for some $l\geq 1$. Let $F_l$ be the tree obtained by taking a new vertex $u$ and joining it to the root vertices of $T,T'$ using paths of length two. 
		\begin{enumerate}
			\vspace{-0.075in}
			\itemsep 0in
			\renewcommand{\theenumii}{(\alph{enumii})}
			\renewcommand{\labelenumii}{\theenumii}
			\item\label{it:path_rep1} For $l\geq 1$, there is an $l$-exactly stabbed rectangle intersection representation $\mathcal{R'}$ of $F_l$ such that for $v,w\in V(F_l)$, $span(v)\subseteq span(w)$ if $w$ is an ancestor of $v$ in $T$ or $T'$, and all vertices in the path between $root(T)$ and $root(T')$ are on the top stab line of $\mathcal{R'}$.
			\item\label{it:path_rep2} For $l\geq 2$, there is an $l$-exactly stabbed rectangle intersection representation $\mathcal{R''}$ of $F_l$ such that for $v,w\in V(F_l)$, $span(v)\subseteq span(w)$ if $w$ is an ancestor of $v$ in $T$ or $T'$, and only the vertices in $N[root(T)]\cup N[root(T')]$ are on the top stab line of $\mathcal{R''}$.
		\end{enumerate}	
		\item\label{it:disjoint_nonexist} For $l\geq 2$, there does not exist two vertex-disjoint subtrees in $G_l$ such that they are both non-$(l-1)$-ESRIG.
		\item\label{it:stabno} For $l\geq 1$, $estab(G_l)=stab(G_l)=\log_3 (n+2)$, where $n=|V(G_l)|$. 
	\end{enumerate}	
\end{lemma}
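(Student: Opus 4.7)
I will prove (i)--(v) together by induction on $l$, handling the parts in the order (ii), (i), (iii), (iv), (v) at each step, since (ii) for $l$ is needed to establish (iii) for $l$ and (ii) for $l-1$ is needed to establish (iv) for $l$. The base case $l=1$ is either vacuous or trivial: $G_1$ is a single vertex (one-rectangle, one-stab-line representation), and for (iii)(a) the tree $F_1$ is a path on five vertices, hence an interval graph, so a $1$-ESRIG representation with all vertices on the one stab line is immediate, with the ancestor condition trivially holding.

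For the inductive step with $l \geq 2$, I would first prove (ii) by taking the three $(l-1)$-ESRIG representations of $T_1, T_2, T_3 \cong G_{l-1}$ guaranteed by the inductive hypothesis (with $N[root(T_i)]$ on the top stab line $y=l-1$), translating them into pairwise disjoint horizontal strips in left-to-right order, and adding a new top stab line $y=l$. On $y=l$ I would add thin rectangles for $u_1, u_2, u_3$ whose $x$-ranges lie in gaps of $root(T_i)$'s span not covered by its neighbours (such gaps exist because those neighbours have pairwise disjoint spans inside $root(T_i)$'s span), extending them slightly downward so they meet $root(T_i)$; then a single wide rectangle for $u$ on $y=l$ spanning all three $u_i$'s but confined vertically to a thin neighbourhood of $y=l$ so as not to intersect the $root(T_i)$'s. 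The span/ancestor condition is inherited from the sub-representations and extended naturally by the choices for $u$ and the $u_i$'s. Part (i) then follows from Theorem~\ref{thm:asteroidal_k-1_sig_free}: by induction each $T_i$ is non-$(l-2)$-SRIG; the $T_i$'s are pairwise neighbour-disjoint in $G_l$ (the only neighbour of $T_i$ outside $T_i$ is $u_i$); and for $\{i,j,k\}=\{1,2,3\}$, the path $root(T_i)-u_i-u-u_j-root(T_j)$ misses $T_k$. Hence $T_1,T_2,T_3$ are asteroidal-(non-$(l-2)$-SRIG) in $G_l$, giving (i).

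For (iii), let $r-a-u-b-r'$ be the connecting path in $F_l$ with $r=root(T),r'=root(T')$. Using (ii) for $l$, place $l$-ESRIG representations of $T$ and $T'$ in disjoint horizontal intervals (with $T$ on the left). For $\mathcal{R}'$, add rectangles for $a,u,b$ on the top stab line $y=l$ between $T$ and $T'$, with $a$'s $x$-range nudged into $T$'s strip so as to meet $r$ in a gap of $r$'s span (symmetrically for $b$), while $u$ is sandwiched between $a$ and $b$. For $\mathcal{R}''$ (where $l\geq 2$), place $a$ and $b$ on $y=l$ as before but place $u$ on $y=l-1$, letting $u$'s rectangle extend upward into the strip between $y=l-1$ and $y=l$ to meet $a$ and $b$ from below, while $r, r'$ keep their $y$-ranges confined close to $y=l$ so as not to meet $u$; this requires two stab lines, so $l \geq 2$. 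Part (iv) is the conceptual crux: $G_l-u$ has exactly three components $H_i := G_l[V(T_i)\cup\{u_i\}]$, each of which is $(l-1)$-ESRIG by taking the $(l-1)$-ESRIG representation of $T_i$ from (ii) for $l-1$ and attaching a small rectangle for $u_i$ on the top stab line that meets only $root(T_i)$ (exactly as in the construction of (ii) above). Hence any connected subgraph of $G_l$ avoiding $u$ lies in some $H_i$ and is $(l-1)$-ESRIG; consequently every non-$(l-1)$-ESRIG subtree of $G_l$ must contain $u$, and two vertex-disjoint such subtrees cannot coexist.

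Part (v) is then a bookkeeping step: the recurrence $|V(G_l)|=3|V(G_{l-1})|+4$ with $|V(G_1)|=1$ gives $|V(G_l)|+2=3^l$, so $\log_3(n+2)=l$; part (i) together with $stab(G_l)\leq estab(G_l)$ yields $stab(G_l)\geq l$, while part (ii) gives $estab(G_l)\leq l$, so both equal $l$. The main obstacle I anticipate is the geometric accounting in (ii) and (iii)(b): ensuring that the rectangles added on a newly-introduced top stab line respect the ancestor/span condition while avoiding unwanted intersections, and, in $\mathcal{R}''$, placing $u$ one stab line below the top and letting its two neighbours $a,b$ on the top stab line reach down through the inter-stab strip to meet $u$ without inadvertently intersecting $r$ or $r'$. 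Once the correct vertical margins are fixed, these intersections can be engineered routinely, and the clean structural observation in (iv) does the real work.
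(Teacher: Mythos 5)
Your overall architecture (simultaneous induction, the ordering of the parts, the use of Theorem~\ref{thm:asteroidal_k-1_sig_free} for (i), the side-by-side placement for (ii) and (iii), and the recurrence for (v)) matches the paper's proof. However, your construction for part (ii) has a genuine flaw: you place $u_1,u_2,u_3$ as \emph{thin} rectangles whose $x$-ranges lie inside gaps of $span(root(T_i))$. In the rooted tree $G_l$, the vertex $u_i$ is an ancestor of \emph{every} vertex of $T_i$ (the path from $root(G_l)=u$ to any $v\in V(T_i)$ passes through $u_i$), so the statement of (ii) requires $span(v)\subseteq span(u_i)$ for all $v\in V(T_i)$ --- in particular $span(root(T_i))\subseteq span(u_i)$. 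A thin rectangle confined to a gap violates this, so your claim that ``the span/ancestor condition is \dots extended naturally'' fails for the $u_i$'s (and, depending on how literally one reads ``spanning all three $u_i$'s,'' possibly for $u$ as well). This is not cosmetic: the nesting of spans is exactly what is exploited when these representations are plugged into larger ones (e.g.\ in Theorems~\ref{thm:treenotsuff} and~\ref{thm:srig-parameter-diff}). The repair is to make $r_{u_i}$ horizontally wide (covering all of $T_i$'s strip) and to separate it from the non-root vertices of $T_i$ \emph{vertically} rather than horizontally: extend $r_{root(T_i)}$ upward strictly higher than every other rectangle of $T_i$ (only the vertices of $N[root(T_i)]$ touch the old top stab line, so this is safe), and let $r_{u_i}$'s bottom edge sit strictly between $root(T_i)$'s top edge and everyone else's, while staying above the old top stab line. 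Your horizontal-gap trick is fine where no span condition is demanded --- e.g.\ for attaching the pendant $u_i$ in your argument for (iv), and for the path vertices $a,u,b$ in (iii), which are not ancestors of anything in $T$ or $T'$.

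Your part (iv) is correct and takes a genuinely different, and in fact cleaner, route than the paper. The paper finds an edge $e$ separating the two hypothetical bad subtrees, notes both components of $G_l-e$ would have to be non-$(l-1)$-ESRIG, and then runs a case analysis over all possible edges $e$ (using Observation~\ref{obs:subtree} and, for $e=uu_1$, the fact that the other component is isomorphic to $F_{l-1}$, so it needs part (iii)). You instead observe that every component of $G_l-\{u\}$ is $T_i$ plus a pendant and hence $(l-1)$-ESRIG, so every non-$(l-1)$-ESRIG subtree must contain the root $u$, and two vertex-disjoint ones cannot coexist. This decouples (iv) from (iii) entirely and avoids the case analysis; it is a legitimate simplification.
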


\begin{proof}
	We prove each statement separately by induction on $l$. When $l=1$, $G_l$ consists of a single vertex and therefore all the statements are true. Now we assume that the above statements are true for all integers less than $l$. 
	
	Recall that $G_l$ is obtained by taking three rooted trees $T_1,T_2,T_3$, each isomorphic to $G_{l-1}$, and then making each root adjacent to a unique pendant vertex of a $K_{1,3}$. Let $u$ be the vertex of degree 3 and $u_1,u_2,u_3$ be the pendant vertices of the $K_{1,3}$. Also recall that $root(G_l)=u$.
	
	To prove~\ref{it:not_l-1-SIG}, note that as $T_i$ is isomorphic to $G_{l-1}$ for each $i\in\{1,2,3\}$, we have by our induction hypothesis that $T_i$ is not $(l-2)$-SRIG. Therefore, $T_1,T_2,T_3$ are asteroidal-(non-($l-2$)-SRIG) in $G_l$. Using Theorem~\ref{thm:asteroidal_k-1_sig_free}, we can conclude that $G_l$ is not $(l-1)$-SRIG.
	
	To prove~\ref{it:l-SIG}, note that by our induction hypothesis, for each $i\in\{1,2,3\}$, $T_i$ has an $(l-1)$-exactly stabbed rectangle intersection representation $\mathcal{R}_i$ such that for $v,w\in V(T_i)$, $span(v)\subseteq span(w)$ if $w$ is an ancestor of $v$ and only the vertices in $N[root(T_i)]$ are on the top stab line of $\mathcal{R}_i$. Since $T_1,T_2,T_3$ are vertex disjoint, it is easy to see that there is an $(l-1)$-exactly stabbed rectangle intersection representation $\mathcal{R}$ of the subgraph induced in $G_l$ by $\cup_{i=1}^3 V(T_i)$ such that only the vertices in $\cup_{i=1}^3 N[root(T_i)]$ are on the top stab line of $\mathcal{R}$: we can just place $\mathcal{R}_1$, $\mathcal{R}_2$ and $\mathcal{R}_3$ side by side as shown in Figure~\ref{fig:g_l}(a). Now by introducing a new stab line $\ell$ above the top stab line of $\mathcal{R}$ and new rectangles corresponding to the vertices in $N[root(G_l)]=\{u,u_1,u_2,u_3\}$ into the representation such that they all intersect $\ell$, and for each $i\in\{1,2,3\}$, the rectangle corresponding to $u_i$ intersects the rectangle corresponding to $root(T_i)$ as shown in Figure~\ref{fig:g_l}(a), we can get the desired $l$-exactly stabbed rectangle intersection representation of $G_l$.
	
	\begin{figure}
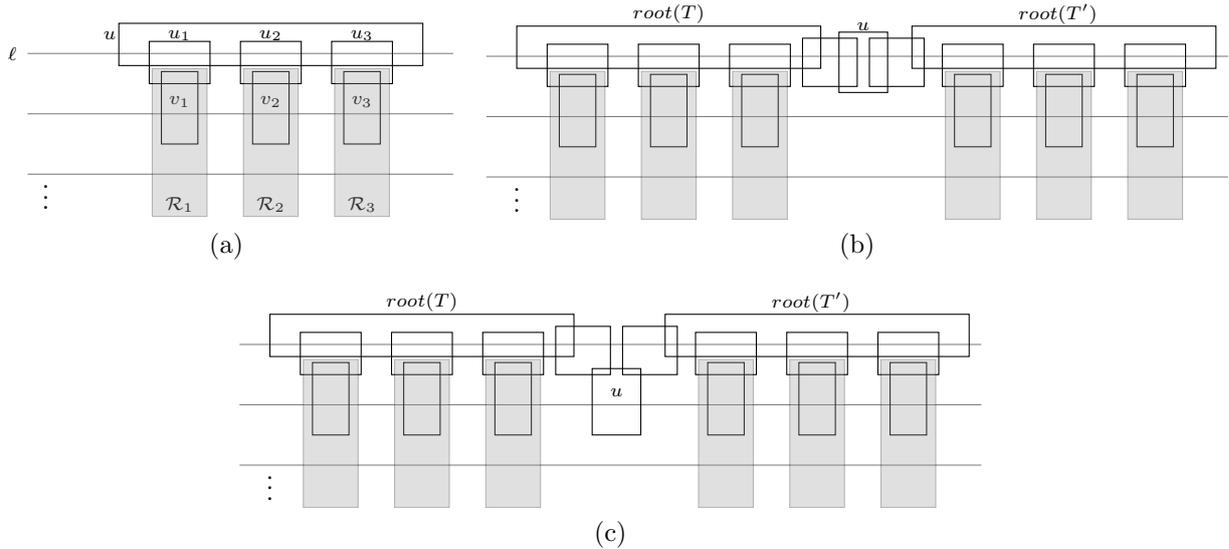

	\centering
	\begin{tabular}{cc}
		\includegraphics[page=10]{figures.pdf}&\includegraphics[page=11]{figures.pdf}\\
		(a) & (b)
	\end{tabular}
	\medskip
	
	\begin{tabular}{c}
		\includegraphics[page=12]{figures.pdf}\\
		(c)
	\end{tabular}
	\caption{Construction of $G_l$ and $F_l$. The shaded region denotes a collection of rectangles. In (a), for $i\in\{1,2,3\}$, $v_i$ is the vertex $root(T_i)$. Figures (b) and (c) show different $l$-exactly stabbed rectangle intersection representations of $F_l$ as described in Lemma~\ref{lem:construct_G_l}\ref{it:path_rep1} and Lemma~\ref{lem:construct_G_l}\ref{it:path_rep2}.} 
	\label{fig:g_l}
	\end{figure}
	
	Now we prove~\ref{it:path_rep1} and~\ref{it:path_rep2}. Since $T$ and $T'$ are both isomorphic to $G_l$ and vertex disjoint, we can infer using~\ref{it:l-SIG} that there is an $l$-exactly stabbed rectangle intersection representation $\mathcal{R}$ of $F_l[V(T)\cup V(T')]$ such that for $v,w\in V(F_l)$, $span(v)\subseteq span(w)$ if $w$ is an ancestor of $v$ in $T$ or $T'$, and only the vertices in $N[root(T)]\cup N[root(T')]$ are on the top stab line of $\mathcal{R}$ (we can obtain $\mathcal{R}$ by placing the representations of $T$ and $T'$ as given by~\ref{it:l-SIG} side by side as shown in Figure~\ref{fig:g_l}(b)). Let $P$ be the path that joins $root(T)$ and $root(T')$ in $F_l$. As shown in Figure~\ref{fig:g_l}(b), we can represent $P$ such that all the vertices in $P$ are on the top stab line of $\mathcal{R}$. This proves~\ref{it:path_rep1}. Similarly, if $l\geq 2$, then as shown in Figure~\ref{fig:g_l}(c), we can represent $P$ such that only the vertices in $N[root(T)]\cup N[root(T')]$ are on the top stab line of $\mathcal{R}$. This proves~\ref{it:path_rep2}.
	
	Now we prove \ref{it:disjoint_nonexist}. Assume for the sake of contradiction that $X_1,X_2$ are two vertex-disjoint subtrees in $G_l$ such that they are both non-$(l-1)$-ESRIG. Since $G_l$ is connected, there exists an edge $e$ in $G_l$ such that if $X'_1$ and $X'_2$ are the two components in $G_l-e$, then for each $i\in\{1,2\}$, $X_i$ is a subtree of $X'_i$. This implies that both $X'_1$ and $X'_2$ are non-$(l-1)$-ESRIG.
	Suppose that $e\in E(T_i)$ for some $i\in\{1,2,3\}$. Note that $G_l-V(T_i)$ has only one component. Therefore, using Observation~\ref{obs:subtree}\ref{it:edge} we can infer that there exists $X\in\{X'_1,X'_2\}$ such that $X$ is a proper subtree of $T_i$. But as $T_i$, being isomorphic to $G_{l-1}$, is $(l-1)$-ESRIG by~\ref{it:l-SIG}, this implies that $X$ is $(l-1)$-ESRIG. This contradicts the fact that both $X'_1$ and $X'_2$ are non-$(l-1)$-ESRIG. Therefore, we can assume without loss of generality that $e$ is either $uu_1$ or the edge between $u_1$ and $root(T_1)$. If $e$ is the edge between $u_1$ and $root(T_1)$, then one of the components of $T-e$ is $T_1$, which is $(l-1)$-ESRIG by~\ref{it:l-SIG}, contradicting the fact that both components of $T-e$ are non-$(l-1)$-ESRIG. If $e$ is the edge $uu_1$, then one of the components of $T-e$ is isomorphic to $F_{l-1}$, and therefore by~\ref{it:path_rep}, is $(l-1)$-ESRIG. This again contradicts the fact that both components of $T-e$ are non-$(l-1)$-ESRIG.
	
	To prove~\ref{it:stabno}, we can solve the recurrence $|V(G_l)|=3|V(G_{l-1})|+4$ to obtain $n=|V(G_l)|=3^l-2$. Now, using~\ref{it:not_l-1-SIG} and~\ref{it:l-SIG}, we can conclude that $estab(G_l)=stab(G_l)=\log_3 (n+2)$.
\end{proof}

From Theorem~\ref{thm:blockstab}, we have that for any tree $T$ on $n$ vertices with $n\geq 3$, $estab(T)\leq\lceil\log (n-1)\rceil$. Also, using Theorem~\ref{thm:blockstab} and Lemma~\ref{lem:construct_G_l}\ref{it:stabno}, we have the following corollary.

\begin{corollary}
$estab(${\sc Trees}$,n)=\Theta(\log n)$, $stab(${\sc Trees}$,n)=\Theta(\log n)$, $estab(${\sc Block Graphs}$,n)=\Theta(\log n)$, and $stab(${\sc Block Graphs}$,n)=\Theta(\log n)$.
\end{corollary}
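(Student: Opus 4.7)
The plan is to establish matching upper and lower bounds of order $\log n$ for all four parameters, each of which follows essentially immediately from two results already proved in the paper, namely Theorem~\ref{thm:blockstab} and Lemma~\ref{lem:construct_G_l}\ref{it:stabno}. Since $stab(G)\leq estab(G)$ for every graph $G$, and since every tree is a block graph, it suffices to prove $estab(\text{\sc Block Graphs},n)=O(\log n)$ for the upper bound and $stab(\text{\sc Trees},n)=\Omega(\log n)$ for the lower bound; these two statements together imply all four claimed asymptotic equalities.

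For the upper bound, I will observe that any block graph $G$ on $n$ vertices has at most $n-1$ blocks (this is trivial: each block contributes at least one vertex not already appearing in previous blocks when the block-tree is traversed, with the possible exception of shared cut-vertices, so in particular $m\leq n-1$ for $n\geq 2$). Plugging into Theorem~\ref{thm:blockstab} yields
\[
estab(G)\;\leq\;\max\{1,\lceil\log m\rceil\}\;\leq\;\max\{1,\lceil\log(n-1)\rceil\}\;=\;O(\log n).
\]
Taking the maximum over all block graphs on $n$ vertices gives $estab(\text{\sc Block Graphs},n)=O(\log n)$, and the same bound then holds for trees and for the $stab$ versions by the inequality $stab\leq estab$.

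For the lower bound, I will use the trees $G_l$ constructed in Section~\ref{sec:gl}. Lemma~\ref{lem:construct_G_l}\ref{it:stabno} says that $|V(G_l)|=3^l-2$ and $stab(G_l)=estab(G_l)=l$, so $stab(G_l)=\log_3(|V(G_l)|+2)=\Omega(\log n)$ when $n=|V(G_l)|$. For an arbitrary $n$, I will pick the largest $l$ with $3^l-2\leq n$ and take the disjoint union of $G_l$ with $n-(3^l-2)$ isolated vertices (or, alternatively, attach a path of the required length to a leaf of $G_l$; this does not decrease the stab number since $G_l$ remains an induced subgraph and both $stab$ and $estab$ are monotone under induced subgraphs). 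This produces an $n$-vertex tree, hence also an $n$-vertex block graph, with stab number at least $l=\Omega(\log n)$.

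The argument is essentially a bookkeeping exercise; the only mildly subtle point is ensuring the lower bound for every $n$ rather than only for $n$ of the form $3^l-2$, and this is handled by the induced-subgraph padding above. No genuine obstacle is expected, as the heavy lifting has already been done by Theorem~\ref{thm:blockstab} and Lemma~\ref{lem:construct_G_l}.
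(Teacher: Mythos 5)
Your proposal is correct and follows essentially the same route as the paper: the upper bound comes from Theorem~\ref{thm:blockstab} via the fact that a block graph on $n$ vertices has $O(n)$ blocks, and the lower bound comes from the trees $G_l$ of Lemma~\ref{lem:construct_G_l}\ref{it:stabno}. Your extra care about intermediate values of $n$ (pad by attaching a path, not by adding isolated vertices, since a tree must be connected) is a detail the paper glosses over but handles no differently in spirit.
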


Although the stab number and exact stab number were equal for the trees that we constructed in this section, we shall show in Theorem~\ref{thm:srig-parameter-diff} there are trees for which these parameters differ. The graph $G_l$ and the observations in Lemma~\ref{lem:construct_G_l} will be used frequently in the remainder of the paper.

\subsection{Absence of asteroidal subgraphs is not sufficient}\label{sec:notsuff}

We showed in Theorem~\ref{thm:asteroidal_k-1_sig_free} that being asteroidal-(non-$(k-1)$-SRIG)-free is a necessary condition for a graph to be $k$-SRIG. Theorem~\ref{thm:block2sig} showed that this necessary condition is also sufficient for block graphs when $k\leq 2$ and Theorem~\ref{thm:tree3sig} demonstrated that this necessary condition is sufficient for trees when $k\leq 3$. In this section, we shall show that this necessary condition is not sufficient for block graphs for any $k\geq 3$ and it is not sufficient for trees for any $k\geq 4$. In particular, we shall prove the following two theorems.

\begin{theorem}\label{thm:blocknot3sig}
There exists a block graph that is asteroidal-(non-2-SRIG)-free, but is not 3-SRIG.
\end{theorem}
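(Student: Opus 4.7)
The plan is to exhibit an explicit block graph $B$ satisfying both required properties, built from copies of the tree $G_3$ from Section~\ref{sec:gl}. Recall that $G_3$ is non-2-SRIG by Lemma~\ref{lem:construct_G_l}\ref{it:not_l-1-SIG}, while removing the central vertex $u$ from $G_3$ leaves a disjoint union of three small trees (each with only 8 vertices) which are easily seen to be 2-SRIG. Thus a single central vertex carries all the ``non-2-SRIG content'' of a $G_3$, which is the structural feature I will exploit.

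My construction takes several vertex-disjoint copies $A_1,\ldots,A_t$ of $G_3$, with central vertices $u^{(1)},\ldots,u^{(t)}$, and joins them by a clique block on $\{u^{(1)},\ldots,u^{(t)}\}$ (possibly augmented with a few auxiliary vertices). The asteroidal-(non-2-SRIG)-free property is then easy to verify: (i) every connected non-2-SRIG induced subgraph of $B$ must contain some centre $u^{(i)}$, since after removing all centres each component is a subgraph of some $A_i\setminus\{u^{(i)}\}$, hence 2-SRIG; and (ii) any two centres are adjacent in the central clique, so three pairwise neighbour-disjoint non-2-SRIG subgraphs---each containing a distinct centre---would force two of those centres to be non-adjacent, a contradiction.

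The hardest step is proving $B$ is not 3-SRIG; by Theorem~\ref{thm:equiv} it suffices to prove $B$ is not 3-ESRIG. Here I would use a geometric argument based on Helly's property for axis-parallel rectangles. In any 3-ESRIG representation of $B$, the central clique forces the rectangles of $u^{(1)},\ldots,u^{(t)}$ to lie on a single stab line $\ell$, and their $x$-intervals to share a common point. Each $A_i$ must span all three stab lines (as $G_3$ is non-2-SRIG), so the three $K_{1,3}$-pendants of each $A_i$ must be placed on $\ell$ or an adjacent stab line, with their attached $G_2$-arms fitting into the remaining horizontal space while avoiding unintended intersections with the rectangles of other centres, other pendants, and other arms. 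The geometric argument then performs a case analysis on the choice of $\ell \in\{\ell_1,\ell_2,\ell_3\}$, on which adjacent stab line each pendant occupies, and on the horizontal crowding of the $G_2$-arms, and shows that no consistent assignment exists.

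The main obstacle is that the naive version of the construction---just joining a few copies of $G_3$ at their centres via a triangle or small clique---actually admits a 3-ESRIG representation: when $\ell$ is chosen to be the top stab line, sandwiched centres can host their pendants on the middle stab line with enough vertical freedom to satisfy all constraints. Overcoming this requires strengthening the construction---for instance by attaching multiple copies of $G_3$ at each centre, by adding further clique blocks that further constrain the vertical coordinates of the centres, or by using copies of $G_4$ in place of $G_3$---until every vertical degree of freedom is exhausted. The resulting case analysis, though routine in spirit, is the technically most delicate part of the proof, and pinning down the minimal such construction is where the bulk of the geometric work resides.
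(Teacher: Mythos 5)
There is a genuine gap: you never actually produce the block graph whose existence the theorem asserts, and the step that carries all the content of the theorem---proving that some specific graph is not 3-SRIG---is left as an unspecified ``case analysis'' that you yourself concede fails for the natural members of your family (``the naive version \ldots actually admits a 3-ESRIG representation''). Saying that one should keep strengthening the construction ``until every vertical degree of freedom is exhausted'' is not a proof that any member of the family works; for the clique-of-centres construction in particular it is far from clear that \emph{any} number of $G_3$-copies glued at a central clique fails to be 3-SRIG, since nesting the centre rectangles on the middle stab line with decreasing heights and widths leaves room to route each copy's arms to the outer stab lines. Your asteroidal-freeness argument for the family is fine (every non-2-SRIG connected induced subgraph must contain a centre, and centres are pairwise adjacent, so three pairwise neighbour-disjoint such subgraphs cannot exist), but that is the easy half.

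The paper takes a quite different route for the hard half, and it is worth seeing why. Rather than a direct Helly-type case analysis on one clique, it builds a chain of \emph{forcing gadgets}: two copies of $G_3$ joined by a path confine an attached subgraph $H-\{w'\}$ to exactly $2$ of the $3$ stab lines (Lemma~\ref{lem:useful}); inside $H$, two copies of $G_2$ similarly confine $T-\{w,w'\}$ to a single stab line, which forces $r_w$ and $r_v$ to have no common stab; the single triangle $\{w,w',v\}$ (a pair of true twins, the only non-tree feature of the whole graph) then forces $r_{w'}$ onto one of the two stab lines of $H-\{w'\}$, yielding a 2-stabbed representation of the non-2-SRIG graph $H$ --- a contradiction. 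The asteroidal-freeness of the final graph is then verified by a block-tree case analysis using Observation~\ref{obs:subtree} and Lemma~\ref{lem:construct_G_l}(iv). To repair your proposal you would need either to adopt this kind of recursive confinement argument or to exhibit a concrete graph in your family together with a complete geometric impossibility proof; as written, the proposal establishes neither.
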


Note that by Theorem~\ref{thm:equiv}, the above theorem also means that there exists a block graph that is asteroidal-(non-2-ESRIG)-free, but is not 3-ESRIG.

\begin{theorem}\label{thm:treenotsuff}
For each integer $k\geq 4$, there exists a tree $T$ that is asteroidal-(non-($k-1$)-ESRIG)-free, but is not $k$-SRIG.
\end{theorem}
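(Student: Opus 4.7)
The plan is to construct, for each $k\geq 4$, a specific tree $T_k$ built from the trees $G_l$ of Section~\ref{sec:gl}, show it is not $k$-SRIG via a direct geometric argument on its rectangle intersection representation, and verify the absence of asteroidal-(non-$(k-1)$-ESRIG) subgraphs using Lemma~\ref{lem:construct_G_l}\ref{it:disjoint_nonexist}. Note that the obvious candidate $G_{k+1}$ fails: it is not $k$-SRIG by Lemma~\ref{lem:construct_G_l}\ref{it:not_l-1-SIG}, but its three bottom copies of $G_k$ are asteroidal and each is non-$(k-1)$-ESRIG (having stab number $k$), so $G_{k+1}$ is not asteroidal-(non-$(k-1)$-ESRIG)-free. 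We must therefore design $T_k$ so that the ``complexity'' responsible for pushing the stab number beyond $k$ is concentrated in a single region rather than distributed into three pairwise neighbour-disjoint parts.

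First, I would take as a starting point a central copy of $G_k$ and augment it by carefully placed additional structure---such as a few pendant vertices, a short path, or a small $G_j$ with $j\leq k-1$---attached at chosen vertices of the central $G_k$. The extra structure is designed with two simultaneous goals in mind: it must create a geometric conflict that forces the stab number of $T_k$ to exceed $k$, while remaining individually simple enough (i.e., $(k-1)$-ESRIG) that any subtree of $T_k$ which avoids a sufficient portion of the central $G_k$ is itself $(k-1)$-ESRIG, by invoking Lemma~\ref{lem:construct_G_l}\ref{it:l-SIG} or~\ref{it:path_rep}.

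For the asteroidal-(non-$(k-1)$-ESRIG)-free property, I would combine Lemma~\ref{lem:construct_G_l}\ref{it:disjoint_nonexist} with the previous observation: the central $G_k$ contains no two vertex-disjoint non-$(k-1)$-ESRIG subtrees, and by construction any subtree of $T_k$ vertex-disjoint from enough of that central $G_k$ is $(k-1)$-ESRIG. Hence among any three pairwise vertex-disjoint subtrees of $T_k$, at most one can be non-$(k-1)$-ESRIG. Since asteroidal subgraphs are pairwise neighbour-disjoint (and therefore pairwise vertex-disjoint), this rules out an asteroidal triple of non-$(k-1)$-ESRIG subgraphs and gives the desired property.

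The main obstacle, and the bulk of the technical work, will be the geometric lower bound $stab(T_k)>k$. The key ingredient is a rigidity analysis of $k$-stabbed rectangle intersection representations of $G_k$: in any such representation, each of its three branches (copies of $G_{k-1}$) must collectively occupy all $k$ stab lines, and the rectangles near the central $K_{1,3}$ of $G_k$ are forced into a particular nested layout around the top (or bottom) stab line. One then shows that no matter how the attached extra structure is positioned in a hypothetical $k$-stabbed representation of $T_k$, its rectangles cannot be accommodated without introducing a $(k+1)$-th stab line. This argument is in the same spirit as---and can reuse many of the geometric ideas of---the companion construction for Theorem~\ref{thm:blocknot3sig}, so establishing the result for trees amounts to adapting that geometric template to the tree setting.
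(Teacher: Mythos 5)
Your proposal has a genuine gap: the tree is never actually constructed, and the route you sketch toward the lower bound is unlikely to work. Your central idea is to take a single copy of $G_k$ and attach small $(k-1)$-ESRIG gadgets to it, relying on a ``rigidity analysis'' asserting that in any $k$-stabbed representation of $G_k$ the three branches collectively occupy all $k$ stab lines and the rectangles near the central $K_{1,3}$ are forced into a nested layout around the top or bottom stab line. No such rigidity is available: each branch of $G_k$ is a copy of $G_{k-1}$, which is only non-$(k-2)$-SRIG, so a branch is only forced to meet $k-1$ of the $k$ stab lines, and nothing pins the root to an extreme stab line. (The paper does prove a statement of this flavour, Lemma~\ref{lem:construct_D_l}\ref{it:unique_rep}, but it needs a root of degree \emph{seven} rather than three, and it holds only for \emph{exactly} stabbed representations -- it is used for Theorem~\ref{thm:srig-parameter-diff}, not here.) More fundamentally, all of the paper's lower-bound machinery (Lemmas~\ref{lem:rtorrb}, \ref{lem:winrtorrb}, \ref{lem:useful} and Theorem~\ref{thm:construct_not_k-SIG}) is driven by having \emph{two neighbour-disjoint} subgraphs that each span all the stab lines of a region, joined by a path; that pair of spanning paths is what separates the region into a top and a bottom part and traps other rectangles. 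A single central $G_k$ does not supply such a pair, so the geometric template of Theorem~\ref{thm:blocknot3sig} that you hope to reuse does not transfer to your design.

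The paper's actual construction is structurally different and, notably, violates the stronger disjointness property you aim for. It takes three \emph{pairs} $(T_i,T'_i)$ of copies of $G_i$ for $i\in\{k,k-1,k-2\}$, joins each pair through a vertex $b_i$ by paths of length two to form $H_i\cong F_i$, and attaches all three $H_i$ to one new vertex $c$ by paths of length at least two. This tree contains two vertex-disjoint non-$(k-1)$-ESRIG subtrees (namely $T_k$ and $T'_k$), so your claim that ``among any three pairwise vertex-disjoint subtrees at most one can be non-$(k-1)$-ESRIG'' is not what is verified; instead one shows (via Observation~\ref{obs:subtree}, Lemma~\ref{lem:construct_G_l}\ref{it:disjoint_nonexist}, and an explicit $(k-1)$-ESRIG representation of $T-(V(H_k)\cup N[b_k])$ built from Lemma~\ref{lem:construct_G_l}\ref{it:path_rep}) that no \emph{third} suitably placed non-$(k-1)$-ESRIG subtree exists. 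The non-$k$-SRIG proof then runs three successive region splits: the pair at level $k$ forces $r_c$ into the top or bottom part, the pair at level $k-1$ forces it into the opposite side of the next split, and the pair at level $k-2$ leaves no admissible location for $r_c$ that still allows the connecting paths back to $b_k$ and $b_{k-1}$. Your opening observation about why $G_{k+1}$ fails is correct and your use of Lemma~\ref{lem:construct_G_l}\ref{it:disjoint_nonexist} is in the right spirit, but the heart of the theorem -- a concrete tree together with a working lower-bound argument -- is missing.
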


It is easy to see that Theorem~\ref{thm:treenotsuff} directly gives the following two corollaries, which tell us that the necessary conditions derived in Theorem~\ref{thm:asteroidal_k-1_sig_free} and Theorem~\ref{thm:asteroidal_k-1_esrig_free} for a tree to be a $k$-SRIG and $k$-ESRIG respectively, are not sufficient for any $k\geq 4$.

\begin{corollary}
For each integer $k\geq 4$, there exists a tree $T$ that is asteroidal-(non-($k-1$)-SRIG)-free, but is not $k$-SRIG.
\end{corollary}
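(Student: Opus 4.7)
The plan is to construct, for each $k\geq 4$, an explicit tree $T_k$ satisfying both (A) $stab(T_k)>k$, and (B) $T_k$ is asteroidal-(non-$(k-1)$-ESRIG)-free. The tree $T_k$ would be built by taking two copies $A_1$ and $A_2$ of $G_k$ (from Lemma~\ref{lem:construct_G_l}) and joining them by a carefully designed coupling gadget $Q$ attached to specific vertices of $A_1$ and $A_2$. The gadget $Q$ is to be chosen so that $Q$ itself is $(k-1)$-ESRIG (contributing no non-$(k-1)$-ESRIG substructure on its own), yet rigid enough that, once attached, it precludes any $k$-stabbed representation of the whole tree $T_k$.

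Property (B) would follow from Lemma~\ref{lem:construct_G_l}\ref{it:disjoint_nonexist}: each copy $A_i$ admits no two vertex-disjoint non-$(k-1)$-ESRIG subtrees. Since $Q$ is itself $(k-1)$-ESRIG and contains no non-$(k-1)$-ESRIG induced subgraph, every connected non-$(k-1)$-ESRIG induced subtree of $T_k$ must be anchored at a ``core'' living inside $A_1$ or $A_2$. A short case analysis then shows that $T_k$ contains at most two pairwise vertex-disjoint non-$(k-1)$-ESRIG subtrees, one anchored in each $A_i$, which rules out any asteroidal triple of non-$(k-1)$-ESRIG subtrees in $T_k$.

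Property (A), the crux, is established by contradiction. Suppose $\mathcal{R}$ is a $k$-stabbed representation of $T_k$. Since $stab(G_k)=k$, each $A_i$ must use all $k$ stab lines of $\mathcal{R}$; moreover, the rigidity of $G_k$'s representations implied by Lemma~\ref{lem:construct_G_l}\ref{it:l-SIG} and \ref{it:disjoint_nonexist} would force each $A_i$'s layout into one of a small number of canonical forms (up to reflection). The gadget $Q$ is designed precisely so that the rectangles representing $Q$'s connection between $A_1$ and $A_2$ can be placed consistently only if the two $A_i$'s adopt mutually incompatible layouts, yielding a geometric contradiction.

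The main obstacle is the simultaneous design of $Q$: it must be $(k-1)$-ESRIG (for (B)) while imposing a rigid geometric constraint on $A_1,A_2$ that rules out all $k$-stabbed representations (for (A)). The geometric case analysis underlying (A) is the most delicate ingredient and explains why the hypothesis $k\geq 4$ is necessary here: for $k\leq 3$ the asteroidal obstructions alone already characterise $k$-SRIG trees by Theorem~\ref{thm:tree3sig}, so no such $T_k$ can exist.
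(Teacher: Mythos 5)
Your proposal is an outline rather than a proof: the two objects on which everything depends---the coupling gadget $Q$ and the geometric argument showing that $T_k$ is not $k$-SRIG---are never actually constructed. The central claim that ``the rigidity of $G_k$'s representations implied by Lemma~\ref{lem:construct_G_l}\ref{it:l-SIG} and \ref{it:disjoint_nonexist} would force each $A_i$'s layout into one of a small number of canonical forms'' is not a result available in the paper and is not something those lemmas give you: Lemma~\ref{lem:construct_G_l}\ref{it:l-SIG} only asserts the \emph{existence} of one well-behaved representation, and \ref{it:disjoint_nonexist} is a purely combinatorial statement about disjoint subtrees. The paper's own argument (Theorem~\ref{thm:construct_not_k-SIG}) needs an elaborate apparatus of good regions and rectilinear spanning curves precisely because no such rigidity is available. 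Moreover, the two-copies-of-$G_k$-plus-gadget architecture is suspect on its face: by Lemma~\ref{lem:construct_G_l}\ref{it:path_rep1}, two copies of $G_k$ joined through a path are already $k$-ESRIG, so all the work falls on the unspecified $Q$, which must simultaneously be $(k-1)$-ESRIG and obstruct every $k$-stabbed representation. The paper instead hangs \emph{three} nested pairs of hard subtrees ($G_k$, $G_{k-1}$, $G_{k-2}$) off a central vertex $c$ and derives the contradiction by trapping $r_c$ between successively refined regions; all three levels are used, and it is far from clear that any two-level design can succeed.

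Your argument for property (B) also has a concrete gap. From ``$Q$ is $(k-1)$-ESRIG'' you correctly get that every non-$(k-1)$-ESRIG subtree of $T_k$ meets $A_1$ or $A_2$, but Lemma~\ref{lem:construct_G_l}\ref{it:disjoint_nonexist} speaks only of subtrees \emph{contained in} $G_k$, whereas two members of your hypothetical asteroidal triple could each merely intersect the same $A_i$ while extending into $Q$; the lemma does not apply to them. The paper closes this hole by first locating a separating vertex $v$ of degree at least $3$ and invoking Observation~\ref{obs:subtree}\ref{it:vertex} to force two of the components to be \emph{proper subtrees} of a single copy of $G_k$ before applying \ref{it:disjoint_nonexist}. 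Finally, note that the corollary as stated needs none of this: it follows in one line from Theorem~\ref{thm:treenotsuff}, since every non-$(k-1)$-SRIG graph is a non-$(k-1)$-ESRIG graph, so any asteroidal-(non-$(k-1)$-ESRIG)-free tree is automatically asteroidal-(non-$(k-1)$-SRIG)-free, and the tree constructed there is already shown not to be $k$-SRIG.
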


\begin{corollary}
For each integer $k\geq 4$, there exists a tree $T$ that is asteroidal-(non-($k-1$)-ESRIG)-free, but is not $k$-ESRIG.
\end{corollary}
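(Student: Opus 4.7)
Proof plan for Theorem~\ref{thm:treenotsuff}. My strategy is to construct, for each $k\geq 4$, an explicit tree $T_k$ built from the family $G_l$ of Section~\ref{sec:gl} and then verify (a) that $T_k$ contains no asteroidal triple of non-$(k-1)$-ESRIG subtrees and (b) that $T_k$ is not $k$-SRIG. The base building block will be $G_k$, whose properties from Lemma~\ref{lem:construct_G_l} are tailored for exactly this kind of argument: $G_k$ is itself non-$(k-1)$-SRIG (and hence non-$(k-1)$-ESRIG), has a canonical $k$-exactly stabbed representation in which the root sits on the top stab line, and \ref{it:disjoint_nonexist} forbids two vertex-disjoint non-$(k-1)$-ESRIG subtrees inside it.

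Construction. I propose forming $T_k$ by taking two vertex-disjoint copies $A,B$ of $G_k$ and joining $\mathrm{root}(A)$ and $\mathrm{root}(B)$ through a designed ``bridge" subtree $P$ (a short path with a few carefully placed pendant leaves whose role is to block any arrangement finer than the canonical one). By construction $T_k$ is a tree.

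Step 1 (no asteroidal non-$(k-1)$-ESRIG triple). I would show that any non-$(k-1)$-ESRIG subtree $X$ of $T_k$ must contain all of $A$ or all of $B$. Indeed, by Lemma~\ref{lem:construct_G_l}\ref{it:l-SIG} every proper subtree of a $G_k$ is contained in some $G_{k-1}$ branch and is therefore $(k-1)$-ESRIG; likewise the bridge $P$ is an interval graph and all its subtrees are $(k-1)$-ESRIG. Applying this to any candidate non-$(k-1)$-ESRIG subtree $X$ and using \ref{it:disjoint_nonexist} forces $X \supseteq V(A)$ or $X \supseteq V(B)$. Hence $T_k$ contains at most two vertex-disjoint non-$(k-1)$-ESRIG subtrees, ruling out any asteroidal triple of such subtrees.

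Step 2 (not $k$-SRIG). This is the main obstacle and is purely geometric. Assume toward contradiction a $k$-stabbed rectangle representation $\mathcal R$ of $T_k$. Because $A$ and $B$ are non-$(k-1)$-SRIG, there exist walks $W_A\subseteq A$ and $W_B\subseteq B$ meeting every stab line. Since $A$ and $B$ are neighbour-disjoint, the regions $X_A=\bigcup_{u\in W_A}r_u$ and $X_B=\bigcup_{u\in W_B}r_u$ are pairwise disjoint arc-connected sets, each spanning all $k$ stab lines. As in the proof of Theorem~\ref{thm:asteroidal_k-1_sig_free}, this forces a left-to-right ordering, say $X_A$ lies to the left of $X_B$. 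Now every vertex of the bridge $P$ must have its rectangle inserted between $X_A$ and $X_B$ along the horizontal direction, and the pendants planted in $P$ force specific vertices of $P$ onto specific stab lines; tracing a carefully chosen path from $\mathrm{root}(A)$ through $P$ to $\mathrm{root}(B)$ and combining its constraints with Lemma~\ref{lem:construct_G_l}\ref{it:l-SIG} (the root of a $k$-exactly stabbed $G_k$ occupies the top stab line) yields a contradiction.

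The hard part is calibrating $P$. A naive bridge fails: two copies of $G_k$ joined by a short path form a subgraph of $F_k$, which is $k$-ESRIG by Lemma~\ref{lem:construct_G_l}\ref{it:path_rep}. So $P$ must include enough pendant or leaf structure to pin down the stab-lines of its interior vertices, forcing the connection between $A$ and $B$ to cross a $(k+1)$-th stab line. Getting this calibration right, and then pushing the left-right argument through without accidentally creating a non-$(k-1)$-ESRIG subtree inside $P$ (which would break Step 1), is the delicate balancing act at the heart of the proof.
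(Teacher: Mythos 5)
Your overall strategy --- build an explicit tree from the $G_l$ family, show it has no asteroidal triple of non-$(k-1)$-ESRIG subtrees, and show it is not $k$-SRIG (which is stronger than not $k$-ESRIG and hence suffices for this corollary) --- is the paper's strategy. But the proposal has a genuine gap: the construction is never actually specified. Everything hinges on the ``bridge'' $P$, and you yourself defer its design (``the hard part is calibrating $P$ \dots the delicate balancing act at the heart of the proof''). That calibration is not a detail; it is the entire content of the theorem, and the two-copies-of-$G_k$-plus-bridge architecture you propose appears to be unworkable. If every pendant piece of $P$ is $(k-1)$-ESRIG (as it must be for your Step 1 to go through), then the representation of Lemma~\ref{lem:construct_G_l}\ref{it:path_rep1} --- both roots and the whole spine on the top stab line, with $k-1$ free stab lines underneath and arbitrarily thin horizontal slots between the two copies --- absorbs such pendants, and the resulting tree is $k$-SRIG after all. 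Conversely, a pendant deep enough to obstruct this placement tends to be non-$(k-1)$-ESRIG, which threatens Step 1. A further, more basic problem: with only two full-height structures there is no ``middle'' object for a left-to-right ordering argument in the style of Theorem~\ref{thm:asteroidal_k-1_sig_free} to trap; a spanning structure hung off the bridge is adjacent to the bridge, so the path between the two copies does not miss it and no contradiction of that type arises. (Also, your supporting claim that ``every proper subtree of a $G_k$ is contained in some $G_{k-1}$ branch'' is false --- delete one leaf of $G_k$ and you get a proper subtree meeting all three branches; what Lemma~\ref{lem:construct_G_l}\ref{it:disjoint_nonexist} actually gives is that $G_k$ has no two vertex-disjoint non-$(k-1)$-ESRIG subtrees, which is the statement you should be invoking.)

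The paper's construction is genuinely three-pronged and uses prongs of strictly decreasing depth: it hangs $H_k\cong F_k$, $H_{k-1}\cong F_{k-1}$ and $H_{k-2}\cong F_{k-2}$ off a central vertex $c$ via paths of length two, where each $H_i$ is itself a pair of copies of $G_i$ joined through a vertex $b_i$. The impossibility proof (Theorem~\ref{thm:construct_not_k-SIG}) is not an ordering argument but a three-stage region-splitting argument: the two copies of $G_k$ split the plane into regions $A_t,A_b$ and force $r_c$ into one of them, the two copies of $G_{k-1}$ split that region again and force $r_c$ to the side from which $b_k$ is still reachable, and the two copies of $G_{k-2}$ split once more, after which $c$ cannot simultaneously reach $b_k$ (through the bottom of the outer region) and $b_{k-1}$ (through the top of the middle region). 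The asteroidal-freeness then requires a separate case analysis locating the cut vertex separating any three candidate subtrees, using Lemma~\ref{lem:construct_G_l}\ref{it:disjoint_nonexist} and an explicit $(k-1)$-ESRIG representation of the part of $T$ outside $H_k$. None of this machinery is replaceable by the left-to-right argument you sketch, so as written the proposal does not constitute a proof.
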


In order to prove these theorems, we develop some tools to study $k$-stabbed rectangle intersection representations using special kinds of curves in the representation that are derived from induced paths in the graph.

Consider a $k$-stabbed rectangle intersection representation $\mathcal{R}$ of a graph $G$.
In this representation, we say that a curve is \emph{rectilinear} if it consists of vertical and horizontal line segments and each horizontal line segment in it lies on a stab line.
Given an induced path $P=u_1u_2\ldots u_s$ in $G$ and two distinct points $p\in r_{u_1}$ and $p'\in r_{u_s}$ such that $p,p'$ lie on stab lines, a rectilinear curve \emph{through $P$ from $p$ to $p'$} is a simple rectilinear curve $\mathbf{p}$ that starts at $p$ and ends at $p'$, where $\mathbf{p}\subseteq\bigcup^s_{i=1} r_{u_i}$ and $\mathbf{p}\cap r_{u_i}$ is arc-connected (and nonempty) for each $i\in\{1,2,\ldots,s\}$. Note that such a curve always exists and that for each $i\in\{1,2,\ldots,s\}$, the curve contains some point in $r_{u_i}$ that is on a stab line.

Given a set $X$ of consecutive stab lines $y=a_1$, $y=a_2$, $\ldots$, $y=a_t$, such that $a_1<a_2<\cdots<a_t$, we say that $y=a_1$ is the bottom stab line in $X$ and $y=a_t$ is the top stab line in $X$. Further, we say that a connected induced subgraph $H$ of $G$ is \emph{$X$-spanning} if there is some vertex in $H$ on each stab line in $X$. An induced path in $G$ is said to be an \emph{$X$-spanning path} if its starting and ending vertices are on the top and bottom stab lines of $X$ respectively. Note that if a subgraph $H$ of $G$ is $X$-spanning, then there is an $X$-spanning path in $H$ (to see this, consider the shortest path between two vertices $u$ and $v$ in $H$ such that $u$ is on the top stab line in $X$ and $v$ is on the bottom stab line in $X$).

In the following, we use the term ``region'' to denote an arc-connected region of the plane that is bounded by a closed rectilinear curve which is the union of four simple rectilinear curves that satisfy some special properties (we assume that a region does not contain the points on its boundary).
Suppose $\mathbf{t}$, $\mathbf{l}$, $\mathbf{b}$, and $\mathbf{r}$ are four simple rectilinear curves such that $\mathbf{l}\cap\mathbf{r}=\emptyset$, $\mathbf{t}\cap\mathbf{b}=\emptyset$, and for each $(\mathbf{x},\mathbf{y})\in\{(\mathbf{t},\mathbf{l}),(\mathbf{l},\mathbf{b}),(\mathbf{b},\mathbf{r}),(\mathbf{r},\mathbf{t})\}$, the curves $\mathbf{x}$ and $\mathbf{y}$ have exactly one point in common which is also an end point of both of them. Then, the region $R=(\mathbf{t},\mathbf{l},\mathbf{b},\mathbf{r})$ is the bounded arc-connected component of $\mathbb{R}^2\setminus (\mathbf{t}\cup\mathbf{l}\cup\mathbf{b}\cup\mathbf{r})$. The closed rectilinear curve $\mathbf{t}\cup\mathbf{l}\cup\mathbf{b}\cup\mathbf{r}$ is called the ``boundary'' of $R$. For a region $R$, we let $\mathcal{L}_{\mathcal{R}}(R)$ denote the set of stab lines of $\mathcal{R}$ that intersect $R$. Also, let $G_R$ denote the subgraph induced in $G$ by the vertices whose rectangles lie completely inside $R$.

\begin{observation}\label{obs:ltlb}
Let $\ell_t$, $\ell_b$ be the stab lines just above and just below the top and bottom stab lines in $\mathcal{L}_{\mathcal{R}}(R)$ respectively. Then, no point on the boundary of $R$ lies above $\ell_t$ or below $\ell_b$.
\end{observation}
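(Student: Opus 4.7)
The plan is to pin down $\mathcal{L}_{\mathcal{R}}(R)$ completely: I will show that $\ell_t$ and $\ell_b$ are exactly the topmost and bottommost stab lines that carry a horizontal piece of the boundary of $R$, from which the observation follows at once.

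First I would note that because each of $\mathbf{t},\mathbf{l},\mathbf{b},\mathbf{r}$ is rectilinear, every horizontal segment of the boundary lies on a stab line, and every vertical segment of the boundary has both of its endpoints at stab-line heights (since, along the closed rectilinear curve, the two neighbouring segments of any vertical segment are horizontal, hence on stab lines). Let $L^*$ denote the topmost stab line carrying a horizontal piece of the boundary, and $L_*$ the bottommost. Since horizontal segments lie on stab lines and vertical segments connect two stab-line heights in $[L_*,L^*]$, every point of the boundary has $y$-coordinate in the closed strip bounded by $L_*$ and $L^*$.

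Next I would show that $\mathcal{L}_{\mathcal{R}}(R)$ consists of exactly the stab lines strictly between $L_*$ and $L^*$. For one inclusion, the closure of $R$ lies in the strip between $L_*$ and $L^*$, and since $R$ is open it cannot meet $L_*$ or $L^*$ themselves. For the reverse inclusion, I would zoom in on an interior point $p$ of a horizontal boundary piece on $L^*$: in a small disk around $p$ the boundary locally looks like a horizontal segment that splits the disk into two arc-connected half-disks, and because $R$ is the bounded component and no boundary point lies above $L^*$, the half-disk belonging to $R$ must be the lower one. Hence $R$ contains a small vertical interval sitting immediately below $p$; symmetrically, $R$ contains a small vertical interval immediately above an interior point of a horizontal boundary piece on $L_*$. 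Arc-connectedness of $R$ then lets me apply the intermediate value theorem to the $y$-coordinate along any path in $R$ joining a near-top point to a near-bottom point, so that every height strictly between $L_*$ and $L^*$ is attained by some point of $R$; in particular every stab line in the open strip lies in $\mathcal{L}_{\mathcal{R}}(R)$. Combining the two inclusions, the top stab line of $\mathcal{L}_{\mathcal{R}}(R)$ is the stab line immediately below $L^*$, which forces $\ell_t = L^*$, and symmetrically $\ell_b = L_*$. Since every boundary point has $y$-coordinate in $[L_*,L^*] = [\ell_b,\ell_t]$, the observation follows.

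I expect the main obstacle to be the brief planar-topology step establishing that $R$ lies immediately \emph{below} a horizontal boundary segment on $L^*$ (and dually for $L_*$). This requires arguing that the upper half-disk at an interior point $p$ of such a segment cannot belong to the bounded component $R$, precisely because no boundary point lies above $L^*$ and hence any putative points of $R$ in that upper half-disk could not be separated from the unbounded exterior by the boundary. Once this local picture is secured, the remaining ingredients --- the rectilinear characterisation of boundary heights and the intermediate value argument --- are routine.
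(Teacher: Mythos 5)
Your proof is correct, but it takes a genuinely different route from the paper's. The paper argues by contradiction and locally: if some boundary point $p$ lay above $\ell_t$, one connects $p$ to a point $p'$ of $R$ on the top stab line of $\mathcal{L}_{\mathcal{R}}(R)$ by a simple curve lying in $R$ except for the endpoint $p$ (accessibility of a boundary point of a polygonal Jordan domain from its interior); since $p'$ is below $\ell_t$ and $p$ above it, the curve meets $\ell_t$ at an interior point, so $R$ would intersect $\ell_t$, contradicting $\ell_t\notin\mathcal{L}_{\mathcal{R}}(R)$. You instead prove a stronger global statement: using rectilinearity you confine the whole boundary to the closed strip between the extreme boundary stab lines $L_*$ and $L^*$, then show via the half-disk analysis at interior points of horizontal boundary segments plus the intermediate value theorem that $\mathcal{L}_{\mathcal{R}}(R)$ is exactly the set of stab lines strictly between $L_*$ and $L^*$, which identifies $\ell_t=L^*$ and $\ell_b=L_*$ and yields the observation. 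What your approach buys is this sharper characterization of $\mathcal{L}_{\mathcal{R}}(R)$ and of where $\ell_t,\ell_b$ sit on the boundary (a fact the paper re-derives separately in Observation~\ref{obs:topsi}); what it costs is the extra planar-topology step showing that the bounded component lies on the correct side of a horizontal boundary segment on $L^*$ (resp.\ $L_*$), which you correctly identify as the delicate point and which the paper's shorter argument avoids by needing only boundary accessibility. Both arguments ultimately rest on the same two ingredients: continuity of the $y$-coordinate along paths in the arc-connected set $R$, and the defining fact that $\ell_t,\ell_b\notin\mathcal{L}_{\mathcal{R}}(R)$.
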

\begin{proof}
Suppose that the boundary of $R$ contains a point $p$ that is above $\ell_t$.
Let $p'$ be an arbitrary point in $R$ that is on the top stab line in $\mathcal{L}_{\mathcal{R}}(R)$. It is easy to see that there exists a simple curve from $p'$ to $p$ all of whose points except $p$ belong to $R$. Since $p'$ is below $\ell_t$ and $p$ above it, there must be a point on this curve that lies on $\ell_t$. But this would mean that $R$ intersects $\ell_t$, contradicting the fact that $\ell_t\notin\mathcal{L}_{\mathcal{R}}(R)$. Using similar arguments, we can prove that no point on the boundary of $R$ lies below $\ell_b$.
\end{proof}
\medskip

\begin{definition}\label{def:good}
	A region $R=(\mathbf{t},\mathbf{l},\mathbf{b},\mathbf{r})$ is said to be ``good'' if it has the following properties:
	\begin{enumerate}
		\vspace{-0.075in}
		\itemsep -0.05in
		\renewcommand{\theenumi}{(\roman{enumi})}
		\renewcommand{\labelenumi}{\theenumi}
		\item the parts of $\mathbf{l}$ and $\mathbf{r}$ that are above the top stab line in $\mathcal{L}_{\mathcal{R}}(R)$ and below the bottom stab line in $\mathcal{L}_{\mathcal{R}}(R)$ consist of just a vertical segment each, or in other words, every horizontal segment of $\mathbf{l}$ and $\mathbf{r}$ lies on a stab line in $\mathcal{L}_{\mathcal{R}}(R)$,
		\item no point of $\mathbf{t}$ lies below the bottom stab line of $\mathcal{L}_{\mathcal{R}}(R)$, and
		\item no point of $\mathbf{b}$ lies above the top stab line of $\mathcal{L}_{\mathcal{R}}(R)$.
	\end{enumerate} 
\end{definition}

For a good region $R=(\mathbf{t},\mathbf{l},\mathbf{b},\mathbf{r})$, we let $\mathbf{top}(R)=\mathbf{t}$ and $\mathbf{bottom}(R)=\mathbf{b}$.
\medskip

Let $R=(\mathbf{t},\mathbf{l},\mathbf{b},\mathbf{r})$ be a good region with $|\mathcal{L}_{\mathcal{R}}(R)|\geq 1$. Let $P_1$ and $P_2$ be two neighbour-disjoint $\mathcal{L}_{\mathcal{R}}(R)$-spanning paths in $G_R$. For $i\in\{1,2\}$, let $u_i,v_i$ be the endvertices of $P_i$ that are on the top and bottom stab lines in $\mathcal{L}_{\mathcal{R}}(R)$ respectively. For $i\in\{1,2\}$, let $\mathbf{p_i}$ be a rectilinear curve that starts at a point $(x_i,y_i)\in r_{u_i}$ on the top stab line in $\mathcal{L}_{\mathcal{R}}(R)$ and ends at a point $(x'_i,y'_i)\in r_{v_i}$ on the bottom stab line in $\mathcal{L}_{\mathcal{R}}(R)$, with the following additional properties:
\begin{enumerate}
\vspace{-0.075in}
\itemsep -0.05in
\renewcommand{\theenumi}{(\roman{enumi})}
\renewcommand{\labelenumi}{\theenumi}
\item The only point in $\mathbf{p_i}$ that is in $r_{u_i}$ and is also on the top stab line in $\mathcal{L}_{\mathcal{R}}(R)$ is $(x_i,y_i)$, and
\item The only point in $\mathbf{p_i}$ that is in $r_{v_i}$ and is also on the bottom stab line in $\mathcal{L}_{\mathcal{R}}(R)$ is $(x'_i,y'_i)$.
\end{enumerate}
It is not difficult to see that the curves $\mathbf{p_1},\mathbf{p_2}$ always exist. (Take any rectilinear curve $\mathbf{q}$ through $P_i$ between some point on the top stab line in $r_{u_i}$ and some point on the bottom stab line in $r_{v_i}$. Let $(x_i,y_i)$ be the last point in $\mathbf{q}$ that is both in $r_{u_i}$ and is on the top stab line and let $(x'_i,y'_i)$ be the first point in $\mathbf{q}$ that is both in $r_{v_i}$ and is on the bottom stab line. Then the subcurve of $\mathbf{q}$ between $(x_i,y_i)$ and $(x'_i,y'_i)$ can be taken as $\mathbf{p_i}$.)

Suppose that there is a path in $G_R$ between a vertex of $P_1$ and a vertex of $P_2$. Then, let $P$ be the induced path in $G_R$ between a vertex $w_1$ in $P_1$ and a vertex $w_2$ in $P_2$ such that all other vertices of $P$ belong to neither $P_1$ nor $P_2$. Let $p_1,p_2$ be points on stab lines where for $i\in\{1,2\}$, $p_i\in\mathbf{p_i}\cap r_{w_i}$, such that there exists a rectilinear curve $\mathbf{p}$ through $P$ from $p_1$ to $p_2$, whose interior points belong to neither $\mathbf{p_1}$ nor $\mathbf{p_2}$ (note that $p_1$, $p_2$ and $\mathbf{p}$ always exist --- take arbitrary points $p,p'$ on stab lines such that $p\in\mathbf{p_1}\cap r_{w_1}$, $p'\in\mathbf{p_2}\cap r_{w_2}$ and consider the rectilinear curve $\mathbf{p'}$ through $P$ between $p$ and $p'$; $p_1,p_2$ can be chosen to be the closest pair of points on $\mathbf{p'}$ such that $p_1\in\mathbf{p_1}$, $p_2\in\mathbf{p_2}$, and the part of $\mathbf{p'}$ between $p_1$ and $p_2$ can be chosen as $\mathbf{p}$). Please refer to Figure~\ref{fig:regions}(a) for an example showing the different curves in $R$.
For $i\in\{1,2\}$, let $\mathbf{s_i}$ be the shortest vertical line segment with its bottom endpoint being $(x_i,y_i)$ and top endpoint being a point on the boundary of $R$. Similarly, for $i\in\{1,2\}$, let $\mathbf{s'_i}$ be the shortest vertical line segment with its top endpoint being $(x'_i,y'_i)$ and bottom endpoint being a point on the boundary of $R$ (refer Figure~\ref{fig:regions}(b)).

\begin{figure}
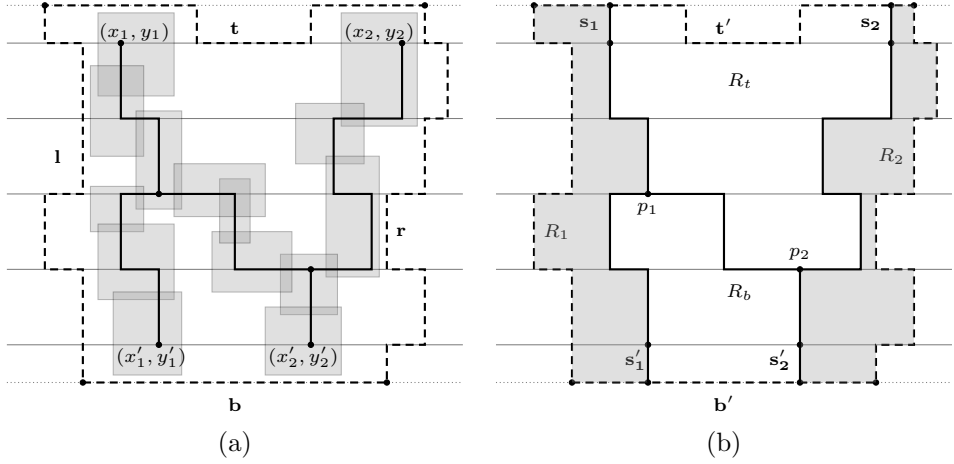

\centering
\centering
\begin{tabular}{cc} 
	\includegraphics[page=24]{figures.pdf}& \includegraphics[page=16]{figures.pdf}\\
	(a)&(b)
\end{tabular}
\caption{An example of a good region $R=(\mathbf{t},\mathbf{l},\mathbf{b},\mathbf{r})$ (whose boundary is shown using thick dashed lines) containing the rectangles corresponding to minimal spanning paths $P_1$ and $P_2$ and a path $P$ connecting them. (a) shows the rectilinear curves $\mathbf{p_1}$, $\mathbf{p_2}$ and $\mathbf{p}$ through these paths using thick solid lines.  (b) shows the partition of $R$ into the four regions $R_1$, $R_2$, $R_t$ and $R_b$.
}\label{fig:regions}
\end{figure}

\begin{observation}\label{obs:topsi}
For each $i\in\{1,2\}$, the top endpoint of $\mathbf{s_i}$ lies on the stab line just above the top stab line in $\mathcal{L}_{\mathcal{R}}(R)$ and on a horizontal segment of $\mathbf{t}$ and the bottom endpoint of $\mathbf{s'_i}$ lies on the stab line just below the bottom stab line in $\mathcal{L}_{\mathcal{R}}(R)$ and on a horizontal segment of $\mathbf{b}$. 
\end{observation}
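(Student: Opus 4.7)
I would prove the claim for $\mathbf{s_i}$; the claim for $\mathbf{s'_i}$ follows by the symmetric argument in which the roles of top and bottom are swapped, using the bottom stab line of $\mathcal{L}_{\mathcal{R}}(R)$ and $\ell_b$ in place of the top stab line and $\ell_t$. Let $\ell$ denote the top stab line in $\mathcal{L}_{\mathcal{R}}(R)$. Since $(x_i,y_i)\in r_{u_i}\subseteq R$ and the region $R$ is open, $(x_i,y_i)$ lies strictly in the interior of $R$. As $(x_i,y_i)$ sits on $\ell$, the top endpoint $Q$ of $\mathbf{s_i}$ must lie strictly above $\ell$, and by Observation~\ref{obs:ltlb} it lies at or below $\ell_t$.

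First, I would show that $Q\in\mathbf{t}$. By the good region property no point of $\mathbf{b}$ lies above $\ell$, so $Q\notin\mathbf{b}$. The part of $\mathbf{l}$ above $\ell$, if nonempty, consists of a single vertical segment at some $x$-coordinate $x_{\mathbf{l}}$; its bottom endpoint $(x_{\mathbf{l}},y_i)$ is a point of $\mathbf{l}$, hence of the boundary of $R$, and since $(x_i,y_i)$ lies in the interior of $R$ we must have $x_{\mathbf{l}}\neq x_i$. Therefore the vertical ray going up from $(x_i,y_i)$ does not meet this segment; a symmetric argument handles $\mathbf{r}$. Hence $Q\in\mathbf{t}$.

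Next, I would pin down $Q$ more precisely. Combining Observation~\ref{obs:ltlb} with the good region property applied to $\mathbf{t}$, every horizontal segment of $\mathbf{t}$ lies on a stab line in $\mathcal{L}_{\mathcal{R}}(R)\cup\{\ell_t\}$. Since $Q$ is strictly above $\ell$ and at most on $\ell_t$, any horizontal segment of $\mathbf{t}$ containing $Q$ must lie on $\ell_t$. The main obstacle is to rule out the case in which $Q$ sits in the interior of a purely vertical portion of $\mathbf{t}$. Such a vertical segment would necessarily lie at $x=x_i$, and its two endpoints, being corners of $\mathbf{t}$, would lie on two distinct stab lines in $\mathcal{L}_{\mathcal{R}}(R)\cup\{\ell_t\}$. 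Its upper endpoint must sit on $\ell_t$, since $\ell_t$ is the only stab line in this set strictly above $\ell$, while its lower endpoint must then lie on a stab line in $\mathcal{L}_{\mathcal{R}}(R)$, hence at or below $\ell$. Consequently the vertical segment contains the point $(x_i,y_i)$, placing it on the boundary of $R$ and contradicting the fact that it lies in the interior. Therefore $Q$ lies on a horizontal segment of $\mathbf{t}$ on $\ell_t$, as required.
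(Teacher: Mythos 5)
Your proof is correct and follows essentially the same route as the paper's: both arguments rest on Observation~\ref{obs:ltlb} together with the three properties of a good region to force the top endpoint of $\mathbf{s_i}$ onto the stab line $\ell_t$ just above the top stab line of $\mathcal{L}_{\mathcal{R}}(R)$ and onto $\mathbf{t}$ rather than $\mathbf{l}$, $\mathbf{b}$ or $\mathbf{r}$. The only notable difference is that you explicitly rule out the endpoint landing in the relative interior of a vertical piece of the boundary, a step the paper simply asserts (``\ldots lies on the boundary of $R$, and hence on a horizontal segment of the boundary\ldots''); your extra argument via the interiority of $(x_i,y_i)$ is sound and fills that small gap.
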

\begin{proof}
For $i\in\{1,2\}$, we know that the top endpoint of $\mathbf{s_i}$ lies on the boundary of $R$, and hence on a horizontal segment of the boundary of $R$. This implies that the top endpoint of $\mathbf{s_i}$ lies on a stab line. Also, note that the bottom endpoint of $\mathbf{s_i}$ is a point in $R$ that is on the top stab line in $\mathcal{L}_{\mathcal{R}}(R)$. This means that the top endpoint of $\mathbf{s_i}$ lies above the top stab line in $\mathcal{L}_{\mathcal{R}}(R)$. Since the top endpoint of $\mathbf{s_i}$ lies on the boundary of $R$, we immediately have from Observation~\ref{obs:ltlb} that it lies on the stab line just above the top stab line in $\mathcal{L}_{\mathcal{R}}(R)$. Also, since it lies on a horizontal segment of the boundary of $R$, it lies on some horizontal segment that belongs to one of the curves $\mathbf{t},\mathbf{l},\mathbf{b},\mathbf{r}$. Since $R$ is good, we know that no horizontal segment of $\mathbf{l}$, $\mathbf{r}$ or $\mathbf{b}$ lies above the top stab line in $\mathcal{L}_{\mathcal{R}}(R)$. This means that the top endpoint of $\mathbf{s_i}$ lies on a horizontal segment of $\mathbf{t}$. Using similar reasoning, it can be seen that for $i\in\{1,2\}$, the bottom endpoint of $\mathbf{s'_i}$ lies on the stab line just below the bottom stab line in $\mathcal{L}_{\mathcal{R}}(R)$ and on a horizontal segment of $\mathbf{b}$.
\end{proof}

Let $\mathbf{t'}\subseteq\mathbf{t}$ be the portion of the curve $\mathbf{t}$ that starts at the top endpoint of $\mathbf{s_1}$ and ends at the top endpoint of $\mathbf{s_2}$. Similarly, let $\mathbf{b'}\subseteq\mathbf{b}$ be the portion of the curve $\mathbf{b}$ that starts at the bottom endpoint of $\mathbf{s'_1}$ and ends at the bottom endpoint of $\mathbf{s'_2}$.

For $i\in\{1,2\}$, let the curve $\mathbf{p^t_i}$ be the connected portion of $\mathbf{p_i}$ that starts at $(x_i,y_i)$ and ends at the common point of $\mathbf{p_i}$ and $\mathbf{p}$ (denoted as $p_i$ previously) and let the curve $\mathbf{p^b_i}$ be the connected portion of $\mathbf{p_i}$ that starts at the common point of $\mathbf{p_i}$ and $\mathbf{p}$ and ends at $(x'_i,y'_i)$.

Let $R_1,R_2,R_t,R_b$ be the regions into which the region $R$ gets split by the union of the curves $\mathbf{p_1},\mathbf{p_2},\mathbf{p},\mathbf{s_1},$ $\mathbf{s'_1},\mathbf{s_2},\mathbf{s'_2}$, where $R_i$, for $i\in\{1,2\}$, is the region whose boundary contains $\mathbf{p_i}$,
$R_t=(\mathbf{t'},\mathbf{s_1}\cup\mathbf{p^t_1},\mathbf{p},\mathbf{p^t_2}\cup\mathbf{s_2})$, and $R_b=(\mathbf{p},\mathbf{p^b_1}\cup\mathbf{s'_1},\mathbf{b'},\mathbf{s'_2}\cup\mathbf{p^b_2})$ (please refer to Figure~\ref{fig:regions}(b)).

\begin{observation}\label{obs:hitsP}
	From the definition of $R_t$ and $R_b$, we have:
	\begin{enumerate}
		\vspace{-0.075in}
		\itemsep 0in
		\renewcommand{\theenumi}{(\roman{enumi})}
		\renewcommand{\labelenumi}{\theenumi}
		\item\label{it:tt'bb'} $\mathbf{top}(R_t)\subseteq\mathbf{top}(R)$ and $\mathbf{bottom}(R_b)\subseteq\mathbf{bottom}(R)$.
		\item $\mathbf{bottom}(R_t)=\mathbf{top}(R_b)$.
		\item\label{it:xinP} If $x$ is a vertex in $P$, then $r_x$ intersects $\mathbf{bottom}(R_t)$ ($=\mathbf{top}(R_b)$).
		\item\label{it:xhitsP} Let $x\in V(G)$ such that $r_x$ intersects $\mathbf{bottom}(R_t)$ ($=\mathbf{top}(R_b)$). Then $x$ has a neighbour in $P$.
	\end{enumerate}
\end{observation}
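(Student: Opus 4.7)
My plan is to dispatch each of the four items in turn by unfolding the definitions---the phrase ``from the definition of $R_t$ and $R_b$'' prefacing the statement strongly suggests this is all that is required, and indeed the observation is essentially bookkeeping after the subdivision of $R$ has been set up.

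For part (i), I would recall that $R_t=(\mathbf{t'},\mathbf{s_1}\cup\mathbf{p^t_1},\mathbf{p},\mathbf{p^t_2}\cup\mathbf{s_2})$, so $\mathbf{top}(R_t)=\mathbf{t'}$. The curve $\mathbf{t'}$ was defined as the portion of $\mathbf{t}=\mathbf{top}(R)$ running between the top endpoints of $\mathbf{s_1}$ and $\mathbf{s_2}$, so $\mathbf{t'}\subseteq\mathbf{t}$ is immediate once one observes (via Observation~\ref{obs:topsi}) that these two endpoints actually lie on $\mathbf{t}$, so that the sub-arc is well-defined. The inclusion $\mathbf{bottom}(R_b)\subseteq\mathbf{bottom}(R)$ follows by the symmetric argument with $\mathbf{b'}\subseteq\mathbf{b}$. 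Part (ii) is then a one-line check: $\mathbf{bottom}(R_t)=\mathbf{p}=\mathbf{top}(R_b)$ by the definitions of the two regions.

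For part (iii), the whole point of a rectilinear curve through an induced path is that it meets every rectangle of the path in a nonempty arc-connected set; hence for any $x\in V(P)$ we immediately get $r_x\cap\mathbf{p}\neq\emptyset$, i.e.\ $r_x$ intersects $\mathbf{bottom}(R_t)$.

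For part (iv), my plan is to pick any point $q\in r_x\cap\mathbf{p}$. Since $\mathbf{p}\subseteq\bigcup_{v\in V(P)}r_v$ by the definition of a rectilinear curve through $P$, there is some $v\in V(P)$ with $q\in r_v$, and thus $r_x\cap r_v\neq\emptyset$. This forces $x=v$ or $xv\in E(G)$, so in either case $x$ has a neighbour in $P$ (using that $|V(P)|\geq 2$, which holds because the endpoints $w_1\in V(P_1)$ and $w_2\in V(P_2)$ of $P$ are distinct since $P_1,P_2$ are vertex-disjoint). I do not anticipate any real obstacle here; the one place that warrants a sentence of justification is matching the labels ``top'' and ``bottom'' to the first and third entries of the tuples defining $R_t$ and $R_b$, and appealing to Observation~\ref{obs:topsi} so that the sub-arcs $\mathbf{t'}$ and $\mathbf{b'}$ are meaningful.
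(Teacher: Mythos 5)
Your verification is correct and matches the paper, which states this as an observation following immediately from the definitions and gives no further proof; your unfolding of the tuples defining $R_t$ and $R_b$, the containments $\mathbf{t'}\subseteq\mathbf{t}$ and $\mathbf{b'}\subseteq\mathbf{b}$, and the properties $\mathbf{p}\subseteq\bigcup_{v\in V(P)}r_v$ with $\mathbf{p}\cap r_v\neq\emptyset$ for each $v\in V(P)$ is exactly the intended justification. The extra care you take in part~(iv) for the case $x\in V(P)$ (using $|V(P)|\geq 2$) is a sensible touch and introduces no discrepancy.
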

\medskip

For the rest of this section, for a good region $R$ and paths $P_1,P_2,P$ such that:
\begin{itemize}
\vspace{-0.075in}
\itemsep 0in
\item $P_1$ and $P_2$ are two neighbour-disjoint $\mathcal{L}_{\mathcal{R}}(R)$-spanning paths in $G_R$, and
\item $P$ is an induced path in $G_R$ between a vertex in $P_1$ and a vertex in $P_2$ such that all vertices of $P$ other than its end vertices belong to neither $P_1$ nor $P_2$ (note that such a path will exist if there is some path in $G_R$ between a vertex of $P_1$ and a vertex of $P_2$),
\end{itemize}
we shall denote by $\Delta(\mathcal{R},R,P_1,P_2,P)$ the ordered pair $(R_t,R_b)$, where the regions $R_t$ and $R_b$ are obtained using the procedure described above. We shall now prove some observations about the regions $R_t$ and $R_b$.

\begin{lemma}\label{lem:bottomstab}
	~
	\begin{enumerate}
		\vspace{-0.075in}
		\itemsep 0in
		\renewcommand{\theenumi}{(\alph{enumi})}
		\renewcommand{\labelenumi}{\theenumi}
		\item\label{it:t'b'} The curve $\mathbf{t'}$ (resp. $\mathbf{b'}$) does not intersect the bottom (resp. top) stab line in $\mathcal{L}_{\mathcal{R}}(R)$.
		\item\label{it:decreasestab} $R_t$ does not intersect the bottom stab line in $\mathcal{L}_{\mathcal{R}}(R)$ and $R_b$ does not intersect the top stab line in $\mathcal{L}_{\mathcal{R}}(R)$.
	\end{enumerate}
\end{lemma}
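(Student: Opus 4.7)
The plan is to prove part (a) first by a contradiction argument and then derive part (b) from (a) together with structural observations about the boundary of $R_t$.

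For (a), I would argue by contradiction. Suppose that $\mathbf{t'}$ intersects the bottom stab line $\ell_{bot}$ of $\mathcal{L}_{\mathcal{R}}(R)$, and let $\ell_{top}$ denote the top stab line of $\mathcal{L}_{\mathcal{R}}(R)$. By Observation~\ref{obs:topsi}, the endpoints $A=(x_1,\ell_t)$ and $B=(x_2,\ell_t)$ of $\mathbf{t'}$ both lie on $\ell_t$ and on horizontal segments of $\mathbf{t}$ lying on $\ell_t$. First I would establish that $A$ must be in the \emph{interior} of such a horizontal segment: otherwise $A$ would be a corner at which a vertical segment of $\mathbf{t}$ meets the horizontal one; this vertical segment cannot go up (by Observation~\ref{obs:ltlb}) and cannot go down (since then it lies on $x=x_1$, and either it ends at a height in $(\ell_{top},\ell_t)$, contradicting the shortness of $\mathbf{s_1}$, or it reaches $(x_1,\ell_{top})=(x_1,y_1)\in r_{u_1}$, which lies in the interior of $R$ and hence cannot be on the boundary of $R$). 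A symmetric argument applies to $B$.

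Next, since $\mathbf{t'}$ is a rectilinear subcurve of $\mathbf{t}$ with horizontal segments on stab lines, and since it lies in the strip $[\ell_{bot},\ell_t]$ (by Definition~\ref{def:good}(ii) and Observation~\ref{obs:ltlb}), the assumption that $\mathbf{t'}$ intersects $\ell_{bot}$ implies that $\mathbf{t'}$ contains a horizontal segment $[c,d]\times\{\ell_{bot}\}$ on $\ell_{bot}$. Because $\mathbf{t}$ is the top boundary of $R$ and $R$ lies below $\mathbf{t}$ locally, for any $x\in(c,d)$ the open vertical strip $\{x\}\times(\ell_{bot},\ell_t)$ cannot intersect $R$. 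But $(x_1,\ell_{top})\in r_{u_1}\subseteq R$ and $(x_2,\ell_{top})\in r_{u_2}\subseteq R$ are interior points of $R$, so $x_1,x_2\notin(c,d)$. I would then perform a case analysis on the positions of $x_1,x_2$ with respect to $[c,d]$, tracing $\mathbf{t'}$ as a rectilinear path from $A$ to $B$ and applying the shortness of $\mathbf{s_1},\mathbf{s_2}$ at its descending and ascending vertical segments. In each case, either a descending or ascending vertical segment of $\mathbf{t'}$ is forced to lie on the line $x=x_1$ or $x=x_2$ (contradicting the shortness of $\mathbf{s_1}$ or $\mathbf{s_2}$), or the boundary of $R$ is forced to contain a point in the interior of $r_{u_1}$ or $r_{u_2}$, yielding the desired contradiction.

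For (b), I would use (a). The boundary of $R_t$ is $\mathbf{t'}\cup(\mathbf{s_1}\cup\mathbf{p^t_1})\cup\mathbf{p}\cup(\mathbf{p^t_2}\cup\mathbf{s_2})$. By (a), $\mathbf{t'}$ lies strictly above $\ell_{bot}$. The vertical segments $\mathbf{s_1},\mathbf{s_2}$ lie in $[\ell_{top},\ell_t]$ and hence are strictly above $\ell_{bot}$. The curves $\mathbf{p},\mathbf{p^t_1},\mathbf{p^t_2}$ are rectilinear curves through paths in $G_R$, so their horizontal segments lie on stab lines in $\mathcal{L}_{\mathcal{R}}(R)$ and their vertical segments connect such horizontal segments; consequently these curves lie at heights in $[\ell_{bot},\ell_{top}]$. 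Hence the entire boundary of $R_t$ lies at heights $\geq \ell_{bot}$, meeting $\ell_{bot}$ only along possible horizontal segments of $\mathbf{p},\mathbf{p^t_1},\mathbf{p^t_2}$ at height $\ell_{bot}$. Since $R_t$ is the open region lying strictly above its lower boundary, no point of $R_t$ lies on $\ell_{bot}$. The claim that $R_b$ does not intersect the top stab line is proved by a completely symmetric argument, using the mirror of (a) for $\mathbf{b'}$. The main obstacle I anticipate is the case analysis in part (a): the rectilinear curve $\mathbf{t'}$ can have a complicated shape with multiple descents and ascents, and one must enumerate the possible configurations of its descending and ascending vertical segments relative to $x_1$ and $x_2$ to derive the contradiction in each case.
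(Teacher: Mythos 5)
Your part~(b) is essentially the paper's own argument and is fine once (a) is available. The gap is in part~(a). Your proof of (a) never actually uses the paths $P_1$, $P_2$, $P$ (beyond the two points $(x_1,y_1)$ and $(x_2,y_2)$): the only ingredients you invoke are the shortness of $\mathbf{s_1},\mathbf{s_2}$, the containment of $r_{u_1},r_{u_2}$ in $R$, and the local shape of $\mathbf{t}$. But statement~(a) is false under those hypotheses alone. Concretely, let $\mathbf{t}$ run along $\ell_t$ from the top-left corner to $x=4$, drop to the bottom stab line $\ell_{bot}$ of $\mathcal{L}_{\mathcal{R}}(R)$, run right to $x=6$, return to $\ell_t$, and continue to the top-right corner, so that $R$ consists of two ``towers'' joined only through a passage lying strictly below $\ell_{bot}$. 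Put $(x_1,y_1)$ in the left tower and $(x_2,y_2)$ in the right tower, both on $\ell_{top}$. Then $\mathbf{s_1},\mathbf{s_2}$ are legitimate shortest vertical segments landing on $\mathbf{t}$ at height $\ell_t$, the descending and ascending vertical segments of $\mathbf{t'}$ sit at $x=4$ and $x=6$ (neither equal to $x_1$ or $x_2$), the boundary avoids $r_{u_1}$ and $r_{u_2}$, and yet $\mathbf{t'}$ meets $\ell_{bot}$. Neither of the two contradictions your case analysis is designed to produce occurs, so the case analysis cannot be completed. What rules this configuration out in the actual lemma is precisely the existence of $P$: any rectilinear curve through a path of $G_R$ joining the two towers would have to traverse the passage, where its horizontal segments (which must lie on stab lines of $\mathcal{L}_{\mathcal{R}}(R)$) cannot go.

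The missing idea is the one the paper uses. Form the simple rectilinear curve $\mathbf{q}=\mathbf{p^t_1}\cup\mathbf{p}\cup\mathbf{p^t_2}$ from $(x_1,y_1)$ to $(x_2,y_2)$; its horizontal segments lie on stab lines of $\mathcal{L}_{\mathcal{R}}(R)$ and its endpoints lie on the top stab line, so $\mathbf{q}$ stays on or above $\ell_{bot}$. Closing it with $\mathbf{s_1}$, $\mathbf{s_2}$ and the segment $\mathbf{s}$ of $\ell_t$ between their top endpoints gives a region $R'$ whose entire boundary is on or above $\ell_{bot}$, and Observation~\ref{obs:ltlb} traps $\mathbf{t'}$ inside $R'\cup\mathbf{s}$; hence $\mathbf{t'}$ cannot reach $\ell_{bot}$. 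Two further remarks: your intermediate claim that ``$R$ lies below $\mathbf{t}$ locally,'' used to exclude $x_1,x_2$ from $(c,d)$, does not follow from the definition of a region and would itself need proof; and you correctly identified the case analysis as the main obstacle --- it is, and it cannot be overcome without bringing $\mathbf{q}$ into the argument.
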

\begin{proof}
	Let us first prove~\ref{it:t'b'}.
	We shall only show that the curve $\mathbf{t'}$ does not intersect the bottom stab line in $\mathcal{L}_{\mathcal{R}}(R)$ as the other case is similar.
	Let the rectilinear curve $\mathbf{q}$ be $\mathbf{p^t_1}\cup\mathbf{p}\cup\mathbf{p^t_2}$. Note that $\mathbf{q}$ is a simple rectilinear curve.
	Let $\ell$ be the stab line just above the top stab line of $\mathcal{L}_{\mathcal{R}}(R)$. From Observation~\ref{obs:topsi}, we have that the top endpoints of $\mathbf{s_1}$ and $\mathbf{s_2}$ lie on $\ell$. Let the horizontal line segment (that lies entirely on $\ell$) between these two points be denoted by $\mathbf{s}$. Let $R'$ be the region bounded by $\mathbf{s_1}\cup\mathbf{q}\cup\mathbf{s_2}\cup\mathbf{s}$. From Observation~\ref{obs:ltlb}, it is then clear that $\mathbf{t'}$ lies entirely in $R'\cup\mathbf{s}$ (recall that $R'$ consists only of the points in the interior of the region bounded by $\mathbf{s_1}\cup\mathbf{q}\cup\mathbf{s_2}\cup\mathbf{s}$). Since the points in $\mathbf{q}$ all belong to rectilinear curves through paths in $G_R$, every horizontal segment of $\mathbf{q}$ is on a stab line in $\mathcal{L}_{\mathcal{R}}(R)$. Since the endpoints of $\mathbf{q}$ lie on the top stab line in $\mathcal{L}_{\mathcal{R}}(R)$, and $\mathbf{q}$ is a simple rectilinear curve, it follows that every point in $\mathbf{q}$ is on or above the bottom stab line in $\mathcal{L}_{\mathcal{R}}(R)$. As the points in $\mathbf{s_1}\cup\mathbf{s_2}\cup\mathbf{s}$ lie on or above the top stab line in $\mathcal{L}_{\mathcal{R}}(R)$, this means that all the points on the boundary of $R'$ lie on or above the bottom stab line in $\mathcal{L}_{\mathcal{R}}(R)$, implying that $R'$ does not intersect the bottom stab line in $\mathcal{L}_{\mathcal{R}}(R)$. As $\mathbf{s}$ lies on the stab line just above the top stab line in $\mathcal{L}_{\mathcal{R}}(R)$, we now have that $R'\cup\mathbf{s}$ does not intersect the bottom stab line in $\mathcal{L}_{\mathcal{R}}(R)$. From our earlier observation that $\mathbf{t'}$ lies entirely in $R'\cup\mathbf{s}$, we now have that $\mathbf{t'}$ does not intersect the bottom stab line in $\mathcal{L}_{\mathcal{R}}(R)$.
	
	To prove~\ref{it:decreasestab}, we shall only prove that $R_t$ does not intersect the bottom stab line in $\mathcal{L}_{\mathcal{R}}(R)$ as the case for $R_b$ involves similar arguments.
	Note that the boundary of $R_t$ is $\mathbf{t'}\cup\mathbf{s_1}\cup\mathbf{q}\cup\mathbf{s_2}$. From the arguments in the previous paragraph, it is easy to see that all the points in $\mathbf{s_1}\cup\mathbf{q}\cup\mathbf{s_2}$ lie on or above the bottom stab line in $\mathcal{L}_{\mathcal{R}}(R)$. Combining this with~\ref{it:t'b'}, we now have that all the points on the boundary of $R_t$ lie on or above the bottom stab line in $\mathcal{L}_{\mathcal{R}}(R)$. Hence we can conclude that the bottom stab line in $\mathcal{L}_{\mathcal{R}}(R)$ does not intersect $R_t$.
\end{proof}

An $\mathcal{L}_{\mathcal{R}}(R)$-spanning path $P$ is said to be a \emph{minimal $\mathcal{L}_{\mathcal{R}}(R)$-spanning path} if there is no $\mathcal{L}_{\mathcal{R}}(R)$-spanning path $P'$ such that $V(P')\subset V(P)$. Note that the existence of an $\mathcal{L}_{\mathcal{R}}(R)$-spanning path in a graph implies the existence of a minimal $\mathcal{L}_{\mathcal{R}}(R)$-spanning path in the graph.

\begin{lemma}\label{lem:goodregion}
Suppose that $P_1$ and $P_2$ are minimal $\mathcal{L}_{\mathcal{R}}(R)$-spanning paths. Let $R'\in\{R_t,R_b\}$ such that $|\mathcal{L}_{\mathcal{R}}(R')|\geq |\mathcal{L}_{\mathcal{R}}(R)|-1$. Then $R'$ is good.
\end{lemma}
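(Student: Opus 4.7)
The plan is to verify the three defining properties of goodness for $R_t$; the case $R'=R_b$ will then follow by a symmetric argument that reflects the representation vertically and applies Lemma~\ref{lem:bottomstab}\ref{it:decreasestab} to the top stab line of $\mathcal{L}_{\mathcal{R}}(R)$ in place of the bottom. Let $a_1<a_2<\cdots<a_t$ be the stab lines in $\mathcal{L}_{\mathcal{R}}(R)$. Lemma~\ref{lem:bottomstab}\ref{it:decreasestab} gives $\mathcal{L}_{\mathcal{R}}(R_t)\subseteq\{y=a_2,\ldots,y=a_t\}$, and combined with the hypothesis $|\mathcal{L}_{\mathcal{R}}(R_t)|\geq|\mathcal{L}_{\mathcal{R}}(R)|-1$ this forces equality, so $R_t$ has top stab line $y=a_t$ and bottom stab line $y=a_2$. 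I will also use throughout the observation that $\mathcal{L}_{\mathcal{R}}(R)$ consists of consecutive stab lines of the whole representation, since arc-connectedness of $R$ means any stab line of $\mathcal{R}$ lying strictly between two members of $\mathcal{L}_{\mathcal{R}}(R)$ must itself meet $R$.

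For property~(i), the left and right boundaries of $R_t$ are $\mathbf{s_1}\cup\mathbf{p^t_1}$ and $\mathbf{p^t_2}\cup\mathbf{s_2}$; since $\mathbf{s_1},\mathbf{s_2}$ are vertical, only the horizontal segments of $\mathbf{p^t_i}\subseteq\mathbf{p_i}$ need checking. Because $V(P_i)\subseteq V(G_R)$ gives $\mathbf{p_i}\subseteq R$, every such horizontal segment already lies on some stab line of $\mathcal{L}_{\mathcal{R}}(R)$; I must exclude $y=a_1$. The key step is to use minimality of $P_i$ to show that $v_i$ is the only vertex of $P_i$ whose rectangle meets $y=a_1$: if some internal vertex $z_j$ had $r_{z_j}$ meeting $y=a_1$, then since the stab lines met by each rectangle form a contiguous range and adjacent vertices of an induced path have ``touching'' ranges, the union of ranges for $u_i,z_1,\ldots,z_j$ would be a contiguous set of stab lines containing both $a_t$ and $a_1$, hence all of $\mathcal{L}_{\mathcal{R}}(R)$, so the proper sub-path $u_i z_1\cdots z_j$ would still be $\mathcal{L}_{\mathcal{R}}(R)$-spanning---contradicting minimality. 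Combined with the defining property $\mathbf{p_i}\cap r_{v_i}\cap\{y=a_1\}=\{(x'_i,y'_i)\}$, this yields $\mathbf{p_i}\cap\{y=a_1\}=\{(x'_i,y'_i)\}$, a single point. Hence $\mathbf{p^t_i}$ has no horizontal segment on $y=a_1$, giving property~(i).

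Properties~(ii) and (iii) follow from the same rectilinear-confinement pattern. For (ii), I need to show no point of $\mathbf{t'}$ has $y$-coordinate below $a_2$. Goodness of $R$ already keeps $\mathbf{t'}\subseteq\mathbf{t}$ on or above $y=a_1$, and Lemma~\ref{lem:bottomstab}\ref{it:t'b'} keeps $\mathbf{t'}$ off $y=a_1$ entirely; a point of $\mathbf{t'}$ with $a_1<y<a_2$ must lie on a vertical segment (no stab line sits in this open strip), and tracing that segment downward leads either to an endpoint of $\mathbf{t'}$ (which Observation~\ref{obs:topsi} puts on $y=a_{t+1}$, too high) or to a horizontal segment on a stab line of $\mathcal{R}$ at height $<a_2$, of which $y=a_1$ is the only candidate, and it is excluded. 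Property~(iii) is the upward mirror: a point of $\mathbf{p}\subseteq R$ strictly above $a_t$ would lie on a vertical segment whose top endpoint is either $p_1$ or $p_2$ (both on stab lines $\leq a_t$) or on a horizontal segment of $\mathbf{p}$, which must lie on a stab line of $\mathcal{L}_{\mathcal{R}}(R)$ (hence $\leq a_t$), again impossible.

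The main obstacle I anticipate is cleanly formalizing the minimality step: proving that the set of stab lines met by rectangles along an induced path grows through a contiguous range at each step, and hence that an internal vertex whose rectangle already touches $y=a_1$ forces a strictly shorter spanning sub-path. Once this pinning of $\mathbf{p_i}$ to a single point on $y=a_1$ is in hand, the rest is a mechanical application of the rectilinear structure of the curves together with Lemma~\ref{lem:bottomstab} and the goodness of $R$. The case $R'=R_b$ then runs identically after a vertical reflection: minimality of $P_i$ there forces $u_i$ to be the unique vertex of $P_i$ whose rectangle meets $y=a_t$, so $\mathbf{p_i}\cap\{y=a_t\}=\{(x_i,y_i)\}$, and the three good-region properties for $R_b$ follow with the roles of $a_1$ and $a_t$ interchanged.
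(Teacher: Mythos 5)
Your proof is correct and follows essentially the same route as the paper's: pin down $\mathcal{L}_{\mathcal{R}}(R_t)$ via Lemma~\ref{lem:bottomstab}\ref{it:decreasestab}, use minimality of $P_1,P_2$ to show $\mathbf{p_i}$ meets the bottom stab line of $\mathcal{L}_{\mathcal{R}}(R)$ only at $(x'_i,y'_i)$ (hence no horizontal segment there) for property~(i), invoke Lemma~\ref{lem:bottomstab}\ref{it:t'b'} plus rectilinearity for property~(ii), containment of $\mathbf{p}$ in $R$ for property~(iii), and symmetry for $R_b$. The only superfluous step is the ``contiguous range'' argument in the minimality claim, since an $\mathcal{L}_{\mathcal{R}}(R)$-spanning path is defined solely by its endpoints lying on the top and bottom stab lines, so an internal vertex on the bottom stab line immediately yields a shorter spanning subpath.
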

\begin{proof}
We shall prove this only for the case when $R'=R_t$ as the other case is similar.
As $|\mathcal{L}_{\mathcal{R}}(R_t)|\geq |\mathcal{L}_{\mathcal{R}}(R)|-1$, and by Lemma~\ref{lem:bottomstab}\ref{it:decreasestab}, $R_t$ does not intersect the bottom stab line, we know that $\mathcal{L}_{\mathcal{R}}(R_t)$ consists of all the stab lines in $\mathcal{L}_{\mathcal{R}}(R)$ other than the bottom stab line in $\mathcal{L}_{\mathcal{R}}(R)$.
	
Recall that $R_t=(\mathbf{t'},\mathbf{s_1}\cup\mathbf{p^t_1},\mathbf{p},\mathbf{p^t_2}\cup\mathbf{s_2})$.
Since the paths $P_1$ and $P_2$ are minimal, we know that for $i\in\{1,2\}$, $u_i$ is the only vertex on $P_i$ that is on the top stab line in $\mathcal{L}_{\mathcal{R}}(R)$ and $v_i$ is the only vertex on $P_i$ that is on the bottom stab line in $\mathcal{L}_{\mathcal{R}}(R)$. Therefore, from the definition of curves $\mathbf{p_1}$ and $\mathbf{p_2}$, we have that for $i\in\{1,2\}$, the only points of $\mathbf{p_i}$ that lie on the top and bottom stab lines in $\mathcal{L}_{\mathcal{R}}(R)$ are the endpoints of $\mathbf{p_i}$, which further implies that $\mathbf{p_i}$ does not contain any horizontal segment on the top or bottom stab lines in $\mathcal{L}_{\mathcal{R}}(R)$. It follows that for $i\in\{1,2\}$, $\mathbf{p^t_i}$, and therefore $\mathbf{s_i}\cup\mathbf{p^t_i}$, also does not contain any horizontal segment on the top or bottom stab lines in $\mathcal{L}_{\mathcal{R}}(R)$. As $\mathcal{L}_{\mathcal{R}}(R_t)$ consists of all the stab lines in $\mathcal{L}_{\mathcal{R}}(R)$ other than the bottom stab line in $\mathcal{L}_{\mathcal{R}}(R)$, we have that $\mathbf{s_1}\cup\mathbf{p^t_1}$ and $\mathbf{s_2}\cup\mathbf{p^t_2}$ do not contain any horizontal segment that lies above the top stab line in $\mathcal{L}_{\mathcal{R}}(R_t)$ or below the bottom stab line in $\mathcal{L}_{\mathcal{R}}(R_t)$. Therefore, $R_t$ satisfies property~(i) of Definition~\ref{def:good}. From Lemma~\ref{lem:bottomstab}\ref{it:t'b'}, we have that $\mathbf{t'}$ does not intersect the bottom stab line in $\mathcal{L}_{\mathcal{R}}(R)$. Since the endpoints of $\mathbf{t'}$ lie above the top stab line in $\mathcal{L}_{\mathcal{R}}(R)$, we can then conclude using the definition of rectilinear curves that no point of $\mathbf{t'}$ lies below the bottom stab line of $\mathcal{L}_{\mathcal{R}}(R_t)$. Thus, $R_t$ satisfies property~(ii) of Definition~\ref{def:good}. Since the points in $\mathbf{p}$ all belong to rectangles contained in $R$ and $\mathbf{p}$ is a simple rectilinear curve, we know that all of them are on or below the top stab line in $\mathcal{L}_{\mathcal{R}}(R)$ and hence on or below the top stab line in $\mathcal{L}_{\mathcal{R}}(R_t)$. Therefore, $R_t$ satisfies property~(iii) of Definition~\ref{def:good} as well. This completes the proof.
\end{proof}

\begin{observation}\label{obs:gap}
Let $v\in V(G)$. For $i\in\{1,2\}$, if $r_v\cap\mathbf{t'}=\emptyset$ (resp. $r_v\cap\mathbf{b'}=\emptyset$) and $r_v$ intersects $\mathbf{s_i}$ (resp. $\mathbf{s'_i}$), then $r_v$ contains the point $(x_i,y_i)$ (resp. $(x'_i,y'_i)$).
\end{observation}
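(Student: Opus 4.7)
The plan is to focus on the $\mathbf{s_i}, \mathbf{t'}, (x_i,y_i)$ case, since the statement involving $\mathbf{s'_i}, \mathbf{b'}, (x'_i,y'_i)$ follows by a completely symmetric argument with ``above'' replaced by ``below.'' Let $\ell$ denote the stab line just above the top stab line of $\mathcal{L}_{\mathcal{R}}(R)$, and write $\ell$ as $y=\bar y$. By construction, $\mathbf{s_i}$ is the vertical segment at abscissa $x_i$ whose bottom endpoint is $(x_i,y_i)$; by Observation~\ref{obs:topsi}, its top endpoint lies on $\ell$ and on a horizontal segment of $\mathbf{t}$, so this top endpoint is exactly $(x_i,\bar y)$. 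Since $\mathbf{t'}$ is by definition the portion of $\mathbf{t}$ whose endpoints are the top endpoints of $\mathbf{s_1}$ and $\mathbf{s_2}$, the point $(x_i,\bar y)$ belongs to $\mathbf{t'}$.

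Next I would extract coordinate information from the two hypotheses. Because $r_v$ meets $\mathbf{s_i}$, the rectangle $r_v=[x^-_v,x^+_v]\times[y^-_v,y^+_v]$ contains some point $(x_i,y_0)$ with $y_i\le y_0\le \bar y$; in particular $x_i\in[x^-_v,x^+_v]$ and $y^-_v\le y_0\le y^+_v$. The assumption $r_v\cap\mathbf{t'}=\emptyset$ then forces $(x_i,\bar y)\notin r_v$, and since $x_i\in[x^-_v,x^+_v]$ this is equivalent to $\bar y\notin[y^-_v,y^+_v]$. Combined with $y^-_v\le y_0\le\bar y$, the only possibility is $y^+_v<\bar y$.

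The key step is to show $y^-_v\le y_i$. I would argue by contradiction: if $y^-_v>y_i$, then together with $y^+_v<\bar y$ the rectangle $r_v$ would lie entirely in the open horizontal strip $\{(x,y):y_i<y<\bar y\}$. But $y=y_i$ and $y=\bar y$ are \emph{consecutive} stab lines of $\mathcal{R}$, so no stab line has $y$-coordinate strictly between them, and $r_v$ would intersect no stab line at all. That contradicts the defining property of a $k$-stabbed rectangle intersection representation. Hence $y^-_v\le y_i\le y_0\le y^+_v$, and together with $x_i\in[x^-_v,x^+_v]$ this gives $(x_i,y_i)\in r_v$.

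There isn't really a hard step here; the whole proof turns on the single observation that the hypothesis ``$r_v$ meets $\mathbf{s_i}$ but avoids $\mathbf{t'}$'' wedges $r_v$ against the top stab line of $\mathcal{L}_{\mathcal{R}}(R)$ from above, and any rectangle wedged strictly between two consecutive stab lines is unstabbed. The main thing to be careful about is identifying the top endpoint of $\mathbf{s_i}$ as a point of $\mathbf{t'}$ (not merely of $\mathbf{t}$), for which Observation~\ref{obs:topsi} together with the definition of $\mathbf{t'}$ is exactly what is needed. The symmetric case runs the same argument with $\mathbf{s'_i}$, $\mathbf{b'}$, and the stab line just below the bottom stab line of $\mathcal{L}_{\mathcal{R}}(R)$ in place of $\mathbf{s_i}$, $\mathbf{t'}$, and $\ell$.
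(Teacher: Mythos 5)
Your proof is correct and follows essentially the same route as the paper's: both arguments note that the top endpoint of $\mathbf{s_i}$ lies on $\mathbf{t'}$ (so it is excluded from $r_v$) and that if the bottom endpoint $(x_i,y_i)$ were also excluded, $r_v$ would be wedged strictly between two consecutive stab lines and hence unstabbed. Your version merely spells out the same reasoning in explicit coordinates.
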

\begin{proof}
Suppose that $r_v\cap\mathbf{t'}=\emptyset$, but $r_v$ intersects $\mathbf{s_i}$. As the top endpoint of $\mathbf{s_i}$ is contained in $\mathbf{t'}$, we can infer that $r_v$ does not contain the top endpoint of $\mathbf{s_i}$. If $r_v$ also does not contain the bottom endpoint of $\mathbf{s_i}$, then there is no stab line that intersects $r_v$, as the top and bottom endpoints of $\mathbf{s_i}$ are on consecutive stab lines. We can therefore conclude that the bottom endpoint of $\mathbf{s_i}$, which is $(x_i,y_i)$, is contained in $r_v$. The arguments for the other case are similar and are therefore omitted.
\end{proof}

\begin{lemma}\label{lem:crossboundary}
	Let $v\in V(G)$ such that $r_v$ intersects the boundary of $R_t$ (resp. $R_b$). Then either $r_v$ intersects $\mathbf{t'}=\mathbf{top}(R_t)$ (resp. $\mathbf{b'}=\mathbf{bottom}(R_b)$) or $v$ has a neighbour on at least one of the paths $P_1$, $P_2$, or $P$.
\end{lemma}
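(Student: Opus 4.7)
The plan is to perform a case analysis on which of the four boundary arcs of $R_t$, namely $\mathbf{t'}$, $\mathbf{s_1}\cup\mathbf{p^t_1}$, $\mathbf{p}$, and $\mathbf{p^t_2}\cup\mathbf{s_2}$, is met by $r_v$. By hypothesis we may assume $r_v\cap\mathbf{t'}=\emptyset$, so three families of sub-cases remain. The $R_b$ version of the lemma is completely symmetric via the substitutions $\mathbf{t'}\leftrightarrow\mathbf{b'}$, $\mathbf{s_i}\leftrightarrow\mathbf{s'_i}$, $\mathbf{p^t_i}\leftrightarrow\mathbf{p^b_i}$, $(x_i,y_i)\leftrightarrow(x'_i,y'_i)$, and $u_i\leftrightarrow v_i$, and I will handle it in a single closing remark using the other half of Observation~\ref{obs:gap}.

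If $r_v$ meets $\mathbf{p}$, the conclusion is immediate from the definition of a rectilinear curve through a path: $\mathbf{p}\subseteq\bigcup_{x\in V(P)}r_x$, so $r_v$ meets some $r_x$ with $x\in V(P)$, delivering a neighbour of $v$ on $P$. The case $r_v\cap\mathbf{p^t_i}\neq\emptyset$ for some $i\in\{1,2\}$ is handled identically, using $\mathbf{p^t_i}\subseteq\mathbf{p_i}\subseteq\bigcup_{x\in V(P_i)}r_x$ to extract a vertex of $P_i$ whose rectangle meets $r_v$.

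The one genuinely interesting sub-case is when $r_v$ meets the vertical auxiliary segment $\mathbf{s_i}$ for some $i\in\{1,2\}$ but does not meet $\mathbf{p^t_i}$ (and, by hypothesis, does not meet $\mathbf{t'}$). Here the step I expect to be the main obstacle, although it has been prepared for in advance, is the invocation of Observation~\ref{obs:gap}: since $r_v$ intersects $\mathbf{s_i}$ but is disjoint from $\mathbf{t'}$, the observation forces the lower endpoint $(x_i,y_i)$ of $\mathbf{s_i}$ to lie in $r_v$. Because $(x_i,y_i)\in r_{u_i}$ by the construction of $\mathbf{p_i}$, we get $r_v\cap r_{u_i}\neq\emptyset$, so $v$ is adjacent to the endpoint $u_i$ of $P_i$ (or $v=u_i$ itself, which is on $P_i$). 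All other sub-cases unfold directly from the definitions of the rectilinear curves, so the only nontrivial ingredient is that Observation~\ref{obs:gap} exactly closes the loophole presented by the auxiliary vertical segments $\mathbf{s_i}$ that bridge the top ends of the curves $\mathbf{p_i}$ up to the boundary arc $\mathbf{t'}$.
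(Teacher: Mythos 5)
Your proposal is correct and is essentially the paper's own argument: the paper runs the identical reasoning in contrapositive form (assuming $r_v$ misses $\mathbf{t'}$ and $v$ has no neighbour on $P_1$, $P_2$, $P$, it deduces that $r_v$ misses $\mathbf{p^t_1}$, $\mathbf{p}$, $\mathbf{p^t_2}$, hence does not contain $(x_1,y_1)$ or $(x_2,y_2)$, hence by Observation~\ref{obs:gap} misses $\mathbf{s_1}$ and $\mathbf{s_2}$, contradicting that $r_v$ meets the boundary), whereas you organize the same steps as a direct case analysis over the boundary arcs. The key ingredients --- the containment of the rectilinear curves in the unions of rectangles of their paths, and Observation~\ref{obs:gap} to dispose of the vertical segments $\mathbf{s_i}$ --- are used in exactly the same way.
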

\begin{proof}
	We shall prove this lemma only for $R_t$ as the arguments for $R_b$ are similar. Suppose there exists a vertex $v\in V(G)$ such that $r_v$ intersects the boundary of $R_t$, but $r_v$ does not intersect $\mathbf{t'}$ and $v$ does not have a neighbour on any of the paths $P_1$, $P_2$, or $P$. Then $r_v$ does not intersect any of the curves $\mathbf{p^t_1}$, $\mathbf{p}$, or $\mathbf{p^t_2}$. From this, it follows that $r_v$ does not contain the points $(x_1,y_1)$ or $(x_2,y_2)$. By Observation~\ref{obs:gap}, we now have that $r_v$ does not intersect $\mathbf{s_1}$ or $\mathbf{s_2}$. Since the boundary of $R_t$ is $\mathbf{t'}\cup\mathbf{s_1}\cup\mathbf{p^t_1}\cup\mathbf{p}\cup\mathbf{p^t_2}\cup\mathbf{s_2}$, this means that $r_v$ does not intersect the boundary of $R_t$, which is a contradiction.
\end{proof}

\begin{lemma}\label{lem:rtorrb}
	Let $v\in V(G_R)$ such that $P$ misses $v$ and there is a path in $G_R$ from $v$ to a vertex in $P$ that misses both $P_1$ and $P_2$. Then $r_v$ is contained in $R_t$ or $R_b$.
\end{lemma}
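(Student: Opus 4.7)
The plan is to show that $r_v$ avoids every curve used in the construction that cuts $R$ into the four pieces, so $r_v$ must sit in one of those pieces, and then to use the path guaranteed by the hypothesis to rule out the two ``outer'' pieces $R_1$ and $R_2$. Write the hypothesised path as $Q=v=w_0,w_1,\ldots,w_k=x$ with $x\in V(P)$. Because $Q$ misses $P_1$ and $P_2$, no vertex of $Q$ is adjacent to any vertex of $V(P_1)\cup V(P_2)$; in particular $v$ has no neighbour in $V(P_1)\cup V(P_2)$. Since $P$ misses $v$ as well, $v$ has no neighbour in $V(P_1)\cup V(P_2)\cup V(P)$.

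First I would argue that $r_v$ does not meet any of the curves $\mathbf{p_1},\mathbf{p_2},\mathbf{p},\mathbf{s_1},\mathbf{s_2},\mathbf{s'_1},\mathbf{s'_2}$. For $\mathbf{p_1},\mathbf{p_2},\mathbf{p}$ this is immediate since, by the definition of a rectilinear curve through a path, each of these curves lies inside the union of rectangles of the corresponding path, and $v$ has no neighbour in any of those paths. For $\mathbf{s_i}$ (and symmetrically $\mathbf{s'_i}$), I would use that $r_v\subseteq R$ forces $r_v$ to be disjoint from $\mathbf{t}$ (and $\mathbf{b}$), hence from $\mathbf{t'}$ (and $\mathbf{b'}$); then Observation~\ref{obs:gap} says that if $r_v$ met $\mathbf{s_i}$ it would contain the point $(x_i,y_i)\in r_{u_i}$, making $v$ adjacent to $u_i\in V(P_i)$, a contradiction.

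Next I would invoke the topology of the construction: the two disjoint simple curves $\mathbf{s_1}\cup\mathbf{p_1}\cup\mathbf{s'_1}$ and $\mathbf{s_2}\cup\mathbf{p_2}\cup\mathbf{s'_2}$ each join a point of $\mathbf{t}$ to a point of $\mathbf{b}$, so removing their union from $R$ partitions $R$ into three open arc-connected pieces, namely $R_1$, $R_2$, and the middle piece $R_t\cup\mathbf{p}\cup R_b$. Since $r_v$ is arc-connected and meets none of the seven separating curves, $r_v$ lies in exactly one of the four open regions $R_1,R_2,R_t,R_b$.

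It remains to exclude $R_1$ and $R_2$, and this is where the path $Q$ is needed. Every vertex $w\in V(Q)$ satisfies the same hypotheses used above \emph{except} possibly about $V(P)$, so the same reasoning shows that $r_w$ avoids all of $\mathbf{p_1},\mathbf{p_2},\mathbf{s_1},\mathbf{s_2},\mathbf{s'_1},\mathbf{s'_2}$ (it may, however, cross $\mathbf{p}$). Consequently each $r_{w_i}$ is contained in one of the three pieces $R_1$, $R_2$, or $R_t\cup\mathbf{p}\cup R_b$; and because consecutive rectangles along $Q$ intersect, all of them lie in the \emph{same} piece. Since $x\in V(P)$ forces $r_x$ to meet $\mathbf{p}$, the common piece is $R_t\cup\mathbf{p}\cup R_b$, so $r_v$ lies in this middle piece. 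Combined with the earlier fact that $r_v$ avoids $\mathbf{p}$, we conclude $r_v\subseteq R_t$ or $r_v\subseteq R_b$. The only delicate point is the planar-topology claim that the separating curves partition $R$ exactly into the four named regions; this is implicit in the construction preceding the lemma and should be cited rather than reproved.
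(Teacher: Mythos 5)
Your proof is correct and follows essentially the same route as the paper's: both first use Observation~\ref{obs:gap} together with the neighbour-disjointness hypotheses to show that $r_v$ (and every rectangle along $Q$) avoids the separating curves, and then use the path $Q$ to a vertex of $P$ to rule out $R_1$ and $R_2$. The only cosmetic difference is in that last step, where the paper traces a rectilinear curve through $Q$ and locates a crossing point on the boundary of $R_1$, whereas you propagate membership in a fixed arc-connected component along the pairwise-intersecting rectangles of $Q$; both hinge on the same fact that no rectangle of $Q$ meets $\mathbf{s_i}\cup\mathbf{p_i}\cup\mathbf{s'_i}$.
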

\begin{proof}
	As $v$ is not adjacent to any vertex in $P$, $P_1$ or $P_2$, the rectangle $r_v$ does not intersect $\mathbf{p}$, $\mathbf{p_1}$ or $\mathbf{p_2}$. Also, as $v\in V(G_R)$, $r_v$ does not intersect $\mathbf{t'}$ or $\mathbf{b'}$. Then, by Observation~\ref{obs:gap}, we can further infer that $r_v$ does not intersect $\mathbf{s_1}$, $\mathbf{s_2}$, $\mathbf{s'_1}$ or $\mathbf{s'_2}$. This means that $r_v$ is contained in one of the regions $R_1$, $R_2$, $R_t$ or $R_b$. Now suppose for the sake of contradiction that $r_v$ is contained in $R_1$. We know from the statement of the lemma that there is at least one path in $G_R$ from $v$ to some vertex in $P$ that misses both $P_1$ and $P_2$. Let $Q$ be such a path of minimum length and let $u$ be the endpoint of $Q$ other than $v$. It is clear that $V(P)\cap V(Q)=\{u\}$. Let $p'$ be a point in $r_u\cap\mathbf{p}$ that is on a stab line (recall from the definition of rectilinear curves through paths that such a point exists). As $u$ has no neighbour on $P_1$ or $P_2$, it can be seen that $p'$ is not an endpoint of $\mathbf{p}$, i.e., $p'$ is an interior point of $\mathbf{p}$. Now consider the rectilinear path through $Q$ from some point in $r_v$ (that is on a stab line) to $p'$. As the point $p'$ is not inside or on the boundary of $R_1$, this rectilinear curve must cross the boundary of $R_1$ at some point $p''$. It is clear that there is a vertex $x$ in $Q$ such that $p''\in r_x$. Since $r_x$ is contained in $R$, we can infer that $p''$ is on $\mathbf{s_1}\cup\mathbf{p_1}\cup\mathbf{s'_1}$ and also that $r_x$ does not intersect $\mathbf{t'}$ or $\mathbf{b'}$. If $p''$ is on $\mathbf{s_1}$ or $\mathbf{s'_1}$, we have by Observation~\ref{obs:gap} that $r_x$ intersects $\mathbf{p_1}$. So we can conclude that in any case, $r_x$ intersects $\mathbf{p_1}$. Since from the definition of $\mathbf{p_1}$, every point of $\mathbf{p_1}$ belongs to the rectangle corresponding to some vertex of $P_1$, this implies that $x$ is adjacent to some vertex of $P_1$. This contradicts the fact that $Q$ misses $P_1$. We can thus conclude that $r_v$ is not contained in $R_1$. Using similar arguments, we can also infer that $r_v$ is not contained in $R_2$. This completes the proof.
\end{proof}

\begin{lemma}\label{lem:winrtorrb}
Let $v,w\in V(G_R)$ such that $r_v$ is contained in $R'\in\{R_t,R_b\}$ and there is a path in $G_R$ between $v$ and $w$ that misses $P_1$, $P_2$ and $P$. Then $r_w$ is contained in $R'$.
\end{lemma}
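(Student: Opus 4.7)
The plan is to prove this by a short induction along the path $Q$ in $G_R$ from $v$ to $w$ that misses $P_1$, $P_2$ and $P$. The main technical step is to show that the rectangle of every vertex on $Q$ lies entirely in exactly one of the four subregions $R_1,R_2,R_t,R_b$, after which an elementary connectedness argument forces consecutive rectangles on $Q$ into the same subregion.

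First I would establish the following \emph{localization claim}: for every vertex $x$ on $Q$, the rectangle $r_x$ is disjoint from every one of the separating curves $\mathbf{p},\mathbf{p_1},\mathbf{p_2},\mathbf{s_1},\mathbf{s_2},\mathbf{s'_1},\mathbf{s'_2}$. Since $x\in V(G_R)$, the rectangle $r_x$ lies in $R$ and hence does not meet the outer boundary of $R$; in particular $r_x\cap\mathbf{t'}=r_x\cap\mathbf{b'}=\emptyset$. Since $Q$ misses $P_1,P_2,P$, the vertex $x$ has no neighbour on any of these paths, so by the definition of the curves $\mathbf{p_1},\mathbf{p_2},\mathbf{p}$ we have $r_x\cap(\mathbf{p_1}\cup\mathbf{p_2}\cup\mathbf{p})=\emptyset$. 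In particular $r_x$ does not contain any of the four points $(x_1,y_1),(x_2,y_2),(x'_1,y'_1),(x'_2,y'_2)$, all of which lie on $\mathbf{p_1}\cup\mathbf{p_2}$. Then Observation~\ref{obs:gap} immediately rules out intersection with $\mathbf{s_1},\mathbf{s_2},\mathbf{s'_1},\mathbf{s'_2}$.

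With the localization claim in hand, $r_x$ is a connected subset of $R$ that avoids every curve used to split $R$ into $R_1,R_2,R_t,R_b$, so it is contained in exactly one of these open regions. Now I would take any edge $xy$ on $Q$: both $r_x$ and $r_y$ avoid all separating curves, so each lies in a single subregion, and since $xy\in E(G)$ we have $r_x\cap r_y\neq\emptyset$. Because $R_1,R_2,R_t,R_b$ are pairwise disjoint open regions, this forces $r_x$ and $r_y$ to lie in the same one. Iterating along the path from $v$ to $w$ and using the hypothesis $r_v\subseteq R'$, I conclude that $r_w\subseteq R'$.

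The argument is essentially a planarity-style connectivity step; the only subtle point will be invoking Observation~\ref{obs:gap} correctly to eliminate intersections with the vertical segments $\mathbf{s_i},\mathbf{s'_i}$, so I would be careful to note that $r_x$ avoids both $\mathbf{t'}$ and $\mathbf{b'}$ (ensured by $x\in V(G_R)$) and also avoids the points $(x_i,y_i),(x'_i,y'_i)$ (ensured by $r_x\cap(\mathbf{p_1}\cup\mathbf{p_2})=\emptyset$) before applying the observation. No new geometric construction is needed beyond what has already been set up for the definition of $\Delta(\mathcal{R},R,P_1,P_2,P)$.
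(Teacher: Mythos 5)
Your proof is correct and follows essentially the same route as the paper: both arguments show that no rectangle of a vertex on the connecting path $Q$ can meet the curves separating the subregions, and then propagate containment along $Q$ by connectedness. The only cosmetic difference is that the paper invokes Lemma~\ref{lem:crossboundary} (restricting attention to the boundary of $R_t$ alone), whereas you re-derive the full localization into one of $R_1,R_2,R_t,R_b$ directly from Observation~\ref{obs:gap}, exactly as is done at the start of the paper's proof of Lemma~\ref{lem:rtorrb}.
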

\begin{proof}
	We shall prove the statement of the lemma only for the case $R'=R_t$ as the proof for the case $R'=R_b$ is similar.
	Let $Q$ be the path between $v$ and $w$ in $G_R$ that misses $P_1$, $P_2$ and $P$. Let $x$ be any vertex on $Q$. Clearly, $x$ has no neighbour on $P_1$, $P_2$ or $P$. As $x\in V(G_R)$, the rectangle $r_x$ is contained in $R$, implying that $r_x$ does not intersect the boundary of $R$. As we have $\mathbf{top}(R_t)\subseteq\mathbf{top}(R)$ by Observation~\ref{obs:hitsP}\ref{it:tt'bb'}, this means that $r_x$ does not intersect $\mathbf{top}(R_t)$. By Lemma~\ref{lem:crossboundary}, we now have that $r_x$ does not intersect the boundary of $R_t$. Therefore, no rectangle corresponding to a vertex in $Q$ can intersect the boundary of $R_t$. Since $r_v$ is contained in $R_t$, this means that the rectangle corresponding to each vertex of $Q$, and hence $r_w$, is contained in $R_t$.
\end{proof}

We shall use the technical details about good regions and rectilinear curves only for the proof of Theorem~\ref{thm:treenotsuff}. We now give a lemma that shall be sufficient for most of the other proofs. Given a graph $G$ and a representation $\mathcal{R}$ of $G$, we shall define $\mathcal{L}_{\mathcal{R}}(H)$, for any connected induced subgraph $H$ of $G$, to be the set of stab lines of $\mathcal{R}$ that intersect the rectangle corresponding to some vertex in $V(H)$. Note that $\mathcal{L}_{\mathcal{R}}(H)$ will contain a consecutive set of stab lines of $\mathcal{R}$.

\begin{lemma}\label{lem:useful}
Let $G$ be a connected $k$-SRIG and $\mathcal{R}$ a $k$-stabbed rectangle intersection representation of it. Let $H_1$ and $H_2$ be two neighbour-disjoint connected induced subgraphs of $G$ such that $\mathcal{L}_{\mathcal{R}}(H_1)=\mathcal{L}_{\mathcal{R}}(H_2)=\mathcal{L}_{\mathcal{R}}(G)=k$. Let $P$ be an induced path in $G$ between some vertex in $V(H_1)$ and some vertex in $V(H_2)$ such that no internal vertex of $P$ is in $V(H_1)$ or $V(H_2)$. Let $H$ be a connected induced subgraph of $G$ that is neighbour-disjoint from $H_1$, $H_2$ and $P$ such that there is a vertex in $H$ from which there is a path to a vertex of $P$ that misses both $H_1$ and $H_2$. Then, $\mathcal{L}_{\mathcal{R}}(H)\subset\mathcal{L}_{\mathcal{R}}(G)$.
\end{lemma}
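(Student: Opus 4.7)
The plan is to rephrase the hypothesis in the language of good regions so that the $\Delta$-construction machinery applies. First, let $R$ be a sufficiently large axis-parallel rectangle containing every rectangle of $\mathcal{R}$ in its interior, with its top side above all stab lines and its bottom side below all stab lines; its boundary consists of four straight sides, so $R$ is trivially good, $G_R=G$, and $\mathcal{L}_{\mathcal{R}}(R)=\mathcal{L}_{\mathcal{R}}(G)$. Because $H_1$ and $H_2$ are connected and each meets every stab line, I would choose minimal $\mathcal{L}_{\mathcal{R}}(R)$-spanning paths $P_1\subseteq H_1$ and $P_2\subseteq H_2$, which are neighbour-disjoint since $H_1$ and $H_2$ are. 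From the given induced path $P$ between $u_1\in V(H_1)$ and $u_2\in V(H_2)$, I would form a walk by prepending a shortest path inside $H_1$ from $u_1$ to $V(P_1)$ and appending a shortest path inside $H_2$ from $u_2$ to $V(P_2)$; letting $S$ be the vertex set of this walk and taking a shortest path in $G[S]$ between the two $P_i$-endpoints yields an induced path $P'$ in $G$ from a vertex of $P_1$ to a vertex of $P_2$ whose internal vertices are disjoint from $V(P_1)\cup V(P_2)$. Crucially, $V(P')\subseteq S\subseteq V(H_1)\cup V(P)\cup V(H_2)$, so $H$ remains neighbour-disjoint from $P'$.

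With the tuple $(R,P_1,P_2,P')$ in hand I would invoke $\Delta(\mathcal{R},R,P_1,P_2,P')$ to obtain regions $R_t$ and $R_b$. By Lemma~\ref{lem:bottomstab}\ref{it:decreasestab} we have $|\mathcal{L}_{\mathcal{R}}(R_t)|<k$ and $|\mathcal{L}_{\mathcal{R}}(R_b)|<k$. Let $v\in V(H)$ be the vertex supplied by the hypothesis and let $Q$ be the path from $v$ to some vertex of $P$ that misses $H_1$ and $H_2$; since $V(P)\subseteq V(P')$ and missing $H_i$ implies missing $P_i$, and since $H$ is neighbour-disjoint from $H_1\cup V(P)\cup H_2\supseteq V(P')$ so that $P'$ itself misses $v$, the hypotheses of Lemma~\ref{lem:rtorrb} are met and $r_v\subseteq R'$ for some $R'\in\{R_t,R_b\}$. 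For any other $w\in V(H)$, the connectedness of $H$ provides a path inside $H$ from $v$ to $w$; because $H$ is neighbour-disjoint from $P_1$, $P_2$ and $P'$, this path misses all three, so Lemma~\ref{lem:winrtorrb} places $r_w$ in the same region $R'$. Thus every rectangle corresponding to a vertex of $H$ lies in $R'$, giving $\mathcal{L}_{\mathcal{R}}(H)\subseteq\mathcal{L}_{\mathcal{R}}(R')\subsetneq\mathcal{L}_{\mathcal{R}}(G)$, as required.

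The main technical obstacle I anticipate is the construction of $P'$: it must be induced, its endpoints must lie in $V(P_1)$ and $V(P_2)$ with no internal vertices in either, and it must satisfy $V(P')\subseteq V(H_1)\cup V(P)\cup V(H_2)$ so that all of the neighbour-disjointness properties used when verifying the hypotheses of Lemmas~\ref{lem:rtorrb} and~\ref{lem:winrtorrb} survive. Choosing the extensions inside $H_1$ and $H_2$ to be shortest paths ensures they meet $V(P_1)$ and $V(P_2)$ only at their endpoints, and passing to a shortest path in the induced subgraph on the resulting vertex set $S$ makes $P'$ chord-free in $G$ without letting any internal vertex slip back into $V(P_1)\cup V(P_2)$.
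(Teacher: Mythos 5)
Your overall strategy is exactly the paper's: wrap the whole representation in one big good region, pick minimal spanning paths $P_1\subseteq H_1$, $P_2\subseteq H_2$, extend $P$ to a connecting induced path $P'$, apply the $\Delta$-construction, and then push $H$ into one of $R_t,R_b$ via Lemmas~\ref{lem:rtorrb} and~\ref{lem:winrtorrb}. However, there are two concrete gaps. First, your outer region $R$ is not a valid good region under the paper's definitions: a ``region'' must be bounded by rectilinear curves, whose horizontal segments are required to lie \emph{on stab lines}, and the supporting machinery (Observations~\ref{obs:ltlb} and~\ref{obs:topsi}, hence Lemma~\ref{lem:bottomstab} and Observation~\ref{obs:gap}) presupposes that there exist stab lines just above and just below those in $\mathcal{L}_{\mathcal{R}}(R)$, on which the top endpoints of $\mathbf{s_1},\mathbf{s_2}$ and the bottom endpoints of $\mathbf{s'_1},\mathbf{s'_2}$ must lie. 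Since your $\mathcal{L}_{\mathcal{R}}(R)$ already contains \emph{all} $k$ stab lines, no such neighbouring stab lines exist and your rectangle's top and bottom sides are horizontal segments off every stab line. The paper resolves this by first augmenting $\mathcal{R}$ with two auxiliary stab lines (one above the top, one below the bottom, touching no rectangle) and taking $R$ with its horizontal edges on these new lines; you need this step, or a re-proof of the cited observations for your degenerate boundary.

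Second, and more substantively, your construction of $P'$ does not deliver the containment $V(P)\subseteq V(P')$ that you invoke. A shortest path in $G[S]$ between the two $P_i$-endpoints may shortcut past most of $P$: internal vertices of $P$ are merely required not to lie \emph{in} $H_1$ or $H_2$, but they may be adjacent to vertices of your extensions $A_1\subseteq H_1$ or $A_2\subseteq H_2$, so the shortest path can jump from $A_1$ directly into the middle of $P$. If the vertex of $P$ at which the hypothesized path $Q$ from $v\in V(H)$ terminates is one of the skipped vertices, then $Q$ does not end at a vertex of $P'$, and the hypothesis of Lemma~\ref{lem:rtorrb} (whose proof needs the terminal rectangle of $Q$ to meet the curve $\mathbf{p}$ through $P'$) is not met; extending $Q$ along $P$ to reach $P'$ is not obviously possible while continuing to miss $P_1$ and $P_2$. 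The paper avoids this by insisting that $P'$ be an induced path in $G[V(H_1)\cup V(P)\cup V(H_2)]$ that \emph{contains $P$ as a subpath}; you should construct $P'$ with that property (e.g., by extending $u_1$ and $u_2$ into $H_1$ and $H_2$ carefully rather than re-shortest-pathing over all of $S$), after which the rest of your argument goes through as in the paper.
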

\begin{proof}
We shall augment $\mathcal{R}$ to a new representation $\mathcal{R}'$ by adding two new stab lines, one above the top stab line and the other below the bottom stab line of $\mathcal{R}$. Notice that for any connected induced subgraph $G'$ of $G$, we have $\mathcal{L}_{\mathcal{R}'}(G')=\mathcal{L}_{\mathcal{R}}(G')$. Let $A$ be a good region that contains all the rectangles of $\mathcal{R}'$, i.e., $G_A=G$ (note that such a region exists; we can consider a rectangle with top and bottom edges on the top and bottom stab lines such that it contains all the rectangles of $\mathcal{R}'$). As the only two stab lines that are not intersected by any rectangle in $\mathcal{R}'$ are the top and bottom stab lines (recall that $\mathcal{L}_{\mathcal{R}}(G)$ contains all the stab lines of $\mathcal{R}$), it follows that $\mathcal{L}_{\mathcal{R}'}(A)=\mathcal{L}_{\mathcal{R}'}(G)$. It is clear that for any induced subgraph $G'$ of $G$, $\mathcal{L}_{\mathcal{R}'}(G')=\mathcal{L}_{\mathcal{R}}(G')$. Therefore, we have $\mathcal{L}_{\mathcal{R}'}(H_1)=\mathcal{L}_{\mathcal{R}'}(H_2)=\mathcal{L}_{\mathcal{R}'}(G)$, which implies that there are $\mathcal{L}_{\mathcal{R}'}(A)$-spanning paths in each of them. Let $P_1$ and $P_2$ be minimal $\mathcal{L}_{\mathcal{R}'}(A)$-spanning paths in $H_1$ and $H_2$ respectively. As $H_1$ and $H_2$ are neighbour-disjoint, $P_1$ and $P_2$ are neighbour-disjoint. It is not hard to see that there exists an induced path $P'$ in $G[V(H_1)\cup V(P)\cup V(H_2)]$ that contains $P$ as a subpath, such that $P'$ connects some vertex of $P_1$ to some vertex of $P_2$ and no internal vertex of $P'$ belongs to either $P_1$ or $P_2$. Let $(A_t,A_b)=\Delta(\mathcal{R}',A,P_1,P_2,P')$.
	
We know that there exists a vertex, say $v$, in $H$ such that there is a path from $v$ to a vertex of $P$ that misses both $H_1$ and $H_2$. Clearly, this is also a path from $v$ to a vertex in $P'$ that misses both $P_1$ and $P_2$. As $H$ is neighbour-disjoint from $P$, we know that $P$ misses $v$. By Lemma~\ref{lem:rtorrb}, we know that $r_v$ is contained in $A_t$ or $A_b$. Let us assume without loss of generality that $r_v$ is contained in $A_t$. Since $H$ is a connected induced subgraph of $G$ that is neighbour-disjoint from $H_1$, $H_2$ and $P$, we know that there is a path from $v$ to each vertex of $H$ that misses $H_1$, $H_2$ and $P$. This means that there is a path from $v$ to each vertex of $H$ that misses $P_1$, $P_2$ and $P'$. Now, we can use Lemma~\ref{lem:winrtorrb} to conclude that the rectangles corresponding to the vertices of $H$ are all contained in $A_t$. Since by Lemma~\ref{lem:bottomstab}\ref{it:decreasestab}, we know that $\mathcal{L}_{\mathcal{R}'}(A_t)\subset\mathcal{L}_{\mathcal{R}'}(A)$, we can now conclude that $\mathcal{L}_{\mathcal{R}'}(H)\subset\mathcal{L}_{\mathcal{R}'}(G)$, and therefore $\mathcal{L}_{\mathcal{R}}(H)\subset\mathcal{L}_{\mathcal{R}}(G)$.
\end{proof}

\medskip

\noindent\textbf{\textit{Proof of Theorem~\ref{thm:blocknot3sig}.}}
\medskip

Let $T$ be the block graph obtained by taking a copy of the tree $G_2$ (defined in Section~\ref{sec:gl}) and then introducing a true twin for one of the leaves. Let $w,w'$ be the two true twins in $T$, $v$ be their common neighbour and $u$ the degree 3 vertex adjacent to $v$. See Figure~\ref{fig:drawtree}(a) for a drawing of $T$. Notice that the graph $G_2$ is non-interval (folklore, or by Lemma~\ref{lem:construct_G_l}\ref{it:not_l-1-SIG}).
\medskip


Let $T_1$ and $T_2$ be trees each isomorphic to $G_2$. Let $H$ be the graph obtained by taking the disjoint union of $T_1$, $T_2$ and $T$ and then doing the following: introduce a new vertex $a$, connect $a$ to a leaf $T_1$ and to a leaf of $T_2$ using paths of length 2 and then make $a$ adjacent to $w$ (see Figure~\ref{fig:drawtree}(b)).
\medskip

\begin{figure}
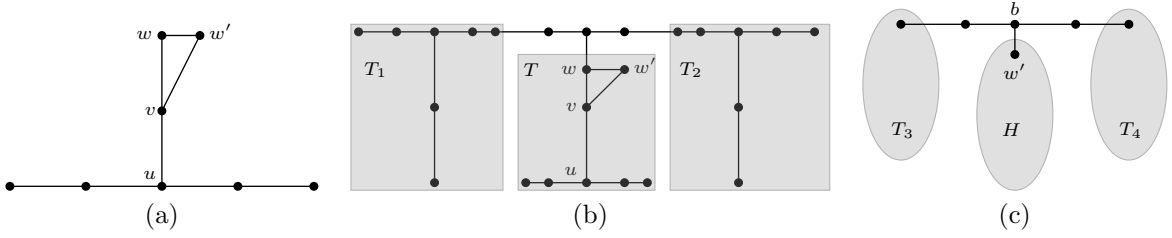

	\centering
	\begin{tabular}{ccc}
		\includegraphics[page=14]{figures.pdf}&\includegraphics[page=15]{figures.pdf}&\includegraphics[page=20]{figures.pdf}\\
		(a)&(b)&(c)
	\end{tabular}
	\caption{Construction of block graph for Proof of Theorem~\ref{thm:blocknot3sig}. (a) Construction of $T$. (b) Construction of $H$. $T_1$ and $T_2$ are isomorphic to $G_2$. (c) Construction of $G$. $T_3$ and $T_4$ are isomorphic to $G_3$ and $H$ is the block graph shown in (b). } 
	\label{fig:drawtree}
\end{figure}

\noindent\textit{Claim 1. $H$ is non-(2-SRIG).}
\smallskip

\noindent\textit{Proof.}
Note that $T-\{w\}$ is isomorphic to $G_2$, and hence is non-interval. As $T_1$, $T_2$, $T-\{w\}$ are asteroidal-(non-interval) in $H$, by Theorem~\ref{thm:asteroidal_k-1_sig_free}, we have that $H$ is non-(2-SRIG).

\medskip

It is easy to see that $H-\{w'\}$ is asteroidal-(non-interval)-free. Hence, by Theorem~\ref{thm:block2sig}, we have that $H-\{w'\}$ is 2-SRIG.
\medskip

\noindent\textit{Claim 2. The vertices $w$ and $v$ do not have a common stab in any 2-stabbed rectangle intersection representation of $H-\{w'\}$.}
\smallskip

\noindent\textit{Proof.}
Let $H'=H-\{w'\}$. Let $\mathcal{R}$ be any 2-stabbed rectangle intersection representation of $H'$. Since $T_1$ and $T_2$ are neighbour-disjoint connected induced subgraphs of $H'$ that are non-interval, we have that $|\mathcal{L}_{\mathcal{R}}(T_1)|=|\mathcal{L}_{\mathcal{R}}(T_2)|=2$. Let $P$ be the (induced) path between $T_1$ and $T_2$ in $H'$. Notice that $T-\{w,w'\}$ is a connected induced subgraph of $H'$ that is neighbour-disjoint from $T_1$, $T_2$ and $P$. Moreover, there is a path from the vertex $v$ of $T-\{w,w'\}$ to the vertex $a$ of $P$ that misses $T_1$ and $T_2$. We can now use Lemma~\ref{lem:useful} to conclude that $|\mathcal{L}_{\mathcal{R}}(T-\{w,w'\})|=1$. Let $\mathcal{L}_{\mathcal{R}}(T-\{w,w'\})=\{\ell\}$. It is clear that for each vertex of $T-\{w,w'\}$, and hence also for $v$, the only stab line that intersects the rectangle corresponding to it is $\ell$. If $r_w$ also intersects $\ell$, then the collection $\{\ell\cap r_x\}_{x\in V(T-\{w'\})}$ would form an interval representation of $G_2$, which contradicts the fact that $G_2$ is non-interval. This completes the proof of the claim.

\medskip

We shall now construct the desired block graph $G$ that satisfies the requirements in the statement of Theorem~\ref{thm:blocknot3sig}. Let $T_3$ and $T_4$ be trees that are isomorphic to $G_3$ (defined in Section~\ref{sec:gl}). Let $G'$ be the graph formed by taking the disjoint union of $T_3$ and $T_4$ and then doing the following: add a new vertex $b$ and connect it to a vertex of $T_3$ using a path of length 2 and a vertex of $T_4$ using a path of length 2. The graph $G$ is constructed by taking the disjoint union of $H$ and $G'$ and then adding an edge between $b$ and $w'$ (see Figure~\ref{fig:drawtree}(c) for a schematic diagram of $G$).
\medskip

\noindent\textit{Claim 3. $G$ is not 3-SRIG.}
\smallskip

\noindent\textit{Proof.} Suppose for the sake of contradiction that $G$ is 3-SRIG. Let $\mathcal{R}$ be a 3-stabbed rectangle intersection representation of $G$. Since $T_3$ and $T_4$ are neighbour-disjoint connected induced subgraphs of $G$ that are non-(2-SRIG) (recall that $T_3$ and $T_4$ are isomorphic to $G_3$ and that $G_3$ is non-(2-SRIG) by Lemma~\ref{lem:construct_G_l}\ref{it:not_l-1-SIG}), we have that $|\mathcal{L}_{\mathcal{R}}(T_3)|=|\mathcal{L}_{\mathcal{R}}(T_4)|=3$. Let $P$ be the path between $T_3$ and $T_4$ in $G$. Notice that $H-\{w'\}$ is a connected induced subgraph of $G$ that is neighbour-disjoint from $T_3$, $T_4$ and $P$. Moreover, there is a path from the vertex $w$ of $H-\{w'\}$ to the vertex $b$ of $P$ that misses $T_3$ and $T_4$. We can now use Lemma~\ref{lem:useful} to conclude that $|\mathcal{L}_{\mathcal{R}}(H-\{w'\})|=2$.
This means that in $\mathcal{R}$, the rectangles corresponding to $H-\{w'\}$ form a 2-stabbed rectangle intersection representation of $H-\{w'\}$. Then, by Claim 2, we know that neither of the two stab lines in $\mathcal{L}_{\mathcal{R}}(H-\{w'\})$ intersects both $r_w$ and $r_v$. Since $w'$ is adjacent to both $w$ and $v$, this implies that $r_{w'}$ intersects at least one of the two stab lines in $\mathcal{L}_{\mathcal{R}}(H-\{w'\})$. But then, the rectangles corresponding to the vertices of $H$, together with the stab lines in $\mathcal{L}_{\mathcal{R}}(H-\{w'\})$, form a 2-stabbed rectangle intersection representation of $H$. This contradicts Claim 1.
\medskip

To complete the proof of the theorem, we only need to show that $G$ is asteroidal-(non-2-SRIG)-free. Suppose for the sake of contradiction that there exist induced subgraphs $X_1,X_2,X_3$ that are asteroidal-(non-2-SRIG) in $G$. First we need the following claim, whose proof is left to the reader.

\begin{claim}
	In any block graph that contains three induced subgraphs that are asteroidal-$\mathcal{C}$ in it, for some graph class $\mathcal{C}$, there exists either a cutvertex that has no neighbour in each of the three subgraphs, or a triangle, whose removal results in a graph in which each of the three subgraphs is in a different component.
\end{claim}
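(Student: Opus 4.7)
My plan is to pass to the block-tree $T_G$ and locate a ``median'' associated with the three asteroidal subgraphs. For each $i \in \{1,2,3\}$, I associate with $X_i$ the subtree $T_i$ of $T_G$ consisting of the block-vertices $B \in \mathcal{B}(G)$ with $B \cap V(X_i) \neq \emptyset$ together with the cut-vertex vertices lying in $V(X_i)$. Since $X_i$ is connected, $T_i$ is a connected subtree of $T_G$. Pairwise neighbour-disjointness of the $X_i$'s forces $T_1, T_2, T_3$ to be pairwise vertex-disjoint in $T_G$: a shared block-vertex would contain vertices of two distinct $X_i$'s and hence force an edge between them (as blocks of $G$ are cliques), and a shared cut-vertex vertex would be a common vertex of two distinct $X_i$'s.

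Next I would contract each $T_i$ to a single super-vertex $t_i$ and let $m'$ be the median of $\{t_1, t_2, t_3\}$ in the resulting tree. If $m' = t_i$ for some $i$, then every path in $T_G$ between $T_j$ and $T_k$ (with $\{i,j,k\} = \{1,2,3\}$) would pass through $T_i$, and I claim this contradicts asteroidality. Indeed, any path in $G$ from $X_j$ to $X_k$ induces a walk in $T_G$ through some sequence of blocks and cut-vertices; forcing this walk to touch $T_i$ means the path in $G$ contains either a cut-vertex that lies in $V(X_i)$ or a vertex of some block $B' \in T_i$, which, since $B' \cap V(X_i) \neq \emptyset$ and $B'$ is a clique, is either in $X_i$ or adjacent to $X_i$. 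Either way the path fails to miss $X_i$. Hence $m'$ corresponds to an actual vertex $m \in V(T_G)$ that lies in no $T_i$, and $T_G - m$ has the three $T_i$ in three distinct components.

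It then remains to case-split on the type of $m$. If $m$ is a cut-vertex $c$ of $G$, then $c \notin V(X_1) \cup V(X_2) \cup V(X_3)$, and removing $c$ from $G$ separates the three $X_i$ into different components (cut-vertex removal in $G$ mirrors vertex removal in the block-tree). Moreover, asteroidality forbids $c$ from having any neighbour in any $X_i$: since every $X_j$--$X_k$ path in $G$ must pass through $c$, a neighbour of $c$ in some $X_i$ would prevent any such path from missing $X_i$. If instead $m$ is a block-vertex $B$, then $B \cap V(X_i) = \emptyset$ for each $i$; let $c_1, c_2, c_3 \in B$ be the distinct cut-vertices of $G$ sitting on the three Steiner-tree edges out of $B$, one directed towards each $T_i$. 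Since $B$ is a clique, $\{c_1, c_2, c_3\}$ is a triangle of $G$; and any $X_j$--$X_k$ path in $G$ must traverse $B$, entering through $c_j$ and leaving through $c_k$, so removing the triangle separates the three $X_i$ into different components. The main subtlety I expect is the translation of the ``median lies inside $T_i$'' case into the asteroidal contradiction, which relies crucially on the clique structure of blocks in the intervening-block-vertex sub-case.
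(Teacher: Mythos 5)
The paper does not actually supply a proof of this claim --- it is stated with ``whose proof is left to the reader'' --- so there is no argument of the authors' to compare yours against. Your block-tree/median argument is correct and complete: the subtrees $T_i$ are well-defined, connected, and pairwise disjoint (indeed at distance at least $2$ in $T_G$) precisely because blocks are cliques and the $X_i$ are neighbour-disjoint; the case $m'=t_i$ is correctly excluded because any $X_j$--$X_k$ path in $G$ induces a walk in $T_G$ that must traverse the $T_G$-path between its first and last blocks, hence hits $T_i$, hence fails to miss $X_i$; and both the cut-vertex and block-vertex cases for $m$ yield exactly the separator demanded by the claim (in the block case the three cut-vertices $c_1,c_2,c_3$ are distinct because $T_1,T_2,T_3$ lie in distinct components of $T_G-B$, they form a triangle since $B$ is a clique, and they avoid every $X_i$ since $B\cap V(X_i)=\emptyset$). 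This is a sound and reasonably economical way to discharge the step the paper omits.
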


From the above claim, we have that either there exists a vertex $x\in V(G)$ such that $G-\{x\}$ has three components $X'_1,X'_2,X'_3$ such that for each $i\in\{1,2,3\}$, $V(X_i)\subseteq V(X'_i)\setminus N[x]$, or $X_1,X_2,X_3$ are each contained in a different component of $G-\{w,w',v\}$ (since the only triangle in $G$ is formed by $w$, $w'$ and $v$). Let us first suppose that $X_1,X_2,X_3$ are each contained in a different component of the three components in $G-\{w,w',v\}$. It is easy to see that the component of $G-\{w,w',v\}$ that contains a neighbour of $v$ is a path and is therefore 1-SRIG, contradicting the fact that it contains one of the non-(2-SRIG) graphs $X_1,X_2,X_3$. So we can assume that there exists a vertex $x\in V(G)$ such that $G-x$ has three components $X'_1,X'_2,X'_3$ such that for each $i\in\{1,2,3\}$, $V(X_i)\subseteq V(X'_i)\setminus N[x]$. Note that since $G-\{x\}$ contains at least three components, degree of $x$ is at least 3 and $x\notin\{w,w',v\}$.

Let us first suppose that $x\in V(G')$. If $x=b$, one of the three components of $G-\{x\}$, say $X'_1$, is $H$. But now, $V(X'_1)\setminus N[x]=H-\{w'\}$, which is 2-SRIG by our earlier observation. This contradicts the fact that $V(X_1)\subseteq V(X'_1)\setminus N[x]$ as $X_1$ is non-(2-SRIG). If $x\neq b$, then $x\in V(T_3)$ or $x\in V(T_4)$. Suppose that $x\in V(T_3)$. As $G'$ is a tree, we know that $G'-\{x\}$ contains at least three components. Also, as $G'-V(T_3)$ has only one component, we can use Observation~\ref{obs:subtree}\ref{it:vertex} to conclude that all components of $G'-\{x\}$ except the component $Y$ that contains $b$ are proper subtrees of $T_3$. Since the only edge between $V(G)\setminus V(G')$ and $V(G')$ is $w'b$, we can see that every component of $G'-\{x\}$ other than $Y$ is also a component of $G-\{x\}$. This means that at least two components, say $X'_1,X'_2$, of $G-\{x\}$ are also components of $G'-\{x\}$. Since $V(X_1)\subseteq V(X'_1)$ and $V(X_2)\subseteq V(X'_2)$, we have that $X'_1$ and $X'_2$ are non-(2-SRIG) neighbour-disjoint induced subgraphs of $T_3$. As $T_3$ is isomorphic to $G_3$, this is a contradiction to Lemma~\ref{lem:construct_G_l}\ref{it:disjoint_nonexist}. For the same reason, we can also conclude that $x\notin V(T_4)$. This means that $x\in V(H)$.

But if $x\in V(H)$, then since $x\notin\{w,w',v\}$, it is clear from the construction of $G$ that at least one of the components, say $X'_1$, of $G-\{x\}$ is an induced subgraph of $H-\{w'\}$. As $H-\{w'\}$ is 2-SRIG by our earlier observation, this means that $X'_1$ is 2-SRIG, which contradicts the fact that it contains the non-(2-SRIG) graph $X_1$ as an induced subgraph. This shows that $G$ is asteroidal non-(2-SRIG)-free and hence completes the proof. \hfill\qed
\medskip

We shall now prove a general theorem that will later be used to prove Theorem~\ref{thm:treenotsuff}.

\begin{theorem}\label{thm:construct_not_k-SIG}
Let $k\geq 4$. For each $i\in\{k,k-1,k-2\}$, let $T_i,T'_i$ be two graphs that are $i$-SRIG but not $(i-1)$-SRIG and let $a_i\in V(T_i)$ and $a'_i\in V(T'_i)$. For $i\in\{k,k-1,k-2\}$, let $H_i$ be the graph obtained by adding a new vertex $b_i$ to the disjoint union of $T_i$ and $T'_i$ and connecting it to $a_i$ and $a'_i$ using paths of length at least two. Let $T$ be the graph obtained by adding a new vertex $c$ to the disjoint union of $H_k$, $H_{k-1}$ and $H_{k-2}$ and then connecting $c$ to each of $b_k$, $b_{k-1}$ and $b_{k-2}$ using paths of length at least two. Then $T$ is not $k$-SRIG.
\end{theorem}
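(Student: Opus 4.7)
The plan is to assume for contradiction that $T$ is $k$-SRIG and to fix a $k$-stabbed rectangle intersection representation $\mathcal{R}$ of $T$ that uses all $k$ stab lines. For each $i \in \{k, k-1, k-2\}$, let $P_i$ be the induced path from $a_i$ to $a'_i$ through $b_i$ inside $H_i$, let $Q_i$ be the path from $b_i$ to $c$ (of length at least $2$), and let $q_{i, 1}$ be the unique neighbour of $b_i$ on $Q_i$. I will peel off the three arms $H_k, H_{k-1}, H_{k-2}$ by three successive applications of Lemma~\ref{lem:useful}, each stripping one stab line away from the subgraph under consideration.

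In the first application I take $G = T$, $H_1 = T_k$, $H_2 = T'_k$, $P = P_k$, and $H = U_1 := T[V(T) \setminus V(H_k) \setminus \{q_{k, 1}\}]$: the prerequisite that $T_k$ and $T'_k$ each use all $k$ stab lines is forced by their being non-$(k-1)$-SRIG, and one easily checks that $U_1$ is connected, neighbour-disjoint from $T_k$, $T'_k$ and $P_k$, and reaches $P_k$ through $Q_k$ while missing $T_k$ and $T'_k$. The lemma yields $|\mathcal{L}_{\mathcal{R}}(U_1)| \leq k-1$, which is tight because $U_1 \supseteq T_{k-1}$ is non-$(k-2)$-SRIG; consequently $T_{k-1}$ and $T'_{k-1}$ both use all $k-1$ stab lines of the restricted representation of $U_1$. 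A second application with $H_1 = T_{k-1}$, $H_2 = T'_{k-1}$, $P = P_{k-1}$, and $H = U_2 := U_1 - V(H_{k-1}) - \{q_{k-1, 1}\}$ then gives $|\mathcal{L}_{\mathcal{R}}(U_2)| \leq k-2$, and a third analogous application using $T_{k-2}, T'_{k-2}, P_{k-2}$ produces $U_3$ with $|\mathcal{L}_{\mathcal{R}}(U_3)| \leq k-3$. In every case $U_3$ contains the central vertex $c$ together with the tails of the three paths $Q_i$ that survive after $b_i$ and $q_{i, 1}$ are removed.

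The final contradiction is drawn from the nested good regions $R^{(1)} \supseteq R^{(2)} \supseteq R^{(3)}$ produced by the three iterations (in the sense of the good-region machinery in the proof of Lemma~\ref{lem:useful}): the rectangle $r_c$ must lie inside $R^{(3)}$ and hence meet at most $k-3$ stab lines, yet $c$ is joined, through the three paths $Q_i$, to vertices of the arms $H_k, H_{k-1}, H_{k-2}$, whose rectangles collectively span stab lines strictly outside these nested regions. The three rectilinear curves through the $Q_i$'s from $r_c$ to the $r_{b_i}$'s must therefore each exit all of $R^{(3)}, R^{(2)}, R^{(1)}$, and a careful accounting of how they can cross the good-region boundaries (each bounded by curves through the corresponding $T_i, T'_i, P_i$) yields the contradiction. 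The main obstacle is carrying out this final geometric argument precisely, especially when each $Q_i$ has the minimum allowed length $2$ so that $U_3$ degenerates to $\{c\}$ and one must reason directly about how $r_c$ attaches to the arms via the deleted vertices $q_{i, 1}$.
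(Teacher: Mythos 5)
Your setup is the same as the paper's: three successive splits of a good region by minimal spanning paths in $T_i,T'_i$ together with the connecting path through $b_i$, with $r_c$ trapped in a nested sequence of regions. The stab-line counting you extract from Lemma~\ref{lem:useful} is also correct as far as it goes ($|\mathcal{L}_{\mathcal{R}}(U_1)|\leq k-1$, $|\mathcal{L}_{\mathcal{R}}(U_2)|\leq k-2$, $|\mathcal{L}_{\mathcal{R}}(U_3)|\leq k-3$). But this counting cannot produce the contradiction: $U_3$ is a spider centred at $c$ (possibly just $\{c\}$ when the connecting paths have length $2$), hence an interval graph, and it happily fits on $k-3\geq 1$ stab lines. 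The entire content of the theorem lives in the ``careful accounting'' you defer to the end, and that step is missing, not merely unpolished.

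What the paper actually does there is a direction-forcing argument, carried out with the $\Delta(\cdot)$ machinery directly rather than through the packaged Lemma~\ref{lem:useful} (which discards the region data you need). Writing $(A_t,A_b)=\Delta(\mathcal{R},A,X_1,X_2,X)$ for the first split (with $X\subseteq H_k$ containing $b_k$), one may assume $r_c\subseteq A_t=:B$. At the second split $(B_t,B_b)=\Delta(\mathcal{R},B,Y_1,Y_2,Y)$, the case $r_c\subseteq B_t$ is \emph{ruled out}: the path from $c$ to $b_k$ misses $Y_1,Y_2,Y$, and since $r_{b_k}$ meets $\mathbf{bottom}(B)$ it has points outside $B_t$, so some rectangle on that path crosses the boundary of $B_t$; by Lemma~\ref{lem:crossboundary} it must then meet $\mathbf{top}(B_t)\subseteq\mathbf{top}(A)$, which is impossible. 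Hence $r_c\subseteq B_b=:C$. The same argument at the third split rules out $C_t$ (a crossing of $\mathbf{top}(C_t)\subseteq\mathbf{top}(B_b)=\mathbf{bottom}(B_t)$ would force a neighbour on $Y$ by Observation~\ref{obs:hitsP}). Finally, with $r_c\subseteq C_b$, the path from $c$ to $b_{k-1}$ must cross $\mathbf{bottom}(C_b)\subseteq\mathbf{bottom}(A_t)$, forcing a neighbour on $X$ --- the contradiction. So the mechanism is not that the three curves ``exit all of $R^{(3)},R^{(2)},R^{(1)}$'' in some symmetric way, but that the admissible exit direction alternates and the third arm has nowhere left to go; this is also precisely why three arms at three different depths $k,k-1,k-2$ are needed. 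Without supplying this alternation argument (or an equivalent one), the proposal does not prove the theorem.
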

\begin{proof}
Suppose for the sake of contradiction that $T$ is $k$-SRIG. Let $\mathcal{R}$ be a $(k+2)$-stabbed rectangle intersection representation of $T$ in which the top and bottom stab lines do not intersect any rectangle. Let $A$ be a good region that contains all the rectangles of $\mathcal{R}$, i.e., $T_A=T$ (note that such a region exists; we can consider a rectangle with top and bottom edges on the top and bottom stab lines such that it contains all the rectangles of $\mathcal{R}$). As the only two stab lines that are not intersected by any rectangle in $\mathcal{R}$ are the top and bottom stab lines (recall that $T_A$ is not $(k-1)$-SRIG as it contains $T_k$ and $T'_k$), it follows that $|\mathcal{L}_{\mathcal{R}}(A)|=k$. As $T_k$ and $T'_k$ are $k$-SRIG but not $(k-1)$-SRIG, we know that there are $\mathcal{L}_{\mathcal{R}}(A)$-spanning paths in each of them. Let $X_1$ and $X_2$ be minimal $\mathcal{L}_{\mathcal{R}}(A)$-spanning paths in $T_k$ and $T'_k$ respectively. It is easy to see that $X_1$ and $X_2$ are neighbour-disjoint. Let $X$ be an induced path in $T_A$ that connects some vertex of $X_1$ and some vertex of $X_2$ such that no internal vertex of $X$ belongs to either $X_1$ or $X_2$. Note that $X$ is a subgraph of $H_k$ that contains $b_k$. Let $(A_t,A_b)=\Delta(\mathcal{R},A,X_1,X_2,X)$.
	
	(+) By Observation~\ref{obs:hitsP}\ref{it:xhitsP}, if for $x\in V(T)$, the rectangle $r_x$ intersects $\mathbf{bottom}(A_t)$, then $x$ has a neighbour on $X$.
	
	Since there is a path in $T_A$ from $c\in V(T_A)$ to a vertex in $X$ (in this case, $b_k$) that misses both $X_1$ and $X_2$, we know by Lemma~\ref{lem:rtorrb} that $r_c$ is contained in $A_t$ or $A_b$. We shall assume without loss of generality that $r_c$ is contained in $A_t$ (see Figure~\ref{fig:recursiveregion}(a)). Let us define $B=A_t$. Let $T^*$ be the graph obtained by removing the vertices in $V(H_k)$ and their neighbours from $T$, or in other words, $T^*=T-(V(H_k)\cup N[b_k])$. Note that there is a path in $T_A$ from $c$ to each vertex of $T^*$ that misses $X_1$, $X_2$ and $X$. We can now infer using Lemma~\ref{lem:winrtorrb} that the rectangles corresponding to the vertices in $T^*$ are all contained in $A_t=B$. In other words, $T^*$ is a connected induced subgraph of $T_B$.

\begin{figure}
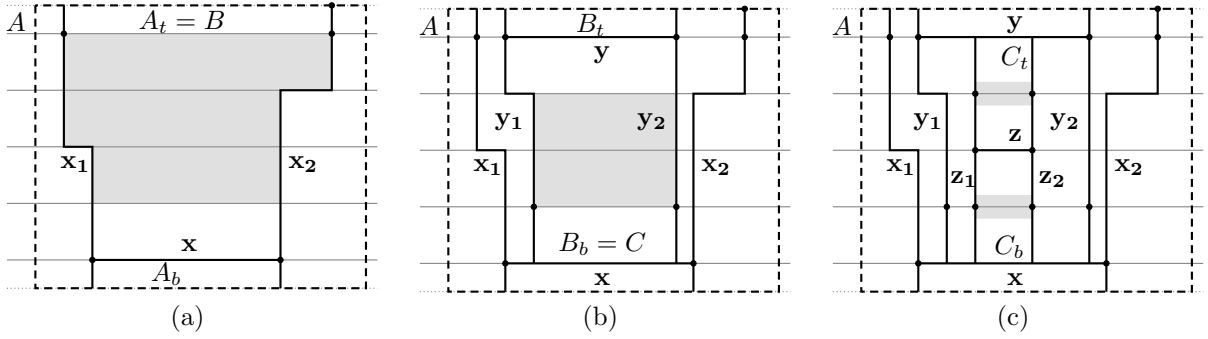

	\centering
	\begin{tabular}{ccc}
		\includegraphics[page=17]{figures.pdf}&\includegraphics[page=18]{figures.pdf}&\includegraphics[page=19]{figures.pdf}\\
		(a)&(b)&(c)
	\end{tabular}
	\caption{An illustration of various stages of the proof of Theorem~\ref{thm:construct_not_k-SIG}. The region bounded by the dashed curve is $A$. The solid curves represent the rectilinear curves through paths chosen in the proof to split the regions. For example, the solid curve labelled $\mathbf{x_1}$ is the rectilinear curve through the path $X_1$, the solid curve labelled $\mathbf{y}$ is the rectilinear curve through the path $Y$ and so on. The shaded region indicates the possible locations of the rectangle $r_c$ as the proof proceeds.}
	\label{fig:recursiveregion}
\end{figure}
	
	Since $T^*$ contains $T_{k-1}$ and $T'_{k-1}$ as induced subgraphs, and is therefore not $(k-2)$-SRIG, we have $|\mathcal{L}_{\mathcal{R}}(B)|\geq k-1$. By Lemma~\ref{lem:goodregion}, this means that $B=A_t$ is a good region. Since $B$ does not contain the bottom stab line in $\mathcal{L}_{\mathcal{R}}(A)$ by Lemma~\ref{lem:bottomstab}\ref{it:decreasestab}, we can conclude that $|\mathcal{L}_{\mathcal{R}}(B)|=k-1$. Now, $T_{k-1}$ and $T'_{k-1}$ are two neighbour-disjoint subgraphs of $T^*$ that are $(k-1)$-SRIG but not $(k-2)$-SRIG. Since the rectangles corresponding to the vertices in them are all contained in $B$ (recall that $T^*$ is an induced subgraph of $T_B$), there is at least one vertex of $T_{k-1}$ and at least one vertex of $T'_{k-1}$ on every stab line in $\mathcal{L}_{\mathcal{R}}(B)$. This means that there exist minimal $\mathcal{L}_{\mathcal{R}}(B)$-spanning paths $Y_1$ in $T_{k-1}$ and $Y_2$ in $T'_{k-1}$, and it is clear that $Y_1$ and $Y_2$ are neighbour-disjoint. Let $Y$ be an induced path in $T^*$ that connects some vertex of $Y_1$ and some vertex of $Y_2$ such that no internal vertex of $Y$ belongs to either $Y_1$ or $Y_2$. Note that $Y$ is a subgraph of $H_{k-1}$ that contains $b_{k-1}$. Let $(B_t,B_b)=\Delta(\mathcal{R},B,Y_1,Y_2,Y)$.
	
	(++) By Observation~\ref{obs:hitsP}\ref{it:xhitsP}, if for $x\in V(T)$, the rectangle $r_x$ intersects $\mathbf{top}(B_b)$, then $x$ has a neighbour on $Y$.
	
	Since there is a path in $T^*$ from $c$ to a vertex in $Y$ (in this case, $b_{k-1}$) that misses both $Y_1$ and $Y_2$, we know by Lemma~\ref{lem:rtorrb} that $r_c$ is contained in $B_t$ or $B_b$. Suppose that $r_c$ is contained in $B_t$. Note that the path $Q$ in $T$ between $c$ and $b_k$ misses $Y_1$, $Y_2$ and $Y$. As $b_k$ lies on the path $X$, we know by Observation~\ref{obs:hitsP}\ref{it:xinP} that $r_{b_k}$ intersects $\mathbf{bottom}(B)$. This means that $r_{b_k}$ contains some points from outside $B$ and hence some points from outside $B_t$. Since $r_c$ is contained in $B_t$, this can only mean that there exists some vertex $x$ in $Q$ such that the rectangle $r_x$ intersects the boundary of $B_t$. Since $x$ has no neighbour on $Y_1$, $Y_2$ or $Y$, we know by Lemma~\ref{lem:crossboundary} that $r_x$ intersects $\mathbf{top}(B_t)$. Since $B=A_t$ and $A$ are good regions, we have by Observation~\ref{obs:hitsP}\ref{it:tt'bb'} that $\mathbf{top}(B_t)\subseteq\mathbf{top}(A_t)\subseteq\mathbf{top}(A)$. This implies that $r_x$ intersects the boundary of $A$, which is a contradiction to the fact that $T=T_A$ (or in other words, all rectangles corresponding to vertices of $T$ are contained in $A$). Thus, we can conclude that $r_c$ is not contained in $B_t$, and hence is contained in $B_b$ (See Figure~\ref{fig:recursiveregion}(b)). Let us define $C=B_b$.
	
	Let $T^{**}$ be the graph obtained by removing the vertices in $V(H_{k-1})$ and their neighbours from $T^*$, or in other words, $T^{**}=T^*-(V(H_{k-1})\cup N[b_{k-1}])$. Note that $c\in V(T^{**})$ and that there is a path in $T^*$ from $c$ to each vertex of $T^{**}$ that misses $Y_1$, $Y_2$ and $Y$. We can now infer using Lemma~\ref{lem:winrtorrb} that the rectangles corresponding to the vertices in $T^{**}$ are all contained in $C$. In other words, $T^{**}$ is a connected induced subgraph of $T_C$.
	
	Since $T^{**}$ contains $T_{k-2}$ and $T'_{k-2}$ as induced subgraphs, and is therefore not $(k-3)$-SRIG, we have $|\mathcal{L}_{\mathcal{R}}(C)|\geq k-2$. By Lemma~\ref{lem:goodregion}, this means that $C$ is a good region. Since $C$ does not contain the top stab line in $\mathcal{L}_{\mathcal{R}}(B)$ by Lemma~\ref{lem:bottomstab}\ref{it:decreasestab}, we can conclude that $|\mathcal{L}_{\mathcal{R}}(C)|=k-2$. Now, $T_{k-2}$ and $T'_{k-2}$ are two neighbour-disjoint subgraphs of $T^{**}$ that are $(k-2)$-SRIG but not $(k-3)$-SRIG. Since $T^{**}$ is an induced subgraph of $T_C$, at least one vertex of $T_{k-2}$ and at least one vertex of $T'_{k-2}$ are on every stab line in $\mathcal{L}_{\mathcal{R}}(C)$. This means that there exist minimal $\mathcal{L}_{\mathcal{R}}(C)$-spanning paths $Z_1$ in $T_{k-2}$ and $Z_2$ in $T'_{k-2}$, which are neighbour-disjoint. Let $Z$ be an induced path in $T^{**}$ that connects some vertex of $Z_1$ and some vertex of $Z_2$ such that no internal vertex of $Z$ belongs to either $Z_1$ or $Z_2$. Note that $Z$ is a subgraph of $H_{k-2}$ that contains $b_{k-2}$. Let $(C_t,C_b)=\Delta(\mathcal{R},C,Z_1,Z_2,Z)$.
	
	Since there is a path in $T^{**}$ from $c$ to a vertex in $Z$ (in this case, $b_{k-2}$) that misses both $Z_1$ and $Z_2$, we know by Lemma~\ref{lem:rtorrb} that $r_c$ is contained in $C_t$ or $C_b$ (See Figure~\ref{fig:recursiveregion}(c)). Suppose that $r_c$ is contained in $C_t$. Note that the path $Q$ in $T$ between $c$ and $b_k$ misses $Z_1$, $Z_2$, $Z$ and $Y$. As $b_k$ lies on the path $X$, we know by Observation~\ref{obs:hitsP}\ref{it:xinP} that $r_{b_k}$ intersects $\mathbf{bottom}(B)$. This means that $r_{b_k}$ contains some points from outside $B$, and hence some points from outside $C_t$. Since $r_c$ is contained in $C_t$, this can only mean that there exists some vertex $x$ in $Q$ such that the rectangle $r_x$ intersects the boundary of $C_t$. Since $x$ has no neighbour on $Z_1$, $Z_2$ or $Z$, we know by Lemma~\ref{lem:crossboundary} that $r_x$ intersects $\mathbf{top}(C_t)$. Since $C=B_b$ is a good region, we have by Observation~\ref{obs:hitsP}\ref{it:tt'bb'} that $\mathbf{top}(C_t)\subseteq\mathbf{top}(B_b)$, implying that $r_x$ intersects $\mathbf{top}(B_b)$. By (++), we now have that $x$ has a neighbour on $Y$, which is a contradiction to the fact that $Q$ misses $Y$. This means that $r_c$ is contained in $C_b$.
	
	Now consider the path $Q$ in $T$ between $c$ and $b_{k-1}$. It is clear that $Q$ misses $Z_1$, $Z_2$, $Z$ and $X$. As $b_{k-1}$ lies on the path $Y$, we know by Observation~\ref{obs:hitsP}\ref{it:xinP} that $r_{b_{k-1}}$ intersects $\mathbf{top}(C)$. This means that $r_{b_{k-1}}$ contains some points from outside $C$, and hence some points from outside $C_b$. Since $r_c$ is contained in $C_b$, this can only mean that there exists some vertex $x$ in $Q$ such that the rectangle $r_x$ intersects the boundary of $C_b$. Since $x$ has no neighbour on $Z_1$, $Z_2$ or $Z$, we know by Lemma~\ref{lem:crossboundary} that $r_x$ intersects $\mathbf{bottom}(C_b)$. Since $C=B_b$ and $B=A_t$ are good regions, we have by Observation~\ref{obs:hitsP}\ref{it:tt'bb'} that $\mathbf{bottom}(C_b)\subseteq\mathbf{bottom}(B_b)\subseteq\mathbf{bottom}(A_t)$, implying that $r_x$ intersects $\mathbf{bottom}(A_t)$. By (+), we now have that $x$ has a neighbour on $X$, which is a contradiction to the fact that $Q$ misses $X$. This completes the proof.
\end{proof}

\medskip

\noindent\textbf{\textit{Proof of Theorem~\ref{thm:treenotsuff}.}}
\medskip

Let $k$ be any integer greater than or equal to 4. For each $i\in\{k,k-1,k-2\}$, let $T_i,T'_i$ be two rooted trees that are each isomorphic to $G_i$ (defined in Section~\ref{sec:gl}). From Lemma~\ref{lem:construct_G_l}\ref{it:not_l-1-SIG} and Lemma~\ref{lem:construct_G_l}\ref{it:l-SIG} we know that $T_i$ and $T'_i$ are $i$-SRIG but not $(i-1)$-SRIG. Let $a_i= root(T_i)$ and $a'_i=root(T'_i)$. For $i\in\{k,k-1,k-2\}$, let $H_i$ be the tree obtained by adding a new vertex $b_i$ to the disjoint union of $T_i$ and $T'_i$ and connecting it to $a_i$ and $a'_i$ using paths of length two. Note that $H_i$ is isomorphic to $F_i$ (also defined in Section~\ref{sec:gl}). Let $T$ be the tree obtained by adding a new vertex $c$ to the disjoint union of $H_k$, $H_{k-1}$ and $H_{k-2}$ and then connecting $c$ to each of $b_k$, $b_{k-1}$ and $b_{k-2}$ using paths of length at least two. See Figure~\ref{fig:treenotsuff} for a schematic diagram of $T$. From Theorem~\ref{thm:construct_not_k-SIG}, we know that $T$ is not $k$-SRIG. 

\begin{figure}
	\centering
	\includegraphics[page=21]{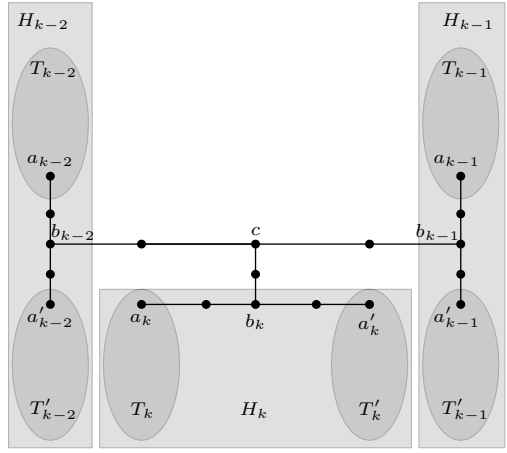}
	\caption{A schematic diagram of $T$. For each $i\in\{k,k-1,k-2\}$, let $T_i,T'_i$ be two rooted trees that are each isomorphic to $G_i$ (defined in Section~\ref{sec:gl}) and rooted at $a_i$ and $a'_i$ respectively.}\label{fig:treenotsuff}
\end{figure}

We now show that $T$ is asteroidal-(non-($k-1$)-ESRIG)-free. For the sake of contradiction, assume that there are three subtrees $X_1,X_2,X_3$ that are asteroidal-(non-$(k-1)$-ESRIG) in $T$. The following claim is easy to see.

\begin{claim}
	There is a vertex $v$ in $T$ of degree at least 3 such that $T-\{v\}$ contains three components $X'_1,X'_2,X'_3$ where for each $i\in\{1,2,3\}$, $X_i$ is an induced subtree of $X'_i-N[v]$. 
\end{claim}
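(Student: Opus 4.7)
The plan is to find $v$ as the Steiner median of any three chosen vertices, one from each $X_i$, and then exploit the asteroidality condition together with the tree structure.

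First, I would pick arbitrary vertices $a_i\in V(X_i)$ for $i\in\{1,2,3\}$. Since $T$ is a tree, the three paths in $T$ between any two of $a_1,a_2,a_3$ form a ``tripod'': there is a unique vertex $v$ that lies on all three of the paths $P_{12}$ (between $a_1$ and $a_2$), $P_{13}$, and $P_{23}$. This $v$ is the candidate vertex.

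Next, I would use the asteroidality hypothesis to control the location of $v$. By definition of being asteroidal, the path $P_{ij}$ misses $X_k$ for $\{i,j,k\}=\{1,2,3\}$, meaning no vertex of $P_{ij}$ is in $N[X_k]$. Since $v$ lies on all three paths, applying this for each choice of $k$ yields $v\notin N[X_i]$ for every $i\in\{1,2,3\}$. In particular, $v\notin V(X_i)$, so $v\neq a_i$ (for otherwise $a_i\in V(X_i)\subseteq N[X_i]$, a contradiction). Because $v\neq a_i$ and $v$ lies on the path from $a_i$ to $a_j$, the vertex $v$ has, for each $i\in\{1,2,3\}$, a distinct neighbour in $T$ lying on the (now nontrivial) subpath from $v$ to $a_i$; these three neighbours lie in three distinct components of $T-\{v\}$, so $\deg_T(v)\geq 3$ and $T-\{v\}$ has at least three components.

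Finally, for each $i\in\{1,2,3\}$ let $X'_i$ be the component of $T-\{v\}$ containing $a_i$; these three components are pairwise distinct because separating edges incident to $v$ lie on the pairwise disjoint paths from $v$ to $a_1,a_2,a_3$. Since $X_i$ is a connected subgraph of $T$ that contains $a_i$ but does not contain $v$ (as $v\notin V(X_i)$), the subtree $X_i$ lies entirely inside $X'_i$. Combined with $v\notin N[X_i]$, which is equivalent to $N[v]\cap V(X_i)=\emptyset$, this gives $V(X_i)\subseteq V(X'_i)\setminus N[v]$, as required.

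The proof is essentially routine once one recognises the tripod/median structure in a tree; the only delicate point is ruling out $v=a_i$, and that is what forces us to use the asteroidality condition (not merely neighbour-disjointness), via the fact that each of the three connecting paths avoids the closed neighbourhood of the third subtree.
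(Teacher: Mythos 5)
Your proof is correct. The paper states this claim without proof (``easy to see''), and your argument --- taking $v$ to be the median of the tripod on one representative vertex from each $X_i$, then using the fact that in a tree the unique $a_i$--$a_j$ path must be the path guaranteed by asteroidality and hence avoids $N[V(X_k)]$ --- is the natural way to fill that gap; note only that the definition of asteroidal guarantees a missing path between \emph{some} pair of vertices of $X_i$ and $X_j$, and one gets the same for your chosen $a_i,a_j$ via the connectivity and neighbour-disjointness of the subtrees (as remarked after the definition in Section~\ref{sec:asteroidal}), which in a tree forces the unique path $P_{ij}$ to miss $X_k$.
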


Let $v$ be the vertex in $T$ of degree at least 3 such that $T-\{v\}$ contains three components $X'_1,X'_2,X'_3$ where for each $i\in\{1,2,3\}$, $X_i$ is an induced subtree of $X'_i-N[v]$. For each $i\in\{1,2,3\}$, since $X_i$ is non-$(k-1)$-ESRIG, we also have that $X'_i$ is non-$(k-1)$-ESRIG. Let us assume that $v$ is a vertex of $T_k$. Note that $T-V(T_k)$ has only one component. Then by Observation~\ref{obs:subtree}\ref{it:vertex}, all but one component of $T-\{v\}$ are proper subtrees of $T_k$. This implies that there exist distinct $X,Y\in\{X'_1,X'_2,X'_3\}$ such that $X,Y$ are proper subtrees of $T_k$. Therefore, $X$ and $Y$ are vertex-disjoint (in fact, neighbour-disjoint) subtrees of $T_k$ that are both non-($k-1$)-ESRIG. But since $T_k$ is isomorphic to $G_k$, this is a contradiction to Lemma~\ref{lem:construct_G_l}\ref{it:disjoint_nonexist}. Hence, $v$ is not a vertex of $T_k$ and for similar reasons, $v$ is not a vertex of $T'_k$. Let $T^*=T-(V(H_k)\cup N[b_k])$.

\begin{claim}
The tree $T^*$ is $(k-1)$-ESRIG.
\end{claim}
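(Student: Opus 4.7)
The subtree $T^*$ decomposes naturally into $H_{k-1}$, $H_{k-2}$, and a ``connector'' sub-tree consisting of the central vertex $c$, the two paths joining $c$ to $b_{k-1}$ and $b_{k-2}$, and the pendant path left over from the path toward $b_k$ once $q_{\ell-1}$ and $V(H_k)$ are removed. My plan is to construct an explicit $(k-1)$-exactly stabbed rectangle intersection representation of $T^*$ by embedding convenient representations of $H_{k-1}$ and $H_{k-2}$ side-by-side along a shared top stab line, and then routing the connector through a carefully controlled three-way fork at $c$.

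First I would invoke Lemma~\ref{lem:construct_G_l}\ref{it:path_rep1} on $H_{k-1}\cong F_{k-1}$ to obtain a $(k-1)$-exactly stabbed representation $\mathcal{R}_{k-1}$ in which the whole induced path between $root(T_{k-1})$ and $root(T'_{k-1})$, and in particular $b_{k-1}$, lies on the top stab line; similarly I obtain a $(k-2)$-exactly stabbed representation $\mathcal{R}_{k-2}$ of $H_{k-2}\cong F_{k-2}$ in which $b_{k-2}$ lies on the top stab line of $\mathcal{R}_{k-2}$. Taking the $k-1$ stab lines $y=1,2,\ldots,k-1$, I then place $\mathcal{R}_{k-1}$ in the leftmost $x$-strip using all $k-1$ stab lines, with $r_{b_{k-1}}$ sitting on $y=k-1$ at its right boundary, and $\mathcal{R}_{k-2}$ in the rightmost $x$-strip using only the bottom $k-2$ stab lines $y=1,\ldots,k-2$, with $r_{b_{k-2}}$ sitting on $y=k-2$ at its left boundary.

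In the middle $x$-strip I would route the connector. The rectangle $r_c$ is placed on $y=k-1$ as a wide rectangle whose bottom edge lies in the open strip $(k-2,k-1)$. Under $r_c$, in three pairwise disjoint horizontal sub-intervals of its $x$-span, I place three ``ports'': the last rectangle of the path from $b_{k-1}$ to $c$ (on $y=k-1$, chained leftward back to $b_{k-1}$); the first rectangle of the pendant path (on $y=k-1$, chained rightward and terminating at the leaf $q_{\ell-2}$); and the first rectangle of the path from $c$ to $b_{k-2}$ (on $y=k-2$, whose top reaches into the strip $(k-2,k-1)$ just enough to meet $r_c$ from below). The remaining vertices of the path to $b_{k-2}$ continue along $y=k-2$ until meeting $r_{b_{k-2}}$ at the left boundary of $\mathcal{R}_{k-2}$.

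The main obstacle I anticipate is precisely this three-way fork at $c$: the first rectangles of the three branches must meet $r_c$ but not one another, even though two of them share the horizontal strip between $y=k-2$ and $y=k-1$. I plan to handle this by (i) giving each port a disjoint $x$-sub-interval inside the $x$-span of $r_c$, which prevents adjacency between the three first vertices of the three branches, and (ii) choosing $y$-coordinates inside the strip $(k-2,k-1)$ so that every pendant-path rectangle (hanging down from $y=k-1$) stays strictly above every rectangle that reaches up from $y=k-2$ on the path to $b_{k-2}$, which kills any spurious intersection further along the two branches where $x$-ranges might overlap. Once these placements are fixed, a routine case check over pairs of vertices (two vertices in the same $\mathcal{R}_i$; one vertex in a representation and one in the connector; two connector vertices on the same or on adjacent stab lines) shows that exactly the edges of $T^*$ are realised and that every rectangle meets exactly one stab line, establishing that $T^*$ is $(k-1)$-ESRIG.
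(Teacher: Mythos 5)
Your high-level plan (represent $H_{k-1}$ on all $k-1$ stab lines, $H_{k-2}$ on $k-2$ of them, and route the connector through $c$) matches the paper's, but the construction breaks down at both attachment points $b_{k-1}$ and $b_{k-2}$, and the breakdown is not a routine detail. In the representation of $F_l$ supplied by Lemma~\ref{lem:construct_G_l}\ref{it:path_rep1}, all five vertices of the induced path joining $a_l$ to $a'_l$ through $b_l$ lie on the top stab line; since consecutive spans overlap and non-consecutive ones are disjoint, $span(b_l)$ lies strictly between $span(a_l)$ and $span(a'_l)$, and each root carries a copy of $G_l$ all of whose spans are contained in the root's span and whose set $N[root]$ also sits on that top line. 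Consequently $r_{b_{k-1}}$ cannot sit ``at the right boundary'' of the $\mathcal{R}_{k-1}$ strip, nor $r_{b_{k-2}}$ ``at the left boundary'' of the $\mathcal{R}_{k-2}$ strip: each is flanked on both sides, \emph{on its own stab line}, by rectangles of non-neighbours. A connector rectangle travelling along $y=k-1$ (resp.\ $y=k-2$) from your middle strip to reach $r_{b_{k-1}}$ (resp.\ $r_{b_{k-2}}$) has an $x$-span that is an interval containing everything in between, in particular $r_{a'_{k-1}}$ (resp.\ one of $r_{a_{k-2}},r_{a'_{k-2}}$) and the top-line rectangles of the adjacent copy of $G_{k-1}$ (resp.\ $G_{k-2}$), none of which are its neighbours. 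So the representation you describe realises edges not present in $T^*$; the three-way fork at $c$, which you flag as the main obstacle, is not where the difficulty lies.

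This is exactly what the paper's proof is engineered around, in two ways you omit. First, it nests the entire representation of $H_{k-2}$ (occupying stab lines $1,\dots,k-2$) horizontally inside the middle, uncovered portion of $span(b_{k-1})$ on the top stab line, so the connector vertices can sit on the top line inside that gap and meet $r_{b_{k-1}}$ by same-line overlap without sweeping across anything else of $H_{k-1}$. Second, for $H_{k-2}$ it invokes Lemma~\ref{lem:construct_G_l}\ref{it:path_rep2} rather than \ref{it:path_rep1}: this pushes $b_{k-2}$ down to the second stab line of its representation and leaves only $N[a_{k-2}]\cup N[a'_{k-2}]$ on the line above it, so that $b_{k-2}$ can be reached from directly above, through the gap between those two neighbourhoods, by a rectangle that never travels horizontally over occupied territory. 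Your layout could likely be repaired in this spirit (for instance, approaching $b_{k-2}$ from the empty line $y=k-1$ above the $\mathcal{R}_{k-2}$ strip after trimming its rectangles, and re-thinking the $b_{k-1}$ attachment entirely), but as written the key geometric obstruction is unaddressed. Your handling of the leftover pendant path from the $b_k$-branch is harmless; the paper implicitly takes those connecting paths to have length exactly two, so that nothing of that branch survives in $T^*$ besides $c$.
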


\noindent\textit{Proof.} From the definition of $T$, we know that $T^*$ is the union of $H_{k-1}$, $H_{k-2}$ and the path in $T$ between $b_{k-1}$ and $b_{k-2}$ (which contains the vertex $c$). Recall that $H_{k-1}$ is obtained by adding a new vertex $b_{k-1}$ to the disjoint union of $T_{k-1}$ and $T'_{k-1}$ and connecting their roots (i.e. $a_{k-1}$ and $a'_{k-1}$ respectively) to $b_{k-1}$ using paths of length two. Therefore, $H_{k-1}$ is isomorphic to $F_{k-1}$. Let $\mathcal{R}_1$ be the $(k-1)$-exactly stabbed rectangle intersection representation of $H_{k-1}$ that is given by Lemma~\ref{lem:construct_G_l}\ref{it:path_rep1}. Similarly, $H_{k-2}$ is isomorphic to $F_{k-2}$. Let $\mathcal{R}_2$ be the $(k-2)$-exactly stabbed rectangle intersection representation of $H_{k-2}$ that is given by Lemma~\ref{lem:construct_G_l}\ref{it:path_rep2}, in which the only vertices on the top stab line are those in $N[a_{k-2}]\cup N[a'_{k-2}]$. It can now be seen that the two representations $\mathcal{R}_1$ and $\mathcal{R}_2$ can be combined as shown in Figure~\ref{fig:treecombine} to obtain a $(k-1)$-exactly stabbed rectangle intersection representation $\mathcal{R}$ of $T^*[V(H_{k-1})\cup V(H_{k-2})]$ that satisfies the following properties: (i) all vertices of the path between $a_{k-1}$ and $a'_{k-1}$ are on the top stab line of $\mathcal{R}$, (ii) a vertex $u\in V(H_{k-2})$ is on the stab line just below the top stab line of $\mathcal{R}$ if and only if $u\in N[a_{k-2}]\cup N[a'_{k-2}]$, and (iii) for any vertex $u\in V(H_{k-2})$, we have that $span(u)\subset span(b_{k-1})$. We leave it to the reader to verify that $\mathcal{R}$ can be extended to a $(k-1)$-exactly stabbed rectangle intersection representation of $T^*$ by adding the rectangles corresponding to the three vertices in the path between $b_{k-1}$ and $b_{k-2}$ (refer to Figure~\ref{fig:treecombine}). Therefore we conclude that $T^*$ is $(k-1)$-ESRIG.

\begin{figure}
	\centering
	\includegraphics[page=13]{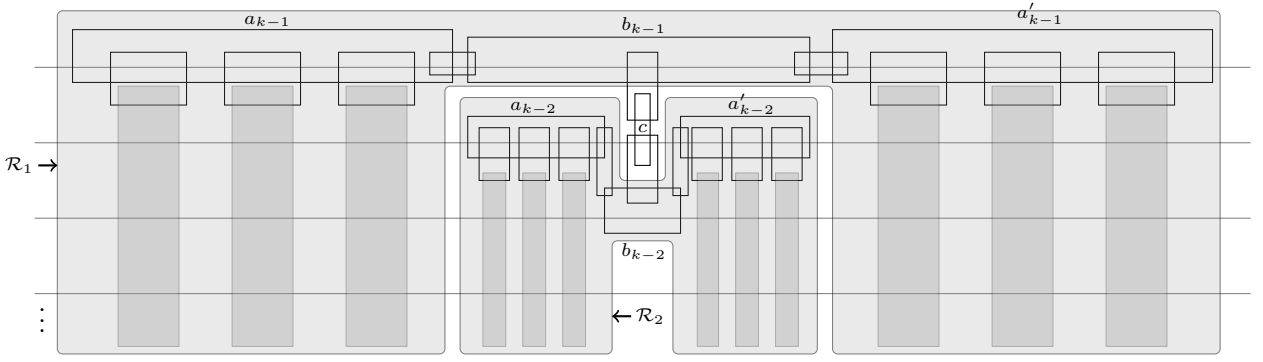}
	\caption{A schematic diagram of the $(k-1)$-stabbed rectangle intersection representation $\mathcal{R}$ of $T^*$.}\label{fig:treecombine}
\end{figure}
\medskip

Now suppose $v$ is a vertex of $H_k$. Since we have already concluded that $v\notin V(T_k)\cup V(T'_k)$, we can infer that $v$ must be the vertex $b_k$. Recalling the definition of $T$, we can infer that $T-\{b_k\}$ has exactly three components and since $b_k=v$ we know that they are $X'_1,X'_2,X'_3$. Also from the definition of $T$, it follows that there exists $i\in\{1,2,3\}$ such that $X'_i=T-V(H_k)$. We know from the definition of $v$ that $X_i$ is a subtree of $X'_i-N[v]=T^*$. But then by the above claim, we have that $X_i$ is $(k-1)$-ESRIG, which contradicts the fact that $X_i$ is non-$(k-1)$-ESRIG.

From the above arguments, we infer that $v$ must lie in the tree $T-V(H_k)$. Since $v$ has degree at least 3, we can infer from the construction of $T$ that $v\in V(T^*)$. Notice that $T-V(T^*)$ has only one component. Then by Observation~\ref{obs:subtree}\ref{it:vertex}, all but one component of $T-\{v\}$ are proper subtrees of $T^*$. This implies that there is a component $X\in\{X'_1,X'_2,X'_3\}$ such that $X$ is a proper subtree of $T^*$. But by the above claim, we now have that $X$ is $(k-1)$-ESRIG, contradicting our earlier observation that $X'_1,X'_2,X'_3$ are all non-$(k-1)$-ESRIG. This completes the proof.\hfill\qed

\subsection{Trees that are $k$-SRIG but not $k$-ESRIG}\label{sec:notesrig}
We define the tree $D_l$, for $l>1$, as follows. Let $T_1,T_2,\ldots,T_7$ be seven rooted trees, each isomorphic to $G_{l-1}$. Take a $K_{1,7}$ with vertex set $\{u,u_1,u_2,\ldots,u_7\}$, where $u_1,u_2,\ldots,u_7$ are the leaves, and add edges between $u_i$ and $root(T_i)$ for each $i\in\{1,2,\ldots,7\}$. The resulting graph is $D_l$ and we let $root(D_l)=u$.

\begin{lemma}\label{lem:construct_D_l}
Let $l>1$.
\begin{enumerate}
	\vspace{-0.075in}
	\itemsep 0in
	\renewcommand{\theenumi}{(\roman{enumi})}
	\renewcommand{\labelenumi}{\theenumi}
	\item\label{it:Dl-not_l-1-SRIG} $D_l$ is not $(l-1)$-SRIG.
	\item\label{it:Dl-SRIG} There is an $l$-exactly stabbed rectangle intersection representation $\mathcal{R}$ of $D_l$ such that for $v,w\in V(D_l)$, $span(v)\subseteq span(w)$ if $w$ is an ancestor of $v$ and the rectangles intersecting the top stab line of $\mathcal{R}$ are exactly the vertices in $N[root(D_l)]$.
	\item Let $T$ and $T'$ be two trees each isomorphic to $D_l$. Let $J_l$ be the tree obtained by taking a new vertex $u$ and joining it to the root vertices of $T,T'$ using paths of length two.
	\begin{enumerate}
		\vspace{-0.075in}
		\itemsep 0in
		\renewcommand{\theenumii}{(\alph{enumii})}
		\renewcommand{\labelenumii}{\theenumii}
		\item\label{it:Dl-pathjoin} There is an $l$-exactly stabbed rectangle intersection representation $\mathcal{R'}$ of $J_l$ such that for $v,w\in V(J_l)$, $span(v)\subseteq span(w)$ if $w$ is an ancestor of $v$ in $T$ or $T'$, and all vertices in the path between $root(T)$ and $root(T')$ are on the top stab line of $\mathcal{R'}$.
		\item\label{it:J_l-unique_rep} If $l\geq 6$, then in any $l$-exactly stabbed rectangle intersection representation of $J_l$, $root(T)$ and $root(T')$ are either both on the top stab line or both on the bottom stab line.
	\end{enumerate} 
	\item\label{it:unique_rep} In any $l$-exactly stabbed rectangle intersection representation $\mathcal{R}$ of $D_l$, $root(D_l)$ is on the top or bottom stab line of $\mathcal{R}$.
\end{enumerate}
\end{lemma}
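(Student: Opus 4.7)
Parts (i), (ii), and (iii)(a) are natural adaptations of the corresponding parts of Lemma~\ref{lem:construct_G_l}. For (i), any three of the seven subtrees $T_1, T_2, T_3$ of $D_l$ (each isomorphic to $G_{l-1}$ and thus not $(l-2)$-SRIG by Lemma~\ref{lem:construct_G_l}\ref{it:not_l-1-SIG}) are pairwise neighbour-disjoint and pairwise joined in $D_l$ by a path through $u$ that misses the third, so they are asteroidal-(non-$(l-2)$-SRIG) in $D_l$, and Theorem~\ref{thm:asteroidal_k-1_sig_free} yields the claim. For (ii), I apply Lemma~\ref{lem:construct_G_l}\ref{it:l-SIG} inductively to obtain an $(l-1)$-exactly stabbed rectangle intersection representation of each $T_i$ in which only $N[root(T_i)]$ lies on the top stab line, place these seven subrepresentations side by side on stab lines $1,\ldots,l-1$, and then add a new top stab line $l$ with a wide rectangle for $u$ spanning all seven subrepresentations and a thin rectangle for each $u_i$ stacked above $r_{root(T_i)}$. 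Part (iii)(a) is analogous: two $D_l$-representations are placed side by side using (ii), and the three internal vertices of the path joining the two roots are laid out on the shared top stab line.

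For (iv), I argue by contradiction: suppose $u = root(D_l)$ lies on stab line $s$ with $1 < s < l$. Since each $T_i \cong G_{l-1}$ is not $(l-2)$-SRIG, each $T_i$ uses a consecutive block of at least $l-1$ stab lines, so its range is exactly $\{1,\ldots,l-1\}$ (call it \emph{low}), $\{2,\ldots,l\}$ (\emph{high}), or the full $\{1,\ldots,l\}$. Applying Lemma~\ref{lem:useful} with $H_1 = T_i$, $H_2 = T_j$ two full subtrees, induced path $P = root(T_i)\textnormal{-}u_i\textnormal{-}u\textnormal{-}u_j\textnormal{-}root(T_j)$, and $H = T_k$ (with witness path $root(T_k)\textnormal{-}u_k\textnormal{-}u$), shows that if three $T_i$'s were full, then a third would fail to span all $l$ stab lines. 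Hence at most two $T_i$'s are full, and by pigeonhole at least three, say $T_1, T_2, T_3$, are non-full of the same type; WLOG they are all low.

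The key construction is that $G[\{u, u_1, u_2, u_3\} \cup V(T_1) \cup V(T_2) \cup V(T_3)] \cong G_l$, which is not $(l-1)$-SRIG by Lemma~\ref{lem:construct_G_l}\ref{it:not_l-1-SIG}. If none of $u_1, u_2, u_3$ lies on stab line $l$, then this copy of $G_l$ is an induced subgraph of $D_l - V_l$ (where $V_l$ denotes the set of vertices on stab line $l$), and the restriction of $\mathcal{R}$ to $D_l - V_l$ is an $(l-1)$-ESRIG, giving the desired contradiction. Since $u_i$ is on a stab line in $\{s-1, s, s+1\}$, it automatically avoids stab line $l$ whenever $s \leq l-2$, and the symmetric argument (with three high $T_i$'s and $V_1$ removed) handles $s \geq 3$. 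The main obstacle will be the boundary cases $s = 2$ and $s = l-1$, where some $u_i$ may in principle lie on the ``wrong'' stab line: I plan to resolve these by first applying Lemma~\ref{lem:useful} inside the restricted $(l-1)$-ESRIG of $D_l - V_l$ (in which every low non-full $T_i$ becomes full) to show that at least one of $u_1, u_2, u_3$ must itself lie on stab $l$, and then exploiting a planar geometry argument on the $x$-spans of $r_u$, the bridging rectangles $r_{u_i}$ on stab $l$, and the pairwise-disjoint $r_{root(T_i)}$'s on stab $l-1$ (which are also disjoint from $r_u$'s $x$-span) to show that at most two such bridging configurations can coexist; a more careful pigeonhole on $k_L$ then excludes the boundary positions of $u$.

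Part (iii)(b) follows cleanly from (iv) combined with a length argument: by (iv) applied to each of the two $D_l$-subtrees in $J_l$, each of $root(T)$ and $root(T')$ lies on the top or bottom stab line of the representation. The induced path between $root(T)$ and $root(T')$ in $J_l$ has exactly four edges, and since consecutive rectangles along an induced path must lie on the same or consecutive stab lines, the path can span at most five stab lines. If $root(T)$ and $root(T')$ were on opposite stab lines, the path would have to span all $l$ stab lines, forcing $l \leq 5$ and contradicting $l \geq 6$; hence both roots must lie on the same stab line.
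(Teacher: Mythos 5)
Parts (i), (ii) and (iii)(a) are fine and essentially the paper's argument (the paper gets (i) even more cheaply by noting $G_l$ is an induced subgraph of $D_l$), and your (iii)(b) is exactly the paper's deduction from (iv). The problem is part (iv), on which (iii)(b) depends: your proof of (iv) is not complete, and you say so yourself (``I plan to resolve these by\ldots''). Your reduction works cleanly only when $u$ sits on a stab line $s$ with $2\le s\le l-2$ and three of the $T_i$ are ``low'' (or dually $3\le s\le l-1$ with three ``high''), because only then do the connectors $u_i$ automatically avoid the extreme stab line and you get an induced copy of $G_l$ inside an $(l-1)$-ESRIG. In the genuinely remaining case --- say $s=l-1$ with exactly three low subtrees (which is consistent with your counts: at most two full by Lemma~\ref{lem:useful}, at most two high by the dual of your own argument, so $k_L$ can be exactly $3$) --- your geometric observation only shows that at most \emph{two} of the three relevant $u_i$ can sit on stab line $l$ (one span covering $x^-_u$, one covering $x^+_u$). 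That leaves open the configuration where exactly two of them do, and then you no longer have three low branches whose connectors all avoid stab line $l$, so no copy of $G_l$ is exhibited and no contradiction is reached. The ``more careful pigeonhole on $k_L$'' you invoke does not exist as stated.

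For comparison, the paper avoids this case analysis entirely by working with the branches \emph{including} their connectors, $T'_i=D_l[\{u_i\}\cup V(T_i)]$. It shows that at most two of the seven $T'_i$ miss the top stab line and at most two miss the bottom one (otherwise three of them together with $u$ form a $G_l$ in an $(l-1)$-stabbed representation), so at least three $T'_i$ are full. Taking minimal spanning paths in three full branches yields three pairwise disjoint top-to-bottom rectilinear curves; the middle curve separates the other two, so the path $u_1uu_3$ forces $r_u$ to meet the middle curve on some internal stab line. Exactness then forces $u_2$ to lie on that internal stab line \emph{and} on the middle spanning path, which is impossible because $u_2$ is a pendant vertex of $T'_2$ while the spanning path's endpoints are on the extreme stab lines. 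You would need either to adopt something like this betweenness argument or to supply a genuinely new argument for the boundary positions $s\in\{2,l-1\}$ before (iv), and hence (iii)(b), can be considered proved.
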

\begin{proof}
For~\ref{it:Dl-not_l-1-SRIG}, it is easy to see that $G_l$ is an induced subgraph of $D_l$, and therefore by Lemma~\ref{lem:construct_G_l}\ref{it:not_l-1-SIG}, $D_l$ is not $(l-1)$-SRIG.
It is also easy to see that the constructions in the proofs of Lemma~\ref{lem:construct_G_l}\ref{it:l-SIG} and Lemma~\ref{lem:construct_G_l}\ref{it:path_rep1} can be easily extended to prove~\ref{it:Dl-SRIG} and~\ref{it:Dl-pathjoin} respectively.

We shall now prove~\ref{it:unique_rep}. Suppose for the sake of contradiction that there exists an $l$-exactly stabbed rectangle intersection representation $\mathcal{R}$ of $D_l$ in which $root(D_l)$ is not on the top or bottom stab lines. Recall that $D_l$ is constructed by taking a $K_{1,7}$ with vertex set $\{u,u_1,u_2,\ldots,u_7\}$ with leaves $u_1,u_2,\ldots,u_7$ and making each $u_i$ adjacent to the root of a tree $T_i$ that is isomorphic to $G_{l-1}$. For each $i\in\{1,2,\ldots,7\}$, let $T'_i=D_l[\{u_i\}\cup V(T_i)]$. Suppose that there exists $I\subseteq\{1,2,\ldots,7\}$ with $|I|=3$ such that for each $i\in I$, there is no vertex in $T'_i$ that is on the top stab line. Then, since $u=root(D_l)$ is not on the top stab line, the rectangles corresponding to the vertices of $\{u\}\cup\bigcup_{i\in I} V(T'_i)$ form an $(l-1)$-(exactly) stabbed rectangle intersection representation of a tree isomorphic to $G_l$. This contradicts Lemma~\ref{lem:construct_G_l}\ref{it:not_l-1-SIG}. Therefore, there are at most two trees in $\{T'_1,T'_2,\ldots,T'_7\}$ such that none of their vertices are on the top stab line. In similar fashion, we can conclude that there are at most two trees in $\{T'_1,T'_2,\ldots,T'_7\}$ such that none of their vertices are on the bottom stab line. This means that there are at least three trees in $\{T'_1,T'_2,\ldots,T'_7\}$, say $T'_1,T'_2,T'_3$, such that $|\mathcal{L}_{\mathcal{R}}(T'_1)|=|\mathcal{L}_{\mathcal{R}}(T'_2)|=|\mathcal{L}_{\mathcal{R}}(T'_3)|=l$. For $i\in\{1,2,3\}$, let $P_i$ be an $\mathcal{L}_{\mathcal{R}}(T'_i)$-spanning induced path in $T'_i$ starting at a vertex $x_i$ that is on the top stab line and ending at a vertex $y_i$ that is on the bottom stab line. Let $\mathbf{p_i}$ be a rectilinear curve through $P_i$ starting at some point on the top stab line in $r_{x_i}$ and ending at some point on the bottom stab line in $r_{y_i}$. As $T'_1,T'_2,T'_3$ are pairwise neighbour-disjoint, we know that $P_1,P_2,P_3$ are also pairwise neighbour-disjoint, implying that the curves $\mathbf{p_1},\mathbf{p_2},\mathbf{p_3}$ are pairwise disjoint. Therefore one of the curves, say $\mathbf{p_2}$, is between the other two. Then, it is easy to see that any path between a vertex of $T'_1$ and a vertex of $T'_3$ contains a vertex whose rectangle intersects $\mathbf{p_2}$, which means that this vertex has a neighbour on $P_2$. Now consider the path $u_1uu_3$. As the only vertex on this path that has a neighbour in $V(T_2)$ is $u=root(D_l)$, we can infer that $r_u$ intersects $\mathbf{p_2}$. It follows from the definition of rectilinear curves that there is a point $q\in r_u\cap\mathbf{p_2}$ that is also on a stab line, say $\ell$. As $u=root(D_l)$ is on $\ell$, we can conclude that $\ell$ is neither the top nor the bottom stab line of $\mathcal{R}$. Since the point $q\in\mathbf{p_2}$, it belongs to the rectangle corresponding to a vertex on $P_2$ that intersects $r_u$. Note that if $u$ has a neighbour on $P_2$, then it has to be $u_2$. This lets us conclude that $u_2$ is on $P$ and also that $q\in r_{u_2}$, which implies that $u_2$ is on $\ell$. As $\mathcal{R}$ is an $l$-exactly stabbed rectangle intersection representation, we infer that $u_2$ is neither on the top nor the bottom stab line. Then, $u_2\notin\{x_i,y_i\}$. But this means that $x_i,y_i\in V(T_i)$, implying that the path $P_2$ does not contain $u_2$. This contradicts our earlier observation that $u_2$ is on $P_2$.

It only remains to prove~\ref{it:J_l-unique_rep}. Let $l\geq 6$ and let $\mathcal{R}$ be any $l$-exactly stabbed rectangle intersection representation of $J_l$. Let $\ell,\ell'$ be the stab lines that intersect $r_{root(T)}$ and $r_{root(T')}$ respectively. By~\ref{it:unique_rep}, we know that each of $\ell,\ell'$ is either the top stab line or the bottom stab line. Since there is a path of length 4 between $root(T)$ and $root(T')$ in $J_l$, we can infer that $\ell$ and $\ell'$ have no more than 3 stab lines between them. Since $l\geq 6$, this means that it is not possible that one of $\ell,\ell'$ is the top stab line and the other the bottom stab line. So $\ell,\ell'$ are either both the top stab line or both the bottom stab line.
\end{proof}

\begin{lemma}\label{lem:rootinbottom}
Let $\mathcal{R}$ be a $k$-exactly stabbed rectangle intersection representation of a graph $G$ and let $R$ be a good region in this representation. Let $P_1$ and $P_2$ be minimal $\mathcal{L}_{\mathcal{R}}(R)$-spanning paths in $G_R$ that are neighbour-disjoint and let $P$ be an induced path in $G_R$ between some vertex in $V(P_1)$ and some vertex in $V(P_2)$ such that no internal vertex of $P$ is on $P_1$ or $P_2$. Let $(R_t,R_b)=\Delta(\mathcal{R},R,P_1,P_2,P)$. Suppose that there are two nonadjacent vertices $x_1,x_2\in V(P)$ that are on the top (bottom) stab line in $\mathcal{L}_{\mathcal{R}}(R)$ such that the subpath $P'$ of $P$ between $x_1$ and $x_2$ has length at most $d$, for some $d\geq 2$. Then there does not exist a connected induced subgraph $H$ of $G_{R_t}$ ($G_{R_b}$) which is neighbour-disjoint from $P_1,P_2,P$ and satisfies the following properties:
\begin{enumerate}
\vspace{-0.075in}
\itemsep 0in
\renewcommand{\theenumi}{(\roman{enumi})}
\renewcommand{\labelenumi}{\theenumi}
\item $|\mathcal{L}_{\mathcal{R}}(H)|>\left\lceil\frac{d-1}{2}\right\rceil$, and
\item $H$ contains a vertex $c$ such that there exists a path in $G_R$ from $c$ to some vertex in $P'$ that misses $x_1$, $x_2$, $P_1$, $P_2$ and $P-V(P')$.
\end{enumerate}
\end{lemma}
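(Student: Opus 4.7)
The plan is to derive a contradiction by constructing a good sub-region $R^*$ of $R_t$ that contains the rectangle of every vertex of $H$ while having at most $\lceil(d-1)/2\rceil$ stab lines; then $|\mathcal{L}_{\mathcal{R}}(H)|\leq|\mathcal{L}_{\mathcal{R}}(R^*)|\leq\lceil(d-1)/2\rceil$ contradicts~(i). The bottom-stab-line case is symmetric. Write $\ell^*$ for the top stab line in $\mathcal{L}_{\mathcal{R}}(R)$, so that $x_1,x_2$ are on $\ell^*$. Since $r_{x_1},r_{x_2}$ both intersect $\ell^*$, pick points $(a,\ell^*)\in r_{x_1}$ and $(b,\ell^*)\in r_{x_2}$ with $a<b$. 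Build a rectilinear curve $\mathbf{q}$ through $P'$ from $(a,\ell^*)$ to $(b,\ell^*)$, together with vertical segments $\mathbf{u_1}$, $\mathbf{u_2}$ going upward from $(a,\ell^*)$, $(b,\ell^*)$ to the curve $\mathbf{t'}=\mathbf{top}(R_t)$, and let $\mathbf{t''}\subseteq\mathbf{t'}$ be the portion connecting their top endpoints. Define $R^*=(\mathbf{t''},\mathbf{u_1},\mathbf{q},\mathbf{u_2})$. A routine check against Definition~\ref{def:good} shows $R^*$ is good: $\mathbf{u_1},\mathbf{u_2}$ are purely vertical, $\mathbf{t''}\subseteq\mathbf{t'}$ inherits the required property from $R_t$, and $\mathbf{q}$ never rises above $\ell^*$.

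The bound on stab lines is the easy part: since $P'$ has length at most $d$ with both endpoints on $\ell^*$, and consecutive vertices of $P'$ sit on the same or adjacent stab lines (the representation being exactly stabbed), each internal vertex of $P'$ is at most $\lfloor d/2\rfloor=\lceil(d-1)/2\rceil$ stab lines below $\ell^*$. Hence $\mathbf{q}$ descends at most $\lfloor d/2\rfloor$ stab lines below $\ell^*$, giving $|\mathcal{L}_{\mathcal{R}}(R^*)|\leq\lceil(d-1)/2\rceil$. Next I would argue that no rectangle of $H$ crosses $\partial R^*$: being contained in $R_t$ rules out crossing $\mathbf{t''}\subseteq\mathbf{t'}$; being neighbour-disjoint from $V(P_1)\cup V(P_2)\cup V(P)$ forbids intersection with $r_{x_1},r_{x_2}$ and with rectangles of $V(P')$, which, together with the observation that any rectangle sitting in the thin strip above $r_{x_i}$ but below $\mathbf{t'}$ would intersect no stab line of $\mathcal{R}$, rules out crossing $\mathbf{u_1},\mathbf{u_2}$ and $\mathbf{q}$ as well. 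Consequently every rectangle of $H$ lies entirely inside or entirely outside $R^*$, and by connectivity of $H$ it suffices to show $r_c\subseteq R^*$.

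For this last step, let $Q$ be a shortest path in $G_R$ from $c$ to a vertex $z\in V(P')\setminus\{x_1,x_2\}$ satisfying the hypothesised missing properties, so that $V(Q)\cap V(P')=\{z\}$ and $V(Q)$ avoids $N[x_1]\cup N[x_2]\cup N[V(P_1)]\cup N[V(P_2)]\cup N[V(P-V(P'))]$. Since adjacent rectangles cannot lie one strictly in $R_t$ and the other strictly in $R_b$, and $c\in V(G_{R_t})$, the rectangle of every vertex of $Q$ lies in the closure of $R_t$. Consider a rectilinear curve $\mathbf{Q}$ through $Q$ starting in $r_c$ and terminating at a point of $\mathbf{q}$ within $r_z$, i.e.\ on $\partial R^*$. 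The avoidance properties of $Q$ prevent $\mathbf{Q}$ from crossing any of $\mathbf{u_1},\mathbf{u_2},\mathbf{t''}$ or $\partial R_t$; a planar-topological argument about the direction from which $\mathbf{Q}$ approaches $\mathbf{q}$ while staying inside the closure of $R_t$ then forces $r_c$ to lie inside $R^*$. This planar-topological step will be the main obstacle: it requires carefully tracing $\mathbf{Q}$ and ruling out the possibility that $r_c$ sits in the part of $R_t$ outside $R^*$ (which would require $\mathbf{Q}$ to approach $\mathbf{q}$ from below $\mathbf{q}$ without ever entering $R_b$ or crossing $\mathbf{u_1}\cup\mathbf{u_2}$). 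Once $r_c\subseteq R^*$ is established, connectivity of $H$ together with the boundary-crossing analysis promotes this to $H\subseteq G_{R^*}$, yielding the desired contradiction.
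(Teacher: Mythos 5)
Your geometric picture is the right one, and the easy half of your argument is sound: the region you call $R^*$ bounded below by a rectilinear curve through $P'$ meets at most $\lfloor d/2\rfloor=\lceil\frac{d-1}{2}\rceil$ stab lines, and a rectangle of $H$ cannot cross $\mathbf{q}$, $\mathbf{u_1}$, $\mathbf{u_2}$ or $\mathbf{t''}$ (the first because every point of $\mathbf{q}$ lies in a rectangle of $V(P')$, the verticals by the Observation~\ref{obs:gap}-style ``thin strip'' argument, the last because $r$-rectangles of $G_{R_t}$ avoid $\partial R_t$). But the proof has a genuine gap exactly where you flag it: you never establish $r_c\subseteq R^*$, and that containment is the entire content of the lemma. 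Ruling out that $r_c$ sits in $R_t\setminus\overline{R^*}$ is not routine with your setup, because $\mathbf{q}$ (a curve through $P'$) and $\mathbf{p}$ (the curve through $P$ used to define $R_t$) are independently chosen curves through overlapping vertex sets, so ``below $\mathbf{q}$ but still inside $R_t$'' is a nonempty and poorly controlled region, and the curve $\mathbf{Q}$ through $Q$ is allowed to end on $\mathbf{q}$ from either side. A secondary unjustified step is the claim that every rectangle of $Q$ lies in the closure of $R_t$: vertices of $Q$ may be adjacent to internal vertices of $P'$ (the hypothesis only makes $Q$ miss $x_1$, $x_2$, $P_1$, $P_2$ and $P-V(P')$), so their rectangles can meet $\mathbf{p}$ and protrude into $R_b$.

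The paper closes precisely this gap by a different device that lets it avoid any new topology. It reroutes the spanning paths: $P'_1$ is the subpath of $P_1$ from $v_1$ to $w_1$ concatenated with the subpath of $P$ from $w_1$ to $x_1$, and symmetrically for $P'_2$; these are again neighbour-disjoint $\mathcal{L}_{\mathcal{R}}(R)$-spanning paths (each now ends at $x_i$ on the top stab line), and $P'$ connects them. Applying $\Delta(\mathcal{R},R,P'_1,P'_2,P')$ produces a decomposition $(R'_t,R'_b)$ in which $R'_t$ plays the role of your $R^*$ and satisfies $|\mathcal{L}_{\mathcal{R}}(R'_t)|\leq\lceil\frac{d-1}{2}\rceil$ by exactly your counting argument. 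Then Lemma~\ref{lem:rtorrb} places $r_c$ in $R'_t$ or $R'_b$, the observation that a rectangle missing $P_1,P_2,P$ lies in $R'_b$ iff it lies in $R_b$ (together with $r_c\subseteq R_t$) excludes $R'_b$, and Lemma~\ref{lem:winrtorrb} drags all of $H$ into $R'_t$, giving the contradiction with~(i). If you want to salvage your write-up, the cleanest fix is to adopt this rerouting: it replaces your unproven planar-topological step by two lemmas already proved for arbitrary decompositions $\Delta(\cdot)$.
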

\begin{proof}
We shall prove the lemma only for the case when $x_1$ and $x_2$ are on the top stab line in $\mathcal{L}_{\mathcal{R}}(R)$, as the other case can be proved in similar fashion. 
Suppose there exists a connected component $H$ of $G_{R_t}$ that is neighbour-disjoint from $P_1$, $P_2$ and $P$ such that $|\mathcal{L}_{\mathcal{R}}(H)|>\left\lceil\frac{d-1}{2}\right\rceil$, and there exists $c\in V(H)$ from which there is a path $Q$ in $G_R$ to some vertex in $P'$ that misses $x_1$, $x_2$, $P_1$, $P_2$ and $P-V(P')$. For $i\in\{1,2\}$, let $u_i,v_i$ be the endvertices of $P_i$ on the top and bottom stab lines in $\mathcal{L}_{\mathcal{R}}(R)$ respectively, and let $V(P_i)\cap V(P)=\{w_i\}$. Let us assume without loss of generality that $x_1$ appears before $x_2$ when traversing the path $P$ from $w_1$ to $w_2$. For $i\in\{1,2\}$, define $P'_i$ to be the path obtained by the union of the subpath of $P_i$ between $v_i$ and $w_i$ and the subpath of $P$ between $w_i$ and $x_i$. It is clear that $P'_1$ and $P'_2$ are neighbour-disjoint $\mathcal{L}_{\mathcal{R}}(R)$-spanning paths in $G_R$ and that $P'$ is an induced path in $G_R$ between a vertex in $V(P'_1)$ and a vertex in $V(P'_2)$ none of whose internal vertices are on either $P'_1$ or $P'_2$. Let $(R'_t,R'_b)=\Delta(\mathcal{R},R,P'_1,P'_2,P')$. As $\mathcal{R}$ is a $k$-exactly stabbed rectangle intersection representation and $P'$ has length $d$, it follows that $|\mathcal{L}_{\mathcal{R}}(R'_t)|\leq\left\lceil\frac{d-1}{2}\right\rceil$.

Since $P'$ misses $c$ and there is the path $Q$ in $G_R$ between $c$ and a vertex of $P'$ that misses both $P'_1$ and $P'_2$, we can apply Lemma~\ref{lem:rtorrb} to conclude that $r_c$ is contained in $R'_t$ or $R'_b$. It is easy to see that for any vertex $z$ that misses $P_1$, $P_2$ and $P$, the rectangle $r_z$ is contained in $R'_b$ if and only if it is contained in $R_b$. As we know that $c\in V(G_{R_t})$, which implies that $r_c$ is contained in $R_t$ and therefore not in $R_b$, we can now conclude that $r_c$ is contained in $R'_t$. Since $H$ is neighbour-disjoint from $P_1$, $P_2$ and $P$, it is also neighbour-disjoint from $P'_1$, $P'_2$ and $P'$. As $H$ is connected, this means that there is a path in $G_R$ from $c$ to each vertex of $H$ that misses $P'_1$, $P'_2$ and $P'$. By Lemma~\ref{lem:winrtorrb}, we now have that $H$ is an induced subgraph of $G_{R'_t}$. This means that $|\mathcal{L}_{\mathcal{R}}(R'_t)|>\left\lceil\frac{d-1}{2}\right\rceil$, contradicting our earlier observation.
\end{proof}

\begin{theorem}\label{thm:srig-parameter-diff}
For every $k\geq 10$, there is a tree which is $k$-SRIG but not $k$-ESRIG.
\end{theorem}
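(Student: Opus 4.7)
For each $k \ge 10$, I propose constructing the tree $T$ as follows: take $J_k$ (recall $J_l$ from Lemma~\ref{lem:construct_D_l}, namely two copies $A,A'$ of $D_k$ with roots $r_1,r_2$ joined by a path $r_1 a_1 b a_2 r_2$ of length four through a central vertex $b$), and attach at $b$ a third copy $A''$ of $D_k$ with root $r_3$ via a short path $b v_1 r_3$. To exhibit a $k$-SRIG representation of $T$, I would place $A, A', A''$ side-by-side, each spanning all $k$ stab lines (via Lemma~\ref{lem:construct_D_l}\ref{it:Dl-SRIG}), and realize the connecting paths by rectangles --- for $a_1, a_2, v_1$ in particular --- each permitted to span several consecutive stab lines. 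This extra freedom allows, say, $r_1$ to sit on the top stab line, $r_2$ on the bottom, and $r_3$ somewhere intermediate (in its own vertical strip), so that no rectangle of the form $r_{r_i}$ is forced to share a specific stab line with any other.

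To show $T$ is not $k$-ESRIG, suppose for contradiction that $\mathcal{R}$ is a $k$-exactly stabbed representation. By Lemma~\ref{lem:construct_D_l}\ref{it:Dl-not_l-1-SRIG}, each copy of $D_k$ must occupy all $k$ stab lines in $\mathcal{R}$, hence the sub-representation on $J_k \subseteq T$ is a $k$-exactly stabbed representation of $J_k$; by Lemma~\ref{lem:construct_D_l}\ref{it:J_l-unique_rep} (applicable since $k \ge 6$) the roots $r_1, r_2$ both lie on the same extremal stab line of $\mathcal{R}$, and by Lemma~\ref{lem:construct_D_l}\ref{it:unique_rep} the root $r_3$ lies on the top or bottom stab line. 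Assume, without loss of generality, that $r_1,r_2$ are on the top. If $r_3$ is on the bottom, then since in an exactly stabbed representation two adjacent rectangles lie on stab lines at most one apart, the path $r_1 a_1 b v_1 r_3$ of length four confines its rectangles to at most five consecutive stab lines, contradicting $k \ge 6$. If instead $r_3$ is on the top, I would take minimal $\mathcal{L}_{\mathcal{R}}$-spanning paths $P_1 \subseteq A$ and $P_2 \subseteq A'$, set $P = r_1 a_1 b a_2 r_2$, and form the good-region decomposition $(R_t, R_b) = \Delta(\mathcal{R}, R, P_1, P_2, P)$ from Section~\ref{sec:notsuff}. The attached $D_k$-subtree $H = A''$ is connected, induced, and neighbour-disjoint from $P_1, P_2, P$, and the path $r_3 v_1 b$ provides a path in $G_R$ from $H$ to an interior vertex of $P$ avoiding $P_1, P_2, r_1, r_2$. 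Lemmas~\ref{lem:rtorrb} and \ref{lem:winrtorrb} then confine $H$ entirely to $R_t$ or entirely to $R_b$; Lemma~\ref{lem:rootinbottom} applied with $x_1 = r_1$, $x_2 = r_2$, $d = 4$ forces $|\mathcal{L}_{\mathcal{R}}(H)| \le 2$ whenever $H \subseteq G_{R_t}$, which contradicts $|\mathcal{L}_{\mathcal{R}}(H)| = k \ge 10$; whereas if $H \subseteq G_{R_b}$ then the rectangle of $r_3$ is contained in $R_b$, contradicting Lemma~\ref{lem:bottomstab}\ref{it:decreasestab} since $R_b$ does not intersect the top stab line on which $r_3$ lies.

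The main obstacle is writing down the $k$-SRIG representation in enough detail to verify that the three copies of $D_k$, each demanding all $k$ stab lines, can indeed be interconnected by the length-two and length-four paths with their rectangles spanning several stab lines, without introducing spurious rectangle intersections; the clearest approach is to start from the representation of $J_k$ given by Lemma~\ref{lem:construct_D_l}\ref{it:Dl-pathjoin}, reflect $A'$ so that $r_2$ sits on the bottom, and place $A''$ in a disjoint vertical strip to the right, bridging the levels with vertically elongated rectangles for $a_2$ and $v_1$. The value $k \ge 10$ is more than sufficient room for both the construction and the ESRIG contradiction --- the latter in fact goes through for all $k \ge 6$ --- but the extra slack makes the construction geometrically easy to verify and allows the statement of the theorem to be declared cleanly.
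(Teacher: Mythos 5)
There is a fatal gap: the tree you construct is not $k$-SRIG, so it cannot witness the theorem. Your three subtrees $A$, $A'$, $A''$ are all copies of $D_k$, hence each is non-$(k-1)$-SRIG by Lemma~\ref{lem:construct_D_l}\ref{it:Dl-not_l-1-SRIG}. They are pairwise neighbour-disjoint (each pair is joined through $b$ by a path of length four), and for each pair the connecting path through $b$ misses the third copy, since each of $a_1,a_2,v_1$ has a neighbour in only one of the copies and $b$ has a neighbour in none. Hence $A,A',A''$ are asteroidal-(non-$(k-1)$-SRIG) in $T$, and Theorem~\ref{thm:asteroidal_k-1_sig_free} shows that $T$ is not $k$-SRIG. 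Geometrically: in any $k$-stabbed representation each copy must meet all $k$ stab lines, so the three corresponding regions are linearly ordered left to right and the middle one separates the outer two; whichever copy sits in the middle, one of the three connecting paths would have to cross its region while missing it, which is impossible. Your intended picture with ``$r_1$ on top, $r_2$ on the bottom, $r_3$ intermediate'' can therefore never be realized, no matter how tall the connecting rectangles are allowed to be.

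This is precisely why the paper's construction uses branches of three \emph{different} sizes: two copies of $D_k$ joined into $H_k\cong J_k$, two copies of $D_{k-1}$ joined into $H_{k-1}\cong J_{k-1}$, and a single $D_{k-2}$, all hung from a central vertex $c$ by paths of length two. Only the two $D_k$'s are non-$(k-1)$-SRIG, so no asteroidal-(non-$(k-1)$-SRIG) triple arises, and a $k$-stabbed representation exists because the smaller pieces can be nested inside successively smaller regions (Figure~\ref{fig:ksrigrep}). The non-$k$-ESRIG argument then uses Lemma~\ref{lem:rootinbottom} and Lemma~\ref{lem:construct_D_l}\ref{it:J_l-unique_rep} inside the nested good regions $A$ and $B=A_t$ to force $a_k,a'_k$ onto the bottom stab line $\ell_1$ and $a_{k-1},a'_{k-1}$ onto the top stab line $\ell_k$; exactness then pins $b_k$ to $\{\ell_1,\ell_2,\ell_3\}$ and $b_{k-1}$ to $\{\ell_{k-2},\ell_{k-1},\ell_k\}$, and the length-four path between $b_k$ and $b_{k-1}$ cannot bridge more than three intervening stab lines, contradicting $k\geq 10$. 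The pieces of your second half (the use of $\Delta$, Lemma~\ref{lem:rootinbottom} with $d=4$, and Lemma~\ref{lem:construct_D_l}\ref{it:J_l-unique_rep}) are in the right spirit, but they must be applied to a tree that actually is $k$-SRIG; repairing this forces at least one of your three $D_k$'s to be replaced by a $(k-1)$-SRIG tree, which leads essentially back to the paper's construction.
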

\begin{proof}
Let $k$ be any integer greater than or equal to 10. For each $i\in\{k,k-1\}$, let $T_i,T'_i$ be two rooted trees that are each isomorphic to $D_i$ and let $T_{k-2}$ be a tree isomorphic to $D_{k-2}$. From Lemma~\ref{lem:construct_D_l}\ref{it:Dl-not_l-1-SRIG} and Lemma~\ref{lem:construct_D_l}\ref{it:Dl-SRIG}, we know that for $i\in\{k,k-1\}$, $T_i$ and $T'_i$ are $i$-SRIG but not $(i-1)$-SRIG. For $i\in\{k,k-1\}$, let $a_i= root(T_i)$ and $a'_i=root(T'_i)$. Further, let $H_i$ be the tree obtained by adding a new vertex $b_i$ to the disjoint union of $T_i$ and $T'_i$ and connecting it to $a_i$ and $a'_i$ using paths of length two. Let $a_{k-2}=root(T_{k-2})$. Let $T$ be the tree obtained by adding a new vertex $c$ to the disjoint union of $H_k$, $H_{k-1}$ and $T_{k-2}$ and then connecting $c$ to each of $b_k$, $b_{k-1}$ and $a_{k-2}$ using paths of length two. See Figure~\ref{fig:treenotksrig} for a schematic diagram of $T$. We claim that $T$ is $k$-SRIG but not $k$-ESRIG.

\begin{figure}
	\centering
	\includegraphics[page=23]{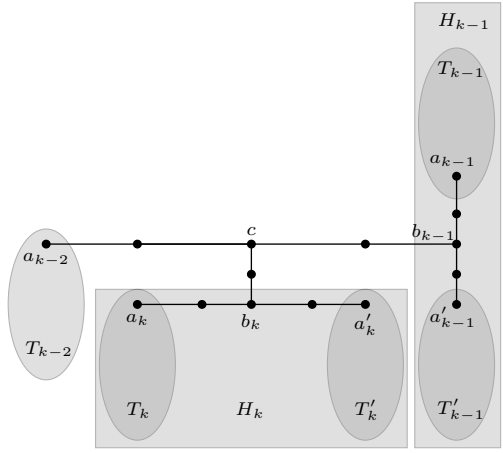}
	\caption{A schematic diagram of $T$. For each $i \in \{k, k-1\}$, let $T_i,T'_i$ two rooted trees that are each isomorphic to $D_i$ and rooted at $a_i$ and $a_i$ respectively. $T_{k-2}$ is isomorphic to $D_{k-2}$ and is rooted at $a_{k-2}$.}\label{fig:treenotksrig}
\end{figure}

We will first show that $T$ is $k$-SRIG. Let $\ell_1,\ell_2,\ldots,\ell_k$ be $k$ horizontal lines, ordered from bottom to top. Since $H_k$ is isomorphic to $J_k$, we know from Lemma~\ref{lem:construct_D_l}\ref{it:Dl-pathjoin} that there is a $k$-(exactly) stabbed rectangle intersection representation $\mathcal{R}_1$ of $H_k$ using stab lines $\ell_1,\ell_2,\ldots,\ell_k$ such that for $v,w\in V(H_k)$, $span(v)\subseteq span(w)$ if $w$ is an ancestor of $v$ in $T_k$ or $T'_k$, and all vertices in the path in $T$ between $a_k$ and $a'_k$ are on the bottom stab line $\ell_1$. Similarly, there is a $(k-1)$-(exactly) stabbed rectangle intersection representation $\mathcal{R}_2$ of $H_{k-1}$ using stab lines $\ell_2,\ell_3,\ldots,\ell_k$ such that for $v,w\in V(H_{k-1})$, $span(v)\subseteq span(w)$ if $w$ is an ancestor of $v$ in $T_{k-1}$ or $T'_{k-1}$, and all vertices in the path in $T$ between $a_{k-1}$ and $a'_{k-1}$ are on the top stab line $\ell_k$. By Lemma~\ref{lem:construct_D_l}\ref{it:Dl-SRIG}, there exists a $(k-2)$-(exactly) stabbed rectangle intersection representation $\mathcal{R}_3$ of $T_{k-2}$ using stab lines $\ell_2,\ell_3,\ldots,\ell_{k-1}$ such that for $v,w\in V(T_{k-2})$, $span(v)\subseteq span(w)$ if $w$ is an ancestor of $v$ in $T_{k-2}$, and the only vertices in $T_{k-2}$ that are on the stab line $\ell_{k-2}$ are the ones in $N[a_{k-2}]$. It can be seen as shown in Figure~\ref{fig:ksrigrep} that $\mathcal{R}_1$, $\mathcal{R}_2$ and $\mathcal{R}_3$ can be combined and rectangles for the vertices in $N[c]$ can be added to obtain a $k$-stabbed rectangle intersection representation of $T$ in which for any $x\in V(H_{k-1})$, $span(x)\subseteq span(b_k)$ and for any $x\in V(T_{k-2})$, $span(x)\subseteq span(b_{k-1})$.

\begin{figure}
	\centering
	\includegraphics[scale=0.75,page=25]{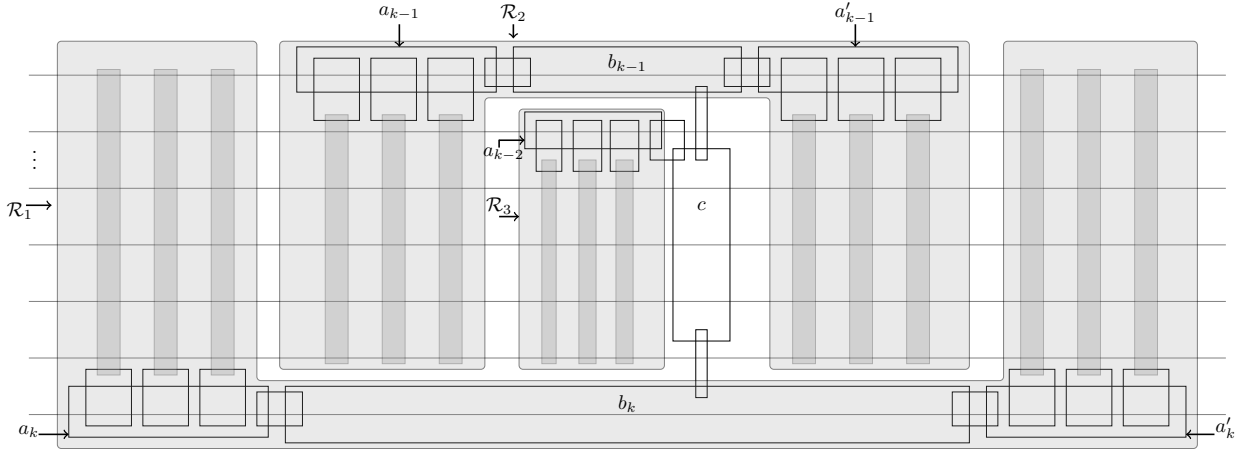}
	\caption{A schematic diagram of a $k$-stabbed rectangle intersection representation of $T$.}\label{fig:ksrigrep}
\end{figure}

Suppose for the sake of contradiction that $T$ is $k$-ESRIG. This part of the proof proceeds very similarly to the proof of Theorem~\ref{thm:construct_not_k-SIG}. As in that proof, we let $\mathcal{R}$ be a $(k+2)$-exactly stabbed rectangle intersection representation of $T$ in which the top and bottom stab lines do not intersect any rectangle and let $A$ be a good region that contains all the rectangles of $\mathcal{R}$. As $T_k$ and $T'_k$ are $k$-SRIG but not $(k-1)$-SRIG, we have $|\mathcal{L}_{\mathcal{R}}(A)|=k$ and there are $\mathcal{L}_{\mathcal{R}}(A)$-spanning paths in both $T_k$ and $T'_k$. Let $X_1$ and $X_2$ be minimal $\mathcal{L}_{\mathcal{R}}(A)$-spanning paths in $T_k$ and $T'_k$ respectively. Let $X$ be an induced path in $T$ that connects some vertex of $X_1$ and some vertex of $X_2$ such that no internal vertex of $X$ belongs to either $X_1$ or $X_2$. Note that $X$ is a subgraph of $H_k$ that contains $b_k$. Let $(A_t,A_b)=\Delta(\mathcal{R},A,X_1,X_2,X)$.


Since there is a path in $T_A=T$ from $c\in V(T_A)$ to a vertex in $X$ (in this case, $b_k$) that misses both $X_1$ and $X_2$, we know by Lemma~\ref{lem:rtorrb} that $r_c$ is contained in $A_t$ or $A_b$. We shall assume without loss of generality that $r_c$ is contained in $A_t$. Let $T^*=T-(V(H_k)\cup N[b_k])$. Since there is a path in $T_A$ from $c$ to each vertex of $T^*$ that misses $X_1$, $X_2$ and $X$, we can use Lemma~\ref{lem:winrtorrb} to infer that $T^*$ is a connected induced subgraph of $T_{A_t}$.

\begin{claim}
Both the vertices $a_k$ and $a'_k$ are on the bottom stab line in $\mathcal{L}_{\mathcal{R}}(A)$.
\end{claim} 

\noindent\textit{Proof.} Let $X'$ be the path in $T_A$ between $a_k$ and $a'_k$. Clearly, $X'$ has length 4 and is a subpath of $X$. The tree $T^*$ contains $T_{k-1}$ as an induced subgraph, and is therefore not $(k-2)$-SRIG by Lemma~\ref{lem:construct_D_l}\ref{it:Dl-not_l-1-SRIG}. Hence, $|\mathcal{L}_{\mathcal{R}}(T^*)|\geq k-1$. Since $T^*$ contains the vertex $c$ that has a path to a vertex in $X'$ which misses $a_k,a'_k,X_1,X_2$ and $X-V(X')$, we can use Lemma~\ref{lem:rootinbottom} to infer that at least one of $a_k$ and $a'_k$ is not on the top stab line in $\mathcal{L}_{\mathcal{R}}(A)$. Notice that the graph induced by $V(T_k)\cup V(T'_k)\cup V(X')$ in $T=T_A$ is isomorphic to $J_k$. This means that there is a $k$-exactly stabbed rectangle intersection representation of $J_k$ contained in the region $A$. Using Lemma~\ref{lem:construct_D_l}\ref{it:J_l-unique_rep}, we can now conclude that both $a_k$ and $a'_k$ are on the bottom stab line in $\mathcal{L}_{\mathcal{R}}(A)$. This completes the proof of the claim.
\medskip

From here onwards, we shall let $B=A_t$, for ease of notation. From the above arguments, we know that $|\mathcal{L}_{\mathcal{R}}(T^*)|\geq k-1$ and $T^*$ is a connected induced subgraph of $T_B$. Therefore, $|\mathcal{L}_{\mathcal{R}}(B)|\geq k-1$. By Lemma~\ref{lem:goodregion}, this means that $B$ is a good region and by Lemma~\ref{lem:bottomstab}\ref{it:decreasestab}, we can conclude that $|\mathcal{L}_{\mathcal{R}}(B)|=k-1$. Now, $T_{k-1}$ and $T'_{k-1}$ are two neighbour-disjoint subtrees of $T^*$ that are $(k-1)$-SRIG but not $(k-2)$-SRIG. This means that there exist minimal $\mathcal{L}_{\mathcal{R}}(B)$-spanning induced paths $Y_1$ in $T_{k-1}$ and $Y_2$ in $T'_{k-1}$. Let $Y$ be an induced path in $T^*$ that connects some vertex of $Y_1$ and some vertex of $Y_2$ such that no internal vertex of $Y$ belongs to either $Y_1$ or $Y_2$. Note that $Y$ is a subgraph of $H_{k-1}$ that contains $b_{k-1}$. Let $(B_t,B_b)=\Delta(\mathcal{R},B,Y_1,Y_2,Y)$.

Since there is a path in $T^*$ from $c$ to a vertex in $Y$ (in this case, $b_{k-1}$) that misses both $Y_1$ and $Y_2$, we know by Lemma~\ref{lem:rtorrb} that $r_c$ is contained in $B_t$ or $B_b$. As explained in the proof of Theorem~\ref{thm:construct_not_k-SIG}, it can be shown that $r_c$ is contained in $B_b$ (if $r_c$ is contained in $B_t$, then there could not have been a path in $T$ between $c$ and the vertex $b_k$ in $X$ that misses $Y_1$, $Y_2$ and $Y$). Let $T^{**}=T^*-(V(H_{k-1})\cup N[b_{k-1}])$. Since there is a path in $T_B$ from $c$ to each vertex of $T^{**}$ that misses $Y_1$, $Y_2$ and $Y$, we can use Lemma~\ref{lem:winrtorrb} to infer that $T^{**}$ is a connected induced subgraph of $T_{B_b}$. 

\begin{claim}
Both the vertices $a_{k-1}$ and $a'_{k-1}$ are on the top stab line in $\mathcal{L}_{\mathcal{R}}(A)$.
\end{claim} 

\noindent\textit{Proof.} Let $Y'$ be the path in $T$ between $a_{k-1}$ and $a'_{k-1}$. Clearly, $Y'$ has length 4 and is a subpath of $Y$. The tree $T^{**}$ contains $T_{k-2}$ as an induced subgraph, and is therefore not $(k-3)$-SRIG, implying that $|\mathcal{L}_{\mathcal{R}}(T^{**})|\geq k-2$. Since $T^{**}$ contains the vertex $c$ that has a path to a vertex in $Y'$ which misses $a_{k-1},$ $a'_{k-1},$ $Y_1,Y_2$ and $Y-V(Y')$, we can use Lemma~\ref{lem:rootinbottom} to infer that at least one of $a_{k-1}$ and $a'_{k-1}$ is not on the bottom stab line in $\mathcal{L}_{\mathcal{R}}(B)$. Notice that the graph induced by $V(T_{k-1})\cup V(T'_{k-1})\cup V(Y')$ in $T^{*}$ is isomorphic to $J_{k-1}$. This means that there is a $(k-1)$-exactly stabbed rectangle intersection representation contained in the region $B$. Using Lemma~\ref{lem:construct_D_l}\ref{it:J_l-unique_rep}, we can now conclude that both $a_{k-1}$ and $a'_{k-1}$ are on the top stab line in $\mathcal{L}_{\mathcal{R}}(B)$. Now since $B=A_t$ and $|\mathcal{L}_{\mathcal{R}}(B)|=|\mathcal{L}_{\mathcal{R}}(A)|-1$, we know by Lemma~\ref{lem:bottomstab}\ref{it:decreasestab} that the top stab line in $\mathcal{L}_{\mathcal{R}}(B)$ is also the top stab line in $\mathcal{L}_{\mathcal{R}}(A)$. This completes the proof of the claim.
\medskip

Let $\ell_1,\ell_2,\ldots,\ell_k$ be the stab lines in $\mathcal{L}_{\mathcal{R}}(A)$ in order from  bottom to top. Now, the fact that each rectangle in $\mathcal{R}$ intersects exactly one stab line gives us several observations. Since there is a path of length 2 between $b_k$ and $a_k$ in $T$, and because our first claim tells us that $a_k$ is on $\ell_1$, we can conclude that $b_k$ is not on any of the stab lines in $\{\ell_4,\ell_5,\ldots,\ell_k\}$. Similarly, our second claim tells us that $a_{k-1}$ is on $\ell_k$, and then the fact that there is a path of length 2 between $a_{k-1}$ and $b_{k-1}$ implies that $b_{k-1}$ cannot be on any stab line in $\{\ell_{k-3},\ell_{k-4},\ldots,\ell_2,\ell_1\}$. Now, since there is a path of length 4 between $b_k$ and $b_{k-1}$, there can be at most $3$ stab lines between $\ell_3$ and $\ell_{k-2}$. But this contradicts the fact that $k\geq 10$.
\end{proof}

\section{Conclusions}\label{sec:conclude}

A direction of further research could be to investigate the class of 2-SRIGs and try to characterize this class of graphs.

\begin{question}
Develop a forbidden structure characterization and/or a polynomial-time recognition algorithm for 2-SRIGs.
\end{question}

Note that Theorem~\ref{thm:block2sig} gives such a characterization of the 2-SRIGs within the class of block graphs. This theorem shows that within the class of block graphs, those graphs that do not contain asteroidal-(non-interval) subgraphs are exactly the 2-SRIGs. From the characterization of interval graphs by Lekkerkerker and Boland (Theorem~\ref{thm:lb}), we know that the absence of asteroidal triples characterizes the 1-SRIGs within chordal graphs. Therefore, a natural question is whether the absence of asteroidal-(non-interval) subgraphs is enough to characterize the 2-SRIGs within chordal graphs (note that block graphs are a subclass of chordal graphs). The answer to this question is negative, as we have shown in Theorem~\ref{thm:splitnot2-SIG} that there are split graphs that are not 2-SRIG. Split graphs are chordal and clearly, no split graph can contain asteroidal-(non-interval) subgraphs, as for any three connected induced subgraphs that are pairwise neighbour-disjoint in a split graph, at least two of them will contain just one vertex each. This gives rise to the following question.

\begin{question}
Find a forbidden structure characterization for chordal graphs (resp. split graphs) that are 2-SRIG. Can chordal graphs (resp. split graphs) that are 2-SRIG be recognized in polynomial-time?
\end{question}

We have shown that any split graph with boxicity at most 2 is 3-SRIG and that there exists a split graph which is 3-SRIG but not 2-SRIG. Therefore, following question is interesting.

\begin{question}
What is the complexity of recognizing split graphs that are 3-SRIG?
\end{question}

Note that by Theorem~\ref{thm:split3sig}, the above problem is equivalent to the problem of recognizing split graphs that have boxicity at most 2. This problem assumes significance in light of the fact that recognizing split graphs that have boxicity at most 3 is NP-complete~\cite{adiga2010}.

We constructed polynomial-time algorithms that check if $stab(G)\leq 2$ for any block graph $G$, and if $stab(T)\leq 3$ for any tree $T$. Therefore, the following are natural questions in this direction.

\begin{question}
For a given block graph $G$, is it possible to determine $stab(G)$ in polynomial-time?
\end{question} 

\begin{question}
For a given tree $T$, is it possible to determine $stab(T)$ in polynomial-time?
\end{question} 
We showed that $K_{4,4}$ is not $k$-ESRIG for any finite $k$, but is 4-SRIG. Here, the question arises as to how high the exact stab number of an exactly stabbable graph can be with respect to its stab number. Theorem~\ref{thm:blockstab} shows that trees are exactly stabbable and Theorem~\ref{thm:srig-parameter-diff} shows a tree $T$ such that $estab(T)>stab(T)$ (in fact, it is an easy exercise to show that $estab(T)=stab(T)+1$). The following questions are therefore of interest.

\begin{question}
Is there a constant $c$ such that for any tree $T$ we have, $estab(T)-stab(T)\leq c$ or $\frac{estab(T)}{stab(T)}\leq c$?
\end{question} 

\begin{question}
	For a given tree $T$, is it possible to determine $estab(T)$ in polynomial-time?
\end{question}

We constructed graphs on $n$ vertices ($(\sqrt{n},\sqrt{n})$-grids) which have stab number $\Omega(\sqrt{n})$. It can be asked if there are families of graphs which have asympotically larger stab number.

\begin{question}
Is there a class $\mathcal{C}$ of rectangle intersection graphs such that $stab(\mathcal{C},n)=\omega(\sqrt{n})$?
\end{question}

\bibliography{references}

\begin{thebibliography}{10}

\bibitem{adiga2010}
Abhijin Adiga, Diptendu Bhowmick, and L.~Sunil Chandran.
\newblock The hardness of approximating the boxicity, cubicity and threshold
  dimension of a graph.
\newblock {\em Discrete Applied Mathematics}, 158(16):1719--1726, 2010.

\bibitem{adiga2014}
Abhijin Adiga, L.~Sunil Chandran, and Naveen Sivadasan.
\newblock Lower bounds for boxicity.
\newblock {\em Combinatorica}, 34(6):631--655, 2014.

\bibitem{agarwal1998}
Pankaj~K. Agarwal, Marc Van~Kreveld, and Subhash Suri.
\newblock Label placement by maximum independent set in rectangles.
\newblock {\em Computational Geometry}, 11(3-4):209--218, 1998.

\bibitem{asplundgrunbaum}
Edgar Asplund and Branko Gr{\"u}nbaum.
\newblock On a coloring problem.
\newblock {\em Mathematica Scandinavica}, 8(1):181--188, 1960.

\bibitem{babu2014}
Jasine Babu, Manu Basavaraju, L.~Sunil Chandran, Deepak Rajendraprasad, and
  Naveen Sivadasan.
\newblock Approximating the cubicity of trees.
\newblock {\em arXiv:1402.6310}, 2014.

\bibitem{bhore2015}
Sujoy~Kumar Bhore, Dibyayan Chakraborty, Sandip Das, and Sagnik Sen.
\newblock On a special class of boxicity~2 graphs.
\newblock In {\em Algorithms and Discrete Applied Mathematics: First
  International Conference}, pages 157--168, 2015.

\bibitem{chan2004}
Timothy~M. Chan.
\newblock A note on maximum independent sets in rectangle intersection graphs.
\newblock {\em Information Processing Letters}, 89(1):19--23, 2004.

\bibitem{chandran2008}
L.~Sunil Chandran, Mathew~C. Francis, and Naveen Sivadasan.
\newblock Boxicity and maximum degree.
\newblock {\em J. Comb. Theory, Ser. {B}}, 98(2):443--445, 2008.

\bibitem{Chandran2016}
L.~Sunil Chandran, Rogers Mathew, and Deepak Rajendraprasad.
\newblock Upper bound on cubicity in terms of boxicity for graphs of low
  chromatic number.
\newblock {\em Discrete Mathematics}, 339(2):443 -- 446, 2016.

\bibitem{chandran2007}
L.~Sunil Chandran and Naveen Sivadasan.
\newblock Boxicity and treewidth.
\newblock {\em Journal of Combinatorial Theory, Series B}, 97(5):733 -- 744,
  2007.

\bibitem{corneil2009}
Derek~G. Corneil, Stephan Olariu, and Lorna Stewart.
\newblock The {LBFS} structure and recognition of interval graphs.
\newblock {\em SIAM Journal on Discrete Mathematics}, 23(4):1905--1953, 2009.

\bibitem{correa2014}
Jos{\'e}~R. Correa, Laurent Feuilloley, and Jos{\'e}~A. Soto.
\newblock Independent and hitting sets of rectangles intersecting a diagonal
  line.
\newblock In {\em Latin American Symposium on Theoretical Informatics}, pages
  35--46. Springer, 2014.

\bibitem{cozzens1983}
Margaret~B. Cozzens and Fred~S. Roberts.
\newblock Computing the boxicity of a graph by covering its complement by
  cointerval graphs.
\newblock {\em Discrete Applied Mathematics}, 6(3):217--228, 1983.

\bibitem{diestel}
Reinhard Diestel.
\newblock {\em Graph Theory}.
\newblock Electronic library of mathematics. Springer, 2006.

\bibitem{ellis2008}
John Ellis and Robert Warren.
\newblock Lower bounds on the pathwidth of some grid-like graphs.
\newblock {\em Discrete Applied Mathematics}, 156(5):545--555, 2008.

\bibitem{erlebach2010}
Thomas Erlebach and Erik~Jan Van~Leeuwen.
\newblock {PTAS} for weighted set cover on unit squares.
\newblock In {\em Approximation, Randomization, and Combinatorial Optimization.
  Algorithms and Techniques}, pages 166--177. Springer, 2010.

\bibitem{esperet2013}
Louis Esperet and Gwena{\"e}l Joret.
\newblock Boxicity of graphs on surfaces.
\newblock {\em Graphs and Combinatorics}, pages 1--11, 2013.

\bibitem{imai1983}
Hiroshi Imai and Takao Asano.
\newblock Finding the connected components and a maximum clique of an
  intersection graph of rectangles in the plane.
\newblock {\em Journal of Algorithms}, 4(4):310--323, 1983.

\bibitem{kratochvil1994}
Jan Kratochv{\'\i}l.
\newblock A special planar satisfiability problem and a consequence of its
  {NP}-completeness.
\newblock {\em Discrete Applied Mathematics}, 52(3):233--252, 1994.

\bibitem{kratochvil1990}
Jan Kratochv{\'\i}l and Jaroslav Ne{\v{s}}et{\v{r}}il.
\newblock Independent set and clique problems in intersection-defined classes
  of graphs.
\newblock {\em Commentationes Mathematicae Universitatis Carolinae},
  31(1):85--93, 1990.

\bibitem{lekkerkerkerboland}
C.~Lekkerkerker and J.~Boland.
\newblock Representation of a finite graph by a set of intervals on the real
  line.
\newblock {\em Fundamenta Mathematicae}, 51(1):45--64, 1962.

\bibitem{suderman2004}
Matthew Suderman.
\newblock Pathwidth and layered drawings of trees.
\newblock {\em International Journal of Computational Geometry \&
  Applications}, 14(03):203--225, 2004.

\bibitem{yannakakis1982}
Mihalis Yannakakis.
\newblock The complexity of the partial order dimension problem.
\newblock {\em SIAM Journal on Algebraic Discrete Methods}, 3(3):351--358,
  1982.

\end{thebibliography}
\end{document}